\let\oldemptyset\emptyset
\numberwithin{figure}{section}
\newtheorem{definition}{Definition}
\newtheorem{corollary}{Corollary}
\newtheorem{lemma}{Lemma}
\newtheorem{claim}{Claim}
\newtheorem{observation}{Observation}
\newcommand{\eat}[1]{}
\newcommand{\ti}{\widetilde{O}}
\renewcommand{\textsc}[1]{{\textup{\scshape#1}}}
\newcommand*\widebar[1]{%
  \hbox{%
    \vbox{%
      \hrule height 0.1pt 
      \kern0.3ex
      \hbox{%
        \kern-0.1em
        \ensuremath{#1}%
        \kern-0.1em
      }%
    }%
  }%
} 
\definecolor{mygreen}{RGB}{10,110,230}
\definecolor{myred}{RGB}{10,110,230}
\newcommand{\eps}{\varepsilon}
\renewcommand{\b}[1]{\overline{#1}}
\newcommand{\vs}{V^{\text{sparse}}}
\newcommand{\vd}{V^{\text{dense}}}
\newcommand{\glr}{G_{\mathcal{L}, \mathcal{R}}}
\newcommand{\lst}{\frac{n}{\Delta}}
\DeclareMathOperator{\polylog}{polylog}
\renewcommand{\epsilon}{\varepsilon}
\let\oldemptyset\emptyset
\newcommand{\myparagraph}[1]{\vspace{0.2cm} \noindent\textbf{#1}}
\title{\Large Fully Dynamic $(\Delta+1)$-Coloring Against Adaptive Adversaries}
\author{Soheil Behnezhad\thanks{Northeastern University.  Email: {\tt s.behnezhad@northeastern.edu}.  Partially supported by NSF CAREER Award CCF-2442812 and a Google Faculty Research Award.}
\and Rajmohan Rajaraman\thanks{Northeastern University.  Email: {\tt \{r.rajaraman,wasim.o@northeastern.edu\}}.  Partially supported by NSF grant CCF-2335187.} \and Omer Wasim\footnotemark[2]}
\date{}
\begin{document}
\maketitle
\begin{abstract} 

Over the years, there has been extensive work on fully dynamic algorithms for classic graph problems that admit greedy solutions. Examples include $(\Delta+1)$ vertex coloring, maximal independent set, and maximal matching. For all three problems, there are randomized algorithms that maintain a valid solution after each edge insertion or deletion to the $n$-vertex graph by spending $\polylog n$ time, provided that the adversary is \textbf{oblivious}. However, none of these algorithms work against \textbf{adaptive} adversaries whose updates may depend on the output of the algorithm. In fact, even breaking the trivial bound of $O(n)$ against adaptive adversaries remains open for all three problems. For instance, in the case of $(\Delta+1)$ vertex coloring, the main challenge is that an adaptive adversary can keep inserting edges between vertices of the same color, necessitating a recoloring of one of the endpoints. The trivial algorithm would simply scan all neighbors of one endpoint to find a new available color (which always exists) in $O(n)$ time. 

    \smallskip\smallskip

    In this paper, we break this linear barrier for the $(\Delta+1)$ vertex coloring problem. Our algorithm is randomized, and maintains a valid $(\Delta+1)$ vertex coloring after each edge update by spending $\widetilde{O}(n^{8/9})$  time with high probability. 

    \smallskip\smallskip
    To achieve this result, we build on a powerful sparse-dense decomposition of graphs developed in previous work. While such a decomposition has been applied to several sublinear models, this is its first application in the dynamic setting. A major challenge in applying this framework to our setting is that it relies on maintaining a {\em perfect} matching of a certain graph. While maintaining a perfect matching (conditionally) requires $n^{1-o(1)}$ time per update, we prove several structural properties of this graph (of possible independent interest) to achieve an update-time that is sublinear in $n$.\end{abstract}

\clearpage

\begingroup
\renewcommand{\baselinestretch}{0.4}\normalsize
\setcounter{tocdepth}{3}
\tableofcontents
\thispagestyle{empty}
\endgroup

\clearpage

\section{Introduction}

Over the years, significant research has been dedicated to developing {\em fully dynamic} algorithms for graph problems such as $(\Delta+1)$ vertex coloring \cite{dynamicalgsforgraphcol,bhattacharya2022fully,HenzingerP22}, maximal independent set (MIS) \cite{behnezhad2019fully,chechik2019fully}, and maximal matching \cite{baswana2015fully,solomon2016fully,behnezhad2019fully}. In a static setting, all three problems can be solved trivially via simple greedy  algorithms that take linear time. But the situation is very different when it comes to fully dynamic graphs. Let us first formalize the model. A fully dynamic graph undergoes a sequence of edge insertions and deletions. The goal is to maintain a desired graph property while optimizing the {\em update time}, i.e., the time spent per update.

When the adversary is {\em oblivious}---meaning that the sequence of updates is independent of our output---an extensive body of work over the last decade has led to randomized algorithms that only spend $(\polylog n)$ amortized time per update for all three problems 
\cite{dynamicalgsforgraphcol,bhattacharya2022fully,HenzingerP22, behnezhad2019fully,chechik2019fully, baswana2015fully,solomon2016fully}, where $n$ is the number of vertices in the graph. In fact, for $(\Delta+1)$ vertex coloring, the independent works of Bhattacharya, Grandoni, Kulkarni, Liu, and 
Solomon \cite{bhattacharya2022fully} and Henzinger and Peng \cite{HenzingerP22} achieve constant amortized update-time.

In sharp contrast, when it comes to {\em adaptive} adversaries, no algorithm with sublinear in $n$ update-time is known for either of the three problems above. This is unfortunate, since in many applications of dynamic graph algorithms (such as the growing body of work on fast static graph algorithms \cite{ChenKLPGS22,ChuzhoyK24}) the output of the dynamic algorithm influences its future updates.

While the significant gap between algorithms for adaptive and oblivious adversaries may seem surprising at first, there is an intuitive explanation for it. Take the $(\Delta + 1)$ vertex coloring problem for example. In this problem, the goal is to maintain an assignment of colors from $\{1, \ldots, \Delta+1\}$ to the vertices of a graph of maximum degree $\Delta$  such that any two adjacent vertices receive different colors. Since the number of colors is greater than than the maximum degree, any vertex has one available color to choose no matter how its neighbors are colored. Thus, in a static setting, we can go over vertices one by one and color them greedily. In a dynamic setting, the challenging updates are edge insertions between vertices of the same color, as these require at least one endpoint to be recolored. If the vertex colors are sufficiently random, an oblivious adversary's edge insertions will likely involve vertices of different colors, in which case no recoloring is necessary. However, an adaptive adversary, aware of the current coloring, can deliberately insert edges between vertices of the same color, forcing a recoloring of one endpoint. The trivial algorithm here scans all neighbors of one endpoint of the edge inserted to find a valid new color to assign to it in $\Theta(\Delta)$ time which can be as large as $\Theta(n)$. Breaking the $\Theta(n)$ bound remains open not just for $(\Delta+1)$ vertex coloring, but also for maximal matching and MIS, raising the following natural question:

\begin{quote}
    {\em Is $\Omega(n)$ time needed for maintaining a $(\Delta+1)$ vertex coloring, MIS, or maximal matching of a fully dynamic graph against adaptive adversaries?}\footnote{We refer interested readers to Saranurak's talk where this problem (along with many others) is mentioned \url{https://www.youtube.com/live/1cAv-A6EbZE?si=kGYvFHdjXfljZG8Y&t=2579}.}
\end{quote}

\noindent \textbf{Our Contribution:} In this work, we  show that the trivial $O(n)$ update-time algorithm is {\em not} optimal at least for the $(\Delta + 1)$ vertex coloring problem by proving the following theorem.

\begin{tcolorbox}
\begin{restatable}{theorem}{mainthm}
\label{thm:mainthm}
    There exists a fully dynamic algorithm for $(\Delta+1)$-coloring taking amortized update time of $\ti(n^{8/9})$ w.h.p., against an adaptive adversary.
\end{restatable}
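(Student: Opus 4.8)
The plan is to dynamically maintain a sparse--dense decomposition of the current graph and color its two parts by separate local mechanisms, exploiting throughout that the maximum degree may be assumed large: if $\Delta < n^{8/9}$ the trivial algorithm---on a conflicting insertion, rescan one endpoint's neighborhood for a free color---already runs in $O(\Delta) = O(n^{8/9})$ time and uses no randomness, so it is robust by default; we invoke it in that regime and handle the (rare, slowly-occurring) transitions where $\Delta$ crosses the threshold in the detailed argument. So assume $\Delta \ge n^{8/9}$. Two facts are then free. First, every $\eps$-almost-clique has size $\Theta(\Delta)$, so there are only $O(\lst) \le O(n^{1/9})$ of them. Second, in any proper $(\Delta+1)$-coloring all but a $o(1)$-fraction of the color classes have size $\ti(n/\Delta) = \ti(n^{1/9})$. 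We maintain the color classes $S_c = \{v : \mathrm{color}(v) = c\}$ and each vertex's adjacency as hash sets, updatable in $O(1)$ per edge update or recoloring; consequently, for any vertex $v$ and any such ``light'' color $c$ we can decide whether $c$ is available to $v$ in $\ti(n^{1/9})$ time by scanning the smaller of $S_c$ and $N(v)$.

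Next I would maintain the decomposition itself. We take a combinatorial sparse--dense decomposition from prior work, defined purely by thresholds on the number of edges inside neighborhoods; it is deterministic, hence unaffected by an adaptive adversary. After each update only $O(1)$ neighborhoods change, so we re-examine those vertices---estimating neighborhood edge-counts to within a $(1\pm o(1))$ factor by random sampling---and we use hysteresis (``sticky'') thresholds so that a vertex's sparse/dense status or almost-clique membership flips only after $\Omega(\eps\Delta)$ updates that touch it; each such flip is repaired by recoloring the vertex and (re)inserting it into the relevant almost-clique's color assignment. A from-scratch recomputation of the decomposition and coloring, costing $\ti(m)$, is triggered only infrequently enough to amortize away. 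The only randomness here is internal to the sampling and is never revealed, which is why the high-probability guarantee survives adaptivity.

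I would color the sparse vertices $\vs$ using slack: an $\eps$-sparse vertex has $\Omega(\eps^2\Delta)$ non-edges spread through its neighborhood, which the decomposition lets us maintain as the invariant that it always has $\Omega(\eps^2\Delta)$ available colors, a constant fraction of them light by the counting above. When a conflicting insertion forces such a vertex $v$ to be recolored, we sample $\ti(1)$ random colors, test each against $\{S_c\}$ in $\ti(n^{1/9})$ time, and recolor $v$ to a light available one. Since no vertex stores per-neighbor color information---availability is recomputed on demand---this recoloring updates only $\{S_c\}$, in $O(1)$, and triggers no cascade; moreover, because $v$'s new color is available it creates no new conflict, in particular none at any almost-clique, so the matching described below needs no repair either. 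The dense vertices $\vd$ are colored essentially as if each almost-clique $K$ were a clique, assigning its vertices distinct colors, with the few excess vertices sharing colors along non-edges of $K$.

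The heart of the matter, and the step I expect to be the main obstacle, is (re)coloring the almost-cliques, which I would phrase as maintaining a (near-)perfect matching in an auxiliary bipartite graph $\glr$ whose left side is the set of clique vertices, whose right side is the set of colors, and which has an edge whenever the color is compatible with a vertex's neighbors outside its clique; a proper coloring of $\vd$ then corresponds to such a matching, and each edge update changes only a bounded number of edges of $\glr$. The difficulty is intrinsic: maintaining a perfect matching of a dynamic graph conditionally requires $n^{1-o(1)}$ time per update, and the known sublinear dynamic-matching algorithms are not robust against adaptive adversaries. The main technical work is therefore to establish structural properties of $\glr$ that bypass this---that $\glr$ splits into one component of size $\Theta(\Delta)$ per almost-clique; that each component has enough Hall-type slack (equivalently, expansion) that after any single edge change the matching is repairable by an alternating path of length $\ti(1)$; that such a path can be discovered in $\ti(n^{1/9})$ time per step by searching over light colors via $\{S_c\}$ rather than over the $\Theta(\Delta^2)$ edges of $\glr$ explicitly; and that, thanks to the sticky thresholds, a clique's matching is rebuilt from scratch only rarely. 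Assembling the three mechanisms and optimizing the sparsity parameter, the rebuild period, and the degree threshold against the trivial regime then yields the claimed amortized update time $\ti(n^{8/9})$.
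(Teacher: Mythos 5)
Your high-level plan matches the paper's architecture: a dynamically maintained sparse--dense decomposition, slack-based recoloring of sparse vertices, and an auxiliary bipartite matching between each almost-clique's uncolored vertices and its remaining colors, repaired via short augmenting paths. But I see several concrete gaps.

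\emph{The color-class bound is too weak.} You assert only that all but an $o(1)$-fraction of the classes $S_c$ have size $\ti(n/\Delta)$, and you plan to test feasibility by scanning $\min(|S_c|, |N(v)|)$. But $|N(v)|$ can be $\Theta(\Delta) = \Theta(n^{8/9})$, which is already at the target update time, so any heavy color $c$ encountered during a dense-vertex augmenting-path search (where you probe $\ti(\eps\Delta)$ colors) immediately blows up the running time. The paper needs and proves the strictly stronger fact that \emph{every} $|L(c)| = \ti(n/\Delta)$, and obtaining this is genuinely delicate: if each sparse vertex drew a random color from among its $\Theta(\eps^2\Delta)$ excess colors, some class would have size $\Theta(n/(\eps^2\Delta))$, and that extra $1/\eps^2$ factor (with $\eps = n^{-\Omega(1)}$) propagates into a super-linear bound. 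This is fixed by an explicit greedy recoloring scheme whose analysis is nontrivial (Lemma~\ref{lemma: colorsparse}); you cannot simply count the average class size.

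\emph{The ``no cascade'' claim for sparse vertices does not hold as stated.} You recolor a sparse $v$ to a color unused by \emph{any} neighbor, so there is no conflict --- but then the slack guarantee fails. The $\Omega(\eps^2\Delta)$-excess-colors argument (one-shot coloring plus sparsity) is an argument about $v$'s \emph{sparse} neighbors only. If $v$ has $d$ dense neighbors occupying $d$ additional colors, the excess relative to all neighbors can drop to $\Omega(\eps^2\Delta) - d$, which can be as small as $1$ when $d$ is large (e.g., $d = \Delta/2$). Then repeatedly sampling random colors no longer converges in $O(1/\eps^2)$ trials (certainly not the $\ti(1)$ you wrote --- $1/\eps^2 = n^{\Omega(1)}$ is not hidden by $\ti$). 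The paper makes the opposite design choice: check only sparse neighbors when recoloring a sparse vertex, then \emph{cascade} --- recolor the $O(n/\Delta)$ dense neighbors of $v$ that now conflict. Bounding that cascade (Lemma~\ref{lemma: c1}) is a real step.

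\emph{The core structural lemma is named but not proved.} You write that the main work is to ``establish structural properties of $\glr$ that bypass'' the dynamic perfect-matching lower bound, and that the graph has ``enough Hall-type slack.'' But that is precisely the theorem, and it is not uniform: the argument forks on whether the almost-clique has size $>\Delta$ or $\le \Delta$. For large almost-cliques one shows that even after discarding \emph{heavy} colors (those appearing $\Omega(\Delta)$ times among neighbors outside $C$), the remaining colors still outnumber the uncolored vertices (Lemma~\ref{lemma: sizeofrhsmatching}, which uses a non-edge count that only holds when $|C| > \Delta$), yielding length-3 augmenting paths. For small almost-cliques there is no such inequality; one instead needs a counting argument via \emph{good} colors and the identity $|A| = k + |\mathcal{M}_N| + |U|$ to get length-5 augmenting paths found in $\ti(\eps n)$ time --- not $\ti(n/\Delta)$. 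Without these two distinct proofs your plan asserts the conclusion rather than deriving it. The specific time bounds are also what make the final balancing $\eps = \Delta^{1/5}/n^{2/5}$ give $\ti(n^{8/9})$ rather than something weaker.

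In short: the scaffolding is right, but the proposal leaves unproved exactly the hard part (the augmenting-path structure of $\glr$ in both regimes), and two of the intermediate claims --- the weak per-class size bound and the cascade-free sparse recoloring --- are false as stated and would need to be replaced by their correct, and more involved, counterparts.
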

\end{tcolorbox}

To prove Theorem~\ref{thm:mainthm}, we build on a powerful sparse-dense decomposition which has its root in the seminal work of Reed \cite{Reed98} (see Chapter~15 of the book by Molloy and Reed \cite{molloy2002graph}). While this decomposition has been successfully applied to various sublinear models over the recent years (see in particular the celebrated work of Assadi, Chen, and Khanna \cite{ack} as well as the works of \cite{hss,junlipettie}) this is its first application in the dynamic setting.\footnote{Independently and concurrent to our work, Braverman et al. \cite{correlation-clustering-arXiv} also develop a dynamic algorithm for a variant of sparse-dense decompositions tailored to the correlation clustering problem. The problems studied in the two papers, and consequently the techniques employed, are completely disjoint.} 

The key challenge in applying this framework to the dynamic setting is that it relies on computing {\em perfect} matchings of a certain auxiliary graph with vertices on one side and available colors on the other side. In our setting, this graph undergoes both edge and vertex updates. In general, maintaining a perfect matching even under just edge updates conditionally requires near linear in $n$ update-time  \cite{AbboudW14,HenzingerKNS15,Dahlgaard16}. We get around this by proving several properties of this auxiliary graph and showing, essentially, that any non-maximum matching in it can be augmented by an augmenting path of length at most five that can be identified in sublinear time. We provide a more detailed overview of our algorithm and these structural properties in Section~\ref{sec:techniques}.

\section{Technical Overview}\label{sec:techniques}

In this section, we give an overview of our algorithm for Theorem~\ref{thm:mainthm}. We emphasize that our discussion here over-simplifies many parts of the algorithm and the technical challenges that arise along the way to provide a high level intuition about our approach. The formal proofs and arguments are provided in subsequent sections. 

As discussed, our algorithm builds on a sparse-dense decomposition technique. While this decomposition has become a standard tool for the $(\Delta+1)$ vertex coloring problem across various settings \cite{hss,ack,junlipettie}, this is its first application for coloring in the dynamic setting. We first provide some background about this decomposition and why it is useful for $(\Delta+1)$ vertex coloring in Section~\ref{sec:tech-decomposition}. We then focus on the challenges that arise with dynamic graphs and discuss how we color the sparse and dense parts of the graph in Sections~\ref{sec:tech-sparse} and \ref{sec:tech-dense} respectively.

\vspace{-0.3cm}
\subsection{Background on Sparse-Dense Decompositions for \texorpdfstring{$(\Delta+1)$}{D+1} Vertex Coloring}\label{sec:tech-decomposition}

Given a graph $G=(V, E)$ of maximum degree $\Delta$ and a parameter $\epsilon > 0$, one can always decompose the vertices of $G$ into {\em sparse vertices} $V_S$ and {\em dense vertices} $V_D$ with $V_D$ being further partitioned into vertex disjoint subsets $C_1, \ldots, C_k$ such that:
\begin{itemize}[leftmargin=15pt, itemsep=0pt, topsep=0pt]
    \item Each $C_i$ is an almost-clique of size $(1 \pm \epsilon) \Delta$ that's nearly disjoint from the rest of the graph. Namely, each vertex $v \in C_i$ has at most $\epsilon \Delta$ non-neighbors in $C_i$, and at most $\epsilon \Delta$ neighbors in $V \setminus C_i$.
    \item For each sparse vertex $v \in V_S$, the total number of edges between its neighbors is at most $(1-\epsilon^2)\binom{\Delta}{2}$. In words, the neighborhood of a sparse vertex is $\epsilon^2$-far from a $\Delta$-clique.
\end{itemize}


\begin{figure}[h]
    \centering
    \resizebox{0.4\textwidth}{!}{%
        \begin{tikzpicture}
    \draw[fill=cyan!20, dashed] (0,0) circle (2.1cm);
    \draw[fill=cyan!20, dashed] (5,0) circle (2.1cm);

    \tikzstyle{vertex}=[circle, draw, fill=white, inner sep=0pt, minimum size=15pt, font=\footnotesize]
    \tikzstyle{clique}=[circle, draw, fill=gray!30, inner sep=0pt, minimum size=15pt, font=\footnotesize]
    
    \node[clique] (A1) at (0:1.6) {};
    \node[clique] (A2) at (60:1.6) {};
    \node[clique] (A3) at (120:1.6) {};
    \node[clique] (A4) at (180:1.6) {};
    \node[clique] (A5) at (240:1.6) {};
    \node[clique] (A6) at (300:1.6) {};

    \node[clique] (B1) at ($ (5,0) + (0:1.6) $) {};
    \node[clique] (B2) at ($ (5,0) + (60:1.6) $) {};
    \node[clique] (B3) at ($ (5,0) + (120:1.6) $) {};
    \node[clique] (B4) at ($ (5,0) + (180:1.6) $) {};
    \node[clique] (B5) at ($ (5,0) + (240:1.6) $) {};
    \node[clique] (B6) at ($ (5,0) + (300:1.6) $) {};

    \node[vertex] (C1) at (2.5, 2.8) {};
    \node[vertex] (C2) at (7.5, 2.8) {};
    \node[vertex] (C3) at (2.5, -2.8) {};
    \node[vertex] (C4) at (7.5, -2.8) {};

    \draw (A1) -- (A2);
    \draw (A1) -- (A3);
    \draw (A1) -- (A4);
    \draw (A1) -- (A6);
    \draw (A2) -- (A4);
    \draw (A2) -- (A5);
    \draw (A2) -- (A6);
    \draw (A3) -- (A4);
    \draw (A3) -- (A5);
    \draw (A3) -- (A6);
    \draw (A4) -- (A5);
    \draw (A4) -- (A6);
    \draw (A5) -- (A6);

    \draw (B1) -- (B3);
    \draw (B1) -- (B4);
    \draw (B1) -- (B5);
    \draw (B1) -- (B6);
    \draw (B2) -- (B3);
    \draw (B2) -- (B4);
    \draw (B2) -- (B5);
    \draw (B2) -- (B6);
    \draw (B3) -- (B4);
    \draw (B3) -- (B5);
    \draw (B4) -- (B6);
    \draw (B5) -- (B6);

    \draw (A1) -- (B3);
    \draw (A5) -- (B4);

    \draw (C1) -- (A1);
    \draw (C1) -- (A2);
    \draw (C2) -- (B2);
    \draw (C2) -- (C1);
    \draw (C3) -- (A6);
    \draw (C3) -- (B4);
    \draw (C4) -- (B1);
    \draw (C4) -- (B5);
    \draw (C4) -- (B6);
    \draw (C4) -- (C3);

\end{tikzpicture}
    }
    \caption{An example of sparse-dense decomposition.}
    \label{fig:decomp}
\end{figure}
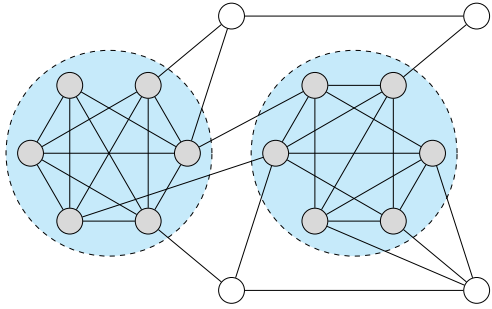

That the decomposition always exists is not hard to prove, and the argument is also constructive and efficient. Figure~\ref{fig:decomp} is an illustration of the sparse-dense decomposition, with the blue clusters being the almost-cliques and the white vertices being the sparse ones.

Now how is this decomposition helpful for $(\Delta+1)$ vertex coloring? The challenge with $(\Delta+1)$ vertex coloring is that given a partial coloring of the graph, an uncolored vertex may only have one available color to choose. If instead we had some $\Omega(\epsilon \Delta)$ excess colors for all uncolored vertices, then the problem would have been much simpler. For sparse vertices, as we will soon discuss in Section~\ref{sec:tech-sparse}, it is relatively straightforward to create enough excess colors. Doing so for dense vertices is generally impossible (e.g. if $C_i$ is an almost-clique of size $\Delta+1$, the last vertex to be colored has exactly one color available). But given that the almost-cliques are highly structured, depending on the setting, they usually have other nice properties to use.

\vspace{-0.3cm}
\subsection{Our Dynamic Coloring: Basic Invariants and Parameters}\label{sec:tech-basic}

Note that the trivial $(\Delta+1)$ coloring algorithm which upon insertion of an edge checks all neighbors of one endpoint to find a new feasible color takes $\Theta(\Delta)$ time per update. So if $\Delta \leq n^{8/9}$, the trivial algorithm yields Theorem~\ref{thm:mainthm}. Thus, we assume for the rest of the section that $\Delta > n^{8/9}$.

Instead of an arbitrary $(\Delta+1)$ vertex coloring, we will ensure at any point during the execution of our dynamic algorithm, that {\em every} color $c \in \{1, \ldots, \Delta+1\}$ is assigned to at most $\widetilde{O}(n/\Delta)$ vertices in the graph. The advantage of maintaining this property is that we can check if a color $c$ is available for a vertex $v$ in time $\widetilde{O}(n/\Delta)$ by going over the list of vertices of color $c$ and checking if any of them is a neighbor of $v$. Note that this is much faster than scanning all neighbors of $v$ in $\Theta(\Delta)$ time given our assumption that $\Delta \geq n^{8/9}$.

We also point out that, crucially, in the sparse-dense decomposition that we will employ, the parameter $\epsilon$ is sub-constant (particularly $\epsilon = 1/n^{c}$ for some sufficiently small constant $c > 0$). With these parameters, running times of say  $O(\epsilon n)$, $O(\frac{n^2}{\epsilon^2 \Delta^2})$, or $O(1/\epsilon^4)$ will all be $n^{1-\Omega(1)}$. We will use these facts in our forthcoming discussions.

\vspace{-0.3cm}
\subsection{Coloring Sparse Vertices Dynamically}\label{sec:tech-sparse}


As discussed earlier, the nice property of sparse vertices is that we can create many excess colors for them. For static algorithms, the standard approach is to run a \textsc{One-Shot-Coloring} algorithm (see Algorithm~\ref{alg:one-shot}) which samples a random color $c_v$ for each vertex $v$ and assigns it to $v$ if no other neighbor of $v$ picks the same color. It is not hard to show that a constant fraction of vertices will be colored and, in fact, many neighbors of each sparse vertex will be assigned identical colors, guaranteeing some $\Omega(\epsilon^2 \Delta)$ excess colors for each sparse vertex.

Suppose that we maintain this property dynamically. Namely, that at any point during the execution of our dynamic algorithm, every sparse vertex has at least $\Omega(\epsilon^2 \Delta)$ excess colors. Now when the time comes to re-color a sparse vertex $v$, we keep sampling a new random color for $v$. The probability that we hit one of those excess colors with each sample is $\frac{\Omega(\epsilon^2 \Delta)}{\Delta+1} = \Omega(\epsilon^2)$. So after some $O(1/\epsilon^2)$ samples we expect to find a feasible color for $v$. Given that checking feasibility can be done in $\widetilde{O}(n/\Delta)$ time based on our discussion of Section~\ref{sec:tech-basic}, it would take $\widetilde{O}(\frac{n}{\epsilon^2\Delta}) = n^{1-\Omega(1)}$ time to recolor sparse vertices.

From our discussion above, it remains to show that we can maintain $\Omega(\epsilon^2 \Delta)$ excess colors for sparse vertices dynamically. Note that once we run \textsc{One-Shot-Coloring}, the adaptive adversary sees which colors are used by more than one neighbor of a sparse vertex $v$, and so can enforce us to recolor these neighbors, reducing the number of excess colors for $v$. To avoid this, we re-run \textsc{One-Shot-Coloring} from scratch after every $t = \Theta(\epsilon^2 \Delta)$ edge updates. Since each edge update only reduces the number of excess colors of each vertex by one (as we only recolor one vertex), all sparse vertices will retain $\Omega(\epsilon^2 \Delta) - t = \Omega(\epsilon^2 \Delta)$ excess colors throughout the phase. We will show that \textsc{One-Shot-Coloring} can be implemented in $\widetilde{O}(n^2/\Delta)$ time, thus the amortized update time of running \textsc{One-Shot-Coloring} will be $\widetilde{O}(n^2/\Delta)/t = \widetilde{O}(\frac{n^2}{\epsilon^2\Delta^2}) = n^{1-\Omega(1)}$.

Finally, we note that \textsc{One-Shot-Coloring} does not color all sparse vertices, but only a constant fraction of them. The remaining sparse vertices at the beginning of a phase can be colored by iteratively sampling a color for them until hitting one of their $\Omega(\epsilon^2 \Delta)$ excess colors. But an important technical difficulty arises here: if each sparse vertex picks its color from among $\Theta(\epsilon^2 \Delta)$ colors uniformly, then a single color $c$ (which say is among the excess colors of all $n$ vertices) will be used some $\Omega(\frac{n}{\epsilon^2 \Delta})$ times. While this may not seem very different from our desired upper bound of $\widetilde{O}(n/\Delta)$ discussed in Section~\ref{sec:tech-basic} on the total number of vertices assigned the same color, the problem is the sub-constant $\epsilon^2$ factor in the denominator (we will get back to this when discussing dense vertices). To get around this, we utilize an elegant greedy algorithm for coloring the remaining sparse vertices due to Assadi and Yazdanyar~\cite{assadi-talk} and prove that this algorithm guarantees only $\widetilde{O}(n/\Delta)$ vertices will be assigned the same color, without any dependency on $\epsilon$ (see Lemma~\ref{lemma: colorsparse}).

\subsection{Coloring Dense Vertices Dynamically}\label{sec:tech-dense}

Our algorithm maintains a proper coloring $c$ for all sparse vertices \textit{independently} of dense vertices, i.e. when feasibility of any color is checked for a sparse vertex $v$,  our algorithm ignores its dense neighbors. In a {\em static} setting, Assadi, Chen, and Khanna \cite{ack} showed that this can be done for an almost-clique $C_i$ in the following two-step process:
\begin{description}[itemsep=0pt, topsep=5pt, leftmargin=40pt]
    \item[Step I] Find a $\Theta(1)$-approximate maximum matching $\mathcal{M}_N$ of the {\em non-edges} inside $C_i$ and assign the same colors to the endpoints of each matched pair. This ``saves'' on the colors needed to color $C_i$ as it colors the vertices of the matching at a rate of two vertices per color.
    \item[Step II] This step is responsible for coloring vertices in $C_i$ which are not colored in Step I. Let $\mathcal{L}$ and $\mathcal{R}$ denote the set of uncolored vertices and colors not used to color endpoints of any non-edge in $C_i$ respectively. Construct a bipartite graph $H$ between $\mathcal{L}$ and $\mathcal{R}$, adding an edge between $v \in \mathcal{L}$ and $c \in \mathcal{R}$ if color $c$ is not used by any neighbor of $v$ outside $\mathcal{L}$. Now a {\em perfect} matching from $\mathcal{L}$ to $\mathcal{R}$ corresponds to a valid coloring of the whole clique. We note that existence of this perfect matching is not trivial, and follows from the nice properties of dense vertices.
\end{description}

To dynamize Step I, we utilize existing fully dynamic algorithms for constant approximate maximum matching against adaptive adversaries \cite{bhattacharya2016new}. As these algorithms require adjacency list access to the edge-set of the graph, we have to maintain the list of non-edges inside the almost-cliques explicitly; an issue we will come back to when discussing dynamic maintenance of the decomposition. Another challenge is that we have to be careful about the {\em adjustment complexity} of this matching, as each change to it may correspond to a recoloring of the rest of the almost-clique in Step II. We provide more details about dynamizing Step I in Section~\ref{sec: non-edge-matching}.

Dynamizing Step II is the most technically challenging part of our paper. The first immediate challenge is that maintaining a {\em perfect} matching, even against oblivious adversaries,  conditionally requires linear time in the number of vertices \cite{AbboudW14,HenzingerKNS15,Dahlgaard16}. Since there are $\Theta(\Delta)$ vertices in graph $H$, this would correspond to an update-time of, say, near-linear in $\Delta$ which is not sublinear in $n$. To get around this major challenge, we rely heavily on the structure of dense vertices and (essentially) prove that any non-perfect matching in $H$ admits an {\em augmenting path} of length at most 5 which can additionally be found fast in sublinear time. We end up using very different arguments for ``large'' almost-cliques of size at least $\Delta+1$, and ``small'' almost-cliques of size at most $\Delta$.

\myparagraph{Large Almost-Cliques.} Take an almost-clique $C_i$ of size $|C_i| \geq \Delta+1$. Note that if a vertex $v \in C_i$ has $d$ edges to $V \setminus C_i$, then it must have at least $d$ non-edges inside $C_i$. This property crucially relies on the almost-clique having size at least $\Delta+1$ and does not hold for smaller almost-cliques. Intuitively, it helps us because having more edges outside $C_i$ imposes more feasibility constraints for vertices of $C_i$, but on the other hand, implies a large number of non-edges inside and so the non-edge matching in $C_i$ will be more effective in saving colors.

For each almost-clique $C_i$ we classify colors into \textit{heavy} and \textit{light}: a color $c\in [\Delta+1]$ is heavy for $C_i$ if some $\Omega(\Delta)$ edges exist which have exactly one endpoint in $C_i$ and their other endpoint is colored $c$.  Intuitively, heavy colors are infeasible for a large number of vertices. Our analysis which builds on the intuition above about the non-edge matching reveals that even after excluding the heavy colors $\mathcal{H}$ from $\mathcal{R}$, vertices in $\mathcal{L}$ can still be colored by $\mathcal{R} \setminus \mathcal{H}$ perfectly. The nice thing about colors in $\mathcal{R} \setminus \mathcal{H}$ is that they are feasible for most vertices in $C_i$ and so it is easier to recolor vertices with them. 
\begin{wrapfigure}{r}{0.33\textwidth}
    \centering
    \vspace{-0.2cm}
    \resizebox{0.34\textwidth}{!}{%
        \begin{tikzpicture}
    \node[draw, rectangle, rounded corners, minimum width=1cm, minimum height=3cm, fill=cyan!20, dashed, fill opacity=0.3, label=left:$\mathcal{L}$] (L) at (0, 1) {};
    \node[draw, rectangle, rounded corners, minimum width=1cm, minimum height=3cm, fill=cyan!20, dashed, fill opacity=0.3, label=right:$\mathcal{R} \setminus \mathcal{H}$] (R) at (3, 1) {};
    
    \node[draw, fill=white, circle, inner sep=2pt, label=above:$w$] (w) at (0, 1.5) {};
    \node[draw, fill=white, circle, inner sep=2pt, label=above:$v$] (v) at (0, 0) {};
    
    \node[draw, fill=white, circle, inner sep=2pt, label=above:$c(w)$] (cw) at (3, 1.5) {};
    \node[draw, fill=white, circle, inner sep=2pt, label=above:$c$] (c) at (3, 0) {};
    
    \draw (c) -- (w);
    \draw[ultra thick, blue, decorate, decoration={snake, amplitude=0.5mm}] (w) -- (cw);
    \draw (cw) -- (v);
\end{tikzpicture}
    }
\end{wrapfigure}
Consider an uncolored vertex $v\in \mathcal{L}$.  We pick an unassigned color $c$ in $\mathcal{R} \setminus \mathcal{H}$. If $c$ is feasible for $v$, we immediately assign $c(v) \gets c$. If not, we find a length 3 augmenting path as follows. First, we pick a random vertex $w\in \mathcal{L}$. Since $c$ is not heavy, it is feasible for $w$ with large constant probability. Additionally, since $w$ is also assigned a color that is not heavy, $c(w)$ must be feasible for $v$ with large constant probability. Putting the two together, we can assign  $c(v)\gets c(w)$ and $c(w) \gets c$ corresponding to a length three augmenting path illustrated on the right.

\myparagraph{Small Almost-Cliques.}  To color a vertex $v$ in an almost-clique $C_i$ which has size at most $\Delta$, we proceed as follows. Let $|C_i| = \Delta + 1 -k$ noting that $k \geq 1$ for small almost-cliques. If the size of the non-edge matching is large, we employ a similar approach as in the case of large almost-cliques. 

For the case of small non-edge matchings, we analyze the number of available colors not assigned to any vertex in $C_i$, and bound the number of edges incident to any subset $D\subseteq C_i$ to vertices outside $C_i$. Based on this, we call a color \textit{good} if $\Omega(\Delta)$ vertices in $\mathcal{L}$ do not have a neighbor outside $C_i$ colored $c$. Effectively, we show that a large majority of vertices in $\mathcal{L}$ are assigned good colors at any point. Thus, a random vertex is likely to be assigned a good color. Our algorithm does not maintain the set of good colors as such, and they are only used in the analysis.

To recolor a vertex $v$, we first pick a random color $c$ from the colors not assigned to any vertex of $C_i$ (we maintain this set explicitly). If $c$ is available for $v$ we can assign it to $v$ immediately, but $c$ may not be available for $v$ due to sparse neighbors of $v$. To get around this, we take a random vertex $u\in \mathcal{L}$. We show that $c$ is feasible for $u$ with probability at least $\Omega(1/k)$. Thus, repeating this process $\widetilde{O}(k) = \widetilde{O}(\epsilon \Delta)$ times ensures that a vertex $u$ for which $c$ is feasible is found. 
\begin{wrapfigure}{r}{0.3\textwidth}
    \centering
    \resizebox{0.3\textwidth}{!}{%
        \begin{tikzpicture}
    \node[draw, rectangle, rounded corners, minimum width=1cm, minimum height=3.5cm, fill=cyan!20, dashed, fill opacity=0.3, label=left:$\mathcal{L}$] (L) at (0, 1) {};
    \node[draw, rectangle, rounded corners, minimum width=1cm, minimum height=3.5cm, fill=cyan!20, dashed, fill opacity=0.3, label=right:$\mathcal{R}$] (R) at (3, 1) {};
    
    \node[draw, fill=white, circle, inner sep=2pt, label=above:$w$] (w) at (0, 2-0.2) {};
    \node[draw, fill=white, circle, inner sep=2pt, label=above:$u$] (u) at (0, 1-0.2) {};
    \node[draw, fill=white, circle, inner sep=2pt, label=above:$v$] (v) at (0, 0-0.2) {};
    
    \node[draw, fill=white, circle, inner sep=2pt, label=above:$c(w)$] (cw) at (3, 2-0.2) {};
    \node[draw, fill=white, circle, inner sep=2pt, label=above:$c(u)$] (cu) at (3, 1-0.2) {};
    \node[draw, fill=white, circle, inner sep=2pt, label=above:$c$] (c) at (3, 0-0.2) {};
    
    \draw (c) -- (u);
    \draw[ultra thick, blue, decorate, decoration={snake, amplitude=0.5mm}] (u) -- (cu);
    \draw (cu) -- (w);
    \draw[ultra thick, blue, decorate, decoration={snake, amplitude=0.5mm}] (w) -- (cw);
    \draw (cw) -- (v);
\end{tikzpicture}
    }
\end{wrapfigure}
Now if the color $c(u)$ assigned to $u$ is feasible for $v$, we can assign $c$ to $u$ and $c(u)$ to $v$ (corresponding to a length three augmenting path). However, $c(u)$ may not be feasible to $v$ again due to its sparse neighbors. Nonetheless, we show that $u$ is assigned a good color $c(u)$ with constant probability, so it is feasible for a constant fraction of vertices in $C_i$. 
To utilize this, we pick yet another random vertex $w\in \mathcal{L}$. It so happens that with large constant probability $c(w)$ must be feasible for $v$ and $c(u)$ must be feasible for $w$ (due to $c(u)$ being good). This results in a length five augmenting path illustrated on the right that can be used to color $v$.

The run-time of process above is dominated by the time required to find $u$. We need $\widetilde{O}(k)$ random candidates for $u$ and we need to check feasibility of $c$ for each candidate which takes $\widetilde{O}(k \cdot \frac{n}{\Delta})$ time overall. Since $k \leq \epsilon \Delta$ as the size of almost-cliques is at least $(1-\epsilon)\Delta$, this sums up to $\widetilde{O}(\epsilon n)$ which is sublinear. Note that this crucially relies on checking feasibility in $\widetilde{O}(n/\Delta)$ time and not say $\widetilde{O}(n/\epsilon^2 \Delta)$ or even $\widetilde{O}(n/\epsilon \Delta)$ time. This is why we needed the strong upper bound of $\widetilde{O}(n/\Delta)$ on the number of vertices assigned the same color for sparse vertices in Section~\ref{sec:tech-sparse}.

\vspace{-0.3cm}
\subsection{Putting Everything Together}\label{sec:tech-wrap-up}

We note that our discussion above completely ignores maintenance of the sparse-dense decomposition in a dynamic graph. In fact, as apparent from our discussion above, in addition to the decomposition itself, we need to maintain the non-edges inside almost-cliques and guarantee that the almost-cliques have size close to $\Delta$ at all times. Our algorithm maintains this decomposition along with our desired properties in $O(\log n/\epsilon^4)$ time per update. We provide a high-level overview of our dynamic sparse-dense decomposition in Section~\ref{sec: tech-decomposition}. 

Finally, by balancing $\epsilon$ as a function of $\Delta$ and $n$ we arrive at our update-time of $\widetilde{O}(n^{8/9})$ (see Section~\ref{sec:wrap-up} for the final update-time as a function of $n, \Delta, \epsilon$ for various parts of the algorithm).

\section{Preliminaries} 
Let $V = [n]=\{1,2,...,n\}$ denote the fixed set of vertices, where $[k]=\{1,2,..,k\}$ for any $k\in \mathbb{Z}^{>0}$. For any vertex $v\in V$, let $N(v)$ denote the set of neighbors of $v$ at any given point during the algorithm, and $d(v)=|N(v)|$ be the degree of $v$. Let $E(U,W)=\{(u,w) \mid u\in U, w\in W\}$ denote the set of edges with one endpoint in $U$ and the other in $W$. Let $E(U)$ denote the set of edges with both endpoints in $U$. 
A pair of vertices $(u,v)$ is called a \textit{non-edge} if $(u,v)\notin E$. A \textit{non-neighbor} of any vertex $v$ is a vertex $u$ such that $(u,v)\notin E$. For a vertex $v$, and a set $U$ of vertices, we let $N_U(v)$ denote the set of neighbors of $v$ in $U$.

Let $\Delta$ denote the maximum degree of any vertex in the graph throughout the update sequence, which is known in advance. The coloring on $V$ maintained by our algorithm is denoted by $c:\,V\rightarrow [\Delta+1]$. Throughout the paper, we use the $\ti(\cdot)$ notation to hide a multiplicative $\text{polylog}(n)$ term. Without loss of generality, we assume that initially, $G$ is an empty graph on $n$ vertices (i.e, $E=\oldemptyset$ at the beginning of the algorithm). We say that an event $E$ happens \textit{with high probability} (abbreviated as w.h.p.) if $\Pr[E]\geq 1-\frac{1}{\text{poly}(n)}$. 

By choosing large constants during various sampling procedures and algorithms and, recomputing every polynomially many (in $n$) updates yields that all desired events happen w.h.p. We do not stress on the number of \textit{undesirable} events or low-order constants arising in probabilistic analysis, which allows us to stress more on the core technical ideas in this paper. Occasionally in our proofs, when deriving a high probability bound for an event $E$, we implicitly condition on a small number of desirable events (which happen with high probability); by taking a union bound over the failure probability of such desirable events, we are able to claim a high probability bound for the event $E$. 

\subsection{Organization of the Paper}
We give a high level technical overview of our paper in Section \ref{sec:techniques}. In Section \ref{sec: fullydynamicsparsedensedecomp}, we give a technical overview of sparse-dense decompositions and state properties of our \textit{fully dynamic} decomposition. The algorithm to maintain the decomposition and the analysis are deferred to Section \ref{sec: fdalgorithmdecomposition}, which can be read independently of other sections.

In Section \ref{sec: fullydynamiccoloring}, we present our fully dynamic algorithm to maintain a $(\Delta+1)$-coloring. This is divided into multiple subsections. We first highlight our approach of working with a fixed sparse-dense decomposition for a phase of updates in Section \ref{sec: constadjcomplexity}. We present our algorithm for coloring sparse vertices in Section \ref{sec: staticalgsparse}, and our fully dynamic algorithm to maintain a coloring on sparse vertices in Section \ref{sec: fdalgsparse}. 

Section \ref{sec: coloringdensevertices} describes our approach to color dense vertices. In Section \ref{sec: non-edge-matching}, we show how to compute and maintain non-edge matching in almost-cliques, together with maintaining a proper coloring. In Sections \ref{sec: large almost cliques} and \ref{sec: small almost cliques}, we give our perfect matching algorithms for small and large almost-cliques respectively. We present our final algorithm in Section \ref{sec:wrap-up} which unifies various algorithms and approaches we develop in the preceding sections to obtain our final sublinear in $n$ update time.

\section{A Fully Dynamic Sparse-Dense Decomposition}\label{sec: fullydynamicsparsedensedecomp}
In this section, we give a fully dynamic algorithm to maintain a sparse-dense decomposition of a dynamic graph $G=(V,E)$ under edge updates. 

In a sparse-dense decomposition, $V$ is partitioned into a set $V_{S}$ of \textit{sparse} vertices and a set $V_D$ of \textit{dense} vertices. The set $V_D$ is further partitioned into a collection of cliques $C_1, C_2, \ldots, C_k$, where for any $i\in [k]$, $C_i$ is an \textit{almost}-clique, that is, $|C_i|=\Theta(\Delta)$ and $|E(C_i)|\geq (1-\Omega(\varepsilon))\Delta^2$. The graph decomposition is parametrized by a small constant $\varepsilon>0$ and maintained \textit{independently} of the coloring. In the following, we introduce some definitions to formalize our decomposition.

\begin{definition}[\textbf{$\varepsilon$-friend edges} \cite{hss}]\label{defn: friendedge} Given $0 < \varepsilon<1$, an edge $(u,v)\in E$ is an $\varepsilon$-friend edge if $$|N(u)\cap N(v)|\geq (1-\varepsilon)\Delta.$$ A vertex $u$ (resp. $v$) is an $\varepsilon$-friend of $v$ (resp. $u$) if $(u,v)$ is an $\varepsilon$-friend edge.
\end{definition}

\begin{definition}[\textbf{$\varepsilon$-dense vertices} \cite{hss}] Given $0< \varepsilon <1$, a vertex is $\varepsilon$-dense if it has at least $(1-\varepsilon)\Delta$ $\varepsilon$-friends in $V$. A vertex is $\varepsilon$-sparse if it is not $\varepsilon$-dense.
\end{definition}

Let $F_{\varepsilon}\subseteq E$ denote the set of $\varepsilon$-friend edges of $G$. Let $F_{\varepsilon}(v)\subseteq N(v)$ denote the set of all $\varepsilon$-friends of $v$. Let $V^{\text{dense}}_{\varepsilon}$ denote the set of all $\varepsilon$-dense vertices and $V^{\text{sparse}}_{\varepsilon}\coloneq V\backslash V^{\text{dense}}_{\varepsilon}$ denote the set of all $\varepsilon$-sparse vertices. Note that the number of edges whose endpoints are neighbors of any $\varepsilon$-sparse vertex $v$ is given by $|E(N(v))|\leq \binom{\Delta}{2}-\frac{\varepsilon\Delta\cdot \varepsilon\Delta}{2}\leq (1-\varepsilon^2)\binom{\Delta}{2}$. This follows because any vertex in $V$ which is \textit{not} an $\varepsilon$-friend of $v$ has at most $(1-\varepsilon)\Delta$ common neighbors with $v$ and the number of such vertices is at least $\varepsilon\Delta$. 

Let $G_{\varepsilon}=(V^{\text{dense}}_{\varepsilon}, F_{\varepsilon})$ be the induced subgraph of $G$ on the set of $\varepsilon$-dense vertices, which contains only $\varepsilon$-friend edges in $F_{\varepsilon}$, i.e. $F_{\varepsilon}=\{(u,v)|\, u,v\in V^{\text{dense}}_{\varepsilon}, (u,v)\in F_{\varepsilon}\}$. 
The following lemma (reproduced from \cite{hss}) gives properties of the decomposition of $G_{\eps}$ into almost-cliques $C_1, C_2,...,C_k$.  

\begin{lemma}[\cite{hss}]\label{lemma: hss}
    For any graph $G=(V,E)$, and any $0\leq \varepsilon<\frac{1}{5}$, let $G_{\varepsilon}=(V^{\text{dense}}_{\varepsilon}, F_{\varepsilon})$ denote the induced subgraph of $G$ on the set of $\varepsilon$-dense vertices and the set of $\varepsilon$-friend edges. Then, there exists a decomposition of $G_{\varepsilon}$ into connected components $C_1,...,C_k$ satisfying $V_{\varepsilon}=C_1\cup C_2...\cup C_k$ such that for any $i\in [k]$: \begin{enumerate}
        \item $|C_i|\leq (1+3\varepsilon)\Delta$.
        \item Each vertex in $C_i$ has at most $\varepsilon\Delta$ neighbors (in $G$) in $V\backslash C_i$.
        \item Each vertex in $C_i$ has at most $3\varepsilon\Delta$ non-neighbors (in $G$) in $C_i$. 
        \item For any $u,v\in C_i$, $|N(u)\cap N(v)|\geq (1-2\varepsilon)\Delta$.
    \end{enumerate}
\end{lemma}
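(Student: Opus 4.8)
The plan is to take $C_1,\dots,C_k$ to be precisely the connected components of $G_\varepsilon$, so that the identity $V^{\text{dense}}_\varepsilon=C_1\cup\dots\cup C_k$ is automatic, and then to verify the four properties, doing property (4) first since it drives the rest. The only arithmetic tool is inclusion--exclusion inside a single neighborhood: if $(u,w),(w,v)\in F_\varepsilon$, then $N(u)\cap N(w)$ and $N(v)\cap N(w)$ are subsets of $N(w)$ of size $\ge(1-\varepsilon)\Delta$ while $|N(w)|\le\Delta$, so their common part, which lies in $N(u)\cap N(v)$, has size $\ge(1-2\varepsilon)\Delta$; applying this once more with a further $\varepsilon$-friend edge $(v,x)\in F_\varepsilon$, now working inside $N(v)$, yields $|N(u)\cap N(x)|\ge(1-3\varepsilon)\Delta$. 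Besides this I would use one genuinely structural input from \cite{hss}: \textbf{(A)} every $\varepsilon$-friend of an $\varepsilon$-dense vertex is itself $\varepsilon$-dense, so in particular all the $\ge(1-\varepsilon)\Delta$ $\varepsilon$-friends of a vertex $v\in C_i$ lie in $C_i$.

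The heart of the argument is that every connected component of $G_\varepsilon$ has diameter at most $2$. Suppose not; then there are $u,v$ in one component with $\mathrm{dist}_{G_\varepsilon}(u,v)\ge 3$, and the first three edges of a shortest $u$--$v$ path give vertices $u=w_0,w_1,w_2,w_3$ with $\mathrm{dist}_{G_\varepsilon}(w_0,w_3)=3$ and all three consecutive pairs $\varepsilon$-friend edges. Applying the estimate above to $w_0,w_1,w_2,w_3$ gives $|N(w_0)\cap N(w_3)|\ge(1-3\varepsilon)\Delta$, hence $|N(w_0)\cup N(w_3)|\le(1+3\varepsilon)\Delta$. Since $w_0,w_3$ are $\varepsilon$-dense, each has $\ge(1-\varepsilon)\Delta$ $\varepsilon$-friends, contained in $N(w_0)$ and $N(w_3)$ respectively, so $|F_\varepsilon(w_0)\cap F_\varepsilon(w_3)|\ge 2(1-\varepsilon)\Delta-(1+3\varepsilon)\Delta=(1-5\varepsilon)\Delta>0$ by $\varepsilon<\tfrac15$. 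By (A) any common $\varepsilon$-friend $z$ is $\varepsilon$-dense, so $w_0-z-w_3$ is a path in $G_\varepsilon$, contradicting $\mathrm{dist}_{G_\varepsilon}(w_0,w_3)=3$. Given diameter at most $2$, property (4) follows at once: for $u,v$ in a component, either $u=v$ or $(u,v)\in F_\varepsilon$, in which case $|N(u)\cap N(v)|\ge(1-\varepsilon)\Delta\ge(1-2\varepsilon)\Delta$; or $\mathrm{dist}_{G_\varepsilon}(u,v)=2$ and some $w$ is an $\varepsilon$-friend of both, so the two-step estimate gives $|N(u)\cap N(v)|\ge(1-2\varepsilon)\Delta$.

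The remaining three properties are then bookkeeping. For property (2): by (A) all $\ge(1-\varepsilon)\Delta$ $\varepsilon$-friends of $v\in C_i$ lie in $C_i$, so at most $\Delta-(1-\varepsilon)\Delta=\varepsilon\Delta$ of $v$'s $\le\Delta$ neighbors fall outside $C_i$. For property (1): fix $v\in C_i$ and double-count pairs $(u,x)$ with $u\in C_i$ and $x\in N(u)\cap N(v)$; property (4) makes the sum over $u$ at least $|C_i|(1-2\varepsilon)\Delta$, while summing over $x\in N(v)$ it is at most $|N(v)|\cdot\Delta\le\Delta^2$, so $|C_i|\le\Delta/(1-2\varepsilon)\le(1+3\varepsilon)\Delta$ once $\varepsilon$ is small. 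For property (3): by (A) again, $v\in C_i$ has at least $(1-\varepsilon)\Delta$ neighbors inside $C_i$, so the number of non-neighbors of $v$ inside $C_i$ is at most $|C_i|-1-(1-\varepsilon)\Delta$, which is $O(\varepsilon)\Delta$; a slightly tighter version of the counting in (1) brings the bound down to $3\varepsilon\Delta$.

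I expect the two delicate points to be exactly \textbf{(i)} the bounded-diameter statement --- equivalently, the near-transitivity of the $\varepsilon$-friend relation within a component, which is precisely where the hypothesis $\varepsilon<\tfrac15$ is used, since a hypothetical distance-$3$ pair is forced to share a common $\varepsilon$-friend --- and \textbf{(ii)} the closure fact (A) that $\varepsilon$-friends of $\varepsilon$-dense vertices are $\varepsilon$-dense, which is what confines a dense vertex's friends to its own component and underlies (2), (3), and the path step in (i); everything else is inclusion--exclusion and double counting. Since the statement is quoted verbatim from \cite{hss}, one may alternatively simply cite it, but the above is the route I would take to reprove it.
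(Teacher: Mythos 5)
The paper does not prove this lemma; it is quoted and cited from~\cite{hss}, so there is no internal argument to compare against, and your write-up is a self-contained re-derivation attempt.

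The load-bearing step I do not believe holds is your claim (A): that every $\varepsilon$-friend of an $\varepsilon$-dense vertex is itself $\varepsilon$-dense. You present (A) as a known structural input from~\cite{hss} but do not verify it, and it underpins the diameter-$2$ argument and properties (2) and (3) directly, hence (1) and (4) as well, since these route through the diameter bound. The same inclusion--exclusion you use elsewhere yields only a weaker conclusion: if $v$ is $\varepsilon$-dense and $u\in F_\varepsilon(v)$, then for any $\varepsilon$-friend $w$ of $v$ with $w\in N(u)$ one has $|N(u)\cap N(w)|\geq |N(u)\cap N(v)|+|N(w)\cap N(v)|-|N(v)|\geq (1-2\varepsilon)\Delta$, so $w$ is only a $2\varepsilon$-friend of $u$; thus $u$ is only guaranteed to be $2\varepsilon$-dense, and nothing in the hypotheses forces the two intersections inside $N(v)$ to coincide, so the factor $2\varepsilon$ cannot in general be improved to $\varepsilon$. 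The paper's own dynamic decomposition reflects exactly this loss: friends pulled into an almost-clique when a vertex becomes dense are only certified roughly $2c_1$-dense, which is why Theorem~\ref{fd-decomposition} guarantees only $V_D\subseteq \vd_{11\varepsilon/4}$ rather than $V_D\subseteq \vd_{\varepsilon}$. Without (A), the common $\varepsilon$-friend $z$ produced in your distance-$3$ contradiction could be $\varepsilon$-sparse and hence not a vertex of $G_\varepsilon$, so the length-$2$ path $w_0\text{--}z\text{--}w_3$ in $G_\varepsilon$ need not exist, and the argument collapses. Two smaller issues: your double count for (1) gives $|C_i|\le\Delta/(1-2\varepsilon)$, which is at most $(1+3\varepsilon)\Delta$ only when $\varepsilon\le 1/6$, short of the claimed range $\varepsilon<1/5$; and the ``slightly tighter version'' reducing (3) from $4\varepsilon\Delta$ to $3\varepsilon\Delta$ is asserted but not supplied. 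The HSS argument must be more delicate precisely because closure of density under the friend relation degrades by a factor of $2$ in $\varepsilon$.
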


To utilize this decomposition for fully dynamic $(\Delta+1)$-coloring, we require additional properties. In particular, a lower bound on $|C_i|$ of $\Omega(\Delta)$ for all $i$ gives an upper bound on the number of neighbors outside $C_i$ for any vertex in $C_i$. Furthermore, it is crucial to limit the number of non-edges across all almost-cliques which change after a single edge update to the graph; this effectively limits the change in the coloring maintained by our algorithm. We present the first fully dynamic algorithm which maintains a sparse-dense decomposition satisfying these additional properties.
 
 For a vertex $v$ in an almost-clique $C$, let $\b{E_C(v)}$ denote the total number of non-edges of $v$ in $C$, i.e. $\b{E_C(v)}=\{(u,v)| \, u\in C\backslash N(u)\}$. Let $\b{E(C)}=\bigcup_{v\in C} \b{E_C(v)}$ denote the set of all non-edges in $C$. 

\begin{definition}(Adjustment Complexity of Non-Edges) The adjustment complexity of non-edges is the total number of changes to the list of non-edges across all almost-cliques after a single edge update to $G$.
\end{definition}

The following theorem summarizes the main result of this section.
\begin{restatable}{theorem}{fddecomposition}
\label{fd-decomposition}
 For any graph $G=(V,E)$, and any constant $0< \varepsilon<\frac{3}{50}$, there exists a fully dynamic randomized algorithm which maintains a graph decomposition of $G$ in $O(\frac{\ln n}{\varepsilon^4})$ worst-case update time against an adaptive adversary. The algorithm maintains a partition of $V\coloneqq V_S \cup V_D$ such that at all times $V_S\subseteq \vs_{4\varepsilon/3}$ and $V_D\subseteq \vd_{11\eps/4}$. 
    
    Furthermore, $V_D$ is partitioned into vertex-disjoint \textit{almost-cliques} $V_D\coloneqq C_1\cup C_2\cup...C_{\ell}$ such that after processing any edge update, the following properties for all $i\in [\ell]$ hold with high probability: 
    \begin{enumerate}
        \item $(1-4\varepsilon)\Delta\leq |C_i|\leq (1+10\varepsilon)\Delta$.
        \item Each vertex in $C_i$ has at least $(1-4\varepsilon)\Delta$ neighbors in $C_i$.
        \item The adjustment complexity of non-edges is $O(\frac{1}{\varepsilon^4})$.
    \end{enumerate}
\end{restatable}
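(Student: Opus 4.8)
The plan is to maintain the decomposition with two ingredients: a \emph{sampling oracle} that approximates, in sublinear time, the quantities $|N(u)\cap N(v)|$ underlying Definition~\ref{defn: friendedge}, and a \emph{hysteretic (lazy) classification rule} whose two-sided slack --- the gap between the parameters $4\varepsilon/3$ and $11\varepsilon/4$, refined into a ladder of $\Theta(\varepsilon)$-spaced intermediate thresholds --- guarantees that the sparse/dense label of a vertex, and its almost-clique membership, change only very rarely. Lemma~\ref{lemma: hss} is then applied as a black box to the graph of currently ``certified'' friend edges to read off the structural guarantees.

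\textbf{Data structures and the oracle.} First I would store $G$ in hash tables so adjacency/degree queries cost $O(1)$, and maintain for every vertex $v$ a uniformly random multiset $S(v)$ of $\Theta(\varepsilon^{-2}\ln n)$ neighbors, refreshable in $O(\varepsilon^{-2}\ln n)$ worst-case time per update by standard dynamic sampling. Using $S(u)$, one estimates $|N(u)\cap N(v)|$ to additive error $\pm\varepsilon\Delta/100$ w.h.p.\ in $O(\varepsilon^{-2}\ln n)$ time via a Chernoff bound on $|S(u)\cap N(v)|$; composing this with an outer layer of $\Theta(\varepsilon^{-2}\ln n)$ sampled neighbors $w$ of $v$, each tested for (approximate) friendship, gives in $O(\varepsilon^{-4}\ln^2 n)=\ti(\varepsilon^{-4})$ time an estimate of the number of $\varepsilon'$-friends of $v$ to additive $\pm\varepsilon\Delta/100$ w.h.p., for any $\varepsilon'\in[\varepsilon,11\varepsilon/4]$. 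Hence the oracle correctly ``certifies'' $v$ as $\varepsilon$-dense, or as $(11\varepsilon/4)$-sparse, whenever $v$ lies outside an $O(\varepsilon\Delta)$-wide ambiguous band. Refreshing all randomness every $\mathrm{poly}(n)$ updates and union-bounding over the $\mathrm{poly}(n)$ oracle calls made in between --- using that the adaptive adversary sees only the output decomposition, never the internal samples --- makes every oracle answer correct w.h.p.\ throughout.

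\textbf{Hysteretic classification and clique maintenance.} I would give each vertex a label in $\{S,D\}$, updated only by the rule: relabel to $D$ only when the oracle certifies $\varepsilon$-density, relabel to $S$ only when it certifies $(11\varepsilon/4)$-sparsity, otherwise keep the current label. The almost-cliques are the connected components of the graph $\widehat G$ on $D$-labelled vertices whose edge set is the currently \emph{certified} friend edges, with an analogous hysteresis on component membership (a vertex leaves its component only once a constant fraction of its certified friend set has turned over). Crucially, the oracle is consulted --- and a vertex's label, certified-friend set, and component possibly changed --- \emph{only} when that vertex is an endpoint of the current update; an untouched vertex keeps its label and component even though its true density may have drifted. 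When an endpoint $a$ of the update does change state, its $O(\varepsilon^{-4})$ incident certified-friend changes are propagated to $\widehat G$, the component structure, and the non-edge lists in $\ti(\varepsilon^{-4})$ total time.

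\textbf{The structural guarantees and the crux.} Applying Lemma~\ref{lemma: hss} with parameter $\varepsilon$ to $\widehat G$ gives components that are almost-cliques with that lemma's size and internal-degree bounds; absorbing the $\pm O(\varepsilon\Delta)$ oracle slack and the bounded drift of untouched vertices into the looser constants $4\varepsilon,10\varepsilon$ yields items~1--2, and $V_S\subseteq\vs_{4\varepsilon/3}$, $V_D\subseteq\vd_{11\varepsilon/4}$ are immediate from the relabeling rule and oracle accuracy. The main obstacle is item~3, the $O(\varepsilon^{-4})$ adjustment complexity, because moving a single vertex between cliques would already change $\Theta(\varepsilon\Delta)\gg\varepsilon^{-4}$ non-edges. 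The key point is that a single edge update shifts $|N(x)\cap N(y)|$ by at most one for every pair, so the \emph{certified} friend count of any vertex can traverse its $\Theta(\varepsilon)$-wide hysteresis band only after $\Omega(\varepsilon\Delta)$ updates that ``relevantly touch'' it; combining this with the fact that almost-cliques have size $\Theta(\Delta)$ and are therefore very stable, a charging argument should show that w.h.p.\ no vertex changes its almost-clique in a given update, so the only non-edge changes are the $O(1)$ endpoints of the update itself plus the $O(\varepsilon^{-4})$ bookkeeping changes caused when an endpoint is relabeled. The delicate interaction to handle is that each fresh oracle estimate could in principle flip a borderline vertex; this is defused by defining hysteresis with respect to the certified state and never re-examining a vertex except on a touch, so there are no spurious flips between touches. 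Turning this into a genuinely \emph{worst-case} $O(\varepsilon^{-4}\ln n)$ per-update bound, rather than an amortized one, is where I expect most of the work to lie.
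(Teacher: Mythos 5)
Your high-level strategy --- a sampling oracle for approximate common-neighbor counts, hysteresis via a ladder of intermediate thresholds between $\frac{4\varepsilon}{3}$ and $\frac{11\varepsilon}{4}$, and a charging argument based on the observation that a single update shifts any $|N(x)\cap N(y)|$ by at most one --- is genuinely aligned with the paper's. But there are two structural gaps.

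\emph{Gap 1: re-examining only the endpoints of the current update does not keep the friend lists or density labels current.} Consider an edge deletion $(u,w)$. This reduces $|N(u)\cap N(v)|$ for every $v\in N(w)$, and hence can erode the (certified) friendship $(u,v)$ for $\Theta(\Delta)$ vertices $v$ that are not endpoints of the update and whose own degree never changes. Your rule ``a vertex is re-examined only when it is a touched endpoint'' therefore lets the true density of such a $v$ drift arbitrarily far while its label is frozen, and the hysteresis band only bounds the drift \emph{per update touching $v$ directly}. The paper resolves this with the direct/indirect counter scheme in \textsc{Maintain-Friends}: after every $\Theta(\tau\Delta)$ direct updates to $u$, the $\Theta(\Delta)$-time procedure \textsc{Determine-Dense} is run on $u$, re-evaluating $(u,v)$ for \emph{all} $v\in N(u)$ and simultaneously incrementing an \emph{indirect} counter on each such $v$. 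A vertex $v$ is then itself fully re-scanned once it accumulates $\Theta(\tau\Delta)$ direct-plus-indirect events. That is the mechanism that (a) amortizes the $\Theta(\Delta\ln n/\tau^2)$ cost of a full rescan down to $\ti(1/\tau^4)$ per update, and (b) caps the drift of every certified quantity between rescans by $\frac{\tau\Delta}{4}$. Your single-layer ``re-estimate the endpoint's density in $\ti(\varepsilon^{-4})$ time'' gives a scalar count, not the explicit lists $N_1(v),N_2(v),N_3(v)$ the clique operations need, and provides no trigger for rescanning a neighbor whose friendships have silently degraded.

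\emph{Gap 2: invoking Lemma~\ref{lemma: hss} as a black box on connected components of the certified friend graph is not an algorithm.} Lemma~\ref{lemma: hss} is an existence statement; it does not tell you how to maintain the components under $O(\frac{1}{\varepsilon^4})$ adjustment complexity, and in fact a naive dynamic-connectivity approach would fail the adjustment bound. The paper instead defines explicit \textsc{Dense-Move}/\textsc{Sparse-Move} operations keeping four invariants (\textsc{Density}, \textsc{Friendship}, \textsc{Size}, \textsc{Connectedness}), and --- crucially --- tracks a counter $\sigma(C)$ of sparse moves per clique $C$; once $\sigma(C)$ exceeds $\nu\Delta$ it \emph{collapses the whole clique}, moving all its vertices to $V_S$ and rebuilding via fresh \textsc{Dense-Move} calls. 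Without this collapse-and-rebuild step there is no way to keep $|C_i|\ge(1-4\varepsilon)\Delta$ after $C_i$ bleeds vertices over a long stretch. Your proposal gestures at clique-membership hysteresis but has no analogue of this.

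Finally, note that the paper proves an \emph{amortized} $\ti(1/\varepsilon^4)$ bound (a single sparse or dense move does change $\Theta(\varepsilon\Delta)$ non-edges --- that is not avoidable, only amortizable over the $\Omega(\tau\Delta)$ updates needed to trigger it), and then invokes the standard periodic-rebuilding transformation to claim worst-case; your claim that ``w.h.p.\ no vertex changes its almost-clique in a given update'' is stronger than what is true or needed.
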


Our fully dynamic algorithm to maintain the sparse dense decomposition and the proof of Theorem \ref{fd-decomposition} is deferred to Section \ref{sec: fdalgorithmdecomposition}.
\section{A Fully Dynamic \texorpdfstring{$(\Delta+1)$}{D+1}-Coloring Algorithm}\label{sec: fullydynamiccoloring}
In this section, we describe our $(\Delta+1)$-coloring algorithm, given the sparse-dense decomposition in Theorem \ref{fd-decomposition}. First, we describe an algorithm which maintains a $(\Delta+1)$-coloring on the subgraph induced on the set of sparse vertices $V_S$. Given the coloring on sparse vertices, we show how to maintain a $(\Delta+1)$ coloring on dense vertices in $V_D$.

To obtain a sublinear (in $n$) update time, our algorithms work with a fixed sparse-dense decomposition for every $t=\Omega(\eps^2\Delta)$ updates. While we maintain data structures such as non-edges in almost-cliques, the set of sparse and dense vertices $V_S$ and $V_D$ respectively remain fixed throughout these $t$ updates. After every $t$ updates, we utilize our fully dynamic algorithm \textsc{Update-Decomposition} to update the sparse-dense decomposition. The next section formalizes our approach.

\subsection{Working with a Fixed Decomposition for \texorpdfstring{$\Theta(\eps^2\Delta)$}{e2D} updates}\label{sec: constadjcomplexity} 
Our approach is simple. Within $t=\frac{\varepsilon^2\Delta}{18e^6}$ updates, the decomposition of $G$ remains fixed. More precisely, we consider the sequence of updates partitioned into contiguous subsequences of $t$ updates each; each such subsequence is called a \textit{phase}. We denote the $j^{th}$ phase by $\mathcal{U}_j$. 

For every phase $\mathcal{U}_j$, we maintain a $(\Delta+1)$ coloring of $G$ such that the decomposition of $V$ into $V_S$ and $V_D$, together with all almost-cliques $C_j$ for $i\in [\ell]$ remains fixed. Edge updates during phase $\mathcal{U}_j$ are reflected in data structures, including $N_S(\cdot), N_D(\cdot),$ $\b{E_C(\cdot)}, N_C(\cdot), \b{E(C)}$ for all $C$ (see Section \ref{sec: fullydynamicdecomposition}), which takes $O(1)$ update time. However, no vertex moves from $V_S$ to $V_D$ or vice-versa, and every dense vertex stays in the same almost-clique throughout $\mathcal{U}_j$. Since vertices do not move from $V_S$ to $V_D$ or vice versa, it holds that for any almost-clique $C_i$, the set of non-edges $\b{E(C_i)}$ changes by at most one per update.

Before the next phase $\mathcal{U}_{j+1}$ begins, all updates to the aforementioned data structures are undone. Then, \textsc{Update-Decomposition} is invoked on each update $(u,v)\in \mathcal{U}_j$. This ensures that our algorithm works with an updated decomposition after each phase. Thereafter, all vertices in $G$ are recolored from scratch given the updated decomposition.

The following lemma shows that the resulting properties of the sparse-dense decomposition throughout a phase by our algorithm are \textit{essentially} similar to the ones guaranteed by Theorem \ref{fd-decomposition}.

\begin{lemma}\label{lemma: fd-decompositionphase}
    For any constant $0< \varepsilon<\frac{3}{50}$, there exists a sparse-dense decomposition for a graph $G=(V,E)$ undergoing edge insertions or deletions which can be maintained in $ O(\frac{\log n}{\varepsilon^4})$ amortized update time against an adaptive adversary, such that throughout every phase of $t=\frac{\eps^2\Delta}{18e^6}=\Theta(\eps^2\Delta)$ updates to $G$ the following properties hold, w.h.p.: 
    \begin{enumerate}
    \item the decomposition of $V=(V_S, V_D)$ and $V_D=(C_1, C_2,..., C_{\ell})$ remains fixed, where $V_S\subseteq \vs_{\varepsilon/3}$ and $V_D\subseteq \vd_{15\eps/4}$. 
    \item For all $i\in [\ell]$: 
        \begin{itemize}
        \item $(1-4\varepsilon)\Delta\leq |C_i|\leq (1+10\varepsilon)\Delta$.
        \item Each vertex in $C_i$ has at least $(1-5\varepsilon)\Delta$ neighbors in $C_i$.
        \item $\b{E(C_i)}$ changes by at most 1 per update.
    \end{itemize}
    \end{enumerate}
\end{lemma}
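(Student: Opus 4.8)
The plan is to derive Lemma~\ref{lemma: fd-decompositionphase} from Theorem~\ref{fd-decomposition} by a simple ``freeze-and-amortize'' argument, tracking how the guarantees of the underlying decomposition degrade over a phase of $t = \frac{\eps^2\Delta}{18e^6}$ updates. The key point is that within a phase we do \emph{not} run \textsc{Update-Decomposition}; we merely maintain the auxiliary data structures ($N_S(\cdot), N_D(\cdot), \b{E_C(\cdot)}, N_C(\cdot), \b{E(C)}$) under the at most $t$ edge updates in $O(1)$ time each. So property~(1)'s first clause — that $(V_S, V_D)$ and $(C_1,\ldots,C_\ell)$ stay fixed — holds by construction. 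For the running time: Theorem~\ref{fd-decomposition} gives worst-case $O(\ln n/\eps^4)$ per update for \textsc{Update-Decomposition}, but we only invoke it on the $t$ updates of a phase \emph{at the end} of the phase (replaying them), plus undoing the $O(t)$ data-structure changes, all of which is $O(t\cdot \ln n/\eps^4 + t) = O(t\ln n/\eps^4)$ work spread over $t$ updates, i.e.\ $O(\ln n/\eps^4)$ amortized. (One also re-colors from scratch at phase boundaries, but that cost is accounted for elsewhere, in the coloring sections, not in this lemma about the decomposition.)

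Next I would handle the quantitative degradation of the structural parameters over a phase. At the start of the phase, Theorem~\ref{fd-decomposition} guarantees, w.h.p., that $V_S \subseteq \vs_{4\eps/3}$, $V_D \subseteq \vd_{11\eps/4}$, each $|C_i| \in [(1-4\eps)\Delta,(1+10\eps)\Delta]$, and each $v\in C_i$ has $\ge (1-4\eps)\Delta$ neighbors in $C_i$. Over the course of $\le t$ edge updates, each of the relevant quantities moves by at most $t$: any vertex's degree, any pairwise common-neighborhood count $|N(u)\cap N(v)|$, the size of any $N_{C_i}(v)$, etc., changes by at most one per edge update. Since $t = \Theta(\eps^2\Delta)$ and $\eps$ is a (small) constant, $t \le \eps\Delta$ for $\eps$ small enough, so: (i) the sparse/dense classification slack widens from $4\eps/3$ to at most $4\eps/3 + O(\eps)$, which is $\le \eps/3$\,--- wait, that's the wrong direction; I need to be careful here and this is where the main bookkeeping lives. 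The statement claims $V_S \subseteq \vs_{\eps/3}$, which is a \emph{stronger} (smaller-parameter) containment than what the theorem gives at phase start ($\vs_{4\eps/3}$); this only makes sense if the theorem is invoked with a rescaled parameter, i.e.\ one runs the Theorem~\ref{fd-decomposition} machinery with $\eps' = \eps/4$ (or similar), so that $\vs_{4\eps'/3} = \vs_{\eps/3}$ at phase start, and then the $+t \le \eps\Delta$ slack from the phase updates is absorbed into the gap between $\eps/3$ and the looser bounds like $\vd_{15\eps/4}$ and $(1-5\eps)\Delta$ stated in the lemma. So the actual step is: apply Theorem~\ref{fd-decomposition} with a suitably scaled-down $\eps$, then verify that after $\le t$ updates every bound in item~(2) and the containments in item~(1) still hold with the (looser) constants as stated.

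Concretely, the steps in order are: (a) fix the phase length $t = \eps^2\Delta/(18e^6)$ and note $t < \eps\Delta$ for $\eps < 3/50$; (b) invoke Theorem~\ref{fd-decomposition} with parameter $\eps/4$ to get the decomposition at the start of each phase, and keep it frozen; (c) observe the data-structure updates during the phase are $O(1)$ each and the end-of-phase replay of \textsc{Update-Decomposition} is $O(\ln n/\eps^4)$ amortized; (d) for each structural bound — $|C_i|$, in-clique degree, $\eps$-dense/$\eps$-sparse membership — argue the quantity moves by at most $t < \eps\Delta$ over the phase and check that the phase-start bound plus this slack still lies inside the claimed interval (e.g.\ $(1-4(\eps/4))\Delta - t \ge (1-\eps)\Delta - \eps\Delta \ge (1-4\eps)\Delta$ for the lower size bound; the $\eps$-density parameter drifts from $\eps/4\cdot(\text{const})$ up to at most $15\eps/4$, etc.); (e) the claim that $\b{E(C_i)}$ changes by at most one per update is immediate since $V_S,V_D$ and the $C_i$ are frozen, so a single edge update $(u,v)$ can only add or remove $(u,v)$ from some $\b{E(C_i)}$ (and only if both $u,v\in C_i$).

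The main obstacle is purely the constant-chasing in step~(d): getting a single scaling factor on $\eps$ that simultaneously makes the phase-start guarantees tight enough that, after the worst-case $t$-update drift, \emph{all} of the lemma's stated bounds ($\vs_{\eps/3}$, $\vd_{15\eps/4}$, $(1-4\eps)\Delta \le |C_i| \le (1+10\eps)\Delta$, in-clique degree $\ge (1-5\eps)\Delta$) hold simultaneously w.h.p. This is routine but must be done carefully; there is nothing conceptually deep, and the high-probability qualifier is inherited directly from Theorem~\ref{fd-decomposition} via a union bound over the (polynomially many) phases.
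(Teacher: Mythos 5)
Your high-level plan — freeze the decomposition for a phase, amortize the end-of-phase replay of \textsc{Update-Decomposition}, bound the per-phase degradation, and observe the non-edge adjustment is one per update when the $C_i$ are frozen — matches the paper's proof. But there is a genuine error in the structural part, and it stems from having the direction of the sparse/dense containments backward. By Definitions~\ref{defn: friendedge}--2, for $\eps_1 < \eps_2$ every $\eps_1$-friend is an $\eps_2$-friend, hence $\eps_1$-dense implies $\eps_2$-dense, hence $\eps_2$-sparse implies $\eps_1$-sparse, i.e.\ $\vs_{4\eps/3} \subseteq \vs_{\eps/3}$ and $\vd_{11\eps/4} \subseteq \vd_{15\eps/4}$. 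So the lemma's containments $V_S \subseteq \vs_{\eps/3}$, $V_D \subseteq \vd_{15\eps/4}$ are \emph{weaker} than the phase-start guarantees from Theorem~\ref{fd-decomposition}, not stronger, and the paper's proof runs \textsc{Update-Decomposition} with the \emph{same} $\eps$ (the ``$\eps/3$'' in the pseudo-code call is $\tau$, the accuracy parameter, not a rescaled $\eps$). The gap between $4\eps/3$ and $\eps/3$ (and between $11\eps/4$ and $15\eps/4$) is exactly the slack that absorbs the drift: over $t = \frac{\eps^2\Delta}{18e^6} \leq \eps\Delta$ updates, any $|N(v)\cap N(x)|$ changes by at most $t$, so a non-$\tfrac{4\eps}{3}$-friend of $v$ at phase start, having $|N(v)\cap N(x)| < (1-\tfrac{4\eps}{3})\Delta$, cannot reach the $(1-\tfrac{\eps}{3})\Delta$ threshold. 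Your proposal of invoking the theorem at $\eps'=\eps/4$ would give $V_S\subseteq \vs_{\eps/3}$ at phase start, and after the drift only $V_S\subseteq \vs_{\eps/3 - t/\Delta}$, which is strictly larger than $\vs_{\eps/3}$ and so does \emph{not} yield the claimed containment — the rescaling would actually break the proof.

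The second error is in your step~(d), where you treat $|C_i|$ as degrading over a phase and need slack to compensate (``$(1-4(\eps/4))\Delta - t \geq (1-4\eps)\Delta$''). Since the partition into $V_S, V_D$ and the $C_i$ is frozen throughout the phase, the set sizes $|C_i|$ are literally unchanged; the paper states this explicitly. The only quantity that degrades is the \emph{in-clique degree} of a vertex, which can drop by at most $t$ as edges inside $C_i$ are deleted, giving $(1-4\eps)\Delta - t \geq (1-5\eps)\Delta$. Your running-time accounting and your argument for the ``$\b{E(C_i)}$ changes by at most one per update'' property are both correct and match the paper.
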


\begin{proof}
    Before a new phase begins, $\textsc{Update-Decomposition}(\eps, \frac{\eps}{3}, (u,v))$ is invoked for all updates in the current phase. By the proof of Theorem \ref{fd-decomposition}, the amortized update time for this step is $O(\frac{\log n}{\eps^4})$.
    
    Clearly, the properties stated in Theorem \ref{fd-decomposition} hold before the beginning of any phase. Since $C_i$ for all $i\in [\ell]$ remains fixed throughout phase $\mathcal{U}_j$, the bounds on $|C_i|$ don't change. Since a phase  consists of most $t$ updates, any vertex in $C_i$, $i\in [\ell]$ has at least $(1-4\eps-\frac{\eps^2}{18e^6})\Delta\geq (1-5\eps)\Delta$ neighbors in $C_i$ throughout the phase. Thus, any vertex in $C_i$, $i\in [\ell]$ has at most $5\eps\Delta$ neighbors outside $C_i$. 

    Any update changes $\b{E(C_i)}$ by at most 1. By Theorem \ref{fd-decomposition} it follows that after \textsc{Update-Decomposition} is invoked for all updates in the previous phase, for any vertex $v\in V_S$, $v$ is $\frac{4\eps}{3}$-sparse. Since $t\leq \Delta$, any vertex $v\in V_S$ is $\frac{\eps}{3}$ sparse throughout a phase. Similarly, it holds that any vertex $v\in V_D$ is $\frac{15\eps}{4}$-dense.
\end{proof}

 We note that the number of almost-cliques is $O(\frac{n}{(1-4\eps)\Delta})=O(\frac{n}{\Delta})$ for $0<\eps<1$. In the next two sections, we present algorithms to maintain a $(\Delta+1)$ coloring on the sets $V_S$ and $V_D$, respectively.

\subsection{An Algorithm for Sparse Vertices}
In this section, we present a fully dynamic algorithm to maintain a $(\Delta+1)$-coloring in $\ti(\frac{n}{\eps^2\Delta})$ amortized update time on the subgraph $G_S=(V_{S}, E(V_{S}))$ induced on the set of sparse vertices. For this section, let $V_{S}$ denote the set of $\varepsilon$-sparse vertices and $E(V_{S})$ denotes edges in $G$ with both endpoints in $V_{S}$ in a given phase. We assume that $0<\varepsilon<1$.

Before presenting the fully dynamic algorithm, we give a static algorithm to obtain a $(\Delta+1)$-coloring on sparse vertices, at the beginning of a phase. 

\subsubsection{Recoloring Sparse Vertices at the Beginning of a Phase}\label{sec: staticalgsparse}
We exploit sparsity of vertices in a manner similar to $(\Delta+1)$-coloring algorithms in sublinear and distributed settings \cite{hss, ack, junlipettie}. Recall that any $\varepsilon$-sparse vertex has at most $(1-\varepsilon)\Delta$, $\varepsilon$-friends by definition, and thus, the number of non-edges $(u,w)$ s.t. $u,w\in N(v)$ is at least $\Omega(\varepsilon^2\Delta^2)$. Let $V_S$ denote the set of $\eps$-sparse vertices in $V$.

Sparse vertices are colored using two simple subroutines, which we call \textsc{One-Shot-Coloring} and \textsc{Greedy-Coloring} respectively. We maintain a list $L(c)$ for each colors $c\in [\Delta+1]$ which contains all vertices assigned color $c$. Initially, $L(c)=\emptyset$ for all $c\in [\Delta+1]$ and all vertices are assigned a blank color $\perp$.

\noindent\textbf{One Shot Coloring.} The subroutine \textsc{One-Shot-Coloring} receives a set $U$ of vertices to be colored and proceeds as follows. First, each vertex $v\in U$ independently picks a single color $c$ uniformly at random from $[\Delta+1]$. For any vertex $v\in U$, if no neighbor of $u\in U$ picks the color $c$ which is picked by $v$, $v$ is assigned color $c$. A constant fraction of vertices in $U$ are assigned a valid color with high probability in this manner. The set $U'\subseteq U$ of vertices successfully colored after $\textsc{One-Shot-Coloring}$ concludes is returned.

\begin{algorithm}[h]
    \caption{\textsc{One-Shot-Coloring}$(U)$}
    \begin{algorithmic}[1]
        \State $U'=\emptyset$.
        \For {all $v\in U$}
            \State Sample a color $c'(v)$ uniformly at random from $[\Delta+1]$.
        \EndFor 
        \For {all $v\in U$}
            \State $c\leftarrow c'(v)$.
            \If {no neighbor of $v$ is in $L(c)$}
                \State $c(v)\leftarrow c$, $L(c)\leftarrow L(c)\cup \{v\}$,  $U'\leftarrow U'\cup \{v\}$.
            \EndIf 
        \EndFor 
    \State \textbf{return} $U'$.
    \end{algorithmic}
    \label{alg:one-shot}
\end{algorithm}

 Let $A(v)$ denote the set of colors not assigned to any other sparse neighbor of $v$ at any point in time, i.e. $A(v)=[\Delta+1]\setminus \{c(u) \mid  \, u\in N(v)\cap V_S\}$. We remark that the set of available colors is \emph{not maintained} by our algorithm, since this can be costly. 

Let $d_r(v)$ denote the set of remaining uncolored sparse neighbors of $v$ after $\textsc{One-Shot-Coloring}$ finishes. We assume that $\varepsilon\leq \frac{1}{5000}$, $\alpha\geq 5000^3$, and $(\Delta+1)>\frac{\alpha\log n}{\varepsilon^2}$. The following lemma can be derived from Lemma 3.1 and Lemma A.1 in \cite{ack}.

\begin{lemma}\label{lemma: one-shot} 
For every uncolored vertex $v\in V\setminus U'$ after \textsc{One-Shot-Coloring}$(U)$ is completed, the number of available colors $|A(v)|\geq d_r(v)+\frac{\varepsilon^2\Delta}{9e^6}$ with probability at least $1-\exp(-\Omega(\varepsilon^2\Delta))$.
\end{lemma}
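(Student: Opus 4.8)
\textbf{Proof proposal for Lemma~\ref{lemma: one-shot}.}

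The plan is to analyze \textsc{One-Shot-Coloring}$(U)$ from the viewpoint of a fixed uncolored vertex $v \in V \setminus U'$, and to show that with the claimed probability, the number of colors that become ``doubly used'' inside $N(v) \cap U$ is large enough to offset the loss of available colors caused by the uncolored neighbors of $v$. Concretely, write $A(v) = [\Delta+1] \setminus \{c(u) : u \in N(v) \cap V_S\}$. A color is removed from $v$'s available set only if it is assigned to some colored sparse neighbor of $v$. If among the colored sparse neighbors of $v$ some pairs share a color, those ``collisions'' save available colors. Let $d_r(v)$ be the number of uncolored sparse neighbors after the subroutine; then $|A(v)| \geq (\Delta+1) - \bigl(d(v) - d_r(v)\bigr) + (\text{number of repeated colors among colored neighbors})$ — more carefully, $|A(v)|$ equals $(\Delta+1)$ minus the number of \emph{distinct} colors used on $N(v)\cap V_S$, and the number of distinct colors used is at most $d(v) - d_r(v)$ minus the number of ``extra'' copies. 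Since $d(v) \le \Delta$, it suffices to show that the number of ordered pairs $u, u' \in N(v) \cap U$ with $u \neq u'$ that both get successfully colored with the \emph{same} color is at least $d_r(v) + \frac{\varepsilon^2 \Delta}{9e^6} - (\Delta + 1 - d(v))$ with the stated probability; since $\Delta + 1 - d(v) \ge 1$, it is enough to produce roughly $\frac{\varepsilon^2 \Delta}{9 e^6}$ such same-color matched pairs beyond what is needed to cover $d_r(v)$. The cleanest route is to bound from below the number of \emph{non-edges} $(u,u')$ inside $N(v)$ whose two endpoints both receive the same color: by $\varepsilon$-sparsity, $v$ has at least $\Omega(\varepsilon^2 \Delta^2)$ such non-edges in its neighborhood, and each such non-edge $(u,u')$ has both endpoints colored with a common color $c$ with probability $\Theta(1/\Delta^2)$ (both pick $c$, and no neighbor of $u$ or $u'$ blocks it — the latter happens with constant probability since $v$ is sparse and the number of vertices picking any fixed color concentrates).

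The key steps, in order, would be: (1) Restrict attention to the sub-neighborhood $N(v) \cap U$ and cite from \cite{ack} (Lemma 3.1, Lemma A.1) the facts that (a) with probability $1 - \exp(-\Omega(\varepsilon^2\Delta))$, at most a $(1-\Omega(1))$ fraction of $N(v)\cap U$ remains uncolored and, more importantly, (b) the number of ``same-color pairs'' among successfully colored neighbors of $v$ concentrates around its expectation. (2) Compute the expectation: a non-edge $(u,u')$ in $N(v)$ contributes if both $u$ and $u'$ sample the same color $c \in [\Delta+1]$ (probability $1/(\Delta+1)$) and that color is not sampled by any other neighbor of $u$ or of $u'$; conditioned on $u,u'$ picking $c$, the probability that $c$ survives for both is a constant bounded away from $0$ (here one uses that the degrees are $\le \Delta$ while there are $\Delta+1$ colors, so an inclusion–exclusion / Poisson-type estimate gives survival probability $\ge 1/e^2$ or so, modulo the $e^6$-type constants in the statement). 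Summing over the $\Omega(\varepsilon^2 \Delta^2)$ non-edges gives expected number of same-color matched pairs $\ge c_0 \varepsilon^2 \Delta$ for an explicit constant $c_0$; the statement's $\frac{1}{9e^6}$ is exactly this constant after being honest about the sparsity count $\frac{\varepsilon^2 \Delta^2}{2}$ and the $e^{-O(1)}$ survival factors. (3) Concentrate: the random variable counting same-color matched pairs is a function of the independent color choices $\{c'(w)\}_{w}$ with bounded differences (changing one vertex's color affects $O(\Delta)$ pairs, but a Chernoff/Azuma argument on the appropriately bucketed contributions, or directly the argument in \cite{ack}, gives a $\exp(-\Omega(\varepsilon^2\Delta))$ tail). (4) Translate the lower bound on same-color pairs into the lower bound on $|A(v)|$: the number of distinct colors on $N(v)\cap V_S$ is at most (number of colored sparse neighbors) $-$ (number of matched same-color pairs, counted so each color used $j$ times contributes $j-1$), so $|A(v)| \ge (\Delta+1) - (d(v) - d_r(v)) + (\text{matched pairs}) \ge d_r(v) + \frac{\varepsilon^2\Delta}{9e^6}$, using $d(v) \le \Delta$.

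The main obstacle I expect is step (2)–(3): carefully lower-bounding the probability that a fixed non-edge's two endpoints both end up \emph{successfully} colored with a common color, while simultaneously maintaining enough independence to run a clean concentration argument. The subtlety is that the events ``$c$ survives for $u$'' for different non-edges are positively correlated through the shared neighborhood and through the global color-load random variables, so one cannot naively multiply probabilities; the honest fix (which is why the lemma is credited to \cite{ack}) is either to condition on a high-probability ``good load profile'' event (every color is sampled by $\Theta(n/\Delta)$-ish vertices, or locally every color is sampled $O(1)$ times within a neighborhood) and then argue conditional near-independence, or to set up a bounded-difference martingale over the color choices and absorb all the dependence into the Lipschitz constant. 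Since the excerpt explicitly says this lemma ``can be derived from Lemma 3.1 and Lemma A.1 in \cite{ack}'', the intended proof is presumably a short reduction: invoke their neighborhood-sparsity-to-excess-colors statement essentially verbatim, track the constants through the slightly different parameter regime ($\varepsilon$ sub-constant, $\Delta+1 > \alpha \log n/\varepsilon^2$), and check that the failure probability $\exp(-\Omega(\varepsilon^2\Delta))$ is still $1/\mathrm{poly}(n)$ under the assumed lower bound on $\Delta$ — so the write-up should mostly be bookkeeping rather than a fresh probabilistic argument.
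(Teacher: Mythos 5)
The paper gives no proof of Lemma~\ref{lemma: one-shot}; it is asserted as a direct corollary of Lemmas~3.1 and~A.1 of \cite{ack}, so there is no in-text argument to compare against. Your reconstruction of the underlying argument --- use $\eps$-sparsity to lower-bound the non-edges in $N(v)\cap U$ (via Claim~\ref{claim: degreeinU} and ghost vertices), show each such non-edge yields a surviving same-color pair with probability $\Theta(1/\Delta)\cdot e^{-O(1)}$, convert collisions into excess available colors, and concentrate via a bounded-difference or conditioning argument --- is the standard route behind the cited lemmas, and your closing observation that the failure probability $\exp(-\Omega(\eps^2\Delta))$ is $1/\mathrm{poly}(n)$ under $\Delta+1 > \alpha\log n/\eps^2$ is the right parameter check. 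Two small bookkeeping slips worth flagging: after the cancellation, the required excess is simply $E \ge \frac{\eps^2\Delta}{9e^6}$ (in your chain $|A(v)| \ge (\Delta+1) - (d(v)-d_r(v)) + E$, the $d_r(v)$ term cancels against the $d_r(v)$ on the right-hand side of the target inequality, so one need not additionally ``cover $d_r(v)$'' with collisions --- your intermediate sufficient condition has a spurious $d_r(v)$); and, as you yourself note in step~(4), the excess must be counted as $j-1$ per color used $j$ times rather than as raw same-color pairs, so a clean write-up should either restrict to isolated collision pairs or separately bound the maximum multiplicity of any color inside $N(v)\cap U$ to control the overcount.
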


\noindent\textbf{Greedy Coloring.} The remaining uncolored vertices in $S\coloneq V_S\setminus U'$ are colored in a sequential greedy manner using the \textsc{Greedy-Coloring} subroutine. This subroutine is due to Assadi and Yazdanyar~\cite{assadi-talk} and works as follows. We consider vertices in $S$ in \textit{random} order. Upon considering an uncolored vertex $v$, a random color $c$ is drawn uniformly at random from $[\Delta+1]$. If none of $v$'s neighbors have been assigned $c$, $v$ is assigned color $c$. Else, the previous step is repeated. 

We give the pseudo-code of the subroutine as follows.

\begin{algorithm}[h]
\caption{\textsc{Greedy-Coloring}$(S)$}
    \begin{algorithmic}[1]
        \State Generate a uniformly random permutation $\pi$: $[|S|]\rightarrow [|S|]$.
        \For {$i=1$ to $|S|$}
            \State $v\leftarrow S[\pi(i)]$. \Comment{i.e. $v$ is the $\pi(i)^{th}$ vertex in list $S$}.
            \While{$c(v)=\perp$}
                \State Draw a color $c$ uniformly at random from $[\Delta+1]$.
                \If{no neighbor of $v$ is in $L(c)$}
                    \State $c(v)\leftarrow c$, $L(c)\leftarrow L(c)\cup \{v\}$.
                \EndIf 
            \EndWhile 
        \EndFor 
    \end{algorithmic}
\end{algorithm}

Finally, we give our algorithm $\textsc{Color-Sparse}$ which takes as input the list $V_S$ of sparse vertices and obtains a proper coloring for vertices in $V_S$. The algorithm works as follows. Each vertex $v$ is added to a set $U$ with probability $\frac{1}{2}$ independently. Thereafter, \textsc{One-Shot-Coloring}$(U)$ is invoked. Let $S\coloneq V\setminus U'$ denote the set of uncolored vertices after this step. Thereafter, \textsc{Greedy-Coloring}$(S)$ is invoked.

\begin{algorithm}[h]
\caption{\textsc{Color-Sparse}$(V_S)$}
    \begin{algorithmic}[1]
        \State $U\coloneq \emptyset$.
        \For{all $v\in V_S$}
            \State Add $v$ to $U$ with probability $\frac{1}{2}$.
        \EndFor 
        \State $U'\leftarrow \textsc{One-Shot-Coloring}(U)$.
        \State \textsc{Greedy-Coloring}$(V_S\setminus U')$.
    \end{algorithmic}
\end{algorithm}

 We give some key properties of Algorithm $\textsc{Color-Sparse}$ as follows. We prove that the total number of colors drawn by all vertices throughout the execution of \textsc{Color-Sparse} is $\ti(n)$ with high probability. Based on this, we conclude that the size of list $L(c)$ for any $c\in [\Delta+1]$ is at most $\ti(\frac{n}{\Delta})$ after $\textsc{Color-Sparse}$ is completed. This allows us to bound the total running time of $\textsc{Color-Sparse}$ by $\ti(\frac{n^2}{\Delta})$. Since the proof is a straightforward adaptation of the arguments in \cite{assadi-talk}, we defer it to Appendix~\ref{sec: appendixcolorsparse}.

\begin{restatable}{lemma}{colorsparse}
\label{lemma: colorsparse}
     Algorithm \textsc{Color-Sparse} takes $\ti(\frac{n^2}{\Delta})$ time and on termination, $|L(c)|=\ti(\frac{n}{\Delta})$ for any color $c\in [\Delta+1]$ with high probability. Moreover, for every vertex $v\in V$, $|A(v)|\geq \frac{\varepsilon^2\Delta}{9e^6}$ with high probability.
\end{restatable}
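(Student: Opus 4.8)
The plan is to establish, in order: (a) the total number of color draws made over the whole run of \textsc{Color-Sparse} is $\ti(n)$ w.h.p.; (b) consequently $|L(c)| = \ti(n/\Delta)$ for every $c\in[\Delta+1]$; (c) consequently the running time is $\ti(n^2/\Delta)$; and (d) $|A(v)|\ge \varepsilon^2\Delta/(9e^6)$ for every $v\in V$. The standing hypothesis $(\Delta+1)>\alpha\log n/\varepsilon^2$ with $\alpha$ a large constant makes both $\Delta$ and $\varepsilon^2\Delta$ equal to $\Omega(\log n)$, which is all that the concentration bounds below need.

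Step (a) is the core of the argument, and the step I expect to be the main obstacle. \textsc{One-Shot-Coloring} makes exactly one draw per vertex of $U$, contributing at most $n$. For \textsc{Greedy-Coloring} I would follow Assadi--Yazdanyar~\cite{assadi-talk}: argue that at the moment a vertex $v$ is processed it still has $\Omega(\Delta)$ available colors, so it succeeds within $O(\log n)$ draws w.h.p.\ by a geometric tail bound, and then sum over the $\le n$ vertices. The delicate point is getting an $\varepsilon$-free $\Omega(\Delta)$ bound rather than just the $\Omega(\varepsilon^2\Delta)$ supplied by Lemma~\ref{lemma: one-shot}: the colors forbidden for $v$ are exactly the \emph{distinct} colors appearing on $v$'s already-colored neighbors, and although $v$ may have up to $\Delta$ colored neighbors, each of them received its color nearly uniformly from a palette of linear size, so these colors collide in a birthday-paradox fashion and the number of distinct values among them is bounded away from $\Delta$ by a constant factor. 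Making this rigorous requires controlling the dependence among the colors of different neighbors of $v$; the route is to bound, for each color $c$ separately, the probability that some already-colored neighbor of $v$ carries color $c$, sum over $c$ to get an expected number of forbidden colors at most $(1-\delta)\Delta$, and then concentrate --- all structured as an induction over the greedy steps with a union bound over the (at most $n$) steps. This gives $\ti(n)$ total draws w.h.p.

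For steps (b) and (c): a vertex ends with color $c$ only if it sampled $c$ during one-shot (so it lies in $P(c):=\{u\in U\mid c'(u)=c\}$) or drew $c$ at least once during greedy, hence $|L(c)|\le |P(c)|+N_c$ where $N_c$ is the number of greedy draws equal to $c$. Now $|P(c)|$ is $\mathrm{Bin}(|U|,1/(\Delta+1))$, which is $\ti(n/\Delta)$ w.h.p.; and conditioning on the w.h.p.\ event from (a) that there are $\ti(n)$ greedy draws --- each an independent uniform color --- the count $N_c$ is $\ti(n/\Delta)$ by a Chernoff bound, so after a union bound over the $\Delta+1$ colors we get $|L(c)|=\ti(n/\Delta)$ for all $c$. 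For the running time: \textsc{One-Shot-Coloring} can be run by bucketing $U$ by sampled color and then, for each $v\in P(c)$ in turn, scanning the current list $L(c)$ with $O(1)$-time adjacency tests, at cost $\sum_c O(|P(c)|^2)=(\Delta+1)\cdot\ti(n/\Delta)^2=\ti(n^2/\Delta)$; each greedy draw of a color $c$ costs $O(|L(c)|)=\ti(n/\Delta)$ to test feasibility and there are $\ti(n)$ of them, again $\ti(n^2/\Delta)$; all remaining bookkeeping (initializing the lists $L(\cdot)$, sampling $U$, the random permutation) is $\ti(n)\le\ti(n^2/\Delta)$ since $\Delta\le n$.

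For step (d): if $v$ is dense then every sparse neighbor of $v$ lies outside $v$'s almost-clique (sparse vertices belong to no almost-clique), so by Lemma~\ref{lemma: fd-decompositionphase} $v$ has at most $5\varepsilon\Delta$ sparse neighbors and $|A(v)|\ge (1-5\varepsilon)\Delta\ge\varepsilon^2\Delta/(9e^6)$ trivially. If $v$ is sparse, Lemma~\ref{lemma: one-shot} gives that immediately after \textsc{One-Shot-Coloring}$(U)$ one has $|A(v)|\ge d_r(v)+\varepsilon^2\Delta/(9e^6)$ with probability $1-\exp(-\Omega(\varepsilon^2\Delta))\ge 1-1/\mathrm{poly}(n)$, and a union bound makes this hold simultaneously for all sparse vertices (the statement is phrased for $v\notin U'$, but the same bound holds for $v\in U'$ since $A(v)$ and $d_r(v)$ depend only on $v$'s neighborhood). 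I then carry the invariant $|A(v)|-d_r(v)\ge\varepsilon^2\Delta/(9e^6)$ through \textsc{Greedy-Coloring}: coloring a vertex $u$ decreases $d_r(w)$ by exactly one and $|A(w)|$ by at most one for every neighbor $w$ of $u$, so the gap is non-decreasing; this also guarantees $|A(v)|\ge 1$ whenever $v$ is processed, so the while-loop of \textsc{Greedy-Coloring} terminates. When greedy finishes every sparse vertex is colored, so $d_r(v)=0$ and $|A(v)|\ge\varepsilon^2\Delta/(9e^6)$. Everything outside step (a) is concentration plus bookkeeping; the $\varepsilon$-free birthday-paradox estimate in (a) is the only genuinely subtle ingredient.
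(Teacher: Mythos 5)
Your steps (b), (c), and (d) are essentially sound, and step (d) in fact spells out a useful invariant (``$|A(v)|-d_r(v)$ is non-decreasing through \textsc{Greedy-Coloring}'') that the paper only gestures at. The problem is step (a), which you correctly flag as the main obstacle --- but the route you outline cannot succeed. You want to show that whenever $v$ is processed it still has $\Omega(\Delta)$ available colors by a ``birthday-paradox'' argument: $v$ may have $\Delta$ colored neighbors, but their colors are nearly uniform, so many should coincide. This fails because the birthday paradox presumes the samples \emph{can} collide, whereas adjacent neighbors of $v$ are \emph{forbidden} from colliding by the very coloring you maintain. For an $\eps$-sparse $v$ the number of non-adjacent pairs inside $N(v)$ is only $\Theta(\eps^2\Delta^2)$, so even with perfectly uniform colors the expected number of collisions among $v$'s neighbors is $O(\eps^2\Delta)$, and the number of distinct forbidden colors can be as large as $\Delta - O(\eps^2\Delta)$. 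Indeed, your own invariant in step (d) gives the tight bound: when $d_r(v)=0$, the guarantee is $|A(v)|\geq \eps^2\Delta/(9e^6)$, which is $\Theta(\eps^2\Delta)$, not $\Omega(\Delta)$. No amount of control over dependencies will improve this to an $\eps$-free $\Omega(\Delta)$ bound, because the bound is simply false for vertices processed last among their uncolored neighbors.

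The paper instead exploits the \emph{random permutation} in \textsc{Greedy-Coloring}: it is acceptable for the last vertex in a neighborhood to have only a handful of available colors and to require $\ti(\Delta)$ draws, because such high-rank vertices are rare. Concretely, one first dispenses with vertices having fewer than $\tfrac{9}{10}\Delta$ sparse neighbors (those always have $\Omega(\Delta)$ available colors). For the remainder, running \textsc{One-Shot-Coloring} first ensures $d_r(v)$ lands in $[\tfrac{7}{20}\Delta,\tfrac{11}{20}\Delta]$; one then groups the uncolored vertices by their rank $i$ among their remaining uncolored neighbors, uses a bucketing scheme (assigning vertices to $\Theta(\Delta/\log n)$ bins processed in order) to show that the number of vertices with rank at least $s-2^{j+1}$ is $\ti(2^{j}n/\Delta)$ w.h.p., and pairs this against the fact that a rank-$i$ vertex with $i\leq s-2^j$ needs only $\ti(\Delta/2^j)$ draws. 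Summing the product $\ti(2^j n/\Delta)\cdot\ti(\Delta/2^j)=\ti(n)$ over the $O(\log\Delta)$ geometric groups gives $\ti(n)$ total draws. This rank-based bookkeeping, not a uniformity/collision estimate on the palette of any single vertex, is what removes the $1/\eps^2$ dependence. You should replace your step (a) with this argument; your steps (b), (c), (d) then go through as you wrote them.
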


In the next section, we present a fully dynamic algorithm to maintain a coloring on sparse vertices during a phase. 

\subsubsection{Recoloring Sparse Vertices During a Phase}\label{sec: fdalgsparse}
Our algorithm in this section is utilized whenever a recoloring is necessitated for a sparse vertex $v$ after an edge update in a given phase. Lemma \ref{lemma: colorsparse} implies that within an arbitrary $t=\frac{\varepsilon^2\Delta}{18e^6}$ \textit{recolorings} of sparse vertices following an invocation of \textsc{Color-Sparse}, the number of available colors for any vertex is at least $t$ with high probability. Thus, during any phase there exist at least $t$ colors for every sparse vertex $v$ which are not used to color any neighbors of $v$ in $V_S$. 

During any phase, our algorithm works as follows. On an edge insertion $(u,v)$ for which $c(u)=c(v)$ and $u,v\in V_S$, $v$ is recolored by invoking subroutine \textsc{Recolor-Sparse} as follows. A color $c$ is sampled from $[\Delta+1]$ uniformly at random and if there exists a sparse neighbor of $v$ assigned $c$, the sampling process is repeated until the sampled color $c$ is not assigned to any of $v's$ sparse neighbors. Thereafter, $v$ is assigned color $c$. We maintain lists $L(c)$ for all $c\in [\Delta+1]$ throughout a phase, such that $L(c)$ consists of vertices colored $c$. 

 We give the pseudo code of our algorithm $\textsc{Recolor-Sparse}$ as follows. A vertex which is currently not assigned a color is assigned a blank color, denoted by $\perp$. Let $N_S(v)$ denotes the list of all neighbors of $v$ in $V_S$.

\begin{algorithm}[h]
\caption{\textsc{Recolor-Sparse$(v)$}}
\begin{algorithmic}[1]
    \State $L(c(v))\leftarrow L(c(v))\backslash \{v\}$, $c(v)\leftarrow \perp$.
    \While{$c(v)= \perp$}
        \State Sample a color $c$ uniformly at random from $[\Delta+1]$.
        \If{$L(c)\cap N_S(v)=\emptyset$}
            \State $c(v)\leftarrow c$.
            \State $L(c)\leftarrow L(c)\cup \{v\}$.
        \EndIf 
    \EndWhile 
\end{algorithmic}
\end{algorithm}
\eat{
\begin{algorithm}[h]
\caption{\textsc{FD-Sparse}$(\varepsilon$)}
\begin{algorithmic}[1]
    \State recolorings $=0$. 
    \For{any edge insertion $(u,v)$}
        \If{$c(u)=c(v)$ and $u\in V_S$}
            \State Call \textsc{Recolor-Sparse}$(v)$.
            \State recolorings $\leftarrow$ recolorings$+1$.
        \EndIf 
        \If {recolorings $=\frac{e^{-6}\varepsilon^2}{2}\Delta$}
            \State Run Algorithm $\textsc{Color-Sparse}$.
            \State recolorings $\leftarrow 0$.
        \EndIf 
    \EndFor
    
\end{algorithmic}
\end{algorithm}
}

By Lemma \ref{lemma: colorsparse}, $\textsc{Color-Sparse}$ satisfies that for any $c\in [\Delta+1]$, $|L(c)|$ is $\ti(\frac{n}{\Delta})$ w.h.p at the beginning of a phase. We show that during any phase, (i.e. between any two calls to $\textsc{Color-Sparse}$), $|L(c)|$ is $\ti(\frac{n}{\Delta})$ w.h.p. Intuitively, this holds since our algorithm picks a random color from the set of available colors for any vertex $v$ whenever $v$ is recolored within a phase. Thus, checking if a color is feasible for a vertex takes $\ti(\frac{n}{\Delta})$ time. 

\begin{lemma}\label{lemma: colorloadsparse}
    During any phase, $|L(c)|=\ti(\frac{n}{\Delta})$ for all $c\in [\Delta+1]$ with high probability.
\end{lemma}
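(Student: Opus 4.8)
\textbf{Proof proposal for Lemma~\ref{lemma: colorloadsparse}.}

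The plan is to track a single color $c \in [\Delta+1]$ over the course of a phase and argue that the number of vertices added to $L(c)$ during the phase cannot push $|L(c)|$ above $\ti(n/\Delta)$. By Lemma~\ref{lemma: colorsparse} we start the phase with $|L(c)| = \ti(n/\Delta)$ w.h.p., so it suffices to bound the net growth of $L(c)$ over the $t = \Theta(\eps^2\Delta)$ updates in the phase. A vertex is added to $L(c)$ only inside a call to \textsc{Recolor-Sparse}, and each edge insertion triggers at most one such call, so at most $t$ vertices are recolored in the phase. The subtlety is that one invocation of \textsc{Recolor-Sparse}$(v)$ may \emph{sample} color $c$ many times before it commits (it keeps resampling until it finds a color not used by a sparse neighbor), so I need to bound not the number of recoloring events but the number of times the committed color is $c$. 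Since each recoloring commits to exactly one color, the total number of commitments over the phase is at most $t$, hence trivially at most $t = \Theta(\eps^2 \Delta)$ vertices total are added to $L(c)$; combined with the starting bound this already gives $|L(c)| \le \ti(n/\Delta) + O(\eps^2\Delta)$, and I would want to note $\eps^2\Delta \le \ti(n/\Delta)$ does \emph{not} hold in general, so this crude bound is insufficient and the real work is in showing the commitments are spread across colors.

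The key probabilistic step is to show that each individual recoloring commits to color $c$ with probability at most $O(1/\Delta)$ (up to the conditioning inherent in the resampling loop), so that in expectation only $O(t/\Delta) = O(\eps^2)$ vertices land in $L(c)$ during the phase, and a Chernoff/Azuma bound then gives a high-probability bound of $\ti(1)$ additional vertices. To make this precise I would argue as follows. When \textsc{Recolor-Sparse}$(v)$ runs, by Lemma~\ref{lemma: colorsparse} (and the observation opening Section~\ref{sec: fdalgsparse}) the set $A(v)$ of colors feasible for $v$ has size at least $t = \Theta(\eps^2\Delta)$ throughout the phase, i.e. at least $\eps^2\Delta/(18e^6)$ colors. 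The loop resamples uniformly from $[\Delta+1]$ and accepts the first feasible color; conditioned on acceptance, the accepted color is uniform over $A(v)$. Hence the probability that $v$ commits to $c$ is at most $\pr[c \in A(v)] \cdot \frac{1}{|A(v)|} \le \frac{1}{|A(v)|} = O(1/(\eps^2\Delta))$. Wait --- this gives $O(1/(\eps^2\Delta))$ per recoloring, and summing over $t = \Theta(\eps^2\Delta)$ recolorings gives $O(1)$ expected additions to $L(c)$, which is exactly what is needed. So the per-step bound I actually want is $O(1/(\eps^2\Delta))$, not $O(1/\Delta)$, and it matches the phase length.

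The main obstacle is handling the adaptivity: the adversary sees $c$ and the current coloring and chooses which edge to insert, hence which vertex $v$ gets recolored, so the events ``$v$ commits to $c$'' across the $t$ recolorings are not independent, and the sequence of vertices recolored is adversarial. I would handle this with a martingale argument: let $X_i$ be the indicator that the $i$-th recoloring of the phase commits to color $c$, and condition on the entire history (graph, coloring, adversary's choices) up to just before that recoloring. The randomness of \textsc{Recolor-Sparse} is fresh and independent of this history, and the bound $\pr[X_i = 1 \mid \mathcal{F}_{i-1}] \le 1/|A(v_i)| \le 18e^6/(\eps^2\Delta)$ holds for \emph{every} realization of the history (using that $|A(v_i)| \ge \eps^2\Delta/(18e^6)$ always holds in the phase, which is where Lemma~\ref{lemma: colorsparse} and $t \le \Delta$ are used). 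Then $\sum_i X_i$ is dominated by a sum of independent Bernoulli$(18e^6/(\eps^2\Delta))$ variables with mean $t \cdot 18e^6/(\eps^2\Delta) = 1$, so a standard Chernoff bound gives $\sum_i X_i = O(\log n)$ w.h.p. Adding this to the initial $|L(c)| = \ti(n/\Delta)$ and taking a union bound over all $\Delta+1$ colors $c$ and all phases (polynomially many in $n$) completes the proof. One remaining care point I would spell out: a recoloring of $v$ \emph{removes} $v$ from its old list before re-adding, so $|L(c)|$ can only decrease from removals; thus it suffices to bound additions, which is what the argument above does.
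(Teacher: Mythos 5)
Your proposal matches the paper's proof essentially step for step: condition on the Lemma~\ref{lemma: colorsparse} bounds at the start of the phase, bound the per-recoloring probability of committing to $c$ by $1/|A(v_i)| \le 18e^6/(\eps^2\Delta)$, stochastically dominate the sum of indicators by a Bernoulli sum with mean $O(1)$, apply a Chernoff bound, and union-bound over colors. Your explicit martingale framing to handle adaptivity is if anything slightly more careful than the paper's presentation (which asserts stochastic domination without spelling out the filtration), and the only minor slip is quoting "$t\le\Delta$" as the needed condition when the precise requirement is $t\le\eps^2\Delta/(18e^6)$ so that the available-color budget stays above $\eps^2\Delta/(18e^6)$; you clearly have the right bound in mind, so this is cosmetic.
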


\begin{proof}
By Lemma \ref{lemma: colorsparse}, $|A(v)|\geq \frac{\varepsilon^2\Delta}{9e^6}$ for all $v\in V$ w.h.p. when $\textsc{Color-Sparse}$ is invoked at the beginning of a phase. We analyze $|L(c)|$ for any color $c$ at any point during the phase, conditioned on this high probability event. Moreover, within any phase of length $t=\frac{\varepsilon^2\Delta}{18e^6}$, there can only be $t$ recolorings of sparse vertices.

Fix a color $c$. Let $X_c$ denote the random variable denoting the length of $L(c)$ at the beginning of the phase (i.e. after $\textsc{Color-Sparse}$ is called). By Lemma \ref{lemma: colorsparse}, $X_c=\ti(\frac{n}{\Delta})$ w.h.p. Conditioned on this, let $Y_c$ denote the random variable denoting the length of $L(c)$ after an arbitrary $t$ recolorings of sparse vertices. Moreover, let $Z_1, Z_2,.., Z_t$ be independent random variables where $Z_i=1$ if $c$ is assigned to a sparse vertex at the $i^{th}$ recoloring, and $0$ otherwise. Let $Z=\sum_{i=1}^t Z_i$. We have that $Y_c=X_c+Z$. Let $v_i$ denote the vertex for which $\textsc{Recolor-Sparse}(v_i)$ is invoked at the $i^{th}$ recoloring. 

Note that,
\begin{align*}
    \Pr[Z_i=1]&=\Pr[c \in A(v_i) \cap \text{ }c \text{ is the first color in }A(v_i) \text { sampled}] \\ 
    &\leq \Pr[c \text{ is the first color in }A(v_i) \text { sampled}]\\
    &\leq \frac{18e^6}{\varepsilon^2\Delta}
\end{align*} where the final inequality follows from Lemma \ref{lemma: colorsparse}, and the fact that $i\leq t=\frac{\varepsilon^2\Delta}{18e^6}$. Let $B_1, B_2,...,B_t$ be independent and identically distributed random Poisson random variables s.t. $\Pr[B_i=1]=\frac{18e^6}{\varepsilon^2\Delta}$, and $B=\sum_{i=1}^t B_i$. Note that $B$ stochastically dominates $Z$, i.e. $\Pr[Z\geq a]\leq \Pr[B\geq a]$ for any $a$. Since $E[B]\leq \frac{\varepsilon^2\Delta}{18e^6}\cdot \frac{18e^6}{\varepsilon^2\Delta}=1$, applying a Chernoff bound yields $\Pr[B\geq \sqrt{3d\log n}+1]\leq \frac{1}{n^d}$ for any constant $d>0$. 

Thus, $Y_c=X_c+Z\leq  X_c+B= \ti(\frac{n\log n}{\Delta}) + O(\sqrt{\log n})=\ti(\frac{n}{\Delta})$ with probability at least $1-\frac{1}{n^d}$. Applying a union bound over all $c\in [\Delta]$ and combining this with the high probability bound of Lemma \ref{lemma: colorsparse}, we have that $|L(c)|=\ti(\frac{n}{\Delta})$ for all $c\in [\Delta+1]$ with probability at least $1-\frac{1}{\text{poly}(n)}$ for sufficiently large $d$, completing the proof.
\end{proof}

\begin{lemma}\label{lemma: recolorsparse}
\textsc{Recolor-Sparse} takes $\ti(\frac{n}{\eps^2\Delta})$ with high probability.
\end{lemma}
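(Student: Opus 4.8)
The plan is to bound the running time of a single call to \textsc{Recolor-Sparse}$(v)$ by controlling two quantities: the number of iterations of the while-loop (i.e. the number of random colors sampled before success), and the cost of each iteration (i.e. checking whether $L(c)\cap N_S(v)=\emptyset$). The first part follows from the fact, established via Lemma~\ref{lemma: colorsparse} (and used in the proof of Lemma~\ref{lemma: colorloadsparse}), that at any point during a phase the vertex $v$ being recolored has at least $|A(v)| \geq \frac{\varepsilon^2\Delta}{9e^6} - t = \Omega(\varepsilon^2\Delta)$ available colors, since \textsc{Color-Sparse} guarantees $|A(v)|\geq \frac{\varepsilon^2\Delta}{9e^6}$ at the start of the phase and each of the at most $t=\frac{\varepsilon^2\Delta}{18e^6}$ recolorings in the phase reduces $|A(v)|$ by at most one. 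The second part follows from Lemma~\ref{lemma: colorloadsparse}: throughout the phase $|L(c)|=\ti(n/\Delta)$ for every color $c$, so scanning $L(c)$ and testing membership against $N_S(v)$ (using a hash set / adjacency-query data structure for $N_S(v)$) costs $\ti(n/\Delta)$ per iteration.

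\textbf{Step 1 (iteration count).} I would argue that each sampled color $c$ is uniform in $[\Delta+1]$ and succeeds exactly when $L(c)\cap N_S(v)=\emptyset$, which in particular holds whenever $c\in A(v)$ (no sparse neighbor of $v$ is colored $c$). Hence each iteration succeeds with probability at least $\frac{|A(v)|}{\Delta+1}\geq \frac{\varepsilon^2\Delta/(18e^6)}{\Delta+1}=\Omega(\varepsilon^2)$, independently across iterations. Therefore the number of iterations is stochastically dominated by a geometric random variable with success probability $\Omega(\varepsilon^2)$; its expectation is $O(1/\varepsilon^2)$, and a standard tail bound gives that after $O(\varepsilon^{-2}\log n)$ iterations the loop has terminated with probability $1-1/\mathrm{poly}(n)$.

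\textbf{Step 2 (per-iteration cost and total).} By Lemma~\ref{lemma: colorloadsparse}, w.h.p.\ every list $L(c)$ has size $\ti(n/\Delta)$ throughout the phase, so the membership test $L(c)\cap N_S(v)=\emptyset$ takes $\ti(n/\Delta)$ time (iterate over $L(c)$, querying the adjacency data structure of $v$), and the bookkeeping on lines 1 and 5--6 is $O(1)$ amortized. Multiplying the $\ti(\varepsilon^{-2})$ iterations by the $\ti(n/\Delta)$ per-iteration cost yields the claimed $\ti\!\big(\frac{n}{\varepsilon^2\Delta}\big)$ bound, and since both constituent bounds hold w.h.p., so does the product (after a union bound over the relevant failure events, and over the polynomially many recolorings in the update sequence as discussed in the Preliminaries).

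\textbf{The main obstacle} I anticipate is purely one of careful conditioning: the $\Omega(\varepsilon^2)$ lower bound on the per-iteration success probability relies on $|A(v)|$ staying $\Omega(\varepsilon^2\Delta)$, which in turn is conditioned on the high-probability event of Lemma~\ref{lemma: colorsparse} holding at the start of the phase, and on $|L(c)|=\ti(n/\Delta)$ from Lemma~\ref{lemma: colorloadsparse}; one must be slightly careful that the randomness used by \textsc{Recolor-Sparse} is independent of the randomness those lemmas condition on, or else simply absorb everything into a union bound over a polynomial number of desirable events as outlined in the Preliminaries. There is no genuine technical difficulty beyond this — the lemma is essentially a corollary of Lemmas~\ref{lemma: colorsparse} and~\ref{lemma: colorloadsparse} together with a geometric-variable tail bound.
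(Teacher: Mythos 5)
Your proof is correct and follows essentially the same route as the paper's: condition on Lemma~\ref{lemma: colorsparse} giving $|A(v)|=\Omega(\varepsilon^2\Delta)$ at the start of the phase (hence throughout, since at most $t$ recolorings occur), deduce an $\Omega(\varepsilon^2)$ per-iteration success probability and thus $O(\varepsilon^{-2}\log n)$ iterations w.h.p., multiply by the $\ti(n/\Delta)$ per-iteration cost from Lemma~\ref{lemma: colorloadsparse}, and finish with a union bound over the recolorings in the phase. Your version is in fact slightly more precise about the constant (starting from $\frac{\varepsilon^2\Delta}{9e^6}$ and subtracting $t$), where the paper's prose shortcuts this; otherwise the two arguments are the same.
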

\begin{proof}
    By Lemma \ref{lemma: colorsparse}, after $\textsc{Color-Sparse}$ is completed, $A(v)\geq \frac{\varepsilon^2\Delta}{18e^6}$ w.h.p. Conditioned on this, within an arbitrary $t$ recolorings of sparse vertices the number of available colors $A(v)$ for any vertex $v$ satisfies $|A(v)|\geq \frac{\varepsilon^2\Delta}{18e^6}$ w.h.p. Moreover, by Lemma \ref{lemma: colorloadsparse}, the length of any list $L(c)$ for $c\in [\Delta]$ is $\ti(\frac{n}{\Delta})$ w.h.p. The probability of sampling an available color in $A(v)$ is at least $\frac{\varepsilon^2}{18e^6}$. Thus after sampling $O(\frac{\log n}{\eps^2})$ colors, an available color $c\in A(v)$ is sampled and assigned to $v$. Concluding, $\textsc{Recolor-Sparse}$ takes $\ti(\frac{n}{\Delta})O(\frac{\log n}{\eps^2})=\ti(\frac{n}{\eps^2 \Delta})$ time with high probability. Taking a union bound over at most $t=\Omega(\eps^2\Delta)$ recolorings of sparse vertices during a phase yields that any call to $\textsc{Recolor-Sparse}$ takes $\ti(\frac{n}{\eps^2\Delta})$ time with high probability.
\end{proof}

\begin{lemma}\label{lemma: amortizedsparserecoloring}
    The amortized update time for recoloring sparse vertices is $\ti(\frac{n^2}{\eps^2\Delta^2})$ with high probability.
\end{lemma}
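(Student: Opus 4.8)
The plan is to charge the total work for sparse recolorings over a full phase of $t=\frac{\eps^2\Delta}{18e^6}=\Theta(\eps^2\Delta)$ updates and then divide by $t$. Within a single phase there are only three sources of work: (i) the single call to \textsc{Color-Sparse} at the start of the phase, which recolors every sparse vertex from scratch given the (now fixed) decomposition; (ii) the at most $t$ calls to \textsc{Recolor-Sparse}, one for each update that inserts a monochromatic edge between two sparse vertices (edge deletions, and insertions whose endpoints are differently colored or are not both sparse, trigger no sparse recoloring, and \textsc{Recolor-Sparse} never creates a new monochromatic edge inside $V_S$, so at most one call per update); and (iii) the $O(1)$-per-update bookkeeping to keep $N_S(\cdot)$, the lists $L(\cdot)$, and the remaining data structures current, which is $O(t)$ over the phase.

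First I would invoke Lemma~\ref{lemma: colorsparse} to bound (i): \textsc{Color-Sparse} runs in $\ti(n^2/\Delta)$ time, and on termination $|L(c)|=\ti(n/\Delta)$ for every color $c$ and $|A(v)|\geq \frac{\eps^2\Delta}{9e^6}$ for every vertex $v$, each w.h.p. Conditioning on these events, Lemma~\ref{lemma: colorloadsparse} keeps $|L(c)|=\ti(n/\Delta)$ throughout the phase, and Lemma~\ref{lemma: recolorsparse} gives that every call to \textsc{Recolor-Sparse} during the phase costs $\ti\!\left(\frac{n}{\eps^2\Delta}\right)$ w.h.p.; since each update decreases $|A(v)|$ by at most one for each $v$ and the phase length $t$ is exactly the slack built into the bound $|A(v)|\geq \frac{\eps^2\Delta}{9e^6}$, the hypotheses of Lemma~\ref{lemma: recolorsparse} remain valid at every update of the phase. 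Hence (ii) contributes at most $t\cdot\ti\!\left(\frac{n}{\eps^2\Delta}\right)$, and the total work in one phase is
\[
\ti\!\left(\frac{n^2}{\Delta}\right)\;+\;t\cdot\ti\!\left(\frac{n}{\eps^2\Delta}\right)\;+\;O(t)
\]
w.h.p. Dividing by $t=\Theta(\eps^2\Delta)$, the first term gives $\ti\!\left(\frac{n^2}{\eps^2\Delta^2}\right)$, the second gives $\ti\!\left(\frac{n}{\eps^2\Delta}\right)$, and the third gives $O(1)$; since $\Delta\leq n$ always, $\frac{n}{\eps^2\Delta}\leq\frac{n^2}{\eps^2\Delta^2}$, so the first term dominates and the amortized update time is $\ti\!\left(\frac{n^2}{\eps^2\Delta^2}\right)$.

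Finally, to lift the per-phase guarantee to a w.h.p.\ statement for the entire update sequence, I would union-bound the $1/\mathrm{poly}(n)$ failure probabilities of Lemmas~\ref{lemma: colorsparse}, \ref{lemma: colorloadsparse}, and \ref{lemma: recolorsparse} (together with the decomposition guarantee of Lemma~\ref{lemma: fd-decompositionphase}, on which the fixedness of the partition and the $|A(v)|$ bounds rely) over the $\mathrm{poly}(n)$ phases; by the convention in the preliminaries that all structures are recomputed every polynomially many updates, this loses only another $1/\mathrm{poly}(n)$.

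The main obstacle I expect is not the arithmetic but making the conditioning airtight against the adaptive adversary: the adversary sees the coloring produced by \textsc{Color-Sparse} at the start of the phase, so one must argue that re-running \textsc{Color-Sparse} with fresh randomness each phase, combined with the ``$t$ updates move $|A(v)|$ by at most $t$'' slack in the choice of $t$, decouples the adversary's within-phase choices from the randomness used inside the phase. This is exactly what the statements of Lemmas~\ref{lemma: colorloadsparse} and \ref{lemma: recolorsparse} already encapsulate, so here it reduces to quoting them correctly — but the write-up should explicitly verify that their hypotheses ($|A(v)|=\Omega(\eps^2\Delta)$ and $|L(c)|=\ti(n/\Delta)$) hold at \emph{every} update of the phase, not merely at its beginning, before appealing to them.
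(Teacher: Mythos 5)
Your proposal is correct and uses essentially the same core calculation as the paper: the dominant cost is the per-phase call to \textsc{Color-Sparse} at $\ti(n^2/\Delta)$, which amortized over $t=\Theta(\eps^2\Delta)$ updates gives $\ti(n^2/(\eps^2\Delta^2))$. The paper's own proof is in fact terser than yours---it only accounts for the \textsc{Color-Sparse} cost and leaves the per-call \textsc{Recolor-Sparse} cost to be counted separately in the later \textsc{Handle-Update} analysis (Lemma~\ref{lemma: c1}), whereas you fold it into this bound and observe it is dominated since $\Delta\leq n$; both are valid bookkeepings and the numeric conclusion is the same, so your extra accounting is harmless (though arguably slightly over-covers what this particular lemma is meant to assert).
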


\begin{proof}
By Lemma \ref{lemma: colorsparse}, the running time of $\textsc{Color-Sparse}$ is $\ti(\frac{n^2}{\Delta})$ with high probability. Thus, the amortized update time incurred as a result of running $\textsc{Color-Sparse}$ at the beginning of any phase is $\ti(\frac{n^2}{\eps^2\Delta^2})$.
\end{proof} 

\subsection{An Algorithm for Dense Vertices}\label{sec: coloringdensevertices}
In this section, we show to maintain a $(\Delta+1)$-coloring on the set of dense vertices $V_D\coloneq V\backslash V_S$. The coloring on sparse vertices is maintained \emph{independently} of dense vertices in the following sense: when coloring a sparse vertex $v$, we only consider sparse neighbors of $v$. As a result, if $v$ is assigned color $c$, this may necessitate recolorings of dense neighbors of $v$ which are assigned color $c$. In this section, we assume that sparse vertices are already colored. Our main result in this section is a fully dynamic algorithm to maintain a $(\Delta+1)$-coloring on a single dense almost-clique $C$ during any phase. Our algorithm follows a two-step framework which we briefly describe as follows. 

Step I ensures that for any almost-clique $C$ which has non-edges, endpoints of such non-edges are assigned the same color whenever possible to \textit{save colors} for Step II. This is accomplished by maintaining a matching on non-edges in $\b{E(C)}$ such that endpoints of \textit{matched} non-edges are assigned the same color. Distinct endpoints of distinct non-edges in the non-edge matching are assigned distinct colors. Thus, each color in $[\Delta+1]$ is used to color exactly two vertices and the number of vertices colored in Step I is twice the number of colors used. After Step I is completed, the number of colors left for the remaining uncolored vertices is shown to be sufficient. These vertices are handled in Step II. We remark that vertices in Step I are colored \textit{independently} of Step II. More precisely, when coloring endpoints $u,v$ of a matched non-edge $(u,v)$, we only check feasibility of a color with respect to neighbors of $u$ and $v$ outside $C$, and endpoints of other matched non-edges.

Step II of our algorithm assigns a color to the remaining vertices in $C$ which are not endpoints of any matched non-edge from Step I. These vertices are handled by maintaining a perfect matching \textit{dynamically} between vertices and the unused colors in $[\Delta+1]$ from Step I. Hence, the coloring of vertices in Step II depends on both the coloring of sparse vertices and the coloring of matched non-edges in Step I. We establish several structural properties of this perfect matching graph to argue that, after vertex updates to this graph, the perfect matching can be quickly updated in $\ti(\eps n)$ time. Effectively, by arguing the existence of small augmenting paths that can be quickly found, we bound the time taken to recolor any vertex under updates to the perfect matching graph.

Finally to ensure that the update time is $\ti(\eps n)=o(n)$ with high probability for any update \textit{during a phase}, it is crucial that the adjustment complexity of our \textit{non-edge matching} is small. If the non-edge matching is significantly large at the beginning of a phase, we do not maintain it, as such during the phase. However if it is small, we use a naive algorithm to maintain a \textit{maximal matching} on non-edges which takes $O(\eps \Delta)$ worst-case update time, and guarantees that the non-edge matching changes by $O(1)$ per update. This ensures a large matching with respect to the number of non-edges $\b{E(C)}$ is maintained at any time during the phase which changes by $O(1)$ and thus limits updates to the perfect matching by $O(1)$ per update. 

We maintain a list $L_D(c)$ for all $c\in [\Delta+1]$, such that $L_D(c)$ contains all dense vertices $v$ in $V_D$ which are assigned color $c$. Our algorithm maintains the property that each color $c\in [\Delta+1]$ is assigned to at most two vertices in any almost-clique $C$. Thus $|L_D(c)|=O(\frac{n}{\Delta})$ for any $c\in [\Delta+1]$. In addition, we have access to lists $L(c)$ which contains sparse vertices assigned color $c$ such that $|L(c)|=\ti(\lst)$ for any $c\in [\Delta+1]$ by Lemma \ref{lemma: colorloadsparse}.

In the next two sections, we present Steps I and II of our algorithm. 

\subsubsection{Step I: Coloring via Non-Edge Matchings}\label{sec: non-edge-matching}
Step I is a pre-processing step for almost-cliques $C$ with the objective of coloring a large number of vertices whenever possible, by assigning identical colors to endpoints of non-edges in $C$. However, non-edges may share endpoints, and thus we maintain a matching $\mathcal{M}_N$ on $\b{E(C)}$, and assign identical colors to endpoints of \textit{matched} non-edges in $\mathcal{M}_N$. Endpoints $u,v$ of every non-edge $(u,v)\in \mathcal{M}_N$ are assigned a color which is not taken by any of $u$ or $v$'s neighbors outside $C$, or endpoints of other matched non-edges in $\mathcal{M}_N$. The reason why endpoints of distinct non-edges are assigned distinct colors is because for edges $(u,v), (w,x)$ in the matching, if $(u,x)\in E$ then assigning identical colors to $u,v,w,x$ does not result in a proper coloring. 

Since we work with a fixed sparse-dense decomposition during a phase, $\b{E(C)}$ changes by at most 1 per update. Note that when a phase concludes, our decomposition may change. The non-edge matching $\mathcal{M}_N$ for an almost-clique $C$ is updated to reflect the updates in this phase, and endpoints of the updated matching are recolored before the next phase begins.

\paragraph{Initializing Non Edge Matchings For a New Phase}
We present our approach to recompute $\mathcal{M}_N$ given the updated decomposition before the beginning of any phase. Although we maintain a non-edge matching $\mathcal{M}_N$ during a phase, the decomposition remains fixed throughout a phase and thus, $\mathcal{M}_N$ may no longer correspond to a valid non-edge matching for the next phase $\mathcal{U}_{i+1}$, since some vertices may move from $V_S$ to $V_D$ or vice versa in the updated decomposition. We show how to initialize $\mathcal{M}_N$ on $\b{E(C)}$ for an almost-clique $C$.

Let $\mathcal{U}_i$ be the phase which has concluded. Recall that a phase consists of $t=O(\eps^2\Delta)$ updates.

\noindent\textbf{Data Structures.} For an almost-clique $C$, we maintain a boolean array, matched$[\cdot]$ such that matched$[v]=1$ if $v$ is an endpoint of a matched edge in the non-edge matching $\mathcal{M}_N$ for $C$ and 0 otherwise at any given point in time. This can be maintained trivially whenever an edge is inserted or deleted from $\mathcal{M}_N$, or $\b{E(C)}$.

Before the next phase $\mathcal{U}_{i+1}$ begins, we do the following. Let $\mathcal{M}_N^F$ denote the current matching on non-edges $\b{E(C)}$ at the end of phase $\mathcal{U}_i$ for an almost-clique $C$. All updates made to $\mathcal{M}_N, \b{E(C)}$ and array matched$[\cdot]$ during $\mathcal{U}_i$ are reversed, and undone to retrieve the non-edge matching $\mathcal{M}_N^I$ and the state of various data structures at the beginning of phase $\mathcal{U}_i$. 

We use a known fully dynamic matching algorithm due to Bhattacharya, Henzinger and Nanongkai \cite{bhattacharya2016new} to recompute a non-edge matching on $\b{E(C)}$ that will be used before by our algorithm before phase $\mathcal{U}_{i+1}$ begins. The following lemma is reproduced from \cite{bhattacharya2016new}. 

\begin{lemma}\cite{bhattacharya2016new}\label{alg: bhatt}
    For every $\delta \in (0,1)$, there is a deterministic algorithm that maintains a $(2+\delta)$-approximate maximum matching in a dynamic graph in $O(\polylog(n,\frac{1}{\delta}))$ amortized update time, where $n$ denotes the number of nodes in the graph. 
\end{lemma}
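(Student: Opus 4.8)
Since Lemma~\ref{alg: bhatt} is a verbatim restatement of the main result of Bhattacharya, Henzinger, and Nanongkai~\cite{bhattacharya2016new}, the plan is not to reprove it from scratch but to recall the structure of their argument, which we would cite. The approach is to maintain, not an integral matching directly, but a structured \emph{fractional} matching organized by a hierarchical partition of the vertices into $L = O(\log_{1+\delta} n)$ levels. First I would set up the leveling: each vertex $v$ carries a level $\ell(v) \in \{0, 1, \dots, L\}$, every present edge $uv$ is assigned level $\max\{\ell(u), \ell(v)\}$ and fractional weight $(1+\delta)^{-\operatorname{level}(uv)}$, and one maintains two families of invariants. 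The \emph{upper} invariant: every vertex $v$ is incident to at most $(1+\delta)^{\ell(v)}$ edges of level $\ge \ell(v)$, so the total fractional weight at $v$ is at most $1+O(\delta)$, i.e.\ the weights form an almost-feasible point of the fractional matching polytope. The \emph{lower} invariant: every vertex $v$ with $\ell(v) < L$ has fractional degree at least $1-O(\delta)$, i.e.\ it is near-tight, which forces the fractional matching to be near-maximal.

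Second, I would derive the approximation ratio from near-maximality. The near-tight vertices form an approximate vertex cover of the current graph (any uncovered edge would have both endpoints non-tight and could absorb more weight, contradicting maximality up to $O(\delta)$). Since the size of any vertex cover is at least the size of a maximum integral matching, summing fractional degrees over the cover shows the fractional matching has value at least $(\tfrac12 - O(\delta))\cdot\mathrm{OPT}$. A deterministic rounding of this fractional matching is then possible because its support is a bounded-degree ``kernel'' subgraph, yielding an integral matching of value $(\tfrac12 - O(\delta))\cdot\mathrm{OPT}$, i.e.\ a $(2+\delta)$-approximation after rescaling $\delta$.

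Third, and this is where the real work lies, I would handle the dynamic maintenance. On an edge insertion or deletion, an invariant may break at one or both endpoints; the fix is to move the affected vertex up or down one level and then cascade local repairs to its neighbors whose edge levels changed. The cost of this process is charged against a carefully designed potential that tracks, level by level, the slack between each vertex's current incident-edge count and its target; one shows that the amortized number of level changes per update is $\polylog(n, 1/\delta)$, and each level change is implementable in time proportional to the number of incident edges at that level, which is bounded by the upper invariant.

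I expect the amortized analysis of the third step to be the main obstacle: one must design the potential so that a single level change of $v$ — which may touch up to $(1+\delta)^{\ell(v)}$ incident edges — both pays for the immediate work and releases enough potential to fund the repairs it triggers at neighbors, without circular charging across levels. Everything else (feasibility from the upper invariant, the vertex-cover argument for the ratio, and the rounding of a bounded-degree fractional matching) is comparatively routine once the invariants are in place.
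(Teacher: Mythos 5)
The paper does not prove this lemma at all; it simply restates the main theorem of Bhattacharya, Henzinger, and Nanongkai~\cite{bhattacharya2016new} and uses it as a black box. Your proposal correctly identifies this and gives an accurate high-level sketch of the cited paper's argument (hierarchical leveling, fractional-matching invariants, potential-function amortization, and kernel-based rounding), so it is consistent with — indeed more detailed than — what the paper itself provides.
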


For our purpose, it suffices to maintain a $(2+\delta)$-approximate matching for some $\delta>0$. Let $(u_j,v_j)$ be the $j^{th}$ update in phase $\mathcal{U}_i$. Let $\mathcal{B}_j$ denote the set of non-edges in almost-cliques which change (i.e. are added or removed) after our algorithm $\textsc{Update-Decomposition}(\eps, \frac{\eps}{3}, (u_j,v_j))$ is invoked for update $(u_j,v_j)$. From Theorem \ref{fd-decomposition}, it follows that the $|\mathcal{B}_j|=O(\frac{1}{\eps^4})$ in an amortized sense. Thus, on average, each edge update in a phase leads to a change of at most $O(\frac{1}{\eps^4})$ non-edges across all almost-cliques as a result of sparse or dense moves by algorithm $\textsc{Update-Decomposition}$. The dynamic matching algorithm \cite{bhattacharya2016new} is run separately for all almost-cliques, and the amortized update time to recompute the matching across all cliques before $\mathcal{U}_{i+1}$ begins is $\ti(\frac{1}{\eps^4})$ for constant $\delta>0$.  

We give the following lemma.

\begin{lemma}\label{lemma: non-edgematchingsize}
    For each almost-clique $C$, a matching $\mathcal{M}_N$ of size at least $\frac{|\b{E(C)}|}{22\varepsilon\Delta}$ can be obtained at the beginning of any phase with high probability. The amortized update time for initializing non-edge matchings across all almost-cliques is $\ti(\frac{1}{\eps^4})$.
\end{lemma}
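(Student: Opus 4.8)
The plan is to prove Lemma~\ref{lemma: non-edgematchingsize} in two parts: first the structural claim that a matching of size at least $|\b{E(C)}|/(22\eps\Delta)$ exists and can be computed at the start of a phase, and then the amortized running-time bound. For the structural part, I would start from the observation that in any almost-clique $C$, by Lemma~\ref{lemma: fd-decompositionphase} (part 2), every vertex $v\in C$ has at least $(1-5\eps)\Delta$ neighbors inside $C$, hence at most $|C|-1-(1-5\eps)\Delta\le (1+10\eps)\Delta - (1-5\eps)\Delta = 15\eps\Delta$ non-neighbors inside $C$. Slightly more carefully, every vertex of $C$ is incident to at most (roughly) $11\eps\Delta\le 22\eps\Delta$ non-edges of $\b{E(C)}$; I would track the exact constant but it is dominated by $22\eps\Delta$. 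This means the graph $\b{G(C)}=(C,\b{E(C)})$ has maximum degree at most $22\eps\Delta$, so a greedy maximal matching $\mathcal M_N$ in it has size at least $|\b{E(C)}|/(2\cdot 22\eps\Delta)$ — actually a maximal matching covers every edge's endpoint, so each matched edge ``kills'' at most $2\cdot 22\eps\Delta$ edges, giving $|\mathcal M_N|\ge |\b{E(C)}|/(44\eps\Delta)$. To get the stated $|\b{E(C)}|/(22\eps\Delta)$ I would instead invoke the $(2+\delta)$-approximate maximum matching guarantee of Lemma~\ref{alg: bhatt}: since a maximum matching in $\b G(C)$ has size at least $|\b{E(C)}|/(22\eps\Delta)$ by the max-degree bound (a maximum matching is in particular maximal, and edges incident to one endpoint number at most $\approx 11\eps\Delta$, so this bound is comfortably met), the maintained matching has size at least $|\b{E(C)}|/((2+\delta)\cdot 22\eps\Delta)$; and since the constants $11$ vs.\ $22$ leave a factor of roughly $2$ of slack, this still yields at least $|\b{E(C)}|/(22\eps\Delta)$ after absorbing $2+\delta$ with $\delta$ small and the slack in the degree bound. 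I would state the degree bound tightly enough that the chosen constant $22$ works.

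For the running-time part, the plan is to combine three facts already available: (i) by Theorem~\ref{fd-decomposition}, invoking \textsc{Update-Decomposition}$(\eps,\eps/3,(u_j,v_j))$ for each of the $t=\Theta(\eps^2\Delta)$ updates of the concluded phase $\mathcal U_i$ produces, in amortized $O(\log n/\eps^4)$ time per update, a new decomposition in which the adjustment complexity of non-edges is $O(1/\eps^4)$ amortized, so the total number of changes $\sum_j|\mathcal B_j|$ to the non-edge lists across all almost-cliques is $O(t/\eps^4)$; (ii) each such change is a single edge insertion or deletion fed to the dynamic matching algorithm of Lemma~\ref{alg: bhatt}, run as an independent instance per almost-clique, at amortized cost $O(\polylog(n,1/\delta))=\ti(1)$ per change for constant $\delta$; and (iii) reversing the $t$ updates to $\mathcal M_N$, $\b{E(C)}$ and matched$[\cdot]$ to recover the start-of-phase state costs time proportional to the number of such updates, which is also $O(t/\eps^4)$ amortized (each update touches $O(1/\eps^4)$ non-edges, each causing $O(1)$ work to undo plus $\ti(1)$ to re-feed). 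Summing, the work per phase is $\ti(t/\eps^4)$, and dividing by the $t$ updates of the phase gives an amortized update time of $\ti(1/\eps^4)$, as claimed. I would present this as: total re-initialization cost per phase $=\ti(t/\eps^4)$, amortized over $t$ updates $=\ti(1/\eps^4)$.

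The main obstacle I anticipate is bookkeeping the constants in the degree bound so that $22\eps\Delta$ (rather than, say, $44\eps\Delta$) is genuinely achievable: this forces me to be careful about whether I count non-edges incident to a vertex (at most $\approx 11\eps\Delta$, using $|C_i|\le(1+10\eps)\Delta$ and at least $(1-5\eps)\Delta$ internal neighbors, giving $|C_i|-1-(1-5\eps)\Delta\le 15\eps\Delta$ — so I may actually need the tighter $(1-4\eps)\Delta$ internal-neighbor bound or a lower-bound on $|C_i|$, and should phrase the argument so that the max-matching lower bound $|\b{E(C)}|/(\text{max non-edge degree})$ times the approximation slack lands below $22\eps\Delta$). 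The second, milder subtlety is that the matching algorithm of Lemma~\ref{alg: bhatt} is stated for a graph on a fixed vertex set, whereas between phases the vertex set $C$ of an almost-clique changes; I would handle this by treating $\b G(C)$ for a given phase as a graph on the fixed ground set $[n]$ with only edges inside the new $C$ present, feeding the symmetric-difference of non-edge lists as edge updates, and noting that vertices entering or leaving $C$ contribute only $O(1/\eps^4)$-amortized many incident non-edge changes by the adjustment-complexity guarantee — so no separate vertex-update machinery is needed. Everything else is a direct invocation of the cited lemmas and a summation.
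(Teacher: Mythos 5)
Your proposal follows essentially the same route as the paper. For the size of $\mathcal{M}_N$, the paper likewise bounds the maximum non-edge degree inside $C$, uses this to lower-bound the maximum matching of $\b{G(C)}=(C,\b{E(C)})$ by $|\b{E(C)}|$ divided by twice that degree, and then loses a factor $(2+\delta)$ from Lemma~\ref{alg: bhatt} with $\delta=\tfrac16$. For the amortized time, the paper also combines the $O(1/\eps^4)$ amortized adjustment complexity of non-edges from Theorem~\ref{fd-decomposition} with the $\ti(1)$ amortized cost per change of feeding these into per-almost-clique instances of the dynamic matching algorithm, then divides by the phase length $\Theta(\eps^2\Delta)$ — exactly the accounting you outline.

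Your worry about whether the constant $22$ is achievable is well-founded, and is the one place where your proposal is actually more careful than the paper. The paper's proof asserts that each vertex of $C$ has at most $5\eps\Delta$ non-neighbors in $C$, citing Lemma~\ref{lemma: fd-decompositionphase}; but, as you compute, that lemma gives only $(1-5\eps)\Delta$ internal neighbors together with $|C|\le(1+10\eps)\Delta$, hence at most about $15\eps\Delta$ non-neighbors inside $C$. (The $5\eps\Delta$ figure appearing in Lemma~\ref{lemma: fd-decompositionphase} bounds neighbors \emph{outside} $C$, which is a different quantity.) With the correct degree bound, the chain maximum-matching $\ge |\b{E(C)}|/(2\cdot 15\eps\Delta)$ followed by the $(2+\tfrac16)$-approximation yields $|\mathcal{M}_N|\ge |\b{E(C)}|/(65\eps\Delta)$, not $|\b{E(C)}|/(22\eps\Delta)$. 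This changes only a constant — every downstream use of the lemma merely needs $|\mathcal{M}_N| = \Omega(|\b{E(C)}|/(\eps\Delta))$, so the final $\ti(n^{8/9})$ bound is unaffected — but your plan of ``stating the degree bound tightly enough that $22$ works'' would, as written, need a sharper structural bound than Lemma~\ref{lemma: fd-decompositionphase} provides. Either adopt a larger constant, or prove the tighter per-vertex non-edge bound separately (e.g., via a statement analogous to Lemma~\ref{lemma: non-degree}).
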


\begin{proof}
     By our discussion above, it follows that the amortized update time to initialize non-edge matchings across all almost-cliques is $\ti(\frac{1}{\eps^4})$.

    Let $\delta=\frac{1}{6}$. By Lemma  \ref{lemma: fd-decompositionphase}, any vertex in $C$ has at most $5\varepsilon\Delta$ non-neighbors in $C$ at any point in time w.h.p. Conditioned on this event, every non-edge can be incident to at most $10\varepsilon\Delta$ non-edges w.h.p. Thus, there is a matching of size at least $\frac{|\b{E(C)}|}{10\varepsilon\Delta}$ w.h.p. Running the dynamic matching algorithm of Lemma \ref{alg: bhatt} yields that a matching $\mathcal{M}_N$ of size at least $\frac{|\b{E(C)}|}{10\varepsilon\Delta(2+\frac{1}{6})}>\frac{|\b{E(C)}|}{22\varepsilon\Delta}$ can be maintained at any given point in time.
\end{proof}

\noindent\underline{\it A Post-Processing Step for Small Matchings:} \newline In the case when the computed matching $|\mathcal{M}_N|$ is small, we employ a post-processing step before the next phase $\mathcal{U}_{i+1}$ begins to ensure that $\mathcal{M}_N$ corresponds to a \textit{maximal matching}. This allows us to maintain a large matching throughout the phase in $O(\eps\Delta)$ update time as outlined in the next section. If $|\mathcal{M}_N|\geq \eps^2\Delta$, we skip this post-processing step. 

If $|\mathcal{M}_N|<\eps^2\Delta$, it follows that $|\b{E(C)}|<22\eps\Delta\cdot \eps^2\Delta=O(\eps^3\Delta^2)$ by Lemma \ref{lemma: non-edgematchingsize}. We iterate over all $(u,v)\in \mathcal{M}_N$ and remove $(u,v)$ from $\mathcal{M}_N$, and set matched$[u]=$matched$[v]=0$. Thus, $\mathcal{M}_N=\emptyset$ at this point. Next, we \textit{greedily} compute a maximal matching in $O(|\b{E(C)}|=O(\eps^3\Delta^2)$ time as follows: go over all edges $(u,v)\in \b{E(C)}$, and if matched$[u]=$matched$[v]=0$, $(u,v)$ is added to $\mathcal{M}_N$, and matched$[u]\coloneq $ matched$[v] \coloneq 1$. 

\begin{lemma}\label{lemma: small-matchings-guarantee}
    The amortized update time for the post-processing step for all almost-cliques which have small non-edge matchings is $O(\eps n)$ with high probability. If $C$ is an almost-clique for which this step is carried out, the non-edge maximal matching $\mathcal{M}_N$ for $\b{E(C)}$ has size at least $\frac{|\b{E(C)}|}{20\eps\Delta}$ with high probability. 
\end{lemma}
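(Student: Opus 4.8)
The plan is to establish the two assertions separately; both reduce to elementary greedy-matching arguments layered on top of the decomposition bounds already in hand.

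\emph{Size of $\mathcal{M}_N$.} By construction, the post-processing loop produces a \emph{maximal} matching of the graph whose vertex set is $C$ and whose edge set is $\b{E(C)}$: when the loop terminates, every non-edge in $\b{E(C)}$ has at least one endpoint with $\mathrm{matched}[\cdot]=1$, as otherwise it would have been inserted. First I would recall the textbook fact that a maximal matching has size at least half that of a maximum matching. Then, exactly as in the proof of Lemma~\ref{lemma: non-edgematchingsize}, Lemma~\ref{lemma: fd-decompositionphase} gives (w.h.p.) that the line graph of this non-edge graph has maximum degree at most $10\varepsilon\Delta$, so a greedy selection exhibits a matching of size at least $|\b{E(C)}|/(10\varepsilon\Delta)$; hence a maximum matching is at least this large. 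Combining the two bounds yields $|\mathcal{M}_N|\ge \tfrac{1}{2}\cdot|\b{E(C)}|/(10\varepsilon\Delta)=|\b{E(C)}|/(20\varepsilon\Delta)$ w.h.p.

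\emph{Running time.} Fix a phase and an almost-clique $C$ on which the post-processing runs, i.e.\ with $|\mathcal{M}_N|<\varepsilon^2\Delta$. By Lemma~\ref{lemma: non-edgematchingsize} (which holds w.h.p.) this forces $|\b{E(C)}|<22\varepsilon^3\Delta^2$. Dismantling the old matching costs $O(|\mathcal{M}_N|)=O(\varepsilon^2\Delta)$, and the greedy pass scans each non-edge of $\b{E(C)}$ once, doing $O(1)$ work per non-edge via the maintained $\mathrm{matched}[\cdot]$ array, for a total of $O(|\b{E(C)}|)=O(\varepsilon^3\Delta^2)$ per clique. Summing over the $O(n/\Delta)$ almost-cliques (the bound on their number recorded after Lemma~\ref{lemma: fd-decompositionphase}) gives $O(\varepsilon^3 n\Delta)$ work performed once, before the phase begins. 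A phase spans $t=\Theta(\varepsilon^2\Delta)$ updates, so the amortized cost is $O(\varepsilon^3 n\Delta)/\Theta(\varepsilon^2\Delta)=O(\varepsilon n)$. The ``w.h.p.'' is inherited from Lemmas~\ref{lemma: non-edgematchingsize} and~\ref{lemma: fd-decompositionphase}; a union bound over the $O(n/\Delta)$ cliques and over polynomially many phases, together with the recompute-from-scratch convention, makes all the events we condition on hold throughout.

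The only point requiring care is the amortization bookkeeping: it is essential both that the step is \emph{skipped} whenever $|\mathcal{M}_N|\ge\varepsilon^2\Delta$ (so that every clique we actually touch has $|\b{E(C)}|=O(\varepsilon^3\Delta^2)$) and that the resulting $O(\varepsilon^3 n\Delta)$ bulk cost is charged against the $\Theta(\varepsilon^2\Delta)$ updates of the phase rather than against a single update. Since $\varepsilon$ is sub-constant, losing either saving would push the per-update cost past $O(\varepsilon n)$, and indeed past the $o(n)$ target. Everything else is routine.
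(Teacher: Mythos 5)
Your proposal matches the paper's proof essentially step for step: the size bound comes from noting that each non-edge of $C$ is incident to at most $10\varepsilon\Delta$ other non-edges (via the bound on non-neighbors in $C$ from Lemma~\ref{lemma: fd-decompositionphase}), so a maximum matching has size at least $|\b{E(C)}|/(10\varepsilon\Delta)$ and the greedily built maximal matching is at least half that; the time bound comes from the $|\b{E(C)}| = O(\varepsilon^3\Delta^2)$ per-clique cost (forced by $|\mathcal{M}_N|<\varepsilon^2\Delta$ via Lemma~\ref{lemma: non-edgematchingsize}), multiplied by the $O(n/\Delta)$ almost-cliques and amortized over the $\Theta(\varepsilon^2\Delta)$-update phase. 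The only cosmetic difference is that you explicitly invoke the ``maximal $\ge$ maximum$/2$'' fact and account for dismantling the old matching, both of which the paper leaves implicit.
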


\begin{proof}
 Consider an almost-clique $C$ for which this post-processing step is implemented. By Lemma \ref{lemma: fd-decompositionphase}, any vertex in $C$ has at most $5\eps\Delta$ neighbors at any point in time w.h.p., and thus each non-edge can be incident to at most $10\eps\Delta$ non-edges. Thus, there exists a matching of size at least $\frac{|\b{E(C)}|}{10\eps\Delta}$.  Since the post-processing step computes a maximal matching for $\b{E(C)}$, it follows that $\mathcal{M}_N$ has size at least $\frac{|\b{E(C)}|}{20\eps\Delta}$ w.h.p.

 The total time to implement this post-processing step for all almost-cliques is $O(\frac{n}{\Delta})\cdot O(\eps^3\Delta^2)=O(\eps^3n\Delta)$. Since the length of a phase is $\Theta(\eps^2\Delta)$, this yields an amortized update time of $O(\eps n)$. 
\end{proof}

The following Corollary follows from Lemmas  \ref{lemma: non-edgematchingsize} and \ref{lemma: small-matchings-guarantee}.
\begin{corollary}\label{corr: non-edgematching}
     For each almost-clique $C$, a matching $\mathcal{M}_N$ of size at least $\frac{|\b{E(C)}|}{22\varepsilon\Delta}$ can be obtained at the beginning of any phase with high probability. The amortized update time for initializing non-edge matchings across all almost-cliques is $\ti(\frac{1}{\eps^4}+\eps n)$ with high probability.
\end{corollary}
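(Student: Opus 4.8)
The plan is to derive the statement directly from Lemmas~\ref{lemma: non-edgematchingsize} and~\ref{lemma: small-matchings-guarantee}, following the same case split that the algorithm already makes for each almost-clique $C$: either the matching $\mathcal{M}_N$ returned by the dynamic matching algorithm of Lemma~\ref{alg: bhatt} has size at least $\eps^2\Delta$ (the ``large'' case, where no post-processing is done and $\mathcal{M}_N$ is kept as is), or it has size below $\eps^2\Delta$ (the ``small'' case, where $\mathcal{M}_N$ is discarded and replaced by a greedily computed maximal matching on $\b{E(C)}$).

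For the size bound I would argue as follows. In the large case, Lemma~\ref{lemma: non-edgematchingsize} already gives $|\mathcal{M}_N|\ge |\b{E(C)}|/(22\eps\Delta)$ w.h.p., and this matching is exactly what is retained. In the small case, the post-processed matching is maximal, and Lemma~\ref{lemma: small-matchings-guarantee} guarantees it has size at least $|\b{E(C)}|/(20\eps\Delta)\ge |\b{E(C)}|/(22\eps\Delta)$ w.h.p. So in both cases one obtains, w.h.p., a non-edge matching of size at least $|\b{E(C)}|/(22\eps\Delta)$ at the beginning of every phase.

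For the running time, the two steps run sequentially at the phase boundary, so their amortized costs add. Re-running the algorithm of Lemma~\ref{alg: bhatt} with a fixed constant $\delta$ over all almost-cliques, to absorb the $O(1/\eps^4)$ amortized non-edge changes per update guaranteed by Theorem~\ref{fd-decomposition}, costs $\ti(1/\eps^4)$ amortized by Lemma~\ref{lemma: non-edgematchingsize}; the greedy post-processing, triggered only for the small-matching almost-cliques, costs $O(\eps n)$ amortized by Lemma~\ref{lemma: small-matchings-guarantee}. Summing the two contributions gives the claimed $\ti(1/\eps^4+\eps n)$ amortized update time. The high-probability qualifier then comes from a union bound: the size guarantees of both lemmas are conditioned on the structural events of Lemma~\ref{lemma: fd-decompositionphase} (the bounds on $|C_i|$ and on the number of non-neighbors of each vertex inside its clique) holding throughout the phase, and since there are only $O(n/\Delta)$ almost-cliques and polynomially many phases, the total failure probability stays at $1/\text{poly}(n)$.

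I do not expect any genuine obstacle here, since the corollary is essentially a bookkeeping combination of the two preceding lemmas. The only points needing care are quantitative: checking that the constant slack ($1/20$ versus the target $1/22$) in the maximal-matching bound indeed dominates, and confirming that the two running-time terms are genuinely additive --- i.e.\ that the post-processing is invoked only on the small-matching cliques and strictly after the dynamic-matching recomputation --- rather than one silently subsuming the other.
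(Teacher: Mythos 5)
Your proof is correct and matches the paper's own (one-line) derivation: the paper states the corollary "follows from Lemmas~\ref{lemma: non-edgematchingsize} and \ref{lemma: small-matchings-guarantee}," which is exactly the case split and summation of amortized costs you carry out. Your additional care with the $1/20$ versus $1/22$ slack and the union bound is correct and simply makes explicit what the paper leaves implicit.
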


\eat{
Let us briefly discuss why Corollary \ref{corr: non-edgematching} is  useful. Note that whenever the number of non-edges in $C$ at least $22\varepsilon\Delta$, a matching of size at least 1 exists so that a single color can be assigned to at least 2 vertices. On the other hand, if there are $\Omega(\varepsilon^2\Delta^2)$ non-edges in $C$, a matching of size at least $\Omega(\varepsilon\Delta)$ can be obtained, thus coloring many vertices in $C$. Effectively, the number vertices colored in Step I is twice the number of colors in $[\Delta+1]$ utilized, leaving  $\Omega(\varepsilon\Delta)$ extra colors for the remaining uncolored vertices handled in Step II. }

\eat{
We upper bound the number of non-edges in any almost-clique $C$, given by our decomposition in the following Lemma. \eat{This allows us to argue that at any given point in time, when picking a color for endpoints of a matched non-edge in $\mathcal{M}_N$, there are a significant number of available colors. As a result, sampling a color uniformly at random from $[\Delta+1]$ succeeds with constant probability.}
\begin{lemma}\label{lemma: ub_non-edges}
    For every almost-clique $C$ of dense vertices in the sparse-dense decomposition in Theorem \ref{fd-decomposition}, $|\b{E(C)}|\leq 50\varepsilon\Delta^2$.
\end{lemma}
\begin{proof}
    By Theorem \ref{fd-decomposition}, $|C|\leq (1+8\varepsilon)\Delta$, and every vertex has at least $(1-3\varepsilon)\Delta$ neighbors in $C$. This implies that the non-edge degree of any vertex in $C$ is at most $(1+8\varepsilon)\Delta-(1-3\varepsilon)\Delta\leq 11\varepsilon\Delta$. Thus, the sum of non-edge degrees in any almost-clique is bounded by $11\varepsilon\Delta(1+8\varepsilon)\Delta=11\varepsilon\Delta^2+88\varepsilon^2\Delta^2\leq 11\varepsilon\Delta^2+88\varepsilon\Delta^2\leq 100\varepsilon\Delta^2$ where the second-last inequality follows since $\varepsilon<1$. Thus, the number of non-edges $|\b{E(C)}|\leq 50\varepsilon\Delta^2$.
\end{proof}

}
Before the new phase $\mathcal{U}_{i+1}$ begins, we recolor all endpoints of the non-edge matchings sequentially across all almost-cliques $C_1, C_2,...,C_{\ell}$ based on the updated decomposition.

\paragraph{Recoloring Endpoints of a Matched Non-Edge}
We present an algorithm $\textsc{Recolor-Non-Edge}$ which recolors endpoints of a non-edge $(u,v)$ in some almost-clique $C$. We utilize this algorithm in one of two cases: 
\begin{enumerate}
    \item Recoloring non-edges in non-edge matchings across all almost-cliques at the beginning of a new phase.
    \item Recoloring endpoints of a single non-edge during any phase if needed.
\end{enumerate}

\noindent\textbf{Data Structures.} For each almost-clique $C$, we maintain an array $A_N$ of length $[\Delta+1]$ such that $A_N[c]=1$ if color $c$ is used to color an endpoint of a matched non-edge in $\mathcal{M}_N$. For any $c$, if $A_N[c]=1$, a pointer to the non-edge $(u,v)$ whose endpoints are colored $c$ is maintained.

The algorithm is presented as follows.

\begin{algorithm}[h]
\caption{\textsc{Recolor-Non-Edge}($(u,v)$)}
\begin{algorithmic}[1]
    \State $A_N[c(u)]\leftarrow 0$, $A_N[c(v)]\leftarrow 0$.
    \State $c(u)\leftarrow \perp$, $c(v)\leftarrow \perp$. 
    \While{$c(u)=\perp$}
        \State Sample a color $c$ uniformly at random from $[\Delta+1]$.
        \If {$A_N[c]=0$ and $(L(c)\cup (L_D(c) \backslash C))\cap ((N(u)\setminus N_C(u)) \cup (N(v)\setminus N_C(v)))=\emptyset$}
            \State $A_N[c]\leftarrow 1$.
            \State $c(u)\leftarrow c$, $c(v)\leftarrow c$.
        \EndIf 
        \EndWhile 
    \State \textbf{return} $c(u)$.
\end{algorithmic}
\end{algorithm}

On input $(u,v)$, the subroutine $\textsc{Recolor-Non-Edge}$ updates the associated data structures. Next, a random color $c$ in $[\Delta+1]$ is sampled. If there exists a non-edge $(w,x)$ such that $c(w)=c(x)=c$, or a neighbor of either $u$ or $v$ outside $C$ is colored $c$, the process repeats until a proper color is found. Else, $u,v$ are colored $c$.

The following observation follows from Lemma \ref{lemma: fd-decompositionphase}, concerning the upper bound on the size of any almost-clique $C$ at any point in time.
\begin{observation}\label{obs: ubnonedgematchingsize}
    The size of the non-edge matching $|\mathcal{M}_N|\leq \frac{|C|}{2}\leq (\frac{1}{2}+5\varepsilon)\Delta$ with high probability.
\end{observation}

\begin{lemma}\label{lemma: recolornonedge}
    Algorithm \textsc{Recolor-Non-Edge} takes $\ti(\frac{n}{\Delta})$ update time with high probability.
\end{lemma}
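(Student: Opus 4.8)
The plan is to split the cost of one call to \textsc{Recolor-Non-Edge} into three parts and bound each: the constant-time bookkeeping in lines~1--2; the cost of a single iteration of the \textbf{while} loop; and the number of iterations executed before a color is accepted. I will show each iteration runs in $\ti(n/\Delta)$ time and that, w.h.p., only $O(\log n)$ iterations occur, which multiplies out to the claimed $\ti(n/\Delta)$ bound. Throughout I condition on the high-probability events of Lemma~\ref{lemma: fd-decompositionphase} (every vertex of $C$ has at most $5\varepsilon\Delta$ neighbors outside $C$, and $|C|\le(1+10\varepsilon)\Delta$) and of Lemma~\ref{lemma: colorloadsparse}/\ref{lemma: colorloadsparse} (so that $|L(c)|=\ti(n/\Delta)$ for every color $c$).

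For the cost of one iteration, the only non-trivial step is evaluating the condition on line~5. Checking $A_N[c]=0$ is $O(1)$ using the array $A_N$. For the intersection test, note that every vertex in $L(c)$ is sparse and every vertex in $L_D(c)\setminus C$ is dense but outside $C$; in particular every $w\in L(c)\cup(L_D(c)\setminus C)$ satisfies $w\notin C$, so $w\in(N(u)\setminus N_C(u))\cup(N(v)\setminus N_C(v))$ holds iff $w\in N(u)\cup N(v)$. Hence it suffices to scan the list $L(c)\cup(L_D(c)\setminus C)$ and, for each $w$ in it, issue the two $O(1)$-time adjacency queries ``$w\in N(u)$?'' and ``$w\in N(v)$?''. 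By Lemma~\ref{lemma: colorloadsparse}, $|L(c)|=\ti(n/\Delta)$, and since our invariant assigns each color to at most two vertices per almost-clique and there are $O(n/\Delta)$ almost-cliques (as noted after Lemma~\ref{lemma: fd-decompositionphase}), $|L_D(c)|=O(n/\Delta)$. Therefore each iteration costs $\ti(n/\Delta)$.

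For the number of iterations, call a color $c$ \emph{bad} if the loop does not accept it, i.e.\ if $A_N[c]=1$ or some neighbor of $u$ or $v$ lying outside $C$ is colored $c$. By Observation~\ref{obs: ubnonedgematchingsize} at most $|\mathcal{M}_N|\le(\tfrac12+5\varepsilon)\Delta$ colors have $A_N[c]=1$, and by Lemma~\ref{lemma: fd-decompositionphase} each of $u,v$ has at most $5\varepsilon\Delta$ neighbors outside $C$; hence the number of bad colors is at most $(\tfrac12+15\varepsilon)\Delta$. Since $\varepsilon$ is sub-constant in our setting, for large $n$ this is at most $(\tfrac12+o(1))\Delta$, so at least $(\tfrac12-o(1))\Delta\ge \Delta/3$ of the $\Delta+1$ colors are acceptable, and a uniformly random color is accepted with probability at least $\tfrac14$. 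Thus the number of iterations is stochastically dominated by a geometric random variable with success probability $\tfrac14$, so after $O(\log n)$ iterations a color is accepted with probability $1-n^{-\Omega(1)}$. Combining with the per-iteration bound and taking a union bound over the conditioned events and over all $\mathrm{poly}(n)$ invocations of \textsc{Recolor-Non-Edge}, each call runs in $O(1)+O(\log n)\cdot\ti(n/\Delta)=\ti(n/\Delta)$ time w.h.p.

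The \textbf{main obstacle} here is simply making the per-iteration cost $\ti(n/\Delta)$ rather than something like $\ti(\varepsilon\Delta)$: one must test feasibility by scanning the colored-vertex lists $L(c),L_D(c)$ (of length $\ti(n/\Delta)$) against the two neighborhoods, not by scanning the $\Theta(\varepsilon\Delta)$-sized neighborhoods of $u,v$ against the lists. This is precisely where the strong $\ti(n/\Delta)$ bound on $|L(c)|$ from Lemma~\ref{lemma: colorloadsparse}, with no $\varepsilon$ in the denominator, is used; everything else is a routine geometric-tail calculation.
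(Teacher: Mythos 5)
Your proof is correct and follows essentially the same approach as the paper's: both bound the per-iteration cost by scanning $L(c)\cup(L_D(c)\setminus C)$ (of size $\ti(n/\Delta)$) with $O(1)$ adjacency tests, and both lower-bound the per-iteration success probability by subtracting $|\mathcal{M}_N|\le(\tfrac12+5\varepsilon)\Delta$ (Observation~\ref{obs: ubnonedgematchingsize}) and the at most $10\varepsilon\Delta$ outside-$C$ neighbors of $u,v$ (Lemma~\ref{lemma: fd-decompositionphase}) from $\Delta+1$, yielding a constant success probability and hence $O(\log n)$ iterations w.h.p.
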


\begin{proof}
We first analyze the probability of picking a color $c$ s.t $(L(c)\cup (L_D(c) \backslash C))\cap ((N(u)\backslash N_C(u)) \cup (N(v)\cup N_C(v)))=\emptyset$. Then, we analyze the time complexity.

Note that the number of neighbors of $u,v$ outside $C$ is at most $10\varepsilon\Delta$ by Lemma \ref{lemma: fd-decompositionphase}, both at the beginning of any phase and during any phase, with high probability. Moreover, since $|\mathcal{M}_N|\leq (\frac{1}{2}+5\varepsilon)\Delta$ by Observation \ref{obs: ubnonedgematchingsize} and each non-edge is assigned a distinct color, it follows that there are at least $\Delta+1-(10\varepsilon\Delta+(\frac{1}{2}+5\varepsilon)\Delta)\geq (\frac{1}{2}-15\varepsilon)\Delta$ colors that can be assigned to $u,v$ w.h.p. 
It is worth noting that we have lower bounded the feasible colors for $u,v$ by upper bounding neighbors of $u,v$ outside $C$ and the size of the maximum matching $\mathcal{M}_N$. Thus, conditioned on the high probability event, picking a random color $c$ succeeds with probability at least $\frac{1}{2}-15\varepsilon\geq 0.49$ for $\varepsilon\leq \frac{1}{1500}$. Thus, after $\Omega(3d\ln n)$ trials, $u,v$ are assigned a proper color with probability at least $1-\frac{1}{n^d}$ for a constant $d>0$ by a Chernoff bound. 

 Recall that $L_D(c)$ is the list containing all dense vertices colored $c$ across all almost-cliques. Since each color in $[\Delta+1]$ is assigned to at most 2 vertices in $C$, $|L_D(c)|=O(\frac{n}{\Delta})$. Conditioned on the high probability event given by Lemma \ref{lemma: colorloadsparse}, we know that $|L(c)|=\ti(\lst)$ throughout a phase. Computing the list $(L(c)\cup (L_D(c)\backslash C))$ takes $\ti(\lst)$ time. Checking whether any vertex $w$ in this list is also in $N(u)\backslash N_C(u)$ or $N(v)\backslash N_C(v)$ takes $O(1)$ time. If no neighbors of $u,v$ outside $C$ are assigned $c$, and endpoints of any non-edge in $C$ are not assigned $c$, we assign $c$ to $u,v$; otherwise we repeat. Thus, a single iteration of the while loop takes $\ti(\lst)$ time.
 
 Conditioned on the high probability events including, the properties given in Theorem \ref{fd-decomposition} which hold at the beginning of any phase, the upper bound of $(\frac{1}{2}+5\eps)\Delta$ on $|\mathcal{M}_N|$ by Observation \ref{obs: ubnonedgematchingsize}, and the upper bound on the size of lists $L(c)$ by Lemma \ref{lemma: colorloadsparse} throughout a phase, we obtain that Algorithm \textsc{Recolor-Non-Edge} takes $\ti(\frac{n}{\Delta})$ time with high probability.   
\end{proof}

\begin{corollary}\label{corr: amortizedrecolornonedge}
    The amortized update time to recolor endpoints of matched non-edges across all almost-cliques at the beginning of any phase, is $\ti(\frac{n^2}{\eps^2\Delta^2})$ with high probability.
\end{corollary}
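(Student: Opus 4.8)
}
The plan is to multiply the per-non-edge running time from Lemma~\ref{lemma: recolornonedge} by a global bound on the number of non-edges recolored at a phase boundary, and then amortize the total cost over the $t=\Theta(\eps^2\Delta)$ updates of the phase. Recall that before a new phase begins, the algorithm calls \textsc{Recolor-Non-Edge} on both endpoints of every matched non-edge of $\mathcal{M}_N$, for every almost-clique $C$, sequentially across $C_1,\dots,C_\ell$. By Lemma~\ref{lemma: recolornonedge}, each such call runs in $\ti(n/\Delta)$ time with high probability. I would first note that the hypotheses underlying that lemma---the clique-size and outside-neighbor bounds of Lemma~\ref{lemma: fd-decompositionphase}, the bound $|\mathcal{M}_N|\le(\tfrac12+5\eps)\Delta$ of Observation~\ref{obs: ubnonedgematchingsize}, and the list-length bounds $|L(c)|=\ti(n/\Delta)$ (Lemma~\ref{lemma: colorloadsparse}) and $|L_D(c)|=O(n/\Delta)$---all hold at the start of the phase and are preserved while these recolorings are performed, since the algorithm maintains the invariant that every color is assigned to at most two vertices in each almost-clique.

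Next I would bound the total number of \textsc{Recolor-Non-Edge} calls made at a phase boundary. Each matched non-edge of an almost-clique $C$ uses two distinct vertices of $C$, and distinct almost-cliques are vertex-disjoint, so the total size of all non-edge matchings is at most $\sum_{i\in[\ell]}|C_i|/2\le|V_D|/2\le n/2$. Hence at most $n/2$ pairs of endpoints (equivalently $O(n)$ vertices) are recolored at a phase boundary. Taking a union bound over these $O(n)$ invocations---each of which meets its $\ti(n/\Delta)$ time guarantee with probability $1-1/\mathrm{poly}(n)$---the total time to recolor the endpoints of all matched non-edges across all almost-cliques at the beginning of a phase is $O(n)\cdot\ti(n/\Delta)=\ti(n^2/\Delta)$ with high probability.

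Finally, since each phase consists of $t=\tfrac{\eps^2\Delta}{18e^6}=\Theta(\eps^2\Delta)$ edge updates, charging the $\ti(n^2/\Delta)$ cost of this phase-initialization step uniformly over the $t$ updates of the phase gives an amortized update time of $\ti(n^2/\Delta)/\Theta(\eps^2\Delta)=\ti\!\big(\tfrac{n^2}{\eps^2\Delta^2}\big)$ with high probability, which is the claimed bound. I do not expect any genuine obstacle here; the proof is essentially bookkeeping, and the only point requiring (minor) care is verifying that the preconditions of Lemma~\ref{lemma: recolornonedge} persist while the endpoints are recolored one clique at a time, which follows from the ``at most two vertices per color per almost-clique'' invariant noted above.
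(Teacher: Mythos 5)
Your proof is correct and follows essentially the same route as the paper's: bound each \textsc{Recolor-Non-Edge} call by $\ti(n/\Delta)$ via Lemma~\ref{lemma: recolornonedge}, bound the number of matched non-edges across almost-cliques by $O(n)$, multiply to get $\ti(n^2/\Delta)$ per phase, and amortize over the $\Theta(\eps^2\Delta)$ updates of a phase. The extra care you take in checking that the preconditions of Lemma~\ref{lemma: recolornonedge} persist throughout the sequential recoloring is a reasonable addition but not a departure from the paper's argument.
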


\begin{proof}
    By Lemma \ref{lemma: recolornonedge}, the time to recolor a single non-edge is $\ti(\frac{n}{\Delta})$ with high probability. The total number of matched non-edges across all almost-cliques is bounded by $O(n)$. Thus, the total time to re color all non-edges is $\ti(\frac{n^2}{\Delta})$, and amortized over the length of a phase this yields $\ti(\frac{n^2}{\Delta})\cdot O(\frac{1}{\eps^2\Delta})=\ti(\frac{n^2}{\eps^2\Delta^2})$ amortized update time.
\end{proof}

\paragraph{Maintaining Non-Edge Matchings During a Phase}
Let $\mathcal{M}_N$ denote the non-edge matching on the set of non-edges $\b{E(C)}$ for an almost-clique $C$ obtained the beginning of a phase. By Corollary \ref{corr: non-edgematching}, it follows that $|\mathcal{M}_N|\geq \frac{|\b{E(C)}|}{22\eps\Delta}$. We distinguish two cases depending on $|\mathcal{M}_N|$: 1. $|\mathcal{M}_N|<\eps^2\Delta$ and, 2. $|\mathcal{M}_N|\geq \eps^2\Delta$.

\noindent\textbf{Case 1.} If $|\mathcal{M}_N|<\eps^2\Delta$, we ensure that a large non-edge matching with respect to the total non-edges in $\b{E(C)}$ is dynamically maintained. We do not utilize the dynamic matching algorithm \cite{bhattacharya2016new} which we use to \textit{initialize matchings at the beginning} of every phase because we want: 
\begin{enumerate}
    \item The non-edge matching $\mathcal{M}_N$ to change by at most $O(1)$ per update during a phase so that few vertices are recolored in Step II to obtain $o(n)$ update time.
    \item The update time to recolor dense vertices \textit{during any phase} to be \textit{independent} of the running time and adjustment complexity of the dynamic matching algorithm \cite{bhattacharya2016new} which simplifies analysis.
\end{enumerate} 

As a result the amortized update time guarantee of \cite{bhattacharya2016new} is utilized only towards analyzing amortized costs of re-initializing non-edge matchings at the beginning of a phase. 

Recall that if $|\mathcal{M}_N|<\eps^2\Delta$, the post-processing step must have been carried out such that $\mathcal{M}_N$ corresponds to a \textit{maximal matching}. We use a naive algorithm, $\textsc{Maintain-Matching}$ to maintain a maximal matching on $\b{E(C)}$ throughout a phase under edge updates. It is described as follows.

\noindent{\underline{Algorithm \textsc{Maintain-Matching}}.} On input $(u,v)$ where $u,v\in C$ the algorithm $\textsc{Maintain-Matching}$ proceeds as follows. First, the data structures are updated to reflect the edge update. If $(u,v)$ is an edge insertion, this is a non-edge deletion, in which case if $(u,v)\in \mathcal{M}_N$, then $(u,v)$ is removed from $\mathcal{M}_N$ and matched$[u]\coloneq$matched$[v]\coloneq 0$. For $w\in \{u,v\}$, the algorithm iterates through the list of non-edges $\b{E_C(w)}$ in $C$, and if there exists a non-edge $(w,x)$ such that matched$[x]=0$, $(w,x)$ is added to $\mathcal{M}_N$, and matched$[w]\coloneq$matched$[x]\coloneq 1$. 

 If $(u,v)$ is an edge deletion, this is a non-edge insertion. If matched$[u]=$matched$[v]=0$, $(u,v)$ is added to $\mathcal{M}_N$, and matched$[u]\coloneq$matched$[v]\coloneq 1$. 
 
 The set of newly added edges to $\mathcal{M}_N$ is returned by the algorithm. The pseudo-code is as follows.

\begin{algorithm}[h]
\caption{\textsc{Maintain-Matching}$(u,v)$}
\begin{algorithmic}[1]
        \State $M\leftarrow \emptyset$.
        \If{$(u,v)$ is an edge insertion}\Comment{$(u,v)$ is a non-edge deletion.}
            \State $\b{E(C)}\leftarrow \b{E(C)}\backslash \{(u,v)\}$, $\b{E_C(u)}\leftarrow \b{E_C(u)}\backslash \{(u,v)\}$, $\b{E_C(v)}\leftarrow \b{E_C(v)}\backslash\{(u,v)\}$.
            \If {$(u,v)\in \mathcal{M}_N$} 
                \State $\mathcal{M}_N\leftarrow \mathcal{M}_N\backslash \{(u,v)\}$.
                \State matched$[u]\coloneq$matched$[v]\coloneq 0$.
            \EndIf 
            \For {$w\in \{u,v\}$}
                \If {there exists $(w,x)\in \b{E_C(w)}$ s.t. matched$[w]=0$}
                    \State $\mathcal{M}_N\leftarrow \mathcal{M}_N \cup \{(w,x)\}$, matched$[w]\coloneq$matched$[x]\coloneq 1$, $M\leftarrow M\cup \{(w,x)\}$.
                \EndIf 
            \EndFor 
        \Else \Comment{$(u,v)$ is an edge deletion (non-edge insertion).}
            \State $\b{E(C)}\leftarrow \b{E(C)}\cup \{(u,v)\}$, $\b{E_C(u)}\leftarrow \b{E_C(u)}\cup \{(u,v)\}$, $\b{E_C(v)}\cup \b{E_C(v)}\backslash\{(u,v)\}$.
            \If {matched$[u]=$matched$[v]=0$}
                \State $\mathcal{M}_N\leftarrow \mathcal{M}_N\cup \{(u,v)\}$, matched$[u]\coloneq$matched$[v]\coloneq 1$,  $M\leftarrow M\cup \{(u,v)\}$.
            \EndIf
        \EndIf 
    \textbf{return} $M$.
\end{algorithmic}
\end{algorithm}

\begin{lemma}\label{lemma: maintainmatching}
    \textsc{Maintain-Matching} takes $O(\eps\Delta)$ time and maintains a maximal matching $\mathcal{M}_N$ of size at least $\frac{|\b{E(C)}|}{20\eps\Delta}$ with high probability, throughout a phase. 
\end{lemma}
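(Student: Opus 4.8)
The plan is to treat \textsc{Maintain-Matching} as what it is --- a deterministic fully dynamic greedy maintainer of a \emph{maximal} matching in the non-edge graph $(C,\b{E(C)})$ --- and to establish the three assertions (running time, maximality, size) one at a time, after conditioning on the high-probability structural guarantees of Lemma~\ref{lemma: fd-decompositionphase}. Under that conditioning, throughout the phase $|C|\le(1+10\eps)\Delta$ and every vertex of $C$ has at least $(1-5\eps)\Delta$ neighbours inside $C$, so the non-edge graph $(C,\b{E(C)})$ has maximum degree $O(\eps\Delta)$ at all times; moreover the algorithm is then deterministic, so the bounds hold with certainty and the claimed ``w.h.p.'' is just the failure probability of the conditioned event.

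For the running time I would simply inspect the pseudocode. A call \textsc{Maintain-Matching}$(u,v)$, issued only for updates with $u,v\in C$, performs $O(1)$ membership tests, insertions, and deletions on $\b{E(C)},\b{E_C(u)},\b{E_C(v)},\mathcal{M}_N$, and $\mathrm{matched}[\cdot]$, and in the non-edge-deletion branch additionally scans $\b{E_C(u)}$ and $\b{E_C(v)}$ once each. Since $|\b{E_C(w)}|=O(\eps\Delta)$ for every $w\in C$ by the degree bound above, the worst-case cost is $O(\eps\Delta)$.

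The heart of the proof is that $\mathcal{M}_N$ stays a maximal matching of the current $\b{E(C)}$ throughout the phase, which I would prove by induction on the updates. The base case is the start of the phase: \textsc{Maintain-Matching} is invoked only in Case~1 ($|\mathcal{M}_N|<\eps^2\Delta$), and there the post-processing step of Section~\ref{sec: non-edge-matching} (Lemma~\ref{lemma: small-matchings-guarantee}) has just recomputed $\mathcal{M}_N$ as a maximal matching. For the inductive step, an update not internal to $C$ changes neither $\b{E(C)}$ nor $\mathcal{M}_N$, so take $(u,v)$ with $u,v\in C$. If $(u,v)$ is an edge deletion (the non-edge $(u,v)$ is \emph{inserted}), every non-edge other than $(u,v)$ keeps the matched endpoint it had, and $(u,v)$ itself is either already covered or gets added to $\mathcal{M}_N$ exactly when both $u,v$ are free, so maximality is restored with no real work. \emph{The delicate case --- and the step I expect to be the main obstacle --- is an edge insertion, i.e.\ deletion of the non-edge $(u,v)$:} when $(u,v)\in\mathcal{M}_N$ we free $u$ and $v$ and must re-establish maximality using only the two greedy re-matching attempts over $w\in\{u,v\}$. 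My argument rests on two observations: (i) only $u$ and $v$ changed matched-status, so any newly uncovered non-edge must be incident to $u$ or to $v$; and (ii) the re-matching step is monotone --- matching a free $w$ to a free non-neighbour only enlarges the matched set --- so if $w$ is still unmatched after the loop, then already when $\b{E_C(w)}$ was scanned all of $w$'s non-neighbours in $C$ were matched, and they have remained matched; hence every non-edge incident to $w$ is covered. I would then check the bookkeeping points: that any pair the loop adds is a genuine non-edge of the \emph{updated} graph (this is why $\b{E_C(u)},\b{E_C(v)}$ are updated before being scanned), that the added pairs are vertex-disjoint (the pair added for $w=u$ cannot involve $v$ since $(u,v)$ is no longer a non-edge, and symmetrically), and that when $(u,v)\notin\mathcal{M}_N$ the loop makes no change by the inductive hypothesis.

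Granting maximality at every point of the phase, the size bound follows by the same counting as in the proof of Lemma~\ref{lemma: small-matchings-guarantee}: the $2|\mathcal{M}_N|$ endpoints of $\mathcal{M}_N$ form a vertex cover of the non-edge graph $(C,\b{E(C)})$, which has maximum degree $O(\eps\Delta)$ throughout the phase by Lemma~\ref{lemma: fd-decompositionphase}, so $|\b{E(C)}|\le 2|\mathcal{M}_N|\cdot O(\eps\Delta)$ and hence $|\mathcal{M}_N|\ge\frac{|\b{E(C)}|}{20\eps\Delta}$. Combining the three parts, and recalling that everything was conditioned on the high-probability event of Lemma~\ref{lemma: fd-decompositionphase}, yields the lemma.
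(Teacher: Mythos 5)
Your proposal is correct and follows essentially the same approach as the paper: condition on Lemma~\ref{lemma: fd-decompositionphase} to bound the non-edge degree by $O(\eps\Delta)$, get the running time by pseudocode inspection, and deduce the size bound from maximality via the vertex-cover counting of Lemma~\ref{lemma: small-matchings-guarantee}. The paper simply \emph{asserts} that $\mathcal{M}_N$ stays maximal throughout the phase; your inductive argument for maximality (with the monotonicity observation for the deletion-from-matching case) supplies the detail the paper leaves implicit, and is a correct way to fill that gap.
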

\begin{proof}
Since every vertex has at most $O(\eps\Delta)$ non-neighbors in $C$ throughout a phase by Lemma \ref{lemma: fd-decompositionphase}, the update time of $\textsc{Maintain-Matching}$ is $O(\eps\Delta)$. Moreover, it follows from the proof of Lemma \ref{lemma: small-matchings-guarantee} that the size of $\mathcal{M}_N$ has size at least $\frac{|\b{E(C)}|}{20\eps\Delta}$ throughout the phase, since $\mathcal{M}_N$ is a maximal matching at any given point in time.      
\end{proof}

\noindent\textbf{Case 2.} If $|\mathcal{M}_N|\geq \eps^2\Delta$ at the beginning of a phase, we do not update $\mathcal{M}_N$ in this phase; the size of $\mathcal{M}_N$ decreases by at most $t=\frac{\eps^2\Delta}{18e^6}$ throughout a phase. This is sufficient for our purpose. By Lemma \ref{lemma: fd-decompositionphase}, $|\b{E(C)}|\leq 22\eps\Delta\eps^2\Delta=22\eps^3\Delta^2$ at the beginning of the phase w.h.p. Since a phase has length $\frac{\eps^2\Delta}{18e^6}$, it follows that $|\b{E(C)}|\leq 22\eps^3\Delta^2+\frac{\eps^2\Delta}{18e^6}\leq 44\eps^3\Delta^2$, for $\eps\geq \frac{1}{396e^6\Delta}$ throughout a phase. Since $|\mathcal{M}_N|\geq (1-\frac{1}{18e^{6}})\eps^2\Delta$ at the end of a phase, it follows that $|\mathcal{M}_N|\geq \frac{|\b{E(C)}|}{50\eps\Delta}$ in this case w.h.p.

The following lemma follows from our preceding discussion.
\begin{lemma}\label{lemma: nonedgematchingsizephase}
A non-edge matching of size at least $\frac{|\b{E(C)}|}{50\eps\Delta}$ can be dynamically maintained for every almost-clique throughout a phase in $O(\eps \Delta)$ update time where $\eps\geq \frac{1}{396e^6\Delta}$, with high probability.
\end{lemma}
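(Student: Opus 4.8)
The plan is to prove the lemma by assembling the two-case analysis developed just above (Cases~1 and~2 of ``Maintaining Non-Edge Matchings During a Phase''), so that the bulk of the work reduces to checking that in each regime the update time is $O(\eps\Delta)$ and the invariant $|\mathcal{M}_N| \geq |\b{E(C)}|/(50\eps\Delta)$ is preserved for every one of the $t = \eps^2\Delta/(18e^6)$ updates of the phase, and then union-bounding over the high-probability structural facts (Lemma~\ref{lemma: fd-decompositionphase}, Corollary~\ref{corr: non-edgematching}, Lemma~\ref{lemma: small-matchings-guarantee}) that each step invokes.

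For Case~1, i.e.\ $|\mathcal{M}_N| < \eps^2\Delta$ at the start of the phase, the initialization has already applied the post-processing step, so $\mathcal{M}_N$ is a maximal matching on $\b{E(C)}$. I would then simply cite Lemma~\ref{lemma: maintainmatching}: running \textsc{Maintain-Matching} on each in-clique update costs $O(\eps\Delta)$ worst-case time (a scan of the $O(\eps\Delta)$ non-neighbors of the two affected vertices, using the degree bound of Lemma~\ref{lemma: fd-decompositionphase}), and since a maximal matching is kept at all times and every non-edge is incident to at most $10\eps\Delta$ non-edges, $|\mathcal{M}_N| \geq |\b{E(C)}|/(20\eps\Delta) \geq |\b{E(C)}|/(50\eps\Delta)$ holds throughout the phase, w.h.p.

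For Case~2, i.e.\ $|\mathcal{M}_N| \geq \eps^2\Delta$ at the start, I would leave $\mathcal{M}_N$ frozen for the whole phase, whence maintaining it costs $O(1) \le O(\eps\Delta)$ per update; the one subtlety is that a frozen matched pair must still be dropped if its non-edge turns into an edge, so $|\mathcal{M}_N|$ can shrink. I would then bound the drift over the phase: since the decomposition is fixed, each update changes $\b{E(C)}$ by at most one and hence destroys at most one matched pair, so $|\mathcal{M}_N|$ falls by at most $t$ while $|\b{E(C)}|$ rises by at most $t$; starting from $|\mathcal{M}_N| \ge \eps^2\Delta$ and the bound $|\b{E(C)}| \le 22\eps^3\Delta^2$ established for this regime, and using $\eps \ge 1/(396e^6\Delta)$ so that the additive $t$ terms are absorbed, a short computation gives $|\mathcal{M}_N| \ge (1-\tfrac{1}{18e^6})\eps^2\Delta \ge \tfrac{44\eps^3\Delta^2}{50\eps\Delta} \ge |\b{E(C)}|/(50\eps\Delta)$ at every point of the phase, w.h.p. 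Combining the two cases and taking a union bound over the invoked high-probability events then yields the lemma.

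The main (and essentially only) obstacle is the bookkeeping in Case~2: one must verify that the slack between $|\mathcal{M}_N| \ge \eps^2\Delta$ and the loss of up to $t$ edges, together with the slack between the factors $22$ and $50$ in the invariant, suffices to absorb both the matching shrinkage and the growth of $|\b{E(C)}|$ over the phase --- which is exactly what the constant in the hypothesis $\eps \ge 1/(396e^6\Delta)$ is calibrated to provide. Nothing beyond the preceding lemmas is required.
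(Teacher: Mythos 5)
Your proposal reproduces the paper's own Case~1/Case~2 argument almost verbatim, including the final inequality chain for Case~2, so the overall approach matches. Case~1 is fine: citing Lemma~\ref{lemma: maintainmatching} and the factor-$20$ bound for maximal matchings on non-edges gives the claim with slack to spare.

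There is a gap in your Case~2, and it is inherited directly from the paper's terse ``preceding discussion.'' You invoke ``the bound $|\b{E(C)}| \le 22\eps^3\Delta^2$ established for this regime,'' but that bound is derived in the post-processing paragraph from the \emph{Case-1} hypothesis $|\mathcal{M}_N| < \eps^2\Delta$, via $|\b{E(C)}| \le 22\eps\Delta\,|\mathcal{M}_N|$. In Case~2 one only knows $|\mathcal{M}_N| \ge \eps^2\Delta$, which gives no upper bound on $|\b{E(C)}|$ at all; indeed $|\b{E(C)}|$ can be $\Theta(\eps\Delta^2)$ for a large almost-clique, far exceeding $22\eps^3\Delta^2$. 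Your step $44\eps^3\Delta^2/(50\eps\Delta) \ge |\b{E(C)}|/(50\eps\Delta)$ is therefore unjustified. The lemma is nevertheless true, but the argument must instead track the ratio directly: at the start of the phase $|\mathcal{M}_N| \ge \max\bigl(\eps^2\Delta,\; |\b{E(C)}|/(22\eps\Delta)\bigr)$; over the phase $|\mathcal{M}_N|$ drops by at most $t = \eps^2\Delta/(18e^6) \le |\mathcal{M}_N|/(18e^6)$ (a negligible fraction), and $|\b{E(C)}|$ rises by at most $t$. Writing $50\eps\Delta M_0 \ge 25\eps\Delta\cdot B_0/(22\eps\Delta) + 25\eps\Delta\cdot\eps^2\Delta = \tfrac{25}{22}B_0 + 25\eps^3\Delta^2$ and comparing with $B_0 + t + 50\eps\Delta\, t$, the surplus $25\eps^3\Delta^2$ absorbs the additive $t$ terms exactly when $\eps\Delta \ge \Omega(1/e^6)$, which is what $\eps \ge 1/(396e^6\Delta)$ provides. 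You should spell out this ratio argument rather than lean on the $\eps^3\Delta^2$ bound, which does not hold in the regime you are in.
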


We present a subroutine $\textsc{Update-Non-Edges}$ which maintains a non-edge matching satisfying the property in Lemma \ref{lemma: nonedgematchingsizephase} for an almost-clique throughout a phase.

\noindent{\underline{Algorithm \textsc{Update-Non-Edges}}.} 
Note that non-edge matchings are affected only when $u,v\in V_D$ and $u,v$ are in the same almost-clique, denoted by $C$. On an edge update, $(u,v)$, the data structures $N_D(u), N_D(v), \b{E_C(u)}, \b{E_C(u)}, \b{E(C)}$ are updated. The algorithm considers two cases depending on the size of the non-edge matching at the beginning of the current phase: 1) $|\mathcal{M}_N|\geq \eps^2\Delta$, and 2) $\mathcal{M}_N<\eps^2\Delta$. We describe how these cases are handled as follows.

\begin{enumerate}
\item If $(u,v)$ is an edge insertion, and $u,v$ are endpoints of some matched non-edges $(u,w), (v,x)$ where $w\neq v, x\neq u$ then $\mathcal{M}_N$ does not change. If $(u,v)\in \mathcal{M}_N$, $(u,v)$ is removed from $\mathcal{M}_N$, and $u,v$ are added to set $L_O$ which is returned by the algorithm.

If $(u,v)$ is an edge deletion, the matching $\mathcal{M}_N$ does not change. 

\item Algorithm $\textsc{Maintain-Matching}(u,v)$ is invoked which returns a set $M$. Recall that in the case when $(u,v)$ is an edge deletion such that matched$[u]$=matched$[v]=0$, $(u,v)$ is added to $\mathcal{M}_N$, $M=\{(u,v)\}$ is returned and vertices $u,v$ are added to the set $L_I$. 

If $(u,v)$ is an edge insertion such that $(u,v)\in \mathcal{M}_N$, $(u,v)$ is removed from $\mathcal{M}_N$ and the returned set $M$ consists of edges added to $\mathcal{M}_N$ where $|M|\leq 2$. In this case, if either $u$ or $v$ is no longer an endpoint of a matched non-edge in $M$, it is added to $L_O$. In any case, $L_I$ consists of all vertices which are endpoints of matched non-edges in $M$.
\end{enumerate} 

To summarize, $\textsc{Update-Non-Edges}$ returns a tuple $(L_{O}, L_{I}, M)$ where: \begin{enumerate}
    \item $M$ consists of all edges that are added to $\mathcal{M}_N$ after this update.
    \item $L_O$ consists of vertices which cease to be endpoints of a matched non-edge in $\mathcal{M}_N$ after this update. These vertices will be handled by our perfect matching algorithms in Step II.
    \item $L_I$ consists of vertices that are endpoints of newly matched non-edges in $\mathcal{M}_N$ after this update. This set can also include $u$ or $v$.
\end{enumerate} 

It follows that $|L_O|\leq 2$, $|L_I|\leq 4$ and $|M|\leq 2$. 

\begin{algorithm}[H]
\caption{\textsc{Update-Non-Edges$(u,v)$}}
\begin{algorithmic}[1]
\State $C\leftarrow$ almost-clique containing $u,v$.
\State $L_O, L_I, M\leftarrow \emptyset$.
\If{$\mathcal{M}_N>\eps^2\Delta$ at the beginning of this phase}
    \If{$(u,v)$ is an edge insertion} \Comment{$(u,v)$ is a non-edge deletion.}
        \State $\b{E(C)}\leftarrow \b{E(C)} \backslash \{(u,v)\}, \b{E_C(u)}\leftarrow \b{E_C(u)}\setminus \{v\}, \b{E_C(v)}\leftarrow \b{E_C(v)}\setminus \{u\}$.
            \If {$(u,v)\in \mathcal{M}_N$}
                \State $\mathcal{M}_N\leftarrow \mathcal{M}_N \setminus \{(u,v)\}$, $L_O\leftarrow L_O\cup\{u,v\}$.
            \EndIf 
    \Else 
        \State $\b{E(C)}\leftarrow \b{E(C)} \cup \{(u,v)\}, \b{E_C(u)}\leftarrow \b{E_C(u)}\cup \{v\}, \b{E_C(v)}\leftarrow \b{E_C(v)}\cup \{u\}$.
    \EndIf 
\Else 
    \State $b_u\leftarrow $matched$[u]$, $b_v\leftarrow $matched$[v]$.
    \State $M\leftarrow \textsc{Maintain-Matching(u,v)}$.
    \For {$w\in \{u,v\}$}
        \If {$b_w=1$ and matched$[w]=0$}
            \State $L_O\leftarrow L_O \cup \{w\}$.
        \EndIf 
    \EndFor 
    \State $L_I\leftarrow L_I \cup (\cup_{(x,y)\in M} \{x,y\})$.
\EndIf
\State \textbf{return} $(L_O, L_I, M)$.
\end{algorithmic}
\end{algorithm}

\begin{lemma}\label{lemma: update-non-edges}
    Algorithm \textsc{Update-Non-Edges} takes $O(\eps \Delta)$ time and maintains a non-edge matching of size at least $\frac{|\b{E(C)}|}{50\eps\Delta}$ for every almost-clique with high probability.
\end{lemma}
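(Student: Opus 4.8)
The plan is to combine two facts established earlier: the running time and correctness of the subroutines \textsc{Maintain-Matching} (Lemma~\ref{lemma: maintainmatching}), which handles Case~2 (small matchings), and the $O(1)$-per-update bookkeeping in Case~1 (large matchings), together with the matching-size guarantees of Lemma~\ref{lemma: nonedgematchingsizephase}. I would split the argument along exactly the two cases in the description of \textsc{Update-Non-Edges}.

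\emph{Running time.} First I would observe that on an edge update $(u,v)$ the algorithm does nothing unless $u,v\in V_D$ lie in the same almost-clique $C$; identifying $C$ and updating $N_D(\cdot)$, $\b{E_C(\cdot)}$, $\b{E(C)}$ is $O(1)$ time. In the large-matching case ($|\mathcal{M}_N|\ge \eps^2\Delta$ at the start of the phase), the only further work is, on a non-edge deletion that happens to lie in $\mathcal{M}_N$, removing it and adding its two endpoints to $L_O$ — all $O(1)$. In the small-matching case the work is dominated by the call to \textsc{Maintain-Matching}$(u,v)$, which by Lemma~\ref{lemma: maintainmatching} takes $O(\eps\Delta)$ time (it scans the non-edge lists $\b{E_C(u)},\b{E_C(v)}$, each of length $O(\eps\Delta)$ by Lemma~\ref{lemma: fd-decompositionphase}), plus $O(1)$ time to populate $L_O,L_I$ from the constant-size set $M$. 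Hence the worst-case update time is $O(\eps\Delta)$ in both cases.

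\emph{Matching size.} Here I would just invoke Lemma~\ref{lemma: nonedgematchingsizephase}, which already states that a non-edge matching of size at least $|\b{E(C)}|/(50\eps\Delta)$ can be maintained throughout a phase in $O(\eps\Delta)$ update time, and note that \textsc{Update-Non-Edges} implements exactly that maintenance scheme: in Case~2 it keeps $\mathcal{M}_N$ maximal via \textsc{Maintain-Matching} (so Lemma~\ref{lemma: maintainmatching} gives size $\ge |\b{E(C)}|/(20\eps\Delta)$), and in Case~1 it never shrinks $\mathcal{M}_N$ except by the at most $t=\eps^2\Delta/(18e^6)$ forced removals over the phase, which is exactly the bound analyzed before Lemma~\ref{lemma: nonedgematchingsizephase} and yields size $\ge |\b{E(C)}|/(50\eps\Delta)$. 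The stated bound of $|\b{E(C)}|/(50\eps\Delta)$ in the lemma is the weaker of the two cases, so it holds w.h.p.\ in both. I would also record the side facts $|L_O|\le 2$, $|L_I|\le 4$, $|M|\le 2$, which are immediate from inspecting the pseudocode (at most the two endpoints $u,v$ can become unmatched, and \textsc{Maintain-Matching} adds at most one new edge per endpoint of $(u,v)$, i.e.\ at most two edges / four vertices).

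\emph{Main obstacle.} The only subtlety — and the thing I would be most careful about — is verifying that the case split is \emph{stable within a phase}: the branch taken depends on whether $|\mathcal{M}_N|\ge\eps^2\Delta$ \emph{at the beginning of the phase}, so I must check that the post-processing step of Section~\ref{sec: non-edge-matching} (which makes $\mathcal{M}_N$ maximal precisely when it was small) has already run, and that the high-probability events it relies on (every vertex has $\le 5\eps\Delta$ non-neighbors in $C$, from Lemma~\ref{lemma: fd-decompositionphase}) are the ones being conditioned on throughout. Everything else is routine bookkeeping, so the proof is short: state the two cases, cite Lemmas~\ref{lemma: maintainmatching} and~\ref{lemma: nonedgematchingsizephase}, and conclude.
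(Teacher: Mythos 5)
Your proof is correct and follows essentially the same route as the paper's: split on the two cases of \textsc{Update-Non-Edges}, cite Lemma~\ref{lemma: nonedgematchingsizephase} for the matching-size guarantee, and charge the running time to $O(1)$ in the large-matching case and to \textsc{Maintain-Matching} (Lemma~\ref{lemma: maintainmatching}) in the small-matching case. The paper's proof is terser, but the substance — including the observation that the branch depends on $|\mathcal{M}_N|$ at the start of the phase and is therefore stable — is the same.
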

\begin{proof}
    By Lemma \ref{lemma: nonedgematchingsizephase}, and virtue of Algorithm $\textsc{Update-Non-Edges}$ a non-edge matching of size at least $\frac{|\b{E(C)}|}{50\eps\Delta}$ is maintained for every almost-clique at any point in time w.h.p. 

    If $\mathcal{M}_N >\eps^2\Delta$ at the beginning of the current phase, the running time is $O(1)$. Otherwise, the running time is $O(\eps\Delta)$, concluding the proof. 
\end{proof}

\subsubsection{Step II: Coloring via Perfect Matchings}
Given the coloring of sparse vertices and dense vertices which are matched in the non-edge matching, we present our approach to color the remaining dense vertices in each almost-clique. Consider an almost-clique $C$. Recall that we want to ensure that each color $c\in [\Delta+1]$ is used at most twice in $C$. In case $c$ is used to color endpoints of a matched non-edge, exactly two vertices are colored $c$ in $C$. Otherwise, $c$ is assigned to at most one other vertex in $C$. Let $\mathcal{L}=C\backslash (\cup_{(u,v)\in \mathcal{M}_N} \{u,v\})$ denote the set of vertices that are not endpoints of any matched non-edge in the non-edge matching $\mathcal{M}_N$ that is maintained for $C$. Let $\mathcal{R}$ denote the set of colors in $[\Delta+1]$ s.t. $\mathcal{R}=\{i|\,\, i\in [\Delta+1] \,\,\text{s.t. }A_N[i]=1\}$, which are not used to color endpoints of any non-edges in $\mathcal{M}_N$. We maintain a perfect matching $\mathcal{M}_P$ between $\mathcal{L}$ and $\mathcal{R}$ that corresponds to a proper coloring of vertices in $\mathcal{L}$ at any given point in time. 

Let $\glr$ denote bipartite graph on vertex sets $\mathcal{L}$ and $\mathcal{R}$ consisting of all edges of the form $(u,c)$ where $u\in \mathcal{L}, c\in \mathcal{R}$ if $c$ is not currently assigned to any neighbor of $u$ except potentially a neighbor in $\mathcal{L}$. We remark that $\glr$ is not maintained explicitly since this can be prohibitive in terms of time complexity. Instead, we exploit structural properties of the $\glr$ to recompute a perfect matching $\mathcal{M}_P$ before a new phase begins and dynamically maintain it during any phase under \textit{vertex updates}. The matched neighbor $c$ of any vertex $v\in \mathcal{L}$ in $\mathcal{M}_P$ simply corresponds to the color of $v$ at any point, i.e. $c(v)=c$. 

 Our algorithm utilizes a subroutine $\textsc{Match}(\cdot)$ which takes as input a vertex $v\in \mathcal{L}$ in some almost-clique $C$, and recolors $v$ by assigning a proper color to $v$. We give our subroutine \textsc{Match$(v)$} as follows. 

\begin{algorithm}[H]
\caption{$\textsc{Match}(v)$}
    \begin{algorithmic}[1]
        \If{$|\mathcal{M}_N|\geq \frac{\Delta}{10}$}
            \State \textsc{Random-Match}$(v)$.
        \Else 
            \If {$|C|>\Delta$}
                \State \textsc{Match-Large}$(v)$.
            \Else 
                \State \textsc{Match-Small}$(v)$.
            \EndIf 
        \EndIf 
    \end{algorithmic}
\end{algorithm}

Consider the case when a vertex $v\in \mathcal{L}$ has to be recolored. Algorithm \textsc{Match} considers two cases depending on the current size of $\mathcal{M}_N$: i) $|\mathcal{M}_N|\geq\frac{\Delta}{10}$, and ii) $|\mathcal{M}_N|<\frac{\Delta}{10}$. For the former, we present a simple algorithm, \textsc{Random-Match} as follows.

\noindent\underline{Algorithm \textsc{Random-Match}:} On input vertex $v$, algorithm repeatedly samples a color $c\in [\Delta+1]$ until a feasible color in $\mathcal{R}$ is found which is not assigned to any of $v's$ neighbors. 

\begin{algorithm}[H]
\caption{\textsc{Random-Match}$(v)$}
\begin{algorithmic}[1]
    \State $c(v)\leftarrow \perp$.
        \While {$c(v)=\perp$}
            \State Sample a color $c$ uniformly at random from $[\Delta+1]$.
            \If{$c\in \mathcal{R}$ and $N(v)\cap (L(c)\cup L_D(c))=\emptyset$}
                \State $c(v)\leftarrow c$, $\mathcal{M}_P\leftarrow \mathcal{M}_P\cup \{(v,c)\}$.
            \EndIf 
        \EndWhile 
\end{algorithmic}
\end{algorithm}

\begin{lemma}\label{lemma: random-match}
    On input $v$, \textsc{Random-Match} takes $\ti(\frac{n}{\Delta})$ time with high probability and assigns a proper color to $v$ for $\eps<\frac{1}{500}$.
\end{lemma}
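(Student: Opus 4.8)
The plan is to prove two things and then combine them: (i) at the moment \textsc{Random-Match}$(v)$ is called there are $\Omega(\Delta)$ colors in $\mathcal{R}$ that are \emph{feasible} for $v$ (not used by any neighbor of $v$), so a uniformly random color from $[\Delta+1]$ succeeds in one trial with probability $\Omega(1)$; and (ii) one iteration of the while loop costs $\ti(n/\Delta)$. Since the loop exits only when it has found a color $c\in\mathcal{R}$ with $N(v)\cap(L(c)\cup L_D(c))=\emptyset$, the color assigned is used by no neighbor of $v$, so properness is immediate; the two facts above then yield the time bound.

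For (i), I condition on the high-probability structural guarantees of Lemma~\ref{lemma: fd-decompositionphase} (in particular $|C|\le(1+10\eps)\Delta$ and $v$ has at most $5\eps\Delta$ neighbors outside $C$) and on $|L(c)|=\ti(n/\Delta)$ from Lemma~\ref{lemma: colorloadsparse}. Since \textsc{Random-Match} is invoked only when $|\mathcal{M}_N|\ge\Delta/10$, and the matched non-edges use $|\mathcal{M}_N|$ distinct colors, we have $|\mathcal{R}|=\Delta+1-|\mathcal{M}_N|$ and $|\mathcal{L}|=|C|-2|\mathcal{M}_N|$. A color $c\in\mathcal{R}$ is infeasible for $v$ only if a neighbor of $v$ is colored $c$; a neighbor of $v$ inside $C$ that is matched in $\mathcal{M}_N$ carries a color outside $\mathcal{R}$ and is therefore irrelevant, so the infeasible colors of $\mathcal{R}$ number at most $(|\mathcal{L}|-1)$ (neighbors of $v$ in $\mathcal{L}$) plus $5\eps\Delta$ (neighbors outside $C$, which already includes every sparse neighbor of $v$). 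Hence the number of feasible colors is at least
\[
|\mathcal{R}|-(|\mathcal{L}|-1)-5\eps\Delta \;=\; \Delta+1-|C|+|\mathcal{M}_N|-5\eps\Delta \;\ge\; 1+\tfrac{\Delta}{10}-15\eps\Delta \;\ge\; \tfrac{\Delta}{15}
\]
for $\eps<\tfrac{1}{500}$ and $\Delta$ large, so a uniformly random color is feasible with probability at least $\tfrac{1}{16}$.

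For (ii), testing $c\in\mathcal{R}$ is $O(1)$ via the array $A_N$; scanning $L(c)$ costs $\ti(n/\Delta)$ by Lemma~\ref{lemma: colorloadsparse}, scanning $L_D(c)$ costs $O(n/\Delta)$ since each color is used at most twice per almost-clique and there are $O(n/\Delta)$ almost-cliques, and membership of each scanned vertex in $N(v)$ is decided in $O(1)$; thus one iteration costs $\ti(n/\Delta)$. Because the trials are independent and each succeeds with probability at least $\tfrac{1}{16}$, the loop runs more than $c\ln n$ times with probability at most $(15/16)^{c\ln n}=n^{-\Omega(1)}$ for $c$ a large enough constant, so w.h.p. it runs $O(\log n)$ times and \textsc{Random-Match} finishes in $O(\log n)\cdot\ti(n/\Delta)=\ti(n/\Delta)$ time. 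A union bound over the failure probabilities of the conditioned events and of the iteration-count bound completes the argument. The main obstacle is step (i): the naive estimate $|\mathcal{R}|-|\mathcal{L}|$ can be negative because $|C|$ may exceed $\Delta$ by as much as $10\eps\Delta$, so the proof genuinely needs the hypothesis $|\mathcal{M}_N|\ge\Delta/10$ — each matched non-edge deletes two vertices from $\mathcal{L}$ but reserves only one color from $\mathcal{R}$, creating the net surplus $|\mathcal{M}_N|$ that dominates the $10\eps\Delta$ slack — together with the observation that dense neighbors of $v$ matched inside $C$ consume no color of $\mathcal{R}$.
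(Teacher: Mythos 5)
Your proof is correct and follows essentially the same approach as the paper: lower-bound the number of feasible colors in $\mathcal{R}$ for $v$ by $\Omega(\Delta)$ (using $|\mathcal{M}_N|\geq\Delta/10$ to offset the slack $|C|-\Delta$ and the $O(\eps\Delta)$ external neighbors), conclude each sample succeeds with constant probability, and bound the per-iteration cost by $\ti(n/\Delta)$ via the sizes of $L(c)$ and $L_D(c)$. Your exposition is if anything slightly more careful than the paper's (explicitly noting that matched non-edge endpoints consume no color of $\mathcal{R}$ and using the correct $(1+10\eps)\Delta$ bound on $|C|$), but the argument is the same.
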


\begin{proof}
Since $|\mathcal{M}_N|>\frac{\Delta}{10}$, note that whenever $\varepsilon<\frac{1}{500}$, $|\mathcal{L}|\leq (1+5\eps)\Delta-2|\mathcal{M}_N|\leq 0.81\Delta$. On the other hand, $|\mathcal{R}|\geq (\Delta+1-0.1\Delta)\geq 0.9\Delta$. Moreover, each vertex in $\mathcal{L}$ has at most $5\varepsilon\Delta<0.01\Delta$ neighbors outside $C$ throughout a phase by Lemma \ref{lemma: fd-decompositionphase}. Thus, there are at least $0.9\Delta-0.81\Delta-0.01\Delta=0.08\Delta=\Omega(\Delta)$ colors in $\mathcal{R}$ that $v\in \mathcal{L}$ can be matched to, at any given point in time. Thus, after $O(\log n)$ iterations of the while loop (lines 3-5) of \textsc{Random-Match}, $v$ is assigned a color with high probability by a Chernoff bound. Checking feasibility of any color takes $\ti(\frac{n}{\Delta})$ time since $L(c)$ has size at most $\ti(\frac{n}{\Delta})$ by Lemma \ref{lemma: colorloadsparse} and $L_D(c)$ has size at most $O(\frac{n}{\Delta})$. Thus \textsc{Random-Match} takes a total time of $\ti(\frac{n}{\Delta})$ time w.h.p.
\end{proof}

For the case, when the non-edge matching $\mathcal{M}_N$ is less than $\frac{\Delta}{10}$, we distinguish two cases depending on the size of the almost-clique $C$: i) $|C|>\Delta$ and ii) $|C|\leq \Delta$. We give subroutines $\textsc{Match-Large}$ and $\textsc{Match-Small}$ respectively for these two cases, in the following sections.

\subsubsection{Perfect Matchings for Large Almost-Cliques} \label{sec: large almost cliques}
In this section, we describe how to maintain perfect matchings for \textit{large almost-cliques} $C$, i.e. $|C|=\Delta+k$ for some $k>0$. 
\\\\
\noindent\textbf{Data Structures.} Recall that each vertex $v\in V$ maintains lists of its sparse and dense neighbors $N_S(v)$ and $N_D(v)$ and for any almost-clique $C$, the list $N_C(v)$ containing its neighbors in $C$. 

We maintain an additional data structure $T_C(c)$ for an almost-clique $C$ and all $c\in[\Delta]$, which counts the number of edges with one endpoint in $C$ and the other endpoint $u$ in $V_S$ where $c(u)=c$. We describe how to initialize and maintain these counters at the beginning of and, during a phase respectively.

\noindent\underline{\textit{Initializing Edge Counts}:} The counters $T_C(c)$ are initially 0 for all $c\in [\Delta+1]$, and all almost-cliques $C$ at the beginning of a phase. We iterate over all sparse vertices $v\in V_S$, and for each almost-clique $C$ increment $T_C(c(v))$ by $|N_C(v)|$. This takes $O(\frac{n}{\Delta})$ time per vertex $v$ since there are $O(\frac{n}{\Delta})$ almost-cliques, and a total time of $O(\frac{n^2}{\Delta})$ time to initialize all counters. Amortizing over the length of a phase, which is $\Theta(\varepsilon^2\Delta)$ yields an amortized update time of $O(\frac{n^2}{\eps^2\Delta^2})$ to \textit{initialize} these counters.

\noindent\underline{\textit{Maintaining Edge Counts}:} During any phase, consider the case when a sparse vertex $v$ is recolored from $i$ to $j$. For every almost-clique $C$, we decrement $T_C(i)$ by $|N_C(v)|$ and increment $T_C(j)$ by $|N_C(v)|$. This takes $O(\frac{n}{\Delta})$ time by the upper bound on the number of almost-cliques.

\begin{definition}(Heavy Color) A color $c\in [\Delta+1]$ is said to be a heavy color for an almost-clique $C$ is $T_C(c)>\frac{\Delta}{100}$. A color $c$ is light if it is not heavy.
\end{definition}

Note that by definition, a heavy color $c$ cannot be assigned to at least $\frac{\Delta}{100}$ vertices in $C$ since there exists a sparse neighbor assigned color $c$ for such vertices. The set of all heavy colors for an almost-clique $C$ is denoted by $\mathcal{H}$. Note that set $\mathcal{H}$ depends on counters $T_C(c)$, and can be maintained at any give point in the same asymptotic time required to maintain $T_C(c)$.

The next lemma shows that even if the set of heavy colors $\mathcal{H}$ is excluded from the set of remaining unused colors $\mathcal{R}$ after Step I, there are at least $|\mathcal{L}|$ colors. This allows us to maintain a perfect matching between $\mathcal{L}$ and colors in $\mathcal{R}\backslash\mathcal{H}$ efficiently, as shown later.

\begin{lemma}\label{lemma: colorfulsize}
Let $C$ be an almost-clique, such that $|C|=\Delta+k$ for $k>0$, and let $t$ denote the total number of edges incident to $C$ from vertices outside $C$. Then, $|\mathcal{M}_N|\geq \frac{k-1}{100\varepsilon}+\frac{t}{100\varepsilon\Delta}$. 
\end{lemma}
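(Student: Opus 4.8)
The plan is to exploit the one structural feature that distinguishes a \emph{large} almost-clique, namely that $|C| = \Delta + k$ strictly exceeds the maximum degree $\Delta$, to force every vertex of $C$ to have many non-neighbors inside $C$, and then feed the resulting lower bound on $|\b{E(C)}|$ into the non-edge matching guarantee already proved (Lemma~\ref{lemma: nonedgematchingsizephase}, equivalently Corollary~\ref{corr: non-edgematching}).

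First I would bound the non-edge degree of a single vertex. Fix $v \in C$. Since $v$ has at most $\Delta$ neighbors in total and exactly $|N(v)\setminus C|$ of them lie outside $C$, it has at most $\Delta - |N(v)\setminus C|$ neighbors inside $C$. As $C$ contains $|C|-1 = \Delta+k-1$ vertices other than $v$, the number of non-neighbors of $v$ in $C$ is
\[
|\b{E_C(v)}| \;\ge\; (\Delta+k-1) - \big(\Delta - |N(v)\setminus C|\big) \;=\; (k-1) + |N(v)\setminus C|.
\]
This is precisely where the hypothesis $|C| \ge \Delta+1$ enters, and the inequality genuinely fails for small almost-cliques; it is the ``more edges outside forces more non-edges inside'' phenomenon flagged in Section~\ref{sec:tech-dense}.

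Next I would sum over $C$ and halve. Every non-edge of $C$ has both endpoints in $C$, so it is counted exactly twice in $\sum_{v\in C}|\b{E_C(v)}|$, and by definition $t = \sum_{v\in C}|N(v)\setminus C|$. Hence
\[
2\,|\b{E(C)}| \;\ge\; \sum_{v\in C}\big((k-1)+|N(v)\setminus C|\big) \;=\; (\Delta+k)(k-1) + t \;\ge\; \Delta(k-1) + t,
\]
using $k\ge 1$ for the last inequality, so $|\b{E(C)}| \ge \tfrac12\big(\Delta(k-1)+t\big)$. Finally, by Lemma~\ref{lemma: nonedgematchingsizephase} the non-edge matching maintained throughout the phase has size at least $|\b{E(C)}|/(50\varepsilon\Delta)$, and combining the two estimates gives
\[
|\mathcal{M}_N| \;\ge\; \frac{\Delta(k-1)+t}{100\,\varepsilon\,\Delta} \;=\; \frac{k-1}{100\varepsilon} + \frac{t}{100\varepsilon\Delta},
\]
as claimed.

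I do not expect any real obstacle: the entire content sits in the elementary degree count of the first step, after which the argument is just double counting plus a citation. The only point needing care is to invoke a version of the non-edge matching guarantee that is valid in the regime where \textsc{Match-Large} is actually called (the non-edge matching dynamically maintained within a phase, in the case $|\mathcal{M}_N| < \Delta/10$); the $1/(50\varepsilon\Delta)$ bound of Lemma~\ref{lemma: nonedgematchingsizephase} holds there and produces exactly the constant $100$ in the statement.
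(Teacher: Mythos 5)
Your proposal is correct and follows essentially the same argument as the paper: it establishes the lower bound $|\b{E(C)}| \ge \tfrac12(\Delta(k-1)+t)$ by counting non-edges (you sum per-vertex non-neighbor counts, while the paper computes $\binom{|C|}{2}-|E(C)|$ directly, which is the same double count rearranged) and then plugs this into the $|\mathcal{M}_N|\ge |\b{E(C)}|/(50\varepsilon\Delta)$ guarantee maintained throughout a phase. The only cosmetic difference is which lemma you cite for that last bound (Lemma~\ref{lemma: nonedgematchingsizephase} versus the paper's Lemma~\ref{lemma: update-non-edges}), both of which provide the same $1/(50\varepsilon\Delta)$ factor.
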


\begin{proof}
    We first lower bound the number of non-edges in $C$: 
    \begin{align*}
        |\b{E(C)}|&=\binom{\Delta+k}{2}-|E(C,C)| \\
        &\geq \frac{(\Delta+k)(\Delta+k-1)}{2}-\frac{(\Delta+k)\Delta-t}{2} \\
        &=\frac{(\Delta+k)(\Delta+k-1)}{2} -\frac{\Delta(\Delta+k)}{2}+\frac{t}{2} \\
        &=\frac{(\Delta+k)(k-1)}{2}+\frac{t}{2} \\
        &\geq \frac{(k-1)\Delta}{2}+\frac{t}{2}.
    \end{align*}
    Invoking Lemma \ref{lemma: update-non-edges}, it follows that $|\mathcal{M}_N|\geq \frac{k-1}{100\varepsilon}+\frac{t}{100\varepsilon\Delta}$, completing the proof.
\end{proof}

\begin{lemma}\label{lemma: sizeofrhsmatching}
For $\varepsilon\leq \frac{1}{10000}$, $|\mathcal{R}|-|\mathcal{H}|\geq |\mathcal{L}|$ whenever $|C|=\Delta+k$ for $k>0$.
\end{lemma}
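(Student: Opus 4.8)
The plan is to prove $|\mathcal{R}| - |\mathcal{H}| \geq |\mathcal{L}|$ for a large almost-clique $C$ with $|C| = \Delta + k$, $k > 0$, by carefully accounting for the sizes of all four quantities in terms of $k$, $t$ (the number of edges leaving $C$), and $|\mathcal{M}_N|$, and then invoking the lower bound on $|\mathcal{M}_N|$ from Lemma~\ref{lemma: colorfulsize}.

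\textbf{Step 1: Express $|\mathcal{L}|$ and $|\mathcal{R}|$ via $|\mathcal{M}_N|$.} Since $\mathcal{L} = C \setminus \bigcup_{(u,v) \in \mathcal{M}_N}\{u,v\}$, we have $|\mathcal{L}| = |C| - 2|\mathcal{M}_N| = \Delta + k - 2|\mathcal{M}_N|$. The colors used in Step~I are exactly the $|\mathcal{M}_N|$ distinct colors assigned to matched non-edges, so $|\mathcal{R}| = (\Delta+1) - |\mathcal{M}_N|$. Therefore $|\mathcal{R}| - |\mathcal{L}| = (\Delta + 1 - |\mathcal{M}_N|) - (\Delta + k - 2|\mathcal{M}_N|) = |\mathcal{M}_N| - k + 1$, and the claim reduces to showing $|\mathcal{H}| \leq |\mathcal{M}_N| - k + 1$.

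\textbf{Step 2: Bound $|\mathcal{H}|$.} A color $c$ is heavy iff $T_C(c) > \Delta/100$, i.e. more than $\Delta/100$ edges leave $C$ to a sparse vertex colored $c$. Since the total number of edges leaving $C$ to sparse vertices is at most $t$, a simple counting argument gives $|\mathcal{H}| < t / (\Delta/100) = 100t/\Delta$.

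\textbf{Step 3: Combine with Lemma~\ref{lemma: colorfulsize}.} By Lemma~\ref{lemma: colorfulsize}, $|\mathcal{M}_N| \geq \frac{k-1}{100\varepsilon} + \frac{t}{100\varepsilon\Delta}$. I want to show $|\mathcal{M}_N| - k + 1 \geq |\mathcal{H}|$, i.e. (using Step~2) it suffices to have $|\mathcal{M}_N| - k + 1 \geq 100t/\Delta$. From Lemma~\ref{lemma: colorfulsize}, $\frac{t}{100\varepsilon\Delta} \geq \frac{100t}{\Delta}$ precisely when $\varepsilon \leq 10^{-4}$, which matches the hypothesis; this contributes the $100t/\Delta$ term with room to spare. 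Separately, $\frac{k-1}{100\varepsilon} \geq k - 1$ for $\varepsilon \leq 10^{-4}$, so $|\mathcal{M}_N| \geq (k-1) + \frac{100t}{\Delta}$, hence $|\mathcal{M}_N| - k + 1 \geq \frac{100t}{\Delta} > |\mathcal{H}|$, which is exactly what we needed. I should double-check the edge cases $k=1$ (where the $k-1$ term vanishes and the bound still holds since $|\mathcal{M}_N| - k + 1 = |\mathcal{M}_N| \geq \frac{100t}{\Delta}$) and the degenerate case $|\mathcal{H}| = 0$ or $t = 0$ (trivial). One subtlety: the definition of $T_C$ counts only sparse neighbors, while $t$ in Lemma~\ref{lemma: colorfulsize} counts all edges leaving $C$, so the inequality $|\mathcal{H}| < 100t/\Delta$ is still valid (sparse-leaving edges are a subset).

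\textbf{Main obstacle.} The argument is essentially bookkeeping, so there is no deep obstacle; the one place to be careful is matching the constants — ensuring the chosen threshold $\Delta/100$ in the heavy-color definition, the $1/(100\varepsilon)$ factors in Lemma~\ref{lemma: colorfulsize}, and the hypothesis $\varepsilon \leq 10^{-4}$ all line up so that the $t/\Delta$ terms and the $k$ terms each separately dominate their targets with constant slack. I would also want to state explicitly that $|\mathcal{H}|$ counts colors, each heavy color "using up" at least $\Delta/100$ of the $\leq t$ external edges, to justify the division cleanly.
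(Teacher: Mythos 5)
Your proof is correct and follows essentially the same route as the paper's: express $|\mathcal{R}|-|\mathcal{L}|$ in terms of $|\mathcal{M}_N|$ and $k$, bound $|\mathcal{H}|$ by $100t/\Delta$ from the heavy-color threshold, and invoke Lemma~\ref{lemma: colorfulsize} with $\varepsilon \leq 1/10000$ to close the gap. Your extra remark that $T_C$ counts only sparse-colored edges, hence a subset of the $t$ edges leaving $C$, is a worthwhile clarification the paper leaves implicit, but it does not change the argument.
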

\begin{proof}
We have that $|\mathcal{R}|\geq \Delta+1-|\mathcal{M}_N|$ and $\mathcal{L}\leq \Delta+k-2|\mathcal{M}_N|$. Thus, to prove the claim, it suffices to show that $|\mathcal{M}_N|\geq k-1+|\mathcal{H}|$. On the other hand, note that $t\geq \frac{\Delta\mathcal{|H|}}{100}$ by the definition of a heavy color. By Lemma \ref{lemma: colorfulsize}, we have that $|\mathcal{M}_N|\geq \frac{k-1}{100\varepsilon}+\frac{t}{100\varepsilon\Delta}$. Thus, we have that, 
\begin{align*}
|\mathcal{M}_N|\geq \frac{k-1}{100\varepsilon}+\frac{t}{100\varepsilon\Delta} \geq \frac{k-1}{100\varepsilon}+\frac{\mathcal{|H|}}{10000\varepsilon} \geq k-1 +|\mathcal{H}|  
\end{align*}
whenever $\varepsilon\leq\frac{1}{10000}$.
\end{proof}

\noindent\underline{Algorithm \textsc{Match-Large}:} We give our subroutine $\textsc{Match-Large}$ which on input $v$ matches (recolors) $v\in \mathcal{L}$ in some almost-clique $C$.
Our algorithm proceeds as follows. First, an arbitrary unassigned color $c\in \mathcal{R}\backslash \mathcal{H}$ is picked. If $c$ can be assigned to $v$, then $v$ is assigned $c$. Note that this requires scanning $L(c)$ and $L_D(c)$ to check if any neighbor of $v$ outside $C$, is assigned $c$. 

If $v$ cannot be colored $c$, an augmenting path of length 3 in $\mathcal{M}_P$ is found with constant probability as follows. First, the algorithm samples a \textit{random} vertex $w$ in $\mathcal{L}$. Our analysis reveals that a constant fraction of vertices in $\mathcal{L}$ can be assigned color $c$ and thus, $c$ is feasible for $w$ with constant probability. Let $c(w)$ denote the color currently assigned to $w$. Moreover, we show that $c(w)$ is feasible for $v$ with constant probability. The resulting augmenting path is $\langle u,c(w),v,c \rangle$ and the algorithm sets $c(v)\coloneq c(w)$, and $c(w)\coloneq c$. Repeating this process $O(\log n)$ times ensures that a length three augmenting path is found and $v$ is matched in $\mathcal{M}_P$ with high probability. The pseudo-code is given as follows.

\begin{algorithm}[H]
\caption{\textsc{Match-Large}$(v)$}
\begin{algorithmic}[1]
    \State $c(v)\leftarrow \perp$.
    \While {$c(v)=\perp$}
        \State Pick an unassigned color $c$ from $\mathcal{R}\backslash \mathcal{H}$.
        \If{$N(v)\cap (L(c)\cup L_D(c))=\emptyset$}
            \State $c(v)\leftarrow c$, $\mathcal{M}_P\leftarrow \mathcal{M}_P\cup\{(v,c)\}$.
        \Else 
            \State Sample a random vertex $w\in \mathcal{L}$
            \State $c'\leftarrow c(w)$.
            \If{$N(w)\cap (L(c)\cup L_D(c))=\emptyset$ and $N(v)\cap (L(c')\cup L_D(c'))=\emptyset$}
                \State $c(v)\leftarrow c'$, $\mathcal{M}_P\leftarrow \mathcal{M}_P \backslash \{(w,c')\},\mathcal{M}_P\leftarrow \mathcal{M}_P\cup \{(v,c'\}$.
                \State $c(w)\leftarrow c, \mathcal{M}_P\leftarrow \mathcal{M}_P\cup \{(w,c)\}$.
            \EndIf
        \EndIf
    \EndWhile 
\end{algorithmic}
\end{algorithm}

\begin{lemma}\label{lemma: match-large}
    On input $v$, \textsc{Match-Large} takes $\ti(\frac{n}{\Delta})$ time with high probability and assigns a proper color to $v$ for $\eps<\frac{1}{500}$.
\end{lemma}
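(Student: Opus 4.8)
The plan is to establish the two parts of the claim --- that the colour assigned to $v$ leaves $c$ a proper colouring, and that the running time is $\ti(\frac{n}{\Delta})$ w.h.p.\ --- essentially independently. For properness I would first verify that line~3 never gets stuck: \textsc{Match-Large} is invoked only when $|C|=\Delta+k$ with $k\ge 1$, so Lemma~\ref{lemma: sizeofrhsmatching} gives $|\mathcal{R}\setminus\mathcal{H}|\ge|\mathcal{L}|$, while at most $|\mathcal{L}|-1$ colours are currently used on $\mathcal{L}$ (only $v$ is uncoloured), so some colour of $\mathcal{R}\setminus\mathcal{H}$ is unassigned. Since the chosen $c$ lies in $\mathcal{R}$ and is unassigned, no vertex of $C$ carries colour $c$, so the test on line~4 certifies exactly that no neighbour of $v$ outside $C$ has colour $c$; the direct assignment $c(v)\gets c$ is therefore proper. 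For the swap branch I would read the two feasibility tests as ranging over the neighbours of $v,w$ \emph{outside} $C$ (in line with the definition of $\glr$; the edge $vw$, if present, is harmless since after the swap $c(v)=c'\ne c=c(w)$), so that $\langle v,c',w,c\rangle$ is a genuine length-$3$ augmenting path in $\glr$; augmenting $\mathcal{M}_P$ along it turns a matching saturating $\mathcal{L}\setminus\{v\}$ into one saturating all of $\mathcal{L}$. As $\mathcal{M}_P$ assigns distinct colours to distinct vertices of $\mathcal{L}$ no two adjacent vertices of $\mathcal{L}$ clash, and as these colours lie in $\mathcal{R}$ they avoid the Step~I colours, so the whole colouring stays proper.

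For the running time I would bound one while-iteration by $\ti(\frac{n}{\Delta})$: $|L(c)|=\ti(\frac{n}{\Delta})$ by Lemma~\ref{lemma: colorloadsparse} and $|L_D(c)|=O(\frac{n}{\Delta})$ by construction, and each feasibility test just scans these lists and makes $O(1)$-time membership queries into the stored neighbour lists of $v$ and $w$. It then remains to show a single iteration succeeds with probability bounded away from $0$, so that $\Theta(\log n)$ independent iterations succeed w.h.p.\ and the total time is $\ti(\frac{n}{\Delta})$. I would fix the colour $c\in\mathcal{R}\setminus\mathcal{H}$ picked in the iteration, assume $c$ is infeasible for $v$ (else we are done), and union-bound the probability that the random $w\in\mathcal{L}$ is ``bad'', meaning that $c$ is infeasible for $w$ or that $c(w)$ is infeasible for $v$. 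The latter is easy: $v$ has at most $5\eps\Delta$ neighbours outside $C$ by Lemma~\ref{lemma: fd-decompositionphase}, hence at most $5\eps\Delta$ forbidden colours, and since $\mathcal{M}_P$ colours the vertices of $\mathcal{L}$ distinctly only an $O(\eps)$-fraction of $w\in\mathcal{L}$ carries a forbidden colour. For the former I would split the $u\in\mathcal{L}$ blocked from $c$ into those with a \emph{sparse} neighbour outside $C$ coloured $c$ (at most $T_C(c)\le\Delta/100$ of them, since $c$ is light) and those with a \emph{dense} such neighbour (at most $\sum_{x\in L_D(c)\setminus C}|N(x)\cap C|\le|L_D(c)|\cdot 5\eps\Delta=O(\eps n)$); since $|\mathcal{L}|=|C|-2|\mathcal{M}_N|=\Theta(\Delta)$ using $|\mathcal{M}_N|<\Delta/10$, and $\eps$ is chosen so that $\eps n$ is a small fraction of $\Delta$, both counts are a small fraction of $|\mathcal{L}|$, so $w$ is ``bad'' with probability bounded below $1$.

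The step I expect to be the main obstacle is controlling the \emph{dense} contribution in the last estimate: bounding the number of $u\in\mathcal{L}$ with a dense neighbour outside $C$ coloured $c$ only through the external degrees of $\mathcal{L}$ gives $O(\eps\Delta^2)$, which is worthless, so one really has to use the strong ``$\ti(\frac{n}{\Delta})$ vertices per colour'' guarantee (this is exactly where a weaker $\ti(\frac{n}{\eps^2\Delta})$ or $\ti(\frac{n}{\eps\Delta})$ bound would not suffice, as flagged in the technical overview) together with the final balancing of $\eps$ against $n$ and $\Delta$. A smaller but necessary point is the reading of the swap-branch tests above: taken literally, the pseudocode's $N(v)\cap(L(c')\cup L_D(c'))$ would spuriously block the swap whenever $w$ is adjacent to $v$ --- the typical case --- so it must be understood as restricted to the neighbours of $v$ lying outside $C$.
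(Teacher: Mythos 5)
Your proof takes essentially the same route as the paper's: pick an unassigned light colour $c$, check feasibility for $v$ directly, and otherwise find a length-$3$ augmenting path via a random $w\in\mathcal{L}$, arguing constant per-iteration success probability and $\ti(n/\Delta)$ per-iteration cost. Where you go beyond the paper is in the feasibility count for the random $w$: the paper's argument bounds only the number of $u\in\mathcal{L}$ with a \emph{sparse} neighbour coloured $c$ (using $T_C(c)\le\Delta/100$ for a light colour), but the algorithm's check $N(w)\cap(L(c)\cup L_D(c))=\emptyset$ also rules out $w$'s \emph{dense} neighbours outside $C$ that carry $c$, and the paper's proof is silent on this contribution. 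Your estimate $\sum_{x\in L_D(c)\setminus C}|N(x)\cap C|\le |L_D(c)|\cdot 5\eps\Delta=O(\eps n)$ is exactly the right way to close this, and you correctly observe that it leans on the strong per-colour bound $|L_D(c)|=O(n/\Delta)$ and on the final balancing that makes $\eps n=o(\Delta)$; note this is strictly stronger than the lemma's stated condition $\eps<1/500$, so the lemma as written is really invoking the global parameter choice rather than the local hypothesis. Your observation that the swap-branch test, read literally, would spuriously reject whenever $w\in N(v)$ (since $w\in L_D(c(w))$) is also a genuine and correctly diagnosed issue; the intended reading, consistent with the definition of $\glr$, is that the feasibility tests exclude neighbours inside $C$ (equivalently, inside $\mathcal{L}$, since no colour of $\mathcal{R}$ is used on $C\setminus\mathcal{L}$). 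Overall this is the same proof with two useful refinements that the paper omits.
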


\begin{proof}
Note that checking if an unassigned color $c$ picked from $\mathcal{R}\backslash \mathcal{H}$ is feasible for $v$ takes $\ti(\frac{n}{\Delta})$ time as before. If $c$ is not feasible for $v$, we argue that $v$ can be colored with constant probability in this iteration of the while loop. Recall from Lemma \ref{lemma: sizeofrhsmatching} that the size of $|\mathcal{R}\backslash \mathcal{H}|$ is at least $|\mathcal{L}|$. The list of colors $\mathcal{R}\backslash \mathcal{H}$ is maintained explicitly, which is straightforward since $\mathcal{R}$ and $\mathcal{H}$ are already maintained by our algorithm. 

Note that since $c\notin \mathcal{H}$, there are at most $\frac{\Delta}{100}$ vertices in $\mathcal{L}$ with a sparse neighbor having color $c$. Note that $|\mathcal{L}|\geq \Delta+k-2|\mathcal{M}_N|\geq 0.8\Delta$ where the latter inequality follows since $|\mathcal{M}_N|<\frac{\Delta}{10}$. It follows that at least $0.79\Delta$ vertices in $\mathcal{L}$ can be assigned color $c\in \mathcal{R}\setminus \mathcal{H}$. In particular, $c$ is feasible for $w$ with constant probability for a random vertex $w\in \mathcal{L}$. 

Next, note that for any vertex $v\in \mathcal{L}$, a large fraction of colors in $\mathcal{R}\backslash \mathcal{H}$ are feasible for $v$ since the number of neighbors of $v$ outside $C$ is at most $5\varepsilon\Delta\leq \frac{\Delta}{100}$ by Lemma \ref{lemma: fd-decompositionphase}. Since $|\mathcal{R}\backslash \mathcal{H}| \geq |\mathcal{L}|$, $v$ has at least $0.79\Delta$ feasible colors in $\mathcal{R}\setminus \mathcal{H}$. Now, since $w$ is a random vertex, it holds that $c(w)$ is feasible for $v$ with constant probability. Thus, the resulting augmenting path $\langle v,c(w),w,c \rangle$ is feasible with constant probability. Checking feasibility of a color for any vertex takes $\ti(\frac{n}{\Delta})$ time; thus, a single iteration of the while loop (lines 2-11) takes $\ti(\frac{n}{\Delta})$ time and $v$ is colored with constant probability. Thus, after $\ti(1)$ iterations, $v$ is colored with high probability by a Chernoff bound. Thus, the total time of \textsc{Match-Large} is $\ti(\frac{n}{\Delta})$ w.h.p.
\end{proof}

Next, we give our subroutine for small almost-cliques.

\subsubsection{Perfect Matchings for Small Almost-Cliques}\label{sec: small almost cliques}
For a small clique $C$ of size at most $\Delta$, we use a different algorithm. We assume for the rest of the section that the size of the non-edge matching $\mathcal{M}_N$ maintained on $\b{E(C)}$ is at most $\frac{\Delta}{10}$. 

Let $A$ be the set of \textit{available colors} in $[\Delta+1]$ that are not assigned to any vertex in an almost-clique $C$. We maintain $A$ explicitly in a simple manner; every time a vertex is recolored in $C$, $A$ is updated (to avoid clutter in our recoloring subroutines, we omit this update to $A$). As before, let $\mathcal{L}$ denote the set of vertices in $C$ that are not endpoints of any matched non-edge in $\mathcal{M_N}$. We give an algorithm $\textsc{Match-Small}$ as follows.

\noindent\underline{Algorithm \textsc{Match-Small}:} To assign a color to vertex $v$ our algorithm proceeds as follows. 

First, a random vertex $u\in \mathcal{L}$ and a random color $c\in A$ are sampled. The feasibility of color $c$ is checked for $u$. If $c$ is not feasible for $u$, this step is repeated.

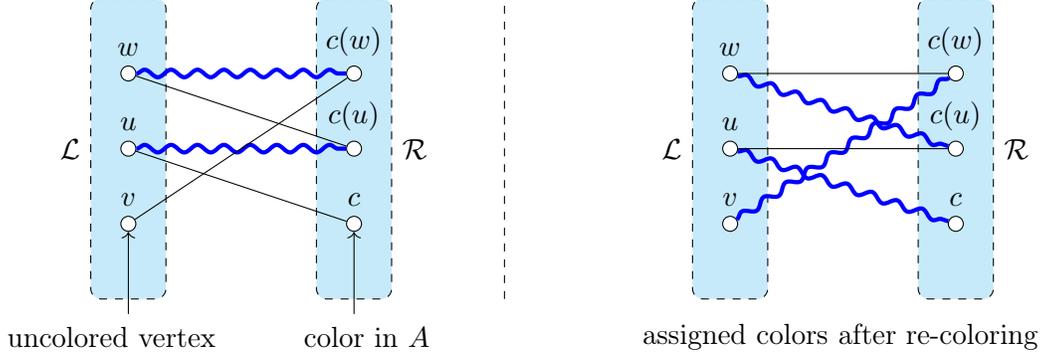
\begin{figure}
    \centering
    \begin{tikzpicture}
    \begin{scope}
        \node[draw, rectangle, rounded corners, minimum width=1cm, minimum height=4cm, fill=cyan!20, dashed, fill opacity=0.5, label=left:$\mathcal{L}$] (L) at (0, 1) {};
        \node[draw, rectangle, rounded corners, minimum width=1cm, minimum height=4cm, fill=cyan!20, dashed, fill opacity=0.5, label=right:$\mathcal{R}$] (R) at (3, 1) {};
        
        \node[draw, fill=white, circle, inner sep=2pt, label=above:$w$] (w) at (0, 2) {};
        \node[draw, fill=white, circle, inner sep=2pt, label=above:$u$] (u) at (0, 1) {};
        \node[draw, fill=white, circle, inner sep=2pt, label=above:$v$] (v) at (0, 0) {};
        
        \node[draw, fill=white, circle, inner sep=2pt, label=above:$c(w)$] (cw) at (3, 2) {};
        \node[draw, fill=white, circle, inner sep=2pt, label=above:$c(u)$] (cu) at (3, 1) {};
        \node[draw, fill=white, circle, inner sep=2pt, label=above:$c$] (c) at (3, 0) {};
        
        \draw (c) -- (u);
        \draw[ultra thick, blue, decorate, decoration={snake, amplitude=0.5mm}] (u) -- (cu);
        \draw (cu) -- (w);
        \draw[ultra thick, blue, decorate, decoration={snake, amplitude=0.5mm}] (w) -- (cw);
        \draw (cw) -- (v);
        
        \draw[->] (0, -1.2) -- (v);
        \node at (1.3, -1.5) [left] {uncolored vertex};
        
        \draw[->] (3, -1.2) -- (c);
        \node at (2.2, -1.5) [right] {color in $A$};

        \node at (6.7, -1.5) [right] {assigned colors after re-coloring};
    \end{scope}
    
    \draw[dashed] (5, -1) -- (5, 3);

    \begin{scope}[xshift=8cm]
        \node[draw, rectangle, rounded corners, minimum width=1cm, minimum height=4cm, fill=cyan!20, dashed, fill opacity=0.5, label=left:$\mathcal{L}$] (L) at (0, 1) {};
        \node[draw, rectangle, rounded corners, minimum width=1cm, minimum height=4cm, fill=cyan!20, dashed, fill opacity=0.5, label=right:$\mathcal{R}$] (R) at (3, 1) {};
        
        \node[draw, fill=white, circle, inner sep=2pt, label=above:$w$] (w) at (0, 2) {};
        \node[draw, fill=white, circle, inner sep=2pt, label=above:$u$] (u) at (0, 1) {};
        \node[draw, fill=white, circle, inner sep=2pt, label=above:$v$] (v) at (0, 0) {};
        
        \node[draw, fill=white, circle, inner sep=2pt, label=above:$c(w)$] (cw) at (3, 2) {};
        \node[draw, fill=white, circle, inner sep=2pt, label=above:$c(u)$] (cu) at (3, 1) {};
        \node[draw, fill=white, circle, inner sep=2pt, label=above:$c$] (c) at (3, 0) {};
        
        \draw[ultra thick, blue, decorate, decoration={snake, amplitude=0.5mm}] (v) -- (cw);
        \draw[ultra thick, blue, decorate, decoration={snake, amplitude=0.5mm}] (w) -- (cu);
        \draw[ultra thick, blue, decorate, decoration={snake, amplitude=0.5mm}] (u) -- (c);
        \draw (w) -- (cw);
        \draw (u) -- (cu);
    \end{scope}
    
\end{tikzpicture}
    \caption{The length five augmenting path we find for coloring a vertex $v$ in small almost-cliques.}
    \label{fig:small-cliques}
\end{figure}

Let $u\in \mathcal{L}$ and $c$ be the random vertex and color respectively from the last step such that $c$ is feasible for $u$. Next, we sample a random vertex $w\in \mathcal{L}$ and check if $c(u)$ is feasible for $w$. If it is, we then check if $c(w)$ is feasible for $v$. If this is the case, $v$ is colored $c(w)$, $w$ is colored $c(u)$ and $u$ is colored $c$ corresponding to a length five augmenting path. See Figure~\ref{fig:small-cliques}

\begin{algorithm}[h]
\begin{algorithmic}[1]
    \caption{\textsc{Match-Small}$(v)$}
    \State $c(v)\leftarrow \perp$.

    \While{$c(v)=\perp$}
    \State $c\leftarrow \perp$. 
    \While{$c=\perp$}
        \State Sample a random vertex $u\in \mathcal{L}$ uniformly at random.
        \State Sample a random color $c'\in A$.
        \If {$N(u)\cap (L(c')\cup L_D(c'))=\emptyset$}
            \State $c\leftarrow c'$.
        \EndIf
    \EndWhile
    \State Sample a random vertex $w\in \mathcal{L}$.
    \State $c'\leftarrow c(u)$, $c''\leftarrow c(w)$.
    \If {$N(w) \cap ((L_D(c')\cup L(c')\setminus \{u\}))=\emptyset$}
         \If {($N(v)\cap ((L(c'')\cup L_D(c''))\setminus \{w\}))=\emptyset$}
            \State $c(v)\leftarrow c'',\mathcal{M}_P\leftarrow \mathcal{M}_P\cup \{(v,c'')\}$.
            \State $c(u)\leftarrow c,\mathcal{M}_P\leftarrow \mathcal{M}_P\cup \{(u,c)\}, \mathcal{M}_P\leftarrow \mathcal{M}_P\backslash \{(u,c')\}$.
            \State $c(w)\leftarrow c',\mathcal{M}_P\leftarrow \mathcal{M}_P\cup \{(w,c')\}, \mathcal{M}_P\leftarrow \mathcal{M}_P\backslash \{(w,c'')\}$.
        \EndIf 
    \EndIf 
    \EndWhile
\end{algorithmic}
\end{algorithm}

We prove the following lemma.

\begin{lemma}\label{lem:small-cliques}
    On input $v$ where $v$ is in an almost-clique of size $\Delta+1-k$ for some $k\geq 1$, \textsc{Match-Small} takes $\ti(\eps n)$ time with high probability and assigns a proper color to $v$. 
\end{lemma}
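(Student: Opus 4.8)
The goal is to bound both the running time and correctness of \textsc{Match-Small}. Correctness amounts to showing the length-five augmenting path $\langle v, c(w), w, c(u), u, c\rangle$ we construct is a valid one in $\glr$ — i.e., that after the swap, $v$ gets color $c(w)=c''$, $w$ gets $c(u)=c'$, $u$ gets $c$, and all three assignments are feasible with respect to neighbors outside $\mathcal{L}$. The algorithm explicitly checks the three feasibility conditions before committing, so correctness of a \emph{successful} iteration is immediate; the real content is showing that each (outer) iteration succeeds with at least constant probability, and that each inner iteration (finding $u$ with $c$ feasible) succeeds with probability $\Omega(1/k)$, so that $\widetilde O(k)$ inner samples suffice. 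I would first set up the counting: with $|C|=\Delta+1-k$, $k\ge 1$, and $|\mathcal{M}_N|\le \Delta/10$, we have $|\mathcal{L}|\ge |C|-2|\mathcal{M}_N|\ge (4/5)\Delta-k \ge \Omega(\Delta)$, and the number of available colors $|A| = (\Delta+1) - (|C| - |\mathcal{L}|) = k + |\mathcal{L}| \ge k$, with a matching lower bound $|A|\ge \Omega(\Delta)$ as well. I would then mirror the ``good color'' definition: $c$ is \emph{good} for $C$ if at most (say) $\Delta/100$ vertices of $\mathcal{L}$ have a neighbor outside $C$ colored $c$.

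\textbf{Key steps.} (1) \emph{Most colors in $A$ are good, and most vertices in $\mathcal{L}$ are assigned good colors.} Bound $\sum_{D\subseteq C}|E(D, V\setminus C)|$: each vertex of $C$ has at most $5\eps\Delta$ neighbors outside $C$ (Lemma~\ref{lemma: fd-decompositionphase}), so the total is $\le 5\eps\Delta\cdot|C|\le 6\eps\Delta^2$; hence the number of (color, vertex-of-$\mathcal{L}$) ``blocking'' pairs is $\le 6\eps\Delta^2$, so at most $O(\eps\Delta)$ colors are non-good and at most $O(\eps\Delta)$ vertices of $\mathcal{L}$ receive a non-good color. Choosing $\eps$ small enough, a $(1-o(1))$-fraction of $\mathcal{L}$ holds good colors, and a random $w\in\mathcal{L}$ has a good color with probability $1-o(1)$. (2) \emph{Inner loop: finding $u$.} For a fixed random available color $c\in A$, a fixed vertex $u'\in\mathcal{L}$ has $c$ feasible unless some neighbor of $u'$ outside $C$ is colored $c$; summing over $u'$ the number of infeasible pairs is again $\le 6\eps\Delta^2$, so for a \emph{random} $u'\in\mathcal{L}$ and \emph{random} $c\in A$, the pair is feasible with probability $\ge 1 - 6\eps\Delta^2/(|\mathcal{L}|\cdot|A|) \ge 1 - O(\eps\Delta^2/(\Delta\cdot k)) = 1-O(\eps\Delta/k)$ — but this is only useful when $k$ is large; for small $k$ one instead argues directly that for a \emph{fixed} $c$, at least an $\Omega(1/k)$ fraction of $\mathcal{L}$ has $c$ feasible, using $|A|\ge k+|\mathcal{L}|$ to show the ``demand'' for blocking colors is spread thin. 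Either way, $\widetilde O(k)$ samples find a valid $u$ w.h.p., at cost $\widetilde O(n/\Delta)$ per feasibility check. (3) \emph{Outer loop: the length-five path.} Given $u$ with $c$ feasible, $c(u)=c'$ is good with constant probability (by (1), since $u$ is effectively a uniform random vertex of $\mathcal{L}$ — here one must be slightly careful, but conditioning only tilts the distribution mildly), so a random $w$ has $c'$ feasible with constant probability; and $c(w)=c''$ is good with constant probability, so $c''$ is feasible for $v$ with constant probability. Thus one outer iteration succeeds with constant probability, and $\widetilde O(1)$ iterations suffice w.h.p.\ by a Chernoff bound. (4) \emph{Time accounting.} Each outer iteration costs $\widetilde O(k)\cdot\widetilde O(n/\Delta) = \widetilde O(kn/\Delta)$ for the inner loop plus $\widetilde O(n/\Delta)$ for the $w$-checks; since $k\le 5\eps\Delta$ (as $|C|\ge(1-4\eps)\Delta$ forces $k\le 4\eps\Delta+1\le 5\eps\Delta$), this is $\widetilde O(\eps n)$, and $\widetilde O(1)$ outer iterations keep the total at $\widetilde O(\eps n)$ w.h.p.

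\textbf{Main obstacle.} The delicate point is step (3): the vertices $u$ and $w$ are not drawn uniformly from $\mathcal{L}$ unconditionally — $u$ is the first sampled vertex for which the random $c$ turned out feasible, so its distribution is reweighted by $\Pr[c\text{ feasible for }\cdot]$. One must check that this reweighting does not destroy the ``$c(u)$ is good with constant probability'' claim; the cleanest fix is to observe that conditioning on ``$c$ feasible for $u$'' can only \emph{increase} the probability that $u$'s own color is good (feasible colors and good colors are positively correlated through the same small blocking set), or alternatively to absorb the $o(1)$ loss by noting the good-fraction of $\mathcal{L}$ is $1-o(1)$, so even after arbitrary reweighting by a $[0,1]$-valued function the conditional good-probability stays $1-o(1)$ as long as the conditioning event itself has probability bounded below — which we already established is $\Omega(1/k)$ per sample, hence $1-o(1)$ over the whole inner loop. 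I would carry this argument out carefully, then assemble the w.h.p.\ bounds via a union bound over the (polynomially many) recolorings in a phase. The remaining details — the precise constants in ``good'', the Chernoff applications, and the bookkeeping for $\mathcal{M}_P$ updates — are routine.
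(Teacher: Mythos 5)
Your set‑up has a fundamental counting error that the rest of the argument inherits. You write $|A| = (\Delta+1) - (|C| - |\mathcal{L}|) = k + |\mathcal{L}|$, and then upgrade this to ``$|A| \ge \Omega(\Delta)$.'' But $A$ is the set of colors not assigned to \emph{any} vertex of $C$, and vertices of $\mathcal{L}\setminus U$ each occupy a distinct color. The correct count (the paper's Claim~\ref{cl:A-value}) is $|A| = k + |\mathcal{M}_N| + |U|$, where $U$ is the set of currently uncolored vertices. When \textsc{Match-Small} is invoked to recolor a single vertex during a phase, $|U| = 1$ and $|\mathcal{M}_N|$ may be zero, so $|A|$ can be as small as $k+1$ — there is no $\Omega(\Delta)$ lower bound, and your $|A|\ge k+|\mathcal{L}|$ overcounts by roughly $|\mathcal{L}|-|U|$. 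Since your step (2) divides the $6\eps\Delta^2$ blocking pairs by $|\mathcal{L}|\cdot |A|$, the resulting bound $1-O(\eps\Delta^2/(|\mathcal{L}|\cdot|A|))$ is vacuous whenever $|A|$ is small (it becomes negative), which you yourself notice (``only useful when $k$ is large''). Your fallback — ``for a fixed $c$, at least an $\Omega(1/k)$ fraction of $\mathcal{L}$ has $c$ feasible, using $|A|\ge k+|\mathcal{L}|$'' — rests on the same wrong formula and is also unsubstantiated: a single fixed color $c\in A$ can be blocked for nearly all of $\mathcal{L}$ if the adversary has colored many outside‑neighbors with $c$.

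What makes the argument go through is a structural fact your proposal does not use: because $|C|=\Delta+1-k$, a vertex $v\in C$ with $d$ neighbors outside $C$ satisfies $d \le k + |\b{E_C(v)}|$, i.e.\ its outside degree is controlled by $k$ and its \emph{non‑edge} degree, not by the coarse $5\eps\Delta$ bound. Summing gives the paper's Claim~\ref{cl:edges-outside-C}: the edges from any $D\subseteq C$ to $V\setminus C$ number at most $|D|k + 100|\mathcal{M}_N|\eps\Delta$. Plugging this into the pair count together with $|A|=k+|\mathcal{M}_N|+|U|$ produces $|\mathcal{L}_G||U| + \Omega(\Delta|\mathcal{M}_N|)$ good pairs (Claim~\ref{cl:available-pairs}), from which the probability $\Omega(|U|/(k+|U|)) \ge \Omega(1/k)$ follows by case analysis on whether $|\mathcal{M}_N|$ exceeds $k+|U|$. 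Your argument, which bounds outside edges by $6\eps\Delta^2$ and then needs $|A|=\Omega(\Delta)$ to compensate, cannot recover this $\Omega(1/k)$ bound in the regime $|U|=O(1)$, $|\mathcal{M}_N|=O(1)$, which is exactly the regime where \textsc{Match-Small} is invoked during a phase. You do correctly identify the time accounting ($k\le O(\eps\Delta)$ gives $\ti(\eps n)$), the length‑five augmenting‑path structure, and the reweighting subtlety for the distribution of $u$; but without the refined outside‑edge bound and the exact formula for $|A|$, the per‑iteration success probability cannot be established, and the proof does not close.
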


We build towards a proof of the Lemma \ref{lem:small-cliques} via the following claims. Let $U\subseteq \mathcal{L}$ denote the set of vertices in $C$ that are currently not assigned a color.

\begin{claim}\label{cl:A-value}
    $|A| = k+|\mathcal{M}_N|+|U|$.
\end{claim}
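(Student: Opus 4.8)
The plan is a direct counting argument over the $\Delta+1$ colors, classifying them by how they are used inside $C$. First I would note that, by definition of $A$, the color set $[\Delta+1]$ partitions into $A$ together with the set of colors appearing on at least one vertex of $C$; hence $|A| = (\Delta+1) - q$, where $q$ is the number of distinct colors used on $C$. So everything reduces to computing $q$.

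To compute $q$, I would use the exact partition $C = \big(\bigcup_{(x,y)\in \mathcal{M}_N}\{x,y\}\big) \,\sqcup\, \mathcal{L}$. The $2|\mathcal{M}_N|$ endpoints of matched non-edges are all colored (this is exactly what Step~I guarantees), each matched pair shares a single color, distinct matched pairs receive distinct colors, and none of these colors lies in $\mathcal{R}$; hence the matched endpoints contribute exactly $|\mathcal{M}_N|$ distinct colors. On the $\mathcal{L}$ side, the colored vertices are precisely $\mathcal{L}\setminus U$, and they are matched in the partial matching $\mathcal{M}_P \subseteq \mathcal{L}\times \mathcal{R}$ to pairwise distinct colors of $\mathcal{R}$; hence they contribute exactly $|\mathcal{L}| - |U|$ further distinct colors, and these are disjoint from the previous $|\mathcal{M}_N|$ colors because $\mathcal{R}$ is disjoint from the matched-non-edge colors. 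Therefore $q = |\mathcal{M}_N| + |\mathcal{L}| - |U|$.

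Finally I would substitute $|\mathcal{L}| = |C| - 2|\mathcal{M}_N| = (\Delta+1-k) - 2|\mathcal{M}_N|$, using $|C| = \Delta+1-k$, to obtain
\[
|A| = (\Delta+1) - |\mathcal{M}_N| - \big((\Delta+1-k) - 2|\mathcal{M}_N|\big) + |U| = k + |\mathcal{M}_N| + |U|,
\]
which is the claim. I do not expect a genuine obstacle here: the only care needed is in invoking the correct invariants — that Step~I has colored all $2|\mathcal{M}_N|$ matched endpoints using $|\mathcal{M}_N|$ distinct colors outside $\mathcal{R}$, and that $\mathcal{M}_P$ is a partial matching into $\mathcal{R}$ — so that the two groups of colors are disjoint and each is counted without repetition. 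Once those are cited, the identity is pure arithmetic.
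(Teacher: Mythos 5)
Your proposal is correct and takes essentially the same counting route as the paper: decompose $C$ into the $2|\mathcal{M}_N|$ matched non-edge endpoints (contributing $|\mathcal{M}_N|$ distinct colors) and the $|\mathcal{L}|-|U|$ colored vertices of $\mathcal{L}$ (contributing $|\mathcal{L}|-|U|$ further distinct colors), then subtract from $\Delta+1$. Your write-up is in fact slightly tighter than the paper's — you explicitly note that the two families of colors are disjoint because the matched-non-edge colors lie outside $\mathcal{R}$ while the $\mathcal{M}_P$ colors lie inside $\mathcal{R}$, which justifies the identity as an equality rather than the paper's (inexact) "at most."
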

\begin{proof}
    For endpoints of a non-edge in $\mathcal{M}_N$, a single color is used and the rest of the vertices in $C \setminus U$ are assigned unique colors. Thus, the total number of colors assigned to vertices in $C$ is at most
    $$
    |\mathcal{M}_N| + (|C| - 2|\mathcal{M}_N| - |U|) = |C| - |\mathcal{M}_N| - |U| = \Delta + 1 - k - |\mathcal{M}_N| - |U|.
    $$
    Since there are a total of $\Delta + 1$ colors, the number of available colors is exactly
    $$
    \Delta + 1 - (\Delta + 1 - k - |\mathcal{M}_N| - |U|) = k + |\mathcal{M}_N| + |U|.
    $$
 \end{proof}

 Next, let us bound the number of edges with one endpoint in $D$ and another outside $C$ where $D\subseteq C$.

 \begin{claim}\label{cl:edges-outside-C}
    For any $D \subseteq C$, the total number of edges of $D$ to $V \setminus C$ is at most $|D|k + 100|\mathcal{M}_N| \eps \Delta$.
 \end{claim}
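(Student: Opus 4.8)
The plan is a direct double-counting argument: I would first bound, for each vertex of $C$, its number of edges leaving $C$ in terms of $k$ and its non-edge degree inside $C$, then sum over $D$, and finally convert the resulting sum of non-edge degrees into a bound in terms of $|\mathcal{M}_N|$ via the non-edge matching guarantee. Concretely, fix $v \in C$. Since $|C| = \Delta + 1 - k$, the set $C \setminus \{v\}$ has exactly $\Delta - k$ vertices, of which $|N_C(v)|$ are neighbors of $v$ and the remaining $\b{E_C(v)}$ are non-neighbors, so $|N_C(v)| = \Delta - k - \b{E_C(v)}$. As the maximum degree is at most $\Delta$ throughout the update sequence, the number of edges from $v$ to $V \setminus C$ is at most $d(v) - |N_C(v)| \le \Delta - |N_C(v)| = k + \b{E_C(v)}$.

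Next I would sum this estimate over all $v \in D$, which gives that the number of edges from $D$ to $V \setminus C$ is at most
\[
\sum_{v \in D}\bigl(k + \b{E_C(v)}\bigr) = |D|\,k + \sum_{v \in D}\b{E_C(v)} \le |D|\,k + \sum_{v \in C}\b{E_C(v)} = |D|\,k + 2\,|\b{E(C)}|,
\]
where the last equality holds because every non-edge of $C$ is counted in exactly two of the quantities $\b{E_C(\cdot)}$. Finally I would invoke the non-edge matching bound of Lemma~\ref{lemma: update-non-edges} (which holds throughout a phase), namely $|\mathcal{M}_N| \ge |\b{E(C)}| / (50\eps\Delta)$, i.e.\ $|\b{E(C)}| \le 50\eps\Delta\,|\mathcal{M}_N|$. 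Substituting this into the previous display yields that the number of edges from $D$ to $V \setminus C$ is at most $|D|\,k + 100\,|\mathcal{M}_N|\,\eps\Delta$, as claimed.

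I do not expect a genuine obstacle here — the statement is a clean counting fact. The only points requiring care are (i) using the exact identity $|C| = \Delta + 1 - k$ (rather than a loose upper bound such as $(1+10\eps)\Delta$ from Lemma~\ref{lemma: fd-decompositionphase}) when computing $|N_C(v)|$, so that the leading term comes out as precisely $|D|k$; and (ii) making sure the non-edge matching size bound being invoked is the one that is valid at the arbitrary moment during a phase at which $\textsc{Match-Small}$ is called, which is exactly what Lemma~\ref{lemma: update-non-edges} (via Lemma~\ref{lemma: nonedgematchingsizephase}) provides.
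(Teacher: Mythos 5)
Your proof is correct and follows essentially the same route as the paper's: bound each vertex's edges leaving $C$ by $\Delta - |N_C(v)| = k + |\b{E_C(v)}|$ (using $|C| = \Delta+1-k$ and maximum degree $\Delta$), sum over $D$, bound $\sum_{v\in D}|\b{E_C(v)}| \leq 2|\b{E(C)}|$, and then invoke the phase-wide non-edge matching guarantee $|\mathcal{M}_N| \geq |\b{E(C)}|/(50\eps\Delta)$. The only cosmetic difference is that you sometimes write $\b{E_C(v)}$ where the set's cardinality $|\b{E_C(v)}|$ is intended, which is harmless.
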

 \begin{proof}
     Recall that we maintain sets $N_C(v)$ and $\b{E_C(v)}$ corresponding to the neighbors and non-edges of any vertex $v$ in $C$, respectively. Since the maximum degree of the graph is $\Delta$, the number of edges with one endpoint in $D$ and another in $V \setminus C$ is at most
     $$
        \sum_{v \in D} \Delta - |N_C(v)| = \sum_{v \in D} \Delta - (|C| - 1 - |\b{E_C(v)}|) = \sum_{v \in D} \Delta - (\Delta - k -|\b{E_C(v)}|) =  k |D| + \sum_{v \in D} |\b{E_C(v)}|.
     $$
     Note that $\sum_{v \in D} |\b{E_C(v)}| \leq 2|\b{E(C)}|$. By Lemma~\ref{lemma: nonedgematchingsizephase}, we know $|\mathcal{M}_N| \geq |\b{E(C)}|/(50\eps \Delta)$. Together, these bounds imply $\sum_{v \in D} |\b{E_C(v)}| \leq 2|\b{E(C)}| \leq 100 |\mathcal{M}_N|\eps \Delta$. Substituting into the inequality above concludes the proof.
 \end{proof}

  Let $T$ be the set of colors not in $A$ and not assigned to any endpoint of a matched non-edge in $\mathcal{M}_N$. In other words, these are colors assigned to vertices in $\mathcal{L}\setminus U$. We call a color $c \in T$ a {\em good} color if at least $\Delta/10$ vertices in $\mathcal{L}$ do not have a neighbor in $V \setminus C$ of color $c$. Let $T_G$ denote the set of good colors, and $\mathcal{L}_G$ denote the set of vertices in $\mathcal{L}$ that are assigned a good color, respectively.  We \textit{do not} maintain the list of good colors explicitly, and they are only used for analysis.

  \begin{definition}(Color availability for a vertex) We say that a color $c\in A$ is available for a vertex $v\in \mathcal{L}$ if no vertex in $N(v)\setminus\mathcal{L}$ is assigned $c$.
  \end{definition}

  \begin{claim}\label{cl:C'-lb}
      $|\mathcal{L}_G| \geq |\mathcal{L}| - o(\Delta)$ for $\eps=o(1)$.
  \end{claim}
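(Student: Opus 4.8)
The plan is to bound the number of ``bad'' colors in $T$, i.e. colors $c \in T$ for which \emph{more than} $\Delta - \Delta/10 = 9\Delta/10$ vertices of $\mathcal{L}$ have a neighbor outside $C$ of color $c$. If $c$ is such a bad color, then the number of edges from $\mathcal{L}$ to $V \setminus C$ whose outside endpoint has color $c$ is at least $9\Delta/10$. Summing over all bad colors, and noting that these edge-sets are disjoint (an edge to a fixed outside vertex contributes to only one color class), the total number of edges from $\mathcal{L}$ to $V \setminus C$ is at least $(9\Delta/10) \cdot |T \setminus T_G|$. First I would apply Claim~\ref{cl:edges-outside-C} with $D = \mathcal{L} \subseteq C$ to upper bound this same quantity by $|\mathcal{L}| k + 100 |\mathcal{M}_N| \eps \Delta$. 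Combining the two bounds gives
\[
  |T \setminus T_G| \;\leq\; \frac{|\mathcal{L}| k + 100 |\mathcal{M}_N| \eps \Delta}{9\Delta/10} \;=\; O\!\left(k + |\mathcal{M}_N| \eps\right),
\]
using $|\mathcal{L}| \leq |C| \leq \Delta$.

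Next I would convert a bound on the number of bad colors into a bound on the number of vertices assigned bad colors. Each color $c \in T$ is assigned to at most one vertex of $\mathcal{L} \setminus U$ (colors used by matched non-edges are excluded from $T$ by definition, and every other color is used at most once in $C$ since each color is used at most twice in $C$ and the second use would be a matched non-edge). Hence the number of vertices of $\mathcal{L}$ assigned a non-good color is at most $|T \setminus T_G| + |U|$, where the $|U|$ accounts for currently-uncolored vertices (which are not assigned any good color). Therefore
\[
  |\mathcal{L}_G| \;\geq\; |\mathcal{L}| - |T \setminus T_G| - |U| \;\geq\; |\mathcal{L}| - O(k + |\mathcal{M}_N|\eps) - |U|.
\]
Finally I would plug in the standing assumptions: $k \leq O(\eps\Delta)$ since $|C| = \Delta + 1 - k \geq (1-4\eps)\Delta$ by Lemma~\ref{lemma: fd-decompositionphase}; $|\mathcal{M}_N| < \Delta/10$ in this regime, so $|\mathcal{M}_N|\eps = O(\eps\Delta)$; and $|U| = O(1)$ since only one vertex is uncolored per update and we recolor immediately (or, more conservatively, $|U| \leq t = O(\eps^2\Delta)$ over a phase). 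Either way, $O(k + |\mathcal{M}_N|\eps) + |U| = O(\eps\Delta) = o(\Delta)$ for $\eps = o(1)$, which yields $|\mathcal{L}_G| \geq |\mathcal{L}| - o(\Delta)$.

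The main obstacle I anticipate is the bookkeeping around $U$ and around which colors can legitimately be counted as ``good.'' In particular one must be careful that $T$, $T_G$, and $\mathcal{L}_G$ are defined consistently with the invariant that each color is used at most twice in $C$ (and the double-uses are exactly the matched non-edges), so that $|\mathcal{L} \setminus U| - |\mathcal{L}_G| \leq |T \setminus T_G|$ holds with no slack lost. The edge-counting itself is routine given Claim~\ref{cl:edges-outside-C}; the only quantitative point to watch is that the denominator $9\Delta/10$ (rather than $\Delta$) is what makes the constant in $O(k + |\mathcal{M}_N|\eps)$ explicit, and that this is harmless since we only need an $o(\Delta)$ conclusion.
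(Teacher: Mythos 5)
The proposal follows essentially the same double-counting argument as the paper: each bad color is responsible for $\Omega(\Delta)$ edges from $\mathcal{L}$ to $V\setminus C$, the total number of such edges is $o(\Delta^2)$, hence the number of bad colors is $o(\Delta)$, from which the bound on $|\mathcal{L}_G|$ follows. You use Claim~\ref{cl:edges-outside-C} to bound edges leaving $\mathcal{L}$ by $|\mathcal{L}|k + 100|\mathcal{M}_N|\eps\Delta$, while the paper uses the cruder decomposition bound $|C|\cdot 5\eps\Delta = o(\Delta^2)$; both suffice, and yours gives the sharper intermediate bound $|T\setminus T_G| = O(k + |\mathcal{M}_N|\eps)$, but the conclusion needed is only $o(\Delta)$.

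You are actually more careful than the paper in one respect: you explicitly write $|\mathcal{L}| - |\mathcal{L}_G| = |T\setminus T_G| + |U|$ and note the $|U|$ term, whereas the paper's proof opens with ``If a vertex of $\mathcal{L}$ is not in $\mathcal{L}_G$, then it is assigned a color that is not good,'' which silently ignores the currently-uncolored vertices in $U$. However, your justification that $|U| = O(1)$ (or $\le t = O(\eps^2\Delta)$) is not valid at every point the claim is used: when \textsc{Match-Small} is invoked during initialization of the perfect matching (Lemma~\ref{lemma: recomputepmmatchings}), $|U|$ decreases from $|\mathcal{L}|$ to $1$ and can therefore be $\Omega(\Delta)$. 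The paper's stated claim and proof have the same exposure, so this is a shared implicit assumption rather than a defect unique to your argument, but the bound on $|U|$ cannot be dismissed as a triviality.

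One small slip: a color $c$ is bad exactly when more than $|\mathcal{L}| - \Delta/10$ vertices of $\mathcal{L}$ have a $c$-colored neighbor outside $C$, not ``more than $9\Delta/10 = \Delta - \Delta/10$'' (the latter need not hold since $|\mathcal{L}| < \Delta$). The correct per-bad-color lower bound is $|\mathcal{L}| - \Delta/10 \ge 0.6\Delta$ (using $|\mathcal{L}| = |C| - 2|\mathcal{M}_N| \ge (1 - 4\eps)\Delta - \Delta/5$), so the denominator changes by a constant factor and the conclusion $|T\setminus T_G| = O(k + |\mathcal{M}_N|\eps) = o(\Delta)$ is unaffected.
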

  \begin{proof}
      If a vertex of $\mathcal{L}$ is not in $\mathcal{L}_G$, then it is assigned a color that is not good. Call such a color \textit{bad}. By definition, for every bad color $c$, at least $|\mathcal{L}| - \Delta/10$ vertices in $\mathcal{L}$ have a neighbor in $V \setminus C$ assigned color $c$. Since $|C| \geq (1-4\eps)\Delta \geq (1-o(1))\Delta$ by Lemma ~\ref{lemma: fd-decompositionphase} and $|\mathcal{L}| = |C| - 2|\mathcal{M}_N|$, there are at least $(1-o(1))\Delta - 2|\mathcal{M}_N| - \Delta/10 \geq 0.6\Delta$ edges with one endpoint in $\mathcal{L}$ and the other to vertex in $V \setminus C$ which is assigned a particular bad color $c$. 
      
      Since the total number of edges with one endpoint in $C$ and another in $V \setminus C$ is at most $|C|5\eps \Delta \leq 5\eps \Delta^2 = o(\Delta^2)$ by Lemma \ref{lemma: fd-decompositionphase}, we get that the total number of bad colors is at most $o(\Delta^2) / 0.6 \Delta = o(\Delta)$. Hence, we get that $|\mathcal{L}_G| \geq |\mathcal{L}| - o(\Delta)$.
  \end{proof}

  \begin{claim}\label{cl:available-pairs}
    The total number of vertex-color pairs $(u, c)$ where $u \in \mathcal{L}_G$ , $c \in A$, and $c$ is available for $u$ is at least
    $$
     |\mathcal{L}_G||U| + \Omega(\Delta|\mathcal{M}_N|).
    $$
\end{claim}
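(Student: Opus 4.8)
The plan is a simple inclusion--exclusion count. First I would count \emph{all} pairs $(u,c)$ with $u\in\mathcal{L}_G$ and $c\in A$: by Claim~\ref{cl:A-value} there are exactly $|\mathcal{L}_G|\cdot|A| = |\mathcal{L}_G|\,(k+|\mathcal{M}_N|+|U|)$ of these. Then I would subtract the pairs for which $c$ is \emph{not} available for $u$, i.e.\ pairs where some vertex of $N(u)\setminus\mathcal{L}$ is currently colored $c$. The structural point I would use here is that an obstructing vertex cannot lie in $C$: by definition $C\setminus\mathcal{L}$ consists exactly of the endpoints of matched non-edges of $\mathcal{M}_N$, each of which is a vertex of $C$ and hence carries a color \emph{not} in $A$ (by the definition of $A$), while $c\in A$. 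So every obstruction comes from a neighbor of $u$ in $V\setminus C$, and mapping each bad pair $(u,c)$ to the edge $(u,w)$ witnessing it (with $w\in V\setminus C$, $c(w)=c$) is injective; hence the number of bad pairs is at most the number of edges from $\mathcal{L}_G$ to $V\setminus C$.

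Next I would invoke Claim~\ref{cl:edges-outside-C} with $D=\mathcal{L}_G$, bounding that quantity by $|\mathcal{L}_G|k + 100|\mathcal{M}_N|\eps\Delta$. Subtracting from the total gives that the number of available pairs is at least
\[
|\mathcal{L}_G|\,(k+|\mathcal{M}_N|+|U|) \;-\; |\mathcal{L}_G|k \;-\; 100|\mathcal{M}_N|\eps\Delta
\;=\; |\mathcal{L}_G|\,|U| \;+\; |\mathcal{L}_G|\,|\mathcal{M}_N| \;-\; 100|\mathcal{M}_N|\eps\Delta .
\]

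Finally I would show $|\mathcal{L}_G|\,|\mathcal{M}_N| - 100|\mathcal{M}_N|\eps\Delta = \Omega(\Delta|\mathcal{M}_N|)$. Claim~\ref{cl:C'-lb} gives $|\mathcal{L}_G|\ge|\mathcal{L}|-o(\Delta) = |C|-2|\mathcal{M}_N|-o(\Delta)$; combining with $|C|\ge(1-4\eps)\Delta$ from Lemma~\ref{lemma: fd-decompositionphase} and the standing assumption $|\mathcal{M}_N|\le\Delta/10$ of this section, this is at least, say, $0.7\Delta$ for $\eps$ small enough. Since $\eps=o(1)$, the term $100|\mathcal{M}_N|\eps\Delta$ is a $o(1)$ fraction of $\Delta|\mathcal{M}_N|$ and is therefore dominated by $|\mathcal{L}_G|\,|\mathcal{M}_N|\ge 0.7\Delta|\mathcal{M}_N|$, so the difference is at least $0.6\Delta|\mathcal{M}_N|=\Omega(\Delta|\mathcal{M}_N|)$ (and the statement is trivial when $|\mathcal{M}_N|=0$). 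Substituting this into the displayed bound yields the claimed lower bound $|\mathcal{L}_G|\,|U|+\Omega(\Delta|\mathcal{M}_N|)$.

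The hard part is the structural observation in the first paragraph --- recognizing that matched-non-edge endpoints never block a color of $A$, so that the only obstructions to availability are edges leaving $C$, which is exactly the quantity Claim~\ref{cl:edges-outside-C} was designed to control. After that the argument is bookkeeping; the only care needed is to confirm that the $o(\Delta)$ slack in the size of $\mathcal{L}_G$ and the $\eps$-dependent error term $100|\mathcal{M}_N|\eps\Delta$ are both negligible compared with $\Delta|\mathcal{M}_N|$, which holds because $|\mathcal{L}_G|=\Omega(\Delta)$ and $\eps=o(1)$.
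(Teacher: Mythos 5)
Your proof is correct and follows essentially the same route as the paper: count all pairs $|\mathcal{L}_G||A|$, subtract the bad pairs bounded via Claim~\ref{cl:edges-outside-C} applied to $D=\mathcal{L}_G$, substitute Claim~\ref{cl:A-value}, and absorb the $100\eps\Delta|\mathcal{M}_N|$ error using $|\mathcal{L}_G|=\Omega(\Delta)$ and $\eps=o(1)$. The only difference is that you make explicit the observation (which the paper leaves implicit) that an obstructing neighbor of $u$ cannot lie in $C\setminus\mathcal{L}$, since those vertices are matched-non-edge endpoints carrying colors outside $A$; this is a correct and welcome clarification but not a different argument.
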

\begin{proof}
    From Claim~\ref{cl:edges-outside-C}, we know that at most $|\mathcal{L}_G|k+100|\mathcal{M}_N|\eps \Delta$ edges have one endpoint in $\mathcal{L}_G$ and another in $V \setminus C$. Note that a color $c \in A$ is available for a vertex $u \in \mathcal{L}_G$ unless $u$ has a neighbor in $V \setminus \mathcal{L}$ of color $c$. As such, the total number of vertex-color pairs $(u, c)$ where $u \in \mathcal{L}_G$, $c \in A$, and $c$ is available for $u$ is at least
    $$
        |\mathcal{L}_G||A| - \Big( |\mathcal{L}_G|k+100|\mathcal{M}_N|\eps \Delta \Big).
    $$
    Replacing $|A| = k + |\mathcal{M}_N| + |U|$ from Claim~\ref{cl:A-value}, this is at least
    $$
        |\mathcal{L}_G|(k + |\mathcal{M}_N| + |U|) - \Big( |\mathcal{L}_G|k+100|\mathcal{M}_N|\eps \Delta \Big) = |\mathcal{L}_G|(|\mathcal{M}_N| + |U|) - 100\eps \Delta|\mathcal{M}_N|.
    $$
    Since $|\mathcal{L}_G| \geq |\mathcal{L}| - o(\Delta)$ and $|\mathcal{L}| = \Omega(\Delta)$, this is at least $\Omega(\Delta|\mathcal{M}_N|) + |\mathcal{L}_G||U|$.
\end{proof}

\begin{lemma}\label{lemma: pmprobability}
    Let $c\in A$ be a random color picked from $A$ and $u\in \mathcal{L}$ be a random vertex picked from $\mathcal{L}$. Let $U\subseteq \mathcal{L}$ be the set of vertices not currently assigned a color, where $|U|\geq 1$. Then, the probability that $c$ is available for $u$ is at least $\Omega(\frac{|U|}{k+|U|})$.
\end{lemma}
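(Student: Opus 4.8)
The plan is to reduce the statement to the counting bound already established in Claim~\ref{cl:available-pairs}. Since $u$ is uniform in $\mathcal{L}$ and $c$ is uniform in $A$ (independently), the probability in question is exactly the number of pairs $(u,c)$ with $u\in\mathcal{L}$, $c\in A$, and $c$ available for $u$, divided by $|\mathcal{L}|\cdot|A|$. Restricting to pairs with $u\in\mathcal{L}_G\subseteq\mathcal{L}$ only decreases the count, so by Claim~\ref{cl:available-pairs} this numerator is at least $|\mathcal{L}_G||U|+\Omega(\Delta|\mathcal{M}_N|)$. By Claim~\ref{cl:A-value} the denominator is $|\mathcal{L}|\cdot(k+|\mathcal{M}_N|+|U|)$. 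So the first move is simply to write the probability as this ratio and substitute these two facts.

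Next I would control the "loss" factors $|\mathcal{L}_G|/|\mathcal{L}|$ and $\Delta/|\mathcal{L}|$ using the phase-invariant size bounds. We have $|\mathcal{L}|=|C|-2|\mathcal{M}_N|$, and by Lemma~\ref{lemma: fd-decompositionphase} together with the standing assumption $|\mathcal{M}_N|<\Delta/10$ this gives $|\mathcal{L}|=\Theta(\Delta)$; combined with Claim~\ref{cl:C'-lb}, $|\mathcal{L}_G|\geq|\mathcal{L}|-o(\Delta)=(1-o(1))|\mathcal{L}|$, hence $|\mathcal{L}_G|/|\mathcal{L}|=\Omega(1)$ and $\Delta/|\mathcal{L}|=\Theta(1)$ for $\varepsilon=o(1)$. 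Splitting the numerator into its two summands and dividing through by $|\mathcal{L}|\cdot(k+|\mathcal{M}_N|+|U|)$ then yields
\begin{align*}
\Pr[c \text{ available for } u]
&\geq \Omega\!\left(\frac{|U|}{k+|\mathcal{M}_N|+|U|}\right)+\Omega\!\left(\frac{|\mathcal{M}_N|}{k+|\mathcal{M}_N|+|U|}\right)\\
&= \Omega\!\left(\frac{|U|+|\mathcal{M}_N|}{k+|\mathcal{M}_N|+|U|}\right).
\end{align*}

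Finally I would close with the elementary inequality $\frac{a+b}{m+a+b}\geq\frac{a}{m+a}$ valid for all nonnegative $a,b$ and positive $m+a$ (cross-multiplying, the difference of the two sides is $bm\geq 0$), applied with $a=|U|$, $b=|\mathcal{M}_N|$, $m=k$; this gives $\frac{|U|+|\mathcal{M}_N|}{k+|\mathcal{M}_N|+|U|}\geq\frac{|U|}{k+|U|}$ and hence the claimed $\Omega\!\big(|U|/(k+|U|)\big)$ bound. I expect the only point requiring care — a mild one — is bookkeeping the absolute constants: checking that $|\mathcal{L}_G|$ is genuinely a constant fraction of $|\mathcal{L}|$ and $|\mathcal{L}|=\Theta(\Delta)$ uniformly over all admissible $\varepsilon=o(1)$ and all $|\mathcal{M}_N|<\Delta/10$, so that the constants hidden in Claim~\ref{cl:available-pairs} and in the ratios $|\mathcal{L}_G|/|\mathcal{L}|$, $\Delta/|\mathcal{L}|$ remain bounded below by absolute constants. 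The degenerate case $|\mathcal{M}_N|=0$ is handled trivially: the second summand vanishes and the first already gives the bound since $|A|=k+|U|$ and $|\mathcal{L}_G|/|\mathcal{L}|=\Omega(1)$ still hold.
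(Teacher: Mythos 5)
Your proof is correct and follows the same chain of reasoning as the paper: lower-bound the number of available vertex–color pairs via Claim~\ref{cl:available-pairs}, write $|A|=k+|\mathcal{M}_N|+|U|$ via Claim~\ref{cl:A-value}, and use Claim~\ref{cl:C'-lb} to control $|\mathcal{L}_G|/|\mathcal{L}|$. The one difference is cosmetic but nicer: where the paper finishes with a two-case split on whether $|\mathcal{M}_N|\gtrless k+|U|$, you absorb the $|\mathcal{M}_N|$ contribution uniformly via the monotonicity inequality $\frac{|U|+|\mathcal{M}_N|}{k+|\mathcal{M}_N|+|U|}\ge\frac{|U|}{k+|U|}$, and you avoid the paper's slight awkwardness of conditioning on $u\in\mathcal{L}_G$ by keeping $|\mathcal{L}|$ in the denominator throughout.
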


\begin{proof}
    By Claim \ref{cl:available-pairs}, it follows that the total number of vertex color pairs $(u, c)$ where $u \in \mathcal{L}_G$ , $c \in A$, and $c$ is available for $u$ is at least $|\mathcal{L}_G||U| + \Omega(\Delta|\mathcal{M}_N|)$. By Claim \ref{cl:C'-lb}, a random vertex $u$ belongs to $\mathcal{L}_G$ with constant probability. Conditioned on this event, the probability that $c$ is available for $v$, is at least 
    $$
        \frac{|\mathcal{L}_G||U| + \Omega(\Delta |\mathcal{M}_N|)}{|\mathcal{L}_G||A|} \stackrel{\text{Claim~\ref{cl:A-value}}}{=}  \frac{|\mathcal{L}_G||U|}{|\mathcal{L}_G|(k+|\mathcal{M}_N| + |U|)} +  \frac{ \Omega(\Delta |\mathcal{M}_N|)}{|\mathcal{L}_G|(k+|\mathcal{M}_N| + |U|)}.
    $$
 If $|\mathcal{M}_N| > k + |U|$, this quantity is at least $\Omega(1)$. On the other hand, if $|\mathcal{M}_N|<k+|U|$, where $U$ is the set of vertices currently uncolored, then the probability that a random color $c$ is available for $v$ is $\Omega(\frac{|U|}{k+|U|})$ by the first term. 
\end{proof}
The following corollary is immediate from Lemma \ref{lemma: pmprobability}.

\begin{corollary}\label{corr: pmprobability}
    Let $c\in A$ be a random color picked from $A$ and $u\in \mathcal{L}$ be a random vertex picked from $\mathcal{L}$, such that there is at least one vertex in $C$ not assigned any color. Then, the probability that $c$ is available for $u$ is at least $\Omega(\frac{1}{k})$.
\end{corollary}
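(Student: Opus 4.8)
The plan is to derive this directly from Lemma~\ref{lemma: pmprobability}, which already gives the quantitative bound $\Omega\!\left(\frac{|U|}{k+|U|}\right)$ on the probability that a random color $c\in A$ is available for a random vertex $u\in\mathcal{L}$, whenever $|U|\geq 1$. The hypothesis of the corollary --- that some vertex of $C$ is currently uncolored --- is exactly the statement $|U|\geq 1$, so Lemma~\ref{lemma: pmprobability} applies verbatim. It then remains only to observe that the function $x\mapsto \frac{x}{k+x}$ is increasing for $x>0$, so on the range $|U|\geq 1$ it is minimized at $|U|=1$, giving $\frac{|U|}{k+|U|}\geq \frac{1}{k+1}$. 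Since $k\geq 1$ for small almost-cliques, $\frac{1}{k+1}\geq \frac{1}{2k}=\Omega\!\left(\frac1k\right)$, which is the claimed bound.

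There is no real obstacle here: the corollary is essentially a restatement of Lemma~\ref{lemma: pmprobability} that simply discards the dependence on $|U|$. The only thing to be mildly careful about is that the bound is meant to be uniform over all valid configurations (any value of $|U|\ge 1$, any $k\ge 1$), and the monotonicity argument above handles that uniformly. One could equivalently phrase it as: for all $|U|\geq 1$ we have $k+|U|\leq k\cdot|U| + |U| = |U|(k+1) \leq 2k|U|$ when $k\ge 1$, hence $\frac{|U|}{k+|U|}\geq \frac{1}{2k}$; this avoids even invoking monotonicity. Either way the proof is a single line after citing Lemma~\ref{lemma: pmprobability}.
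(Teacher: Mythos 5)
Your proposal is correct and matches the paper's intent exactly: the paper treats the corollary as immediate from Lemma~\ref{lemma: pmprobability}, and your argument --- substitute $|U|\geq 1$ into the bound $\Omega\!\bigl(\frac{|U|}{k+|U|}\bigr)$ and use $k\geq 1$ --- is precisely that observation made explicit.
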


\begin{lemma}\label{lemma: rtguaranteesmallpm}
    Let $U$ denote the set of vertices that are not assigned any color in a clique $C$ of size at most $\Delta$. Then, \textsc{Match-Small} takes $\ti((1+\frac{k}{|U|})\frac{n}{\Delta})$ update time with high probability..
\end{lemma}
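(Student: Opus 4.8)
The plan is to write the running time of \textsc{Match-Small}$(v)$ as (the total number of iterations of the inner \texttt{while} loop over the whole call) times (the cost of one inner iteration), and to observe that the ``post-processing'' part of each outer iteration --- sampling $w$ and performing the two feasibility tests --- is absorbed into this bound, since it is executed only once per accepted pair $(u,\gamma)$ and consists of only $O(1)$ feasibility tests.

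\emph{Cost of one feasibility test.} Every test has the form $N(x)\cap(L(\gamma)\cup L_D(\gamma))=\oldemptyset$. By Lemma~\ref{lemma: colorloadsparse}, $|L(\gamma)|=\ti(n/\Delta)$, and since our invariant assigns each color to at most two vertices of every almost-clique, $|L_D(\gamma)|=O(n/\Delta)$; enumerating $L(\gamma)\cup L_D(\gamma)$ and testing membership in $N(x)$ in $O(1)$ per element gives cost $\ti(n/\Delta)$ per test, hence per inner iteration.

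\emph{Number of iterations.} Each inner iteration draws a fresh uniform $u\in\mathcal{L}$ and $\gamma\in A$ and accepts iff $\gamma$ is available for $u$; by Lemma~\ref{lemma: pmprobability} this has probability $\Omega(|U|/(k+|U|))$, so one run of the inner loop ends after $\ti(1+k/|U|)$ iterations w.h.p.\ (Chernoff). It then remains to bound the number of such runs, i.e.\ the number of outer iterations. I would show that once an accepted pair $(u,\gamma)$ is produced, the subsequent checks on a random $w\in\mathcal{L}$ succeed with probability $\Omega(1)$, so that w.h.p.\ only $\ti(1)$ outer iterations occur; a negative-binomial concentration bound over these $\ti(1)$ rounds then gives $\ti(1+k/|U|)$ inner iterations overall, and multiplying by $\ti(n/\Delta)$ yields the claim. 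For the $\Omega(1)$ bound I would combine: (a) among accepted pairs, a constant fraction have $u$ assigned a \emph{good} color --- this uses Claims~\ref{cl:A-value}, \ref{cl:available-pairs} and~\ref{cl:C'-lb}, the point being that the bad vertices and bad colors number only $O(\eps\Delta)=o(\Delta)$ while $|\mathcal{L}|,|\mathcal{L}_G|=\Omega(\Delta)$; (b) conditioned on $c(u)$ being good, the definition of a good color gives $\geq \Delta/10=\Omega(\Delta)$ vertices $w\in\mathcal{L}$ for which $c(u)$ is feasible; (c) since $v$ has at most $5\eps\Delta$ neighbors outside $C$ (Lemma~\ref{lemma: fd-decompositionphase}) and the colored vertices of $\mathcal{L}$ carry distinct colors of $\mathcal{R}$, at most $5\eps\Delta$ of those $w$ have $c(w)$ infeasible for $v$, leaving $\Omega(\Delta)$ good choices of $w$. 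Properness of the resulting recoloring is immediate from the three feasibility tests together with the fact that $u$ (resp.\ $w$) is the only vertex of $C$ colored $c(u)$ (resp.\ $c(w)$), so the swap $c(v)\gets c(w)$, $c(w)\gets c(u)$, $c(u)\gets\gamma$ corresponds to an augmenting path of $\mathcal{M}_P$.

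\emph{Main obstacle.} The delicate step is (a): the inner loop accepts $u$ with probability proportional to its number of available colors rather than uniformly, so I must argue that this skewed distribution still places constant mass on $\mathcal{L}_G$, and do so uniformly across all regimes of $|\mathcal{M}_N|$, $k$ and $|U|$ (in particular the main regime $|U|=O(1)$, $k=\Theta(\eps\Delta)$ in which the target reduces to $\ti(\eps n)$ as in Lemma~\ref{lem:small-cliques}). The quantitative lever throughout is that neighbors-outside-$C$, bad colors, and bad vertices are all $O(\eps\Delta)=o(\Delta)$, hence lower-order against the $\Omega(\Delta)$-size quantities furnished by Claims~\ref{cl:available-pairs} and~\ref{cl:C'-lb}; making this comparison precise is what converts the constant-probability claims above into the stated $\ti\big((1+k/|U|)\,n/\Delta\big)$ bound.
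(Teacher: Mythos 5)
Your plan matches the paper's proof step for step: $\ti(n/\Delta)$ per feasibility test via Lemma~\ref{lemma: colorloadsparse} and the at-most-two-per-clique bound on $|L_D(\cdot)|$, $\ti(1+k/|U|)$ inner-loop iterations via Lemma~\ref{lemma: pmprobability}, and $\ti(1)$ outer iterations by combining Claim~\ref{cl:C'-lb} ($u\in\mathcal{L}_G$ with constant probability), the $\geq\Delta/10$ vertices $w$ for which a good $c(u)$ is feasible, and the fact that only $O(\eps\Delta)$ of the (distinct) colors $c(w)$ are blocked for $v$ by its outside-$C$ neighbors. Your ``main obstacle'' paragraph also correctly identifies a subtlety that the paper's proof passes over in silence --- the accepted $u$ is distributed proportionally to its number of available colors, not uniformly over $\mathcal{L}$, so the constant-probability claim that $u\in\mathcal{L}_G$ really requires the available-pair counting of Claims~\ref{cl:A-value} and~\ref{cl:available-pairs} rather than bare cardinality from Claim~\ref{cl:C'-lb} alone --- and making that explicit would tighten the argument, but it is the same method, not a different one.
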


\begin{proof}
We prove that the while loop (lines 2-15) of \textsc{Match-Small} succeeds with constant probability. Thus, after $\ti(1)$ iterations, $v$ is assigned a proper color with high probability by a Chernoff bound.

    From Lemma \ref{lemma: pmprobability}, it follows that for a random vertex $u$ and random color $c$ picked from $\mathcal{L}$ and $\mathcal{A}$ respectively, $c$ is available for $u$ with probability at least $\Omega(\frac{|U|}{k+|U|})$. Note that we can check feasibility of this color in $\widetilde{O}(n/\Delta)$ by Lemma \ref{lemma: colorloadsparse} and the fact that $L_D(c)$ has size at most $O(\frac{n}{\Delta})$. Repeating the process $\widetilde{O}(1+\frac{k}{|U|})$ times, a feasible pair $(u,c)$ can be bound with high probability. Thus, the while loop (lines 4-8) of \textsc{Match-Small} takes total time $\ti((1+\frac{k}{|U|})\frac{n}{\Delta})$ w.h.p.
    
    Let us condition on the event that the vertex $u \in \mathcal{L}_G$, which happens with constant probability by Claim \ref{cl:C'-lb}. Thus, the color $c(u)$ currently assigned to $u$ must be available for at least $\Delta/10$ vertices in $\mathcal{L}$. The algorithm picks a random vertex $w \in \mathcal{L}$ and checks if $c(u)$ is available for it; note that since $w$ is randomly chosen, $c(u)$ is available for $w$ with constant probability. This takes $\ti(\frac{n}{\Delta})$ time.
    
    Next, we analyze the probability that $c(w)$ is available for $v$. Note that since there are at least $\Delta/10$ choices of $w$, and there are at most $O(\epsilon \Delta) = o(\Delta)$ colors that we cannot assign to $v$ due to its neighbors outside $C$, it follows that $c(w)$ is available for $v$ with constant probability. Thus, $c(w)$ is available for $v$ and $c(u)$ is available for $w$, with constant probability. This step takes $\ti(\frac{n}{\Delta})$ time.
    
    Thus, the running time of $\textsc{Match-Small}$ is bounded by $\ti((1+\frac{k}{|U|})\frac{n}{\Delta})$ with high probability. 
\end{proof}

We are now ready to prove Lemma~\ref{lem:small-cliques}.

\begin{proof}[Proof of Lemma \ref{lem:small-cliques}]
Since $k=O(\eps\Delta)$ by the lower bound on the size of $C$ by Lemma \ref{lemma: fd-decompositionphase} and $|U|\geq 1 $, it follows from Lemma \ref{lemma: rtguaranteesmallpm} that the running time of \textsc{Match-Small} is bounded by $\ti(\frac{n}{\Delta}+\eps n)=\ti(\eps n)$.
\end{proof}

\paragraph{Initializing Perfect Matchings For a New Phase}\label{sec: initializepm}
Given our algorithm $\textsc{Match}$ and subroutines $\textsc{Match-Small}$ and $\textsc{Match-Large}$ respectively, we bound the total update time for initializing perfect matchings for all almost-cliques at the beginning of any phase. 

Initializing perfect matchings is simple. We simply iterate over all cliques sequentially. Let $C$ denote the clique in consideration. For all $v\in \mathcal{L}$ where $\mathcal{L}$ corresponds to the vertices in $C$ that are not endpoints of any matched non-edges in $C$, we invoke $\textsc{Match}(v)$. The following lemma bounds the amortized update time to initialize perfect matchings.

\begin{lemma}\label{lemma: recomputepmmatchings}
    The amortized update time to initialize perfect matchings for all almost-cliques, given a proper coloring of sparse vertices and matched non-edges at the beginning of any phase is  $\ti(\frac{n^2}{\eps^2\Delta^2})$ with high probability.
\end{lemma}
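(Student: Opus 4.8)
The plan is to bound the total worst-case time spent initializing the perfect matchings $\mathcal{M}_P$ across all almost-cliques at the start of a phase, and then divide by the phase length $t=\Theta(\eps^2\Delta)$. Recall that initialization processes the almost-cliques one by one, and for an almost-clique $C$ it invokes $\textsc{Match}(v)$ for each of the $|\mathcal{L}|=|C|-2|\mathcal{M}_N|=O(\Delta)$ vertices $v\in\mathcal{L}$, coloring them one at a time (here we use $|C|\le(1+10\eps)\Delta$ from Lemma~\ref{lemma: fd-decompositionphase}). Since the number of almost-cliques is $O(n/\Delta)$, it suffices to show that the total time spent on the $\textsc{Match}(v)$ calls within a single almost-clique is $\ti(n)$ with high probability: summing over cliques then gives $\ti(n^2/\Delta)$, and amortizing over the $t$ updates of the phase yields $\ti(n^2/(\eps^2\Delta^2))$. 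The auxiliary structures $\mathcal{R},\mathcal{H},T_C(\cdot)$ consulted by $\textsc{Match}$ are assumed to already be set up; their initialization takes $O(n^2/\Delta)$ time in total, which amortizes to $O(n^2/(\eps^2\Delta^2))$ and is subsumed. Finally, observe that during initialization neither $|C|$ nor $|\mathcal{M}_N|$ (nor the sets $\mathcal{L},\mathcal{R}$) changes --- only $\mathcal{M}_P$ grows --- so $\textsc{Match}$ routes every call belonging to a fixed clique to the same one of its three subroutines.

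If $C$ is routed to $\textsc{Random-Match}$ (i.e.\ $|\mathcal{M}_N|\ge\Delta/10$) or to $\textsc{Match-Large}$ (i.e.\ $|\mathcal{M}_N|<\Delta/10$ and $|C|>\Delta$), then by Lemma~\ref{lemma: random-match} and Lemma~\ref{lemma: match-large} each of the $O(\Delta)$ calls runs in $\ti(n/\Delta)$ time with high probability, so $C$ contributes $O(\Delta)\cdot\ti(n/\Delta)=\ti(n)$.

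The delicate case is $\textsc{Match-Small}$, i.e.\ $|\mathcal{M}_N|<\Delta/10$ and $|C|=\Delta+1-k\le\Delta$ with $k\ge1$; here $k\le4\eps\Delta+1=O(\eps\Delta)$ by $|C|\ge(1-4\eps)\Delta$ (Lemma~\ref{lemma: fd-decompositionphase}). A single call to $\textsc{Match-Small}$ may cost $\ti(\eps n)$ by Lemma~\ref{lem:small-cliques}, and multiplying by $O(\Delta)$ would give a per-clique bound of $\ti(\eps n\Delta)$, which is far too large. Instead the plan is to use the finer estimate of Lemma~\ref{lemma: rtguaranteesmallpm}: when $\textsc{Match-Small}$ is called and the set $U$ of still-uncolored vertices of $\mathcal{L}$ has size $|U|$, the call takes $\ti\big((1+k/|U|)\,n/\Delta\big)$ time with high probability. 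Since initialization begins with all of $\mathcal{L}$ uncolored and colors exactly one new vertex per call, the value $|U|$ takes the values $|\mathcal{L}|,|\mathcal{L}|-1,\dots,1$ over the sequence of calls, so the total time for $C$ is, with high probability,
\[
\sum_{u=1}^{|\mathcal{L}|}\ti\!\left(\left(1+\frac{k}{u}\right)\frac{n}{\Delta}\right)
=\ti\!\left(\frac{n}{\Delta}\right)\!\left(|\mathcal{L}|+k\sum_{u=1}^{|\mathcal{L}|}\frac{1}{u}\right)
=\ti\!\left(\frac{n}{\Delta}\right)\big(O(\Delta)+O(k\log\Delta)\big)=\ti(n),
\]
using $|\mathcal{L}|=O(\Delta)$, $\sum_{u\le|\mathcal{L}|}1/u=O(\log\Delta)$, and $k\log\Delta=O(\eps\Delta\log n)\subseteq\ti(\Delta)$ since $\eps<1$.

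In all three cases the per-clique cost is $\ti(n)$ with high probability; a union bound over the at most $n$ invocations of $\textsc{Match}$ (with the polynomial failure probabilities chosen large enough, as in the Preliminaries) shows these bounds hold simultaneously with high probability. Summing over the $O(n/\Delta)$ almost-cliques yields total initialization time $\ti(n^2/\Delta)$, and dividing by $t=\Theta(\eps^2\Delta)$ gives the claimed amortized update time $\ti(n^2/(\eps^2\Delta^2))$. I expect the main obstacle to be exactly the $\textsc{Match-Small}$ case: one must recognize that the expensive calls --- those with small $|U|$, costing up to $\ti(\eps n)$ --- occur only $O(k)=O(\eps\Delta)$ times, and that the aggregate cost is governed by a harmonic sum, so the per-clique total collapses back to $\ti(n)$ despite the large worst-case cost of a single call.
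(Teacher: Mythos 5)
Your proposal is correct and follows essentially the same route as the paper's proof: bound the cost per almost-clique by $\ti(n)$ (with the $\textsc{Match-Small}$ case handled via the harmonic sum $\sum_{u=1}^{|\mathcal{L}|} (1+k/u) = O(\Delta) + O(k\log\Delta)$ using Lemma~\ref{lemma: rtguaranteesmallpm} and $k = O(\eps\Delta)$), sum over $O(n/\Delta)$ cliques to get $\ti(n^2/\Delta)$, and amortize over the phase length $t = \Theta(\eps^2\Delta)$. The only cosmetic difference is that you explicitly note that $\textsc{Match}$ routes every call within a fixed clique to the same subroutine during initialization and mention the cost of setting up auxiliary structures, both of which the paper handles implicitly or elsewhere.
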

\begin{proof}
    Note that the total number of vertices that are matched by our perfect matching algorithms at any point in time is bounded by $n$. Note that for any vertex that is not an endpoint of a matched non-edge in its almost-clique one of the subroutines $\textsc{Random-Match}, \textsc{Match-Small}$ and $\textsc{Match-Large}$ is invoked to assign a proper color.
    
    For almost-cliques for which the non-edge matching $|\mathcal{M}_N|\geq \frac{\Delta}{10}$, it follows by Lemma \ref{lemma: random-match}, that recoloring a single vertex takes $\ti(\frac{n}{\Delta})$ w.h.p. Thus, the total time to match all vertices using Algorithm \textsc{Random-Match} is $\ti(\frac{n^2}{\Delta})$ w.h.p. 
    
    For large almost-cliques of size greater than $\Delta$, it follows by Lemma \ref{lemma: match-large} that, to color a single vertex takes time $\ti(\frac{n}{\Delta})$ and thus, the total update time is bounded by $\ti(\frac{n^2}{\Delta})$ w.h.p. 
    
    On the other hand, for a small almost-clique $C$ let $U$ denote the set of vertices which are currently not assigned a color. Clearly, $|U|\leq \Delta$ and by Lemma \ref{lemma: rtguaranteesmallpm}, an invocation to $\textsc{Match-Small}$ takes $\ti((1+\frac{k}{|U|})\frac{n}{\Delta})$ time w.h.p. Thus, the total update time to match all vertices in $C$ is bounded by $\sum_{|U|=1}^{\Delta}\ti((1+\frac{k}{|U|})\frac{n}{\Delta})=\ti(n+\sum_{|U|=1}^{\Delta}\frac{\eps n}{|U|})=\ti(n+\eps n\ln \Delta)=\ti(n)$.
    For all $O(\frac{n}{\Delta})$ small almost-cliques, this takes a total time of $\ti(\frac{n^2}{\Delta})$ w.h.p.

    Thus, the total time to initialize perfect matchings at the beginning of a phase is $\ti(\frac{n^2}{\Delta})$. Amortized over the length of a phase, this yields $\ti(\frac{n^2}{\eps^2\Delta^2})$ time w.h.p. 
\end{proof}

\subsection{The Final Algorithm}\label{sec:wrap-up}
In this section, we give our full algorithm based on our approaches outlined in the previous sections. Our algorithm utilizes two subroutines, \textsc{initialization} and \textsc{Update-Coloring}, respectively. 

Recall that our algorithm works with a fixed decomposition of $V$ into $V_S$, and $V_D=(C_1, C_2,...,C_{\ell})$ throughout a phase of consisting of $t=\frac{\eps^2\Delta}{18e^6}$ updates as outlined in Section \ref{sec: constadjcomplexity}. After a phase $\mathcal{U}_{i}$ finishes, the \textsc{Initialization} subroutine reflects edge updates in $\mathcal{U}_i$, to yield an updated sparse-dense decomposition that our algorithm works with, for the next phase. The \textsc{Initialization} subroutine is also responsible for \textit{recoloring all vertices} from scratch, based on the updated decomposition. 

The \textsc{Update-Coloring} subroutine, maintains a proper coloring of $G$ after an edge update during a phase. Both subroutines are described as follows. 

\subsubsection{\textsc{Initialization} subroutine}
The \textsc{Initialization} subroutine is responsible for i) initializing data structures and ii) recoloring vertices in $V$ at the beginning of a new phase.

\paragraph{Initializing Data Structures.} 
After a phase $\mathcal{U}_i$ finishes, edge updates in all data structures performed during the phase $\mathcal{U}_i$ are undone (reversed) to yield the states of all data structures at the beginning of a phase. Crucially, we recover the non-edge matching for every almost-clique that is computed by the dynamic matching algorithm \cite{bhattacharya2016new} at the beginning of $\mathcal{U}_i$. This matching corresponds to the one obtained before carrying out the post-processing step (in case of small of matchings to transform it into a maximal matching). This can be accomplished by maintaining a batch of $\Theta(\eps^2\Delta)$ changes that are made to all data structures in any phase. Thus, the total time to recover the states of all data structures at the beginning of a phase $\mathcal{U}_i$ is bounded by $\Theta(\eps^2\Delta)$ and takes $O(1)$ amortized update time.

Next, for each update $(u,v)$ in $\mathcal{U}_i$, we invoke algorithm \textsc{Update-Decomposition}$(\eps, \frac{\eps}{3}, (u,v))$ and the dynamic matching algorithm with $\delta=\frac{1}{6}$ (see Lemma~\ref{alg: bhatt}). This ensures that a non-edge matching based on the updated sparse-dense decomposition accounting for updates in phase $\mathcal{U}_i$ is obtained. By Lemma \ref{lemma: non-edgematchingsize} it follows that these steps combined take $\ti(\frac{1}{\eps^4})$ amortized update time w.h.p. and a matching of size at least $|\b{E(C)|}/(22\eps\Delta)$ can be obtained for all almost-cliques. Corollary~\ref{corr: non-edgematching} implies that accounting for the post-processing step (to transform small non-edge matchings to maximal matchings) yields an overall amortized update time of $\ti(\frac{1}{\eps^4}+\eps n)$.

Thus, the amortized update time for initializing data structures is bounded by $\ti(\frac{1}{\eps^4}+\eps n)$. A subtle point to note is that the update time guarantee of the dynamic matching algorithm (see Lemma \ref{alg: bhatt}) need not necessarily hold in the amortized sense for a sequence of $\widetilde{\Theta}(\eps^2\Delta\cdot \frac{1}{\eps^4})=\widetilde{\Theta}(\frac{\Delta}{\eps^2})$ updates. However, as long as the number of phases is sufficiently large (i.e. the update sequence is long enough) the amortized update time guarantee for the \textsc{Initialization} subroutine holds.

\paragraph{Recoloring Vertices from Scratch.} 
The \textsc{Initialization} subroutine recolors all vertices in $V$ from scratch. Let $V_S$ and $V_D=(C_1, C_2,...,C_{\ell})$ denote the (updated) decomposition of $V$. All vertices are uncolored at this point, i.e. $c(v)=\perp$ for all $v\in V$. By Lemma \ref{lemma: amortizedsparserecoloring}, the amortized update time to recolor sparse vertices is $\ti(\frac{n^2}{\eps^2\Delta^2})$.

Initializing the data structures $T_C(c)$ for all $c\in [\Delta+1]$ for all almost-cliques $C$, takes $O(\frac{n^2}{\eps^2\Delta^2})$ amortized update time as noted in Section \ref{sec: large almost cliques}.

Dense vertices in almost-cliques $C_1, C_2,...,C_{\ell}$ are sequentially colored as follows. Let $C_i$ be an almost-clique. First, endpoints of each matched non-edge are assigned a color using algorithm \textsc{Recolor-Non-Edge} (see \ref{sec: non-edge-matching}). The number of non-edges is at most $O(n)$ and by Corollary \ref{corr: amortizedrecolornonedge}, it follows that the amortized update time to recolor non-edges across all almost-cliques is $\ti(\frac{n^2}{\eps^2\Delta^2})$. 

Finally, we color the remaining uncolored vertices in all almost-cliques $C_1, C_2,..,C_{\ell}$ sequentially as described in Section \ref{sec: initializepm} which takes $\ti(\frac{n^2}{\eps^2\Delta^2})$ amortized update time with high probability.

Thus, recoloring all vertices from scratch takes $\ti(\frac{n^2}{\eps^2\Delta^2})$ time w.h.p.

The following Lemma is immediate from the preceding discussion.
\begin{lemma}\label{lemma: updatetimereinitializationsubroutine}
    The amortized update time for the \textsc{Initialization} is $\ti(\frac{n^2}{\eps^2\Delta^2}+\eps n + \frac{1}{\eps^4})$ w.h.p.
\end{lemma}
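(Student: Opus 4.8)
The plan is to establish the claimed bound by decomposing the work done by \textsc{Initialization} into its two constituent blocks and summing the amortized costs already proved for each. At the start of every phase the subroutine (i) rebuilds the auxiliary data structures to reflect the $t=\Theta(\eps^2\Delta)$ updates of the phase that just ended, and (ii) recolors the whole vertex set from scratch under the new decomposition. I would bound these two blocks separately and then add them.

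For the data-structure block I would first argue that rolling back the batch of $\Theta(\eps^2\Delta)$ in-phase changes to all maintained structures (the neighbor and non-neighbor lists, $\b{E(C)}$, the $\mathrm{matched}[\cdot]$ arrays, and the matching produced by the dynamic algorithm of Lemma~\ref{alg: bhatt} at the start of the previous phase) costs $O(\eps^2\Delta)$ in total, hence $O(1)$ amortized over the $t$ updates of the phase. Then, replaying the phase's updates through \textsc{Update-Decomposition}$(\eps,\eps/3,\cdot)$ and through the dynamic matching algorithm with $\delta=1/6$ costs, by Theorem~\ref{fd-decomposition}, Lemma~\ref{lemma: non-edgematchingsize}, and Corollary~\ref{corr: non-edgematching}, a total of $\ti(\tfrac{1}{\eps^4}+\eps n)$ amortized, where the $\eps n$ term is precisely the post-processing cost of converting small non-edge matchings into maximal ones from Lemma~\ref{lemma: small-matchings-guarantee}. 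Here I would flag the one genuinely delicate point: the $O(\polylog)$ guarantee of Lemma~\ref{alg: bhatt} is amortized, and over a single window of only $\widetilde{\Theta}(\Delta/\eps^2)$ matching updates it need not hold in a worst-case sense; the standard fix, which the surrounding text already anticipates, is that since the entire update sequence is assumed long, the per-update cost averaged over all phases still meets the stated bound.

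For the recoloring block I would simply invoke the existing lemmas: recoloring sparse vertices is $\ti(\tfrac{n^2}{\eps^2\Delta^2})$ amortized by Lemma~\ref{lemma: amortizedsparserecoloring}; initializing the counters $T_C(c)$ over all almost-cliques is $O(\tfrac{n^2}{\eps^2\Delta^2})$ amortized as computed in Section~\ref{sec: large almost cliques}; recoloring the endpoints of all matched non-edges via \textsc{Recolor-Non-Edge} across all almost-cliques is $\ti(\tfrac{n^2}{\eps^2\Delta^2})$ amortized by Corollary~\ref{corr: amortizedrecolornonedge}; and initializing the perfect matchings of all almost-cliques via \textsc{Match} is $\ti(\tfrac{n^2}{\eps^2\Delta^2})$ amortized by Lemma~\ref{lemma: recomputepmmatchings}. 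Adding the data-structure block and the recoloring block, and noting that $\ti(\tfrac{1}{\eps^4})+\ti(\eps n)+\ti(\tfrac{n^2}{\eps^2\Delta^2})$ already absorbs the $O(1)$ rollback term, yields the claimed $\ti(\tfrac{n^2}{\eps^2\Delta^2}+\eps n+\tfrac{1}{\eps^4})$.

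Since every ingredient is a previously-stated lemma, I do not expect any real mathematical obstacle; the only thing requiring care is bookkeeping — verifying that all the costs are amortized over the same length-$t$ phase so that the sum is legitimate, and handling the caveat about the external dynamic matching algorithm's amortized guarantee over a bounded window, which I would dispatch exactly as indicated above.
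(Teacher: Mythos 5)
Your proposal is correct and follows essentially the same route as the paper's proof: the identical decomposition into a data-structure-rebuild block and a recolor-from-scratch block, invoking the same lemmas (Theorem~\ref{fd-decomposition}, Corollary~\ref{corr: non-edgematching}, Lemma~\ref{lemma: amortizedsparserecoloring}, Corollary~\ref{corr: amortizedrecolornonedge}, Lemma~\ref{lemma: recomputepmmatchings}) and even flagging the same subtlety about the external dynamic matching algorithm's amortized guarantee over a single phase window, resolved by the long-update-sequence assumption. Nothing further is needed.
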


\subsubsection{\textsc{Handle-Update} Subroutine}\label{sec: handleupdatesubroutine}
To maintain a proper coloring on $G$ after an edge update $(u,v)$ during a phase $\mathcal{U}_i$, our algorithm utilizes the \textsc{Handle-Update} subroutine. In the following sections we consider the following cases for an edge update $(u,v)$ and give implementations of the \textsc{Handle-Update} subroutine, respectively. 

\begin{enumerate}
    \item $u,v\in V_S$.
    \item $u\in V_S, v\in V_D$.
    \item $u,v\in V_D$.
\end{enumerate} 

\paragraph{Case 1} We assume that the edge update $(u,v)$ is of the form $u,v\in V_S$.
\begin{enumerate}
    \item \underline{$(u,v)$ is an edge deletion:} Vertex $v$ (resp. $u$) is removed from $N_S(u)$ (resp. $N_S(v)$). 
    \item \underline{$(u,v)$ is an edge insertion:} Vertex $v$ (resp. $u$) is added to $N_S(u)$ (resp. $N_S(v)$). 
    
    If $c(u)=c(v)$, vertex $v$ is recolored by invoking Algorithm \textsc{Recolor-Sparse}$(v)$. Let $c$ denote the color assigned to $v$. For all almost-cliques $C$, the counters $T_C(c(u))$ and $T_C(c)$ are updated by decrementing $T_C(c(u))$ by $|N_C(v)|$ and incrementing $T_C(c)$ by $|N_C(v)|$. Next, all neighbors $w$ of $v$ in $L_D(c)$, are recolored. Since $|L_D(c)|=O(\frac{n}{\Delta})$ the number of such neighbors $w$ is bounded by $O(\frac{n}{\Delta})$. 
    
    Consider any neighbor $w$ of $v$ in some almost-clique $C$. Since any almost-clique $C$ has at most two vertices of any color, $v$ has at most two neighbors in $L_D(c)$ which are in $C$. If there is exactly one neighbor $w$, then it must be a matched vertex in the perfect matching maintained for $C$. Otherwise, if $v$ has two neighbors $w,x$ colored $c$ then $(w,x)$ must be a matched non-edge.
    
    If a neighbor $w$ is an endpoint of a matched non-edge $(w,x)$ the subroutine \textsc{Recolor-Non-Edge}$(w,x)$ is invoked. Thereafter, $c$ is added to $\mathcal{R}$. If an edge $(y, c(w))$ if present in the perfect matching $\mathcal{M}_P$, it is removed, $c(y)$ is set to $\perp$ and $\textsc{Match}(y)$ is invoked.
    
    On the other hand, if a neighbor $w$ of $v$ is an endpoint of an edge in the perfect matching $\mathcal{M}_P$, such that $(w,c)\in \mathcal{M}_P$, $(w,c)$ is removed, $c(w)$ is set to $\perp$ and $\textsc{Match}(w)$ is invoked.
\end{enumerate}
\begin{lemma}\label{lemma: c1}
    For Case 1, \textsc{Handle-Update}$(u,v)$ takes $\ti(\frac{n}{\eps^2\Delta}+\frac{n^2}{\Delta^2}+\frac{\eps n^2}{\Delta})$ update time w.h.p.
\end{lemma}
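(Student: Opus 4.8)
The plan is to follow the case analysis in the description of \textsc{Handle-Update} for $u,v\in V_S$ and charge every elementary operation to a lemma already established. If $(u,v)$ is an edge deletion, the only work is removing $v$ from $N_S(u)$ and $u$ from $N_S(v)$, which is $O(1)$; the same holds for an insertion with $c(u)\neq c(v)$. The interesting case is an insertion with $c(u)=c(v)$. Here we first recolor $v$ by calling \textsc{Recolor-Sparse}$(v)$, which costs $\ti(\frac{n}{\eps^2\Delta})$ with high probability by Lemma~\ref{lemma: recolorsparse}; let $c$ be the color it assigns to $v$. We then update the counter $T_C$ of every almost-clique, which is $O(\frac{n}{\Delta})$ work since there are $O(\frac{n}{\Delta})$ almost-cliques by Lemma~\ref{lemma: fd-decompositionphase}.

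The bulk of the argument is bounding the cost of repairing the dense vertices whose color now collides with $v$. First I would observe that these are exactly the vertices of $N(v)\cap L_D(c)$, and that since each almost-clique contains at most two vertices of any fixed color we have $|L_D(c)|=O(\frac{n}{\Delta})$; hence there are $O(\frac{n}{\Delta})$ such vertices, spread over $O(\frac{n}{\Delta})$ distinct almost-cliques, and scanning $L_D(c)$ to find them costs $O(\frac{n}{\Delta})$. Next I would argue that inside a single affected clique $C$ the at most two vertices of color $c$ are either the two endpoints of one matched non-edge of $\mathcal{M}_N$, or a single vertex matched by $\mathcal{M}_P$: a matched non-edge receives a color unique among the matched non-edges and that color cannot lie in $\mathcal{R}$, so it is impossible for a vertex matched by $\mathcal{M}_P$ to share a color with a matched-non-edge endpoint, or for two matched-non-edge endpoints of color $c$ to have distinct partners. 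Consequently, per affected clique the algorithm performs only $O(1)$ repair operations: at most one call to \textsc{Recolor-Non-Edge}, costing $\ti(\frac{n}{\Delta})$ by Lemma~\ref{lemma: recolornonedge}, and at most one call to \textsc{Match} on a vertex whose color just became infeasible, costing $\ti(\frac{n}{\Delta})$ when \textsc{Random-Match} or \textsc{Match-Large} is triggered (Lemmas~\ref{lemma: random-match} and~\ref{lemma: match-large}) and $\ti(\eps n)$ when \textsc{Match-Small} is triggered (Lemma~\ref{lem:small-cliques}).

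Putting the pieces together, the dense-repair work is $O(\frac{n}{\Delta})\cdot\ti(\frac{n}{\Delta}+\eps n)=\ti(\frac{n^2}{\Delta^2}+\frac{\eps n^2}{\Delta})$, and adding the $\ti(\frac{n}{\eps^2\Delta})$ spent by \textsc{Recolor-Sparse} and the $O(\frac{n}{\Delta})$ spent updating counters gives the claimed $\ti(\frac{n}{\eps^2\Delta}+\frac{n^2}{\Delta^2}+\frac{\eps n^2}{\Delta})$; a union bound over the $O(\frac{n}{\Delta})$ subroutine calls turns the per-call high-probability guarantees into a high-probability guarantee for the whole update.

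I expect the main obstacle to be verifying that this repair process does not cascade --- that the set of vertices needing recoloring is fixed once $v$ is recolored, rather than growing as we recolor dense vertices. The key point to make is that every color chosen by \textsc{Recolor-Non-Edge} or by the \textsc{Match} subroutines is tested for feasibility against all neighbors of the recolored vertex outside $\mathcal{L}$, in particular against all sparse neighbors (through $L(c)$, using $|L(c)|=\ti(\frac{n}{\Delta})$ from Lemma~\ref{lemma: colorloadsparse}) and all dense neighbors in other almost-cliques (through $L_D(c)$), while the augmenting-path swaps performed inside \textsc{Match} only permute colors among $\mathcal{L}$-vertices of the same clique and preserve a proper coloring there; hence no new conflict is ever created outside the clique being repaired, and the number of repair operations stays $O(\frac{n}{\Delta})$.
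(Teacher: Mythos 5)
Your proof is correct and follows essentially the same decomposition as the paper's: edge deletions cost $O(1)$, \textsc{Recolor-Sparse} contributes $\ti(\frac{n}{\eps^2\Delta})$, counter updates contribute $O(\frac{n}{\Delta})$, and the $O(\frac{n}{\Delta})$ affected almost-cliques each incur $\ti(\frac{n}{\Delta}+\eps n)$ via Lemmas~\ref{lemma: recolornonedge}, \ref{lemma: match-large}, and \ref{lem:small-cliques}. The only difference is that you explicitly justify, via the separation of $\mathcal{R}$ from the non-edge-matching colors, why each affected clique triggers $O(1)$ repair calls and why repairs do not cascade—a point the paper leaves implicit, and a worthwhile addition.
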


\begin{proof}
    If the update is an edge deletion, only $O(1)$ update time is incurred. If the update is an edge insertion, Algorithm $\textsc{Recolor-Sparse}$ takes $\ti(\frac{n}{\eps^2\Delta})$ time w.h.p by Lemma \ref{lemma: recolorsparse}. Updating counters $T_C(\cdot)$ takes $O(\frac{n}{\Delta})$ time.
    Recoloring of neighbors $w$ of $v$ in an almost-clique takes at most $\ti(\frac{n}{\Delta}+\eps n)$ time by Lemmas \ref{lemma: recolornonedge}, \ref{lemma: match-large} and \ref{lem:small-cliques} w.h.p. Thus, the total update time for all almost-cliques is $\ti(\frac{n}{\Delta}\cdot (\frac{n}{\Delta}+\eps n))=\ti(\frac{n^2}{\Delta^2}+\frac{\eps n^2}{\Delta})$ w.h.p.

    Thus, the total update time for \textsc{Handle-Update} for Case 1 is $\ti(\frac{n}{\eps^2\Delta}+\frac{n^2}{\Delta^2}+\frac{\eps n^2}{\Delta})$ w.h.p.
\end{proof}

\paragraph{Case 2} We assume that the edge update $(u,v)$ is of the form $u\in V_S, v\in V_D$. Let $v$ be in almost-clique $C$. 

\begin{enumerate}
    \item \underline{$(u,v)$ is an edge deletion:} Vertex $v$ (resp. $u$) is removed from $N_D(u)$ (resp. $N_S(v)$). The counter $T_C(c(u))$ is decremented by $|N_C(u)|$. Vertex $v$ is removed from $N_C(u)$. 
    \item \underline{$(u,v)$ is an edge insertion:} Vertex $v$ (resp. $u$) is added to $N_D(u)$ (resp. $N_S(v)$). The counter $T_C(c(u))$ is incremented by $|N_C(u)|$. Vertex $v$ is added to $N_C(u)$. If $c(u)=c(v)$, we recolor vertex $v$ as follows. 
    
    If $v$ is an endpoint of a matched non-edge $(v,w)$ the subroutine \textsc{Recolor-Non-Edge}$(v,w)$ is invoked. Let $c$ denote the color assigned to $v,w$. Thereafter, color $c(u)$ is added to $\mathcal{R}$ and $c$ is removed from $\mathcal{R}$. If an edge $(y, c)$ if present in the perfect matching $\mathcal{M}_P$, it is deleted, $c(y)$ is set to $\perp$ and $\textsc{Match}(y)$ is invoked. 
    
    On the other hand, if $v$ is an endpoint of an edge in the perfect matching $\mathcal{M}_P$, edge $(v,c)\in \mathcal{M}_P$ is deleted, $c(v)$ is set to $\perp$ and $\textsc{Match}(v)$ is invoked.
\end{enumerate}

\begin{lemma}\label{lemma: c2}
    For Case 2, \textsc{Handle-Update}$(u,v)$ takes $\ti(\frac{n}{\Delta}+\eps n)$ update time w.h.p.
\end{lemma}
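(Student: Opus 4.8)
The plan is to split on the type of the update $(u,v)$, and for insertions on how $v$ sits inside its almost-clique $C$. If $(u,v)$ is a deletion, no edge can become monochromatic, so the only work is the $O(1)$ bookkeeping in the algorithm: updating $N_D(u)$, $N_S(v)$ and $N_C(u)$, and adjusting the \emph{single} counter $T_C(c(u))$. Unlike Case~1 (Lemma~\ref{lemma: c1}), here the sparse endpoint $u$ is never recolored, so we touch the counters of only the one clique $C$ rather than all $O(n/\Delta)$ of them; this is precisely why Case~2 is so much cheaper.

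For an insertion, the same $O(1)$ bookkeeping is performed first, and recoloring is triggered only when $c(u)=c(v)$, in which case it is always the dense endpoint $v$ that is recolored (the sparse coloring being maintained independently of the dense side). Exactly one of two situations occurs. Either $v$ is an endpoint of a matched non-edge $(v,w)\in\mathcal{M}_N$: we call \textsc{Recolor-Non-Edge}$(v,w)$, which by Lemma~\ref{lemma: recolornonedge} costs $\ti(n/\Delta)$, leaves $\mathcal{L}$ unchanged, returns the color $c(u)$ to $\mathcal{R}$ and removes some color $c$ from $\mathcal{R}$; since $\mathcal{M}_P$ is a matching, at most one vertex $y\in\mathcal{L}$ had $c(y)=c$, so we unmatch $y$ and invoke \textsc{Match}$(y)$. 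Or $v$ itself lies in $\mathcal{L}$ and is matched in $\mathcal{M}_P$: we unmatch $v$ and invoke \textsc{Match}$(v)$. In both situations the only costs beyond the $O(1)$ bookkeeping are at most one call to \textsc{Recolor-Non-Edge} and one call to \textsc{Match}; the latter dispatches to \textsc{Random-Match}, \textsc{Match-Large}, or \textsc{Match-Small} and hence costs $\ti(n/\Delta)$, $\ti(n/\Delta)$, or $\ti(\eps n)$ by Lemmas~\ref{lemma: random-match}, \ref{lemma: match-large}, and \ref{lem:small-cliques} respectively. Adding up yields $\ti(n/\Delta+\eps n)$, and a union bound over the $O(1)$ randomized subroutine calls made per update (and over the $n^{O(1)}$ updates in the sequence) gives the high-probability guarantee.

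The step I expect to be the real obstacle is checking that the recoloring does not cascade beyond that single \textsc{Match} call. Two observations suffice. First, when \textsc{Recolor-Non-Edge} changes the non-edge's color from $c(u)$ to $c$, the only color that \emph{leaves} $\mathcal{R}$ is $c$, and it can orphan at most one vertex of $\mathcal{L}$ (again because $\mathcal{M}_P$ is a matching), whereas the color $c(u)$ that \emph{enters} $\mathcal{R}$ merely becomes an unused available color and needs no repair, since we only require $\mathcal{M}_P$ to saturate $\mathcal{L}$, not all of $\mathcal{R}$. Second, each of \textsc{Random-Match}, \textsc{Match-Large}, and \textsc{Match-Small} assigns every vertex it recolors a color that it first verifies feasible against \emph{all} neighbors of that vertex, using the lists $L(\cdot)$ and $L_D(\cdot)$; hence no neighbor outside $C$ — dense or sparse — is ever violated, and nothing further needs to be fixed. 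With these in hand the running-time bound is just the sum of the costs of the constantly many operations listed above.
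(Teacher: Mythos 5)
Your proof is correct and follows the same approach as the paper: deletions cost $O(1)$, and an insertion with $c(u)=c(v)$ triggers at most one \textsc{Recolor-Non-Edge} call and one \textsc{Match} call, bounded by Lemmas~\ref{lemma: recolornonedge}, \ref{lemma: random-match}, \ref{lemma: match-large}, and~\ref{lem:small-cliques}, giving $\ti(\tfrac{n}{\Delta}+\eps n)$ overall. The only slight imprecision is the aside about why Case~2 is cheaper than Case~1 — the dominant savings come from not recoloring the sparse endpoint (and hence avoiding the cascade of up to $O(n/\Delta)$ dense-vertex recolorings), not primarily from the counter updates — but this commentary does not affect the correctness of the bound.
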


\begin{proof}
    If the update is an edge deletion, only $O(1)$ update time is incurred. If the update is an edge insertion, the total time taken is at most $\ti(\frac{n}{\Delta}+\eps n)$ time by Lemmas \ref{lemma: recolornonedge}, \ref{lemma: match-large} and \ref{lem:small-cliques} w.h.p.
\end{proof}

\paragraph{Case 3} We assume that the edge update $(u,v)$ is of the form $u,v\in V_D$. Let $v$ be in almost-clique $C$.

\begin{enumerate}
    \item \underline{$(u,v)$ is an edge deletion:} There are two sub-cases.
    \begin{enumerate}
        \item $u\in C$: Vertex $v$ (resp. $u$) is removed from $N_D(u)$ (resp. $N_D(v)$) and $N_C(u)$ (resp. $N_C(v)$). Algorithm \textsc{Update-Non-Edges}$(u,v)$ is invoked and sets $(L_O, L_I, M)$ are returned, where $|M|\leq 1$, $L_O=\emptyset$ (since no vertices which are endpoints of matched non-edges get unmatched), and $|L_I|=2|M|$. For $w\in L_I$, edge $(w,c)\in \mathcal{M}_P$ is deleted and $w$ is removed from $\mathcal{L}$. Thereafter, \textsc{Recolor-Non-Edge}$(w,x)$ is invoked, where $(w,x)$ is a newly matched non-edge in $M$. Let $c'$ denote the color assigned to $w,x$. The color $c'$ is removed from $\mathcal{R}$ and if there exists an edge $(y,c')\in \mathcal{M}_P$, it is removed from $\mathcal{M}_P$, $c(y)$ is set to $\perp$, and $\textsc{Match}(y)$ is invoked.  
        
        \item $u\notin C$: Vertex $v$ (resp. $u$) is removed from $N_D(u)$ (resp. $N_D(v)$). Vertex $v$ is removed from $N_C(u)$, and $u$ is removed from $N_{C'}(v)$ where $C'$ is the almost-clique of $v$.
    \end{enumerate}

    \item \underline{$(u,v)$ is an edge insertion:} There are two sub-cases.
    \begin{enumerate}
        \item $u\in C$: Vertex $v$ (resp. $u$) is added to $N_D(u)$ (resp. $N_D(v)$) and $N_C(u)$ (resp. $N_C(v)$). Algorithm \textsc{Update-Non-Edges}$(u,v)$ is invoked and sets $(L_O, L_I, M)$ are returned, where $|M|\leq 2$, $|L_O|\leq 2$ and $|L_I|=2|M|$. First, for all $w\in L_I \cap \mathcal{L}$, edge $(w,c')$ is removed from $\mathcal{M}_P$, and $w$ is removed from $\mathcal{L}$. Thereafter, \textsc{Recolor-Non-Edge} is invoked on all edges $(w,x)\in M$. 
        Let $c$ denote the color assigned to $w,x$ for any non-edge $(w,x)\in M$ by \textsc{Recolor-Non-Edge}. The color $c$ is removed from $\mathcal{R}$ and if there exists an edge $(y,c)\in \mathcal{M}_P$, it is removed from $\mathcal{M}_P$, $c(y)$ is set to $\perp$, and $\textsc{Match}(y)$ is invoked.
        
        \item $u\notin C$: Vertex $v$ (resp. $u$) is added to $N_D(u)$ (resp. $N_D(v)$). Vertex $v$ is added to $N_C(u)$, and $u$ is added to $N_{C'}(v)$ where $C'$ is the almost-clique of $v$. If $c(u)=c(v)$, we recolor vertex $v$ similarly to Case 2. If $v$ is an endpoint of a matched non-edge $(v,w)$ the subroutine \textsc{Recolor-Non-Edge}$(v,w)$ is invoked. Let $c$ denote the color assigned to $v,w$. Thereafter, $c(u)$ is added to $\mathcal{R}$ and $c$ is removed from $\mathcal{R}$. If an edge $(y, c(v))$ if present in the perfect matching $\mathcal{M}_P$, it is deleted, $c(y)$ is set to $\perp$ and $\textsc{Match}(y)$ is invoked. 
        
        On the other hand, if $v$ is an endpoint of an edge in the perfect matching $\mathcal{M}_P$, edge $(v,c)\in \mathcal{M}_P$ is deleted, $c(v)$ is set to $\perp$ and $\textsc{Match}(v)$ is invoked.
    \end{enumerate}
\end{enumerate}

\begin{lemma}\label{lemma: c3}
    For Case 3, \textsc{Handle-Update}$(u,v)$ takes $\ti(\frac{n}{\Delta}+\eps n)$ update time w.h.p.
\end{lemma}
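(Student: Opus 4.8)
The plan is to go through the four sub-cases of Case~3 in turn, observe that each one triggers only a constant number of calls to the recoloring subroutines already analyzed, and add up the costs. First I would dispose of the two cheap sub-cases with $u\notin C$: an edge deletion here only updates $N_D(\cdot)$, $N_C(\cdot)$, $N_{C'}(\cdot)$ and the counter $T_C(\cdot)$, costing $O(1)$; an edge insertion with $c(u)=c(v)$ behaves exactly as in Case~2, so by Lemma~\ref{lemma: c2} it costs $\ti(n/\Delta+\eps n)$, dominated by one \textsc{Recolor-Non-Edge} call (Lemma~\ref{lemma: recolornonedge}) and one \textsc{Match} call (Lemmas~\ref{lemma: random-match}, \ref{lemma: match-large}, \ref{lem:small-cliques}).

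Next I would treat the two sub-cases with $u\in C$, which are the substantive ones. The first step there is the call to \textsc{Update-Non-Edges}$(u,v)$, which by Lemma~\ref{lemma: update-non-edges} runs in $O(\eps\Delta)=O(\eps n)$ time and returns $(L_O,L_I,M)$ with $|M|\le 2$, $|L_I|=2|M|\le 4$, and $|L_O|\le 2$ (with $L_O=\emptyset$ for deletions). The key structural point I would stress is that the recoloring induced by the update is confined to a constant number of dense vertices: the vertices of $L_I$ leave $\mathcal{L}$ and become endpoints of the at most two newly matched non-edges in $M$, each of which is recolored by a single \textsc{Recolor-Non-Edge} call costing $\ti(n/\Delta)$ by Lemma~\ref{lemma: recolornonedge}; each such recoloring can collide in $\mathcal{M}_P$ with at most one already-matched vertex $y\in\mathcal{L}$, forcing one \textsc{Match}$(y)$ call; and each of the at most two $L_O$ vertices that leaves the non-edge matching must be matched into $\mathcal{R}$ by one \textsc{Match} call. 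Hence at most $|M|+|L_O|\le 4$ calls to \textsc{Match} are made, each costing $\ti(n/\Delta+\eps n)$ by Lemmas~\ref{lemma: random-match}, \ref{lemma: match-large} and \ref{lem:small-cliques}. Summing these constantly many operations gives total work $O(1)+O(\eps n)+O(1)\cdot\ti(n/\Delta)+O(1)\cdot\ti(n/\Delta+\eps n)=\ti(n/\Delta+\eps n)$ in every sub-case, and since only $O(1)$ probabilistic subroutines are invoked, each meeting its bound with high probability, a union bound over their failure events yields the stated high-probability bound.

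The part I expect to require the most care is not the arithmetic but the argument that the recoloring does not cascade. Concretely, I would argue: (i) because both endpoints of the update are dense, no sparse vertex is recolored, so the $\Theta(n/\Delta)$-sized lists $L_D(c)$ are never scanned the way they are in Case~1 --- this is what separates the $\ti(n/\Delta+\eps n)$ bound here from the larger $\ti(n^2/\Delta^2+\eps n^2/\Delta)$ bound of Lemma~\ref{lemma: c1}; (ii) a \textsc{Recolor-Non-Edge} call assigns a color checked only against neighbors outside $C$ and other matched non-edges, so the only new conflict it can create is with the unique $y\in\mathcal{L}$ matched to that color in $\mathcal{M}_P$; and (iii) a \textsc{Match} call only rewires a single short augmenting path of $\glr$, whose cost (including any internal swaps of vertices of $\mathcal{L}$) is already internalized in Lemmas~\ref{lemma: match-large} and \ref{lem:small-cliques}, so it spawns no further recoloring. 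Making (i)--(iii) precise closes the recursion at depth one, after which the bound follows by the accounting above.
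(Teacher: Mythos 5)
Your proposal is correct and follows essentially the same approach as the paper's (much terser) proof: dispose of the cheap $u\notin C$ sub-cases, then observe that for $u\in C$ the call to \textsc{Update-Non-Edges} costs $O(\eps\Delta)$ by Lemma~\ref{lemma: update-non-edges} and triggers only a constant number of \textsc{Recolor-Non-Edge} and \textsc{Match} invocations, each bounded by Lemmas~\ref{lemma: recolornonedge}, \ref{lemma: match-large}, \ref{lem:small-cliques}, for a total of $\ti(n/\Delta+\eps n)$. Your points (i)--(iii) on why the recoloring closes at depth one make explicit what the paper leaves implicit in its ``at most $O(1)$ invocations'' sentence, and your accounting (including the $L_O$ vertices) is consistent with the pseudo-description of \textsc{Handle-Update}.
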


\begin{proof}
    If $u\notin C$ and $(u,v)$ is an edge deletion, the update time is $O(1)$. If $u\notin C$ and $(u,v)$ is an edge insertion, the update time is $\ti(\frac{n}{\Delta}+\eps n)$ by Lemmas \ref{lemma: recolornonedge}, \ref{lemma: match-large} and \ref{lem:small-cliques} w.h.p.

    If $(u,v)$ is an edge insertion or deletion, and $u\in C$, the call to $\textsc{Update-Non-Edges}$ takes $O(\eps\Delta)$ time w.h.p. by Lemma \ref{lemma: update-non-edges}. There are at most $O(1)$ invocations of subroutines $\textsc{Recolor-Non-Edge}$ and $\textsc{Match}$ thereafter, which takes $\ti(\frac{n}{\Delta}+\eps n)$ time in total w.h.p. by Lemmas \ref{lemma: recolornonedge}, \ref{lemma: match-large} and \ref{lem:small-cliques}. 

    Thus, for Case 3, the total update time is bounded by $\ti(\frac{n}{\Delta}+\eps n)$ w.h.p.
\end{proof}

We conclude by giving a proof of Theorem \ref{thm:mainthm}.
\mainthm*
\begin{proof}
    By Lemma \ref{lemma: updatetimereinitializationsubroutine}, the amortized update time of the \textsc{Initialization} subroutine of our algorithm is $\ti(\frac{n^2}{\eps^2\Delta^2}+\eps n + \frac{1}{\eps^4})$ while the time taken by the \textsc{Handle-Update} subroutine of our algorithm is $\ti(\frac{n}{\eps^2\Delta}+\frac{n^2}{\Delta^2}+\frac{\eps n^2}{\Delta}+\frac{n}{\Delta}+\eps n)=\ti(\frac{n}{\eps^2\Delta}+\frac{n^2}{\Delta^2}+\frac{\eps n^2}{\Delta})$ by Lemmas \ref{lemma: c1}, \ref{lemma: c2} and \ref{lemma: c3}.

    Thus, all in all our algorithm takes $\ti(\frac{n^2}{\eps^2\Delta^2}+\frac{1}{\eps^4}+\frac{\eps n^2}{\Delta})$ since $\eps<1$ and $n\geq \Delta$.
    
    Setting $\eps=\frac{\Delta^{1/5}}{n^{2/5}}$ yields an amortized update time of $\ti(\frac{n^{8/5}}{\Delta^{4/5}} + \frac{n^{14/5}}{\Delta^{12/5}})=\ti(\frac{n^{8/5}}{\Delta^{4/5}})$ for our algorithm.
    
    When $\Delta \leq n^{8/9}$, the naive algorithm which scans all neighbors of any vertex after any edge update to find a feasible color takes $O(n^{8/9})$ time. 
    
    On the other hand, when $\Delta > n^{8/9}$, our algorithm with $\eps=\frac{\Delta^{1/5}}{n^{2/5}}$ takes $\ti(n^{8/9})$ update time. Note that the conditions on $\eps$ in various Lemmas and in particular Lemma \ref{lemma: nonedgematchingsizephase} are satisfied for $\Delta$ sufficiently large, completing the proof.
\end{proof}

\section{Fully Dynamic Algorithm for Sparse-Dense Decomposition}\label{sec: fdalgorithmdecomposition}
We recall Theorem \ref{fd-decomposition}.
\fddecomposition*
The rest of this section is devoted to proving Theorem \ref{fd-decomposition}. We first start with an informal overview of the algorithm in Section~\ref{sec: tech-decomposition}, and formalize it in the forthcoming sections.

\subsection{High Level Overview of Our Algorithm}\label{sec: tech-decomposition}
We give an informal overview of our sparse-dense decomposition in this section.

Our algorithm keeps track of the sparsity of vertices and `friend-ness' of edges under edge updates. We give simple subroutines which rely on random sampling to determine whether an edge is an $\eps$-friend edge or a vertex is $\eps$-dense. By utilizing standard concentration arguments, we give high probability bounds on their correctness. A major observation which is exploited by our algorithm is that \textit{sparsity} of a vertex $v$ is fairly \textit{insensitive} to \textit{edge updates}. In other words, sparsity of a vertex $v$ changes noticeably if $v$ or its neighbors are incident to a sufficiently large number of edge updates. For example, suppose $v$ is $\eps$-dense at some point. Then, for $v$ to become $2\eps$-sparse, it must lose at least $\eps\Delta$ of its $\eps$-friends. This can happen via the following types of updates. Firstly, an edge deletion $(u,v)$ where $u$ is an $\epsilon$-friend of $v$ decreases the number of $\eps$-friends of $v$ by 1. Secondly, an $\eps$-friend $u$ of $v$ might lose $\eps$-`friend-ness' if it becomes a $2\eps$-friend. This happens when the number of common neighbors of $u$ and $v$ decreases by at least $\Omega(\eps\Delta)$. Thus, for $\eps\Delta$ friends of $v$ which were previously $\eps$-friends and are now $2\eps$-friends, there must be at least $\epsilon^2\Delta^2$ edge updates in the neighborhood of $v$. In either case, by assigning a credit of $\ti(\frac{1}{\eps^4})$ to endpoints of every updated edge, we show that $v$ has a credit of $\ti(\frac{\Delta}{\eps^2})$ available after it becomes $2\eps$-sparse. Since the source of all credits is the credit that is assigned to endpoints of an updated edge, some of this credit on $v$ may be contributed by its neighbors. Our algorithm ensures that whenever a vertex accumulates a credit of $\ti(\frac{\Delta}{\eps^2})$, it is used to update various data structures, recompute `friend-ness' of incident edges and leave a credit on $v's$ neighbors. Our overall charging scheme is intricate, and involves a careful analysis (see Section \ref{sec: fdalgtomaintainfriendedges} for details).

In addition to maintaining the list of sparse and dense vertices $V_S$ and $V_D$ respectively, we also require a \textit{fully dynamic} decomposition of $V_D$ into almost cliques $(C_1, C_2,...,C_{\ell})$. Additionally, we want all almost-cliques to satisfy desirable properties at all times such as size bounds and small adjustment complexity after any edge update. Instead of using a fully dynamic connectivity algorithm as a black-box to maintain these almost-cliques induced by friend edges, which could blow up the adjustment complexity in general, we give an alternative approach to maintain almost-cliques which ensures low adjustment complexity. 

At a high level, our algorithm works as follows. Let us first consider the case when a vertex $v$ becomes $\eps$-dense. If $v$ is the first vertex in its neighborhood to become $\eps$-dense, we create an almost-clique containing $v$ and its $(1-\eps)\Delta$, $\eps$-friends. However, if $v$ already has an $\eps$-friend $u$ in $V_D$, our algorithm moves all $\epsilon$-friends of $v$ (which could be in $V_S$) together with $v$, to $u$'s almost-clique. Next, consider when a vertex becomes $\Theta(\eps)$-sparse. In this case, $v$ is moved to $V_S$ and data structures are updated. However, over a period of time such moves to $V_S$ may cause almost-cliques to violate the desired lower bound of $\Omega(\Delta)$. To get around this, we maintain a size invariant (in addition to other key invariants--see Section \ref{sec: fullydynamicdecomposition}). Whenever an almost-clique shrinks beyond a certain threshold size, our algorithm \textit{collapses} the whole clique and handles sufficiently dense vertices in the collapsed clique individually. Such vertices can potentially join other almost-cliques thereafter. Our charging argument shows that a total credit of $\Theta(\Delta)$ is available for a vertex when it moves to $V_S$ or $V_D$. Additionally, when an almost-clique collapses, a total credit of $\Theta(\Delta^2)$ is available.

All in all, we exploit various sparsity properties and the structure of almost-cliques, together with intricate charging arguments to obtain $\ti(\frac{1}{\eps^4})$ update time. Our algorithm ensures that the desired properties of the sparse-dense decomposition are maintained at any given point in time, which allows us to efficiently maintain a $(\Delta+1)$-coloring on top of it.

\subsection{Subroutines: Friend Edges and Dense Vertices}
Before providing the full algorithm, we describe some subroutines which we will utilize extensively in our main algorithm. Throughout this section, let $\varepsilon=\varepsilon'+\tau$ for a small constant $\tau>0$ where $\tau\leq\varepsilon'$ is an accuracy parameter that will be set later to enable desired bounds on the running time. Note that by Definition \ref{defn: friendedge}, an $\varepsilon'$-friend edge is also a $\varepsilon$-friend edge.

\subsubsection{Determining Friend Edges} For any vertex $v$, we maintain a set $N_{\varepsilon}(v)$ such that any vertex $u$ is in $N_{\varepsilon}(v)$ if and only if $(u,v)$ is an $\varepsilon$-friend edge with high probability. We give a randomized subroutine, \textsc{Determine-Friend}, which on input a pair $(u,v)$ and parameters $\varepsilon>0, \tau>0$ determines whether $u$ and $v$ share at least $(1-\varepsilon + \tau)\Delta$ common neighbors, with high probability. Subsequently, the lists $N_{\varepsilon}(u)$ and $N_{\varepsilon}(v)$ are updated. 
\begin{algorithm}[h]
\caption{\textsc{Determine-Friend}$((u,v), \varepsilon,\tau)$}
\begin{algorithmic}[1]
\State $\varepsilon'\leftarrow \varepsilon-\tau$.
\State Sample $k$ neighbors $w_{1}, w_{2},...,w_{k}$ uniformly and independently at random from $N(u)$.
\State Set $Z_j=1$ if $w_{j}\in N(u)\cap N(v)$, and $0$ otherwise.
\If {$T=\frac{\Delta}{k}\sum_{j=1}^k Z_j \geq (1-(\varepsilon-\frac{\tau}{2}))\Delta$}
    \State $N_{\varepsilon}(v)\leftarrow N_{\varepsilon}(v)\cup \{u\}$, $N_{\varepsilon}(u)\leftarrow N_{\varepsilon}(u)\cup \{v\}$.
\Else 
    \State $N_{\varepsilon}(v)\leftarrow N_{\varepsilon}(v)\backslash \{u\}$, $N_{\varepsilon}(u)\leftarrow N_{\varepsilon}(u)\backslash \{v\}$.
\EndIf 
\end{algorithmic}
\end{algorithm}

\begin{lemma}\label{lemma: is-friend}
  Given edge $(u,v)$ and parameters $\varepsilon>0, \tau>0$ as input, $\mathtt{Determine}$-$\mathtt{Friend}$ satisfies the following with probability at least $1-\frac{1}{n^c}$, whenever $k\geq \frac{12c\ln n}{\tau^2}=\Omega(\frac{\ln n}{\tau^2})$, where $c>0$ is an arbitrarily large constant:
  \begin{itemize}
      \item[{\bf (a)}] If $(u,v)$ is a $\varepsilon'$-friend edge, i.e. $|N(u)\cap N(v)|\geq (1-\varepsilon')\Delta$, then $u$ (resp., $v$) is added to $N_{\varepsilon}(v)$ (resp., $N_{\varepsilon}(u)$). 
      \item[{\bf (b)}] If $(u,v)$ is not a $\varepsilon$-friend edge, i.e. $|N(u)\cap N(v)|<(1-\varepsilon)\Delta$, then $u$ (resp., $v$) is removed from $N_{\varepsilon}(v)$ (resp., $N_{\varepsilon}(u)$).
  \end{itemize}  
\end{lemma}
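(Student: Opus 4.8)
The plan is a single Chernoff/Hoeffding argument; the only real care needed is the arithmetic relating $\varepsilon$, $\varepsilon'=\varepsilon-\tau$, and the test threshold $1-(\varepsilon-\tfrac{\tau}{2})$. First I would pin down the expectation of the estimator $T$. Writing $S=\sum_{j=1}^{k}Z_j$, the $Z_j$ are i.i.d.\ indicators (note $w_j\in N(u)$ always, so $Z_j=1$ iff $w_j\in N(v)$), and — thinking of the $w_j$ as drawn uniformly from a ground set of size exactly $\Delta$, obtained by padding $N(u)$ with dummy vertices that are never neighbours of $v$ (equivalently, using the precondition that \textsc{Determine-Friend} is only invoked when $d(u)$ is within a $(1\pm o(\tau))$ factor of $\Delta$) — one gets $\mathbb{E}[Z_j]=|N(u)\cap N(v)|/\Delta$, hence $\mathbb{E}[T]=\tfrac{\Delta}{k}\mathbb{E}[S]=|N(u)\cap N(v)|$.

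Next I would apply Hoeffding's inequality to $S$: since the $Z_j\in\{0,1\}$ are independent,
\[
\Pr\!\left[\,\bigl|\,T-|N(u)\cap N(v)|\,\bigr|\ \ge\ \tfrac{\tau}{2}\Delta\,\right]
\ =\ \Pr\!\left[\,\bigl|\,S-\mathbb{E}[S]\,\bigr|\ \ge\ \tfrac{\tau}{2}k\,\right]
\ \le\ 2\exp\!\left(-\tfrac{\tau^2 k}{2}\right).
\]
With $k\ge \tfrac{12c\ln n}{\tau^{2}}$ this is at most $2n^{-6c}\le n^{-c}$; let $\mathcal{E}$ be the complementary event, so on $\mathcal{E}$ the estimator $T$ approximates $|N(u)\cap N(v)|$ within an additive $\tfrac{\tau}{2}\Delta$, and $\Pr[\mathcal{E}]\ge 1-n^{-c}$. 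It then remains to combine $\mathcal{E}$ with the promise in each case, using that the test threshold $(1-(\varepsilon-\tfrac{\tau}{2}))\Delta=(1-\varepsilon'-\tfrac{\tau}{2})\Delta$ sits exactly halfway (at scale $\tfrac{\tau}{2}\Delta$) between $(1-\varepsilon)\Delta=(1-\varepsilon'-\tau)\Delta$ and $(1-\varepsilon')\Delta$. For (a): if $|N(u)\cap N(v)|\ge(1-\varepsilon')\Delta$, then on $\mathcal{E}$ we have $T>(1-\varepsilon')\Delta-\tfrac{\tau}{2}\Delta=(1-\varepsilon'-\tfrac{\tau}{2})\Delta$, so the test passes and $u,v$ are added to each other's friend lists. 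For (b): if $|N(u)\cap N(v)|<(1-\varepsilon)\Delta$, then on $\mathcal{E}$ we have $T<(1-\varepsilon'-\tau)\Delta+\tfrac{\tau}{2}\Delta=(1-\varepsilon'-\tfrac{\tau}{2})\Delta$, so the test fails and $u,v$ are removed. Only one of the two events is relevant for a given input $(u,v)$, so no further union bound is needed and the failure probability is $\le n^{-c}$ in each case.

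The single subtle point — and the one I would be careful about — is the normalization of the estimator: the algorithm samples from $N(u)$, which may have fewer than $\Delta$ vertices, yet rescales by $\Delta$. For direction (a) this only helps (the estimator, if anything, overshoots), but for direction (b) an adversarially small $d(u)$ with $N(u)\subseteq N(v)$ would force $T=\Delta$ regardless of $|N(u)\cap N(v)|$; so the clean expectation $\mathbb{E}[T]=|N(u)\cap N(v)|$ relies on either the padding convention above or on the surrounding algorithm invoking \textsc{Determine-Friend} only for vertices of near-maximum degree (low-degree vertices being short-circuited to ``sparse'' using the separately maintained degree counts). Granting that, everything else is routine concentration bookkeeping, and the lemma follows.
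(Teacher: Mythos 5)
Your proof is correct and reaches the same conclusion by essentially the same route (concentration on the estimator $T$ plus a threshold argument), though you swap the paper's multiplicative Chernoff bound for the two-sided additive Hoeffding bound, which is cleaner since it avoids the paper's awkward device of plugging the boundary value $(1-\varepsilon)\Delta$ into $E[Z]$ in case (b) where only an inequality $|N(u)\cap N(v)|<(1-\varepsilon)\Delta$ is available. The threshold arithmetic and the resulting exponent $\exp(-\tau^2 k/2)$ vs.\ the paper's $\exp(-\tau^2 k/(12(1-\varepsilon)))$ are both comfortably absorbed by $k\ge 12c\ln n/\tau^2$, so both yield the claimed $1-n^{-c}$ guarantee.

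The one place where you go beyond the paper --- and correctly so --- is the normalization issue: the paper asserts $E[Z_j]=\Pr[w_j\in N(u)\cap N(v)]=|N(u)\cap N(v)|/\Delta$ without comment, but this only holds if $w_j$ is effectively drawn from a ground set of size exactly $\Delta$. Since the algorithm samples $w_j$ uniformly from $N(u)$, which may have fewer than $\Delta$ elements, your adversarial example ($|N(u)|$ small and $N(u)\subseteq N(v)$ forces $T=\Delta$) shows that direction (b) genuinely fails without a padding convention or a degree precondition. The paper leaves this implicit; your fix (pad $N(u)$ to size $\Delta$ with vertices never counted as common neighbors, so that $E[T]=|N(u)\cap N(v)|$ exactly) is the right reading, and it is worth noting that the surrounding usage in \textsc{Maintain-Friends} only ever relies on membership in $N_\varepsilon(v)$ as a certificate of friendship, so the padding convention is harmless to the rest of the analysis. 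In short: correct proof, a marginally cleaner inequality, and a legitimate subtlety the paper's own proof silently assumes away.
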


\noindent\begin{proof}
    We begin by noting that $E[Z_j]=\Pr[w_j\in N(u)\cap N(v)]=\frac{1}{\Delta}|N(u)\cap N(v)|$. Let $Z=\sum_{j=1}^k E[Z_j]$. Thus, $E[Z]=\sum_{j=1}^k E[Z_i]=\frac{k}{\Delta}|N(u)\cap N(v)|$. Moreover, $T=\frac{\Delta}{k}Z$ and by linearity of expectation, $E[T]=\frac{\Delta}{k}E[Z]=|N(u)\cap N(v)|$. 
    
    Let $E_1$ denote the event that $\textsc{Determine-Friend}$ removes $u$ (resp. $v$) from $N_{\varepsilon}(v)$ (resp. $N_{\varepsilon}(u)$) when $|N(u)\cap N(v)|\geq (1-\varepsilon')\Delta$. Let $E_2$ denote the event that $\textsc{Determine-Friend}$ adds $u$ (resp. $v$) to $N_{\varepsilon}(v)$) (resp. $N_{\varepsilon}(u)$) when $|N(u)\cap N(v)|\leq (1-\varepsilon'-\tau)\Delta=(1-\varepsilon)\Delta$.  We give a lower bound on $k$ such that the probability is upper bounded by $\frac{1}{n^c}$.
    
   We establish part (b) first.  By Lines~4--7 of the algorithm, note that $\Pr[E_1]\leq \Pr[T\geq (1-\varepsilon'-\frac{\tau}{2})\Delta]$ (assuming that $|N(u)\cap N(v)| < (1-\varepsilon)\Delta]$). For any $\delta>0$, we have
    \begin{align*}
    \Pr[T-E[T]\geq \delta E[T]] =\Pr\left[\frac{\Delta}{k}Z-\frac{\Delta}{k}E[Z] \geq \frac{\delta\Delta}{k}E[Z]\right]
    =\Pr[Z-E[Z]\geq \delta E[Z]]
    \end{align*}
    Applying a standard Chernoff bound for i.i.d. Bernoulli random variables (see \cite{mitzenmacher2017probability}), the above probability is at most $e^{-\frac{\delta^2E[Z]}{3}}$. Setting $\delta=\frac{\tau}{2(1-\varepsilon)}$ and noting that $E[Z]=\frac{k}{\Delta}(1-\varepsilon)\Delta=k(1-\varepsilon)$, we obtain
    \[
        \Pr[T-E[T]\geq \delta E[T]] = \Pr[Z-E[Z]\geq \delta E[Z]]
                         \leq e^{-\frac{\tau^2}{4(1-\varepsilon)^2}\cdot\frac{k(1-\varepsilon)}{3}} 
                       = e^{-\frac{\tau^2k}{12(1-\varepsilon)}}.
                                \]
    By setting $k\geq \frac{12c(1-\varepsilon)\ln n}{\tau^2}$, we conclude that $\Pr[\neg E_1]\leq \frac{1}{n^c}$, thus completing the proof for part (b). 
    
    Similarly, we establish part (a) of the lemma.  Suppose 
    $|N(u)\cap N(v)|\ge(1-\varepsilon')\Delta$.  By Lines~4--7, $E_2$ happens exactly when $T\leq (1-\varepsilon'-\frac{\tau}{2})\Delta$, which is equivalent to $T \leq (1 + \delta) E[T]$ for $\delta = \frac{\tau}{2(1-\varepsilon')}$.  Again, by applying a standard Chernoff bound for i.i.d.\ Bernoulli random variables, we derive
    \[
        \Pr[E_2] = \Pr[T\leq (1+\delta)E[T]] = \Pr[Z\leq(1+\delta)E[Z]]
                                \leq e^{-\frac{\tau'^2}{4(1-\varepsilon')^2}\cdot \frac{k(1-\varepsilon')}{3}} = e^{-\frac{\tau'^2k}{12(1-\varepsilon')}}.
    \]
    By setting $k\geq \max\left\{\frac{12c(1-\varepsilon)\ln n}{\tau^2},\frac{12c(1-\varepsilon')\ln n}{\tau^2}\right\}$, we obtain $\Pr[E_2]\leq \frac{1}{n^c}$. Setting $k=\frac{12c\ln n}{\tau^2}$, we obtain the desired upper bound on $\Pr[E_2]$, completing the proof of part (b).
\end{proof}

\subsubsection{Determining Dense Vertices}
For every vertex $v$, we maintain boolean variables is-dense($v,d)$ which is 1 if $v$ is $d$-dense and 0 otherwise for various values of $d$, specified later.  We give a subroutine $\textsc{Determine-Dense}$ which, given as input a $(\varepsilon-\tau)$-dense vertex $v\in V$ and parameters $\varepsilon>0, \tau>0$, determines if $v$ is $\varepsilon$-dense. It is implemented as follows: for each neighbor $u\in N(v)$, run $\textsc{Determine-Friend}((u,v),\varepsilon, \tau)$. If $|N_{\varepsilon}(v)|\geq (1-\varepsilon)\Delta$, is-dense$(v,\varepsilon)$ is set to 1 and 0 otherwise. We also maintain the set $V_{\varepsilon}$ of $\varepsilon$-dense vertices containing all vertices in $V$ for which is-dense$(v,\varepsilon)=1$, i.e. all vertices which are at most $\varepsilon$ dense.

\begin{algorithm}[h]
\caption{$\textsc{Determine-Dense}(v,\varepsilon,\tau)$}
\begin{algorithmic}[1]
    \For{each vertex $u\in N(v)$}
        \State Run $\textsc{Determine-Friend}((u,v),\varepsilon, \tau)$.
\EndFor 
    \If{$|N_{\varepsilon}(v)|\geq (1-\varepsilon)\Delta$}
        \State is-dense$(v,\varepsilon)=1$.
        \State $V_{\varepsilon}\leftarrow V_{\varepsilon}\cup \{v\}$.
    \Else 
        \State is-dense$(v,\varepsilon)=0$.
        \State $V_{\varepsilon}\leftarrow V_{\varepsilon}\backslash \{v\}$.
    \EndIf
\end{algorithmic}
\end{algorithm}

\begin{lemma}\label{lemma: densevert}
    Given an $(\varepsilon-\tau)$-dense vertex $v$ (hence, $\varepsilon$-dense by definition) and parameters $\varepsilon, \tau$ as input, \textsc{Determine-Dense} takes $O(\frac{\Delta\ln n}{\tau^2})$ time and correctly determines if $v$ is $\varepsilon$-dense 
     with probability at least $1-\frac{1}{n^{c-1}}$. Here, $1-\frac{1}{n^c}$ is the success probability of \textsc{Determine-Friend} on inputs $(u,v), \varepsilon$ and $\tau$ as in Lemma \ref{lemma: is-friend}.
\end{lemma}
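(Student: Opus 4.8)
The plan is to derive both the running-time bound and the correctness bound directly from Lemma~\ref{lemma: is-friend} together with a union bound; no genuinely new idea is needed beyond being careful about the one-sided guarantee of \textsc{Determine-Friend}.

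For the running time, I would observe that \textsc{Determine-Dense}$(v,\varepsilon,\tau)$ merely loops over the at most $\Delta$ neighbors $u\in N(v)$ and invokes \textsc{Determine-Friend}$((u,v),\varepsilon,\tau)$ once for each. By the choice $k=\Theta(c\ln n/\tau^2)$ inside \textsc{Determine-Friend}, a single such call samples $k$ neighbors of $u$, spends $O(1)$ per sample to decide membership in $N(u)\cap N(v)$ (using the adjacency data structures), and then performs the threshold comparison --- so $O(\ln n/\tau^2)$ per call and $O(\Delta\ln n/\tau^2)$ over all neighbors. The concluding step (testing $|N_\varepsilon(v)|\ge(1-\varepsilon)\Delta$, setting the flag \textup{is-dense}$(v,\varepsilon)$, and updating $V_\varepsilon$) costs $O(1)$ if $|N_\varepsilon(v)|$ is maintained as a running counter, and at worst $O(\Delta)$; either way it is absorbed, giving the claimed time bound.

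For correctness, I would first fix the good event $\mathcal{E}$ that \emph{every} one of these $\le\Delta$ calls to \textsc{Determine-Friend} behaves as in Lemma~\ref{lemma: is-friend}: it places $u$ into $N_\varepsilon(v)$ whenever $(u,v)$ is an $(\varepsilon-\tau)$-friend edge, and removes $u$ from $N_\varepsilon(v)$ whenever $(u,v)$ is not an $\varepsilon$-friend edge. Since a single call fails with probability at most $1/n^c$ and $|N(v)|\le\Delta\le n$, a union bound yields $\Pr[\neg\mathcal{E}]\le \Delta/n^c\le 1/n^{c-1}$. Conditioning on $\mathcal{E}$, I then invoke the input promise: since $v$ is $(\varepsilon-\tau)$-dense it has at least $(1-\varepsilon+\tau)\Delta$ $(\varepsilon-\tau)$-friends, each of which ends up in $N_\varepsilon(v)$, so $|N_\varepsilon(v)|\ge(1-\varepsilon+\tau)\Delta>(1-\varepsilon)\Delta$ and the subroutine sets \textup{is-dense}$(v,\varepsilon)=1$ and adds $v$ to $V_\varepsilon$ --- the correct verdict, as an $(\varepsilon-\tau)$-dense vertex is in particular $\varepsilon$-dense. (For robustness one can also note that on $\mathcal{E}$ we have $N_\varepsilon(v)\subseteq\{u:(u,v)\text{ is an }\varepsilon\text{-friend edge}\}$, so a vertex that is \emph{not} $\varepsilon$-dense would be classified correctly as well; under the stated promise this branch does not arise.)

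The only delicate point --- and what plays the role of the ``hard part'' in an otherwise routine argument --- is that \textsc{Determine-Friend} offers only a \emph{gapped}, one-sided guarantee: for pairs strictly between ``$(\varepsilon-\tau)$-friend'' and ``not $\varepsilon$-friend'' its output is unspecified. The $(\varepsilon-\tau)$-density assumption on $v$ is precisely what pushes every friend that must be counted onto the ``definitely included'' side of this gap, which is why the lemma requires that stronger hypothesis rather than mere $\varepsilon$-density. The second thing to watch is that the union bound runs over $\Theta(\Delta)$ events, which is what degrades the failure probability from $n^{-c}$ to $n^{-(c-1)}$, matching the statement.
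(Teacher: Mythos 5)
Your proof is correct and follows essentially the same route as the paper: a union bound over the (at most $\Delta \le n$) calls to \textsc{Determine-Friend}, followed by the observation that under the $(\varepsilon-\tau)$-density promise at least $(1-\varepsilon+\tau)\Delta$ neighbors land in $N_\varepsilon(v)$, which clears the threshold $(1-\varepsilon)\Delta$. Your explicit discussion of the gapped, one-sided guarantee of \textsc{Determine-Friend} --- and why the $(\varepsilon-\tau)$-density hypothesis is exactly what places every required friend on the ``definitely included'' side of that gap --- is a slightly more careful articulation of a point the paper treats implicitly, but the substance is the same.
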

\begin{proof}
    The running time of $\textsc{Determine-Friend}((u,v), \varepsilon,\tau)$ is $O(\frac{\ln n}{\tau^2})$. Thus, \textsc{Determine-Dense} takes $O(\frac{\Delta\ln n}{\tau^2})$ time.
    Note that any call to $\textsc{Determine-Friend}((u,v), \varepsilon,\tau)$ correctly determines whether $(u,v)$ is a $(\varepsilon-\tau)$-friend edge with probability at least $1-\frac{1}{n^c}$ by Lemma \ref{lemma: is-friend}. If $v$ has at least $(1-(\varepsilon-\tau))\Delta<\Delta\leq n$ such friends $u$, then the probability that any of these calls incorrectly determines friend status is at most $n\cdot\frac{1}{n^c}=\frac{1}{n^{c-1}}$. Thus, it follows that if $v$ is $(\varepsilon-\tau)$-dense, then $\textsc{Determine-Dense}(v,\varepsilon,\tau)$ is correct with probability least $1-\frac{1}{n^{c-1}}$. 
    
    By a similar argument, if $v$ is $\varepsilon$-sparse then for any edge $(u,v)$ which is not a $\varepsilon$-friend edge, the probability that $\textsc{Determine-Friend}((u,v), \varepsilon,\tau)$ correctly determines this is $1-\frac{1}{n^c}$ by Lemma \ref{lemma: is-friend}. The probability that this holds for all friend edges $(u,v)$ which are not $\varepsilon$ friend edges is at least $1-\frac{1}{n^{c-1}}$. Thus, if $v$ is $\varepsilon$-sparse, then \textsc{Determine-Dense}$(v,\varepsilon,\tau)$ sets is-dense$(v,\varepsilon)=0$. 
\end{proof}
\eat{
\begin{corollary}\label{corollary: notdense}
If a call to \textsc{Determine-Dense}$(v,\varepsilon, \tau)$ does not add $v$ to $V_{\varepsilon}$, then $v$ is $(\varepsilon-\tau)$-sparse with probability at least $1-\frac{1}{n^{c-1}}$.
\end{corollary}

\begin{proof}
  The proof follows from Lemma \ref{lemma: densevert} and the definition of dense vertices. If $v$ was $(\varepsilon-\tau)$-dense then $v$ is added to $V_{\varepsilon'}$ with probability at least $1-\frac{1}{n^{c-1}}$. Thus, if $v$ is not added to $V_{\varepsilon'}$ it is $(\varepsilon-\tau)$-sparse with high probability.
\end{proof}}

\subsubsection{Fully Dynamic Algorithm to Maintain Friend Edges}\label{sec: fdalgtomaintainfriendedges}
 Given the subroutines in the previous section for determining $\varepsilon$-friend edges and $\varepsilon$-dense vertices, we give a fully dynamic algorithm $\textsc{Maintain-Friends}((u,v),\varepsilon,\tau)$ which takes parameters $\tau>0, \varepsilon>0$ as input and, maintains for all vertices $v$ sets $N_{i\varepsilon}(v)$ for $i\in \{1,2,3\}$.  To simplify notation, we use $N_{i\varepsilon}(v)$ and $N_{i}(v)$ interchangeably.  By definition, $N_{3}(v)\subseteq N_{2}(v)\subseteq N_{1}(v)$. Our algorithm satisfies the following properties at any point in time. 
 
 \begin{enumerate}
     \item If $u\in N_{i\varepsilon}(v)$ for any $v\in V$, $i\in \{1,2,3\}$ then $(u,v)$ is a $(i\varepsilon+\tau)$ friend. 
     \item If is-dense($v,i\varepsilon)=1$ for any $v\in V,$ $i\in \{1,2,3\}$ then $v$ is $(i\varepsilon+\tau)$-dense. 
 \end{enumerate}

We now describe the high level ideas of our algorithm. For a vertex $v$, an edge update is a \textit{direct} update to $v$ if it is incident to $v$. Data structures for $v$ and neighbors $u\in N(v)$ are updated after every $\frac{\tau\Delta}{8}$ direct updates to $v$ by invoking subroutine \textsc{Update}$(v)$. An update is called a Type 1 update for $v$ if $\frac{\tau\Delta}{8}$ direct updates to $v$ have taken place since the last call to \textsc{Update}$(v)$. For all $v\in V$, counters direct$(v)$ and indirect$(v)$ are maintained, which count the number of direct and \textit{indirect} updates, defined below since the last call to \textsc{Update}$(v)$. Whenever direct$(v)=\frac{\tau\Delta}{8}$ (i.e. a Type 1 update for $v$), the procedure \textsc{Update}($v$) is called. After $\textsc{Update}(v)$ is executed, direct$(v)$ is reset to $0$ and counters indirect$(u)$ are incremented for all $u\in N(v)$. 

An update which leads to indirect$(u)$ being incremented as a result of a Type 1 update to some neighbor $v\in N(u)$ is called an \textit{indirect} update. Whenever indirect$(v)=\frac{\tau\Delta}{8}$ for any vertex $v$, procedure $\textsc{Update}(v)$ is called. Such an update is called a Type 2 update for $v$ (note that this is \textit{not} a direct update). Thereafter, indirect($v$) is reset to 0. 

For both Type 1 and Type 2 updates for any vertex $v$, $\textsc{Update}(v)$ is called. The crucial difference between the two is that, following a Type 2 update for $v$ the counters indirect$(u)$ are \textit{not} incremented for any $u\in N(v)$. On the other hand, following a Type 1 update for $v$, the counters indirect$(u)$ are incremented for all neighbors $u\in N(v)$. 

In the following, we give the pseudo-code of our algorithm $\textsc{Maintain-Friends}$, and subroutine \textsc{Update}. 

\begin{algorithm}[h]
\caption{\textsc{Maintain-Friends}$((u,v), \varepsilon, \tau)$}
    \begin{algorithmic}[1]
        \State Update $N(u), N(v)$ accordingly. 
        \State direct$(u)\leftarrow$direct$(u)+1$, direct$(v)\leftarrow$direct$(v)+1$.
        \For{$i\in \{1,2,3\}$}
            \If{$(u,v)\in E$} \Comment{The case when $(u,v)$ is an edge insertion.}
            \State \textsc{Determine-Friend}$((u,v), i\varepsilon, \frac{\tau}{2})$.
            \Else \Comment{The case when $(u,v)$ is an edge deletion.}
            \State $N_{i}(u)\leftarrow N_{i}(u)\backslash \{v\}$.
             \State $N_{i}(v)\leftarrow N_{i}(v)\backslash \{u\}$.
             \EndIf
        \EndFor 
        \State $U\leftarrow \emptyset$.
        \For{$w\in \{u,v\}$ s.t. direct$(w)=\frac{\tau\Delta}{8}$} 
            \State $\textsc{Update}(w)$. \Comment{$(u,v)$ is a Type 1 update for $w$.}
            \State direct$(w)=0$.
            \State $U\leftarrow U \cup \{w\}$.
        \EndFor
        \For {$z \in \bigcup\limits_{y\in U} N(y)$}
            \State indirect$(z)\leftarrow$indirect$(z)+1$. 
            \If{indirect$(z)=\frac{\tau\Delta}{8}$}
               \State $\textsc{Update}(z)$.  \Comment{$(u,v)$ is a Type 2 update for $z$.}
               \State $U\leftarrow U\cup \{z\}$.
               \State indirect$(z)\leftarrow 0$.
            \EndIf 
        \EndFor
        \State \textbf{return} $U$.
\end{algorithmic}
\end{algorithm}

\begin{algorithm}[h]
    \caption{\textsc{Update}($v$)}
    \begin{algorithmic}[1]
        \For {$i\in \{1,2,3\}$}
            \State \textsc{Determine-Dense}$(v,i\varepsilon, \frac{\tau}{2})$. 
        \EndFor 
    \end{algorithmic}
\end{algorithm}

\eat{We say that a vertex $v$ is at least $d$-sparse if it has at most $(1-d)$, $d$-friends. We prove the following theorem.
}
\begin{lemma}\label{lemma: maintain-friends}
    Algorithm $\textsc{Maintain-Friends}$ takes $O(\frac{\ln n}{\tau^4})$ amortized update time and, invoking Algorithm $\textsc{Maintain-Friends}((u,v), \varepsilon, \tau)$ after every edge update $(u,v)$ satisfies the following properties for any vertex $v\in V$, $i\in \{1,2,3\}$ with high probability at any time:
    \begin{itemize}
        \item For any vertex $v$, if $u\in N_i(v)$ then $u$ is a $(i\varepsilon+\tau)$-friend of $v$.
        \item For any vertex $v$, if $u$ is a $(i\varepsilon - \tau)$-friend of $v$, then $u \in N_i(v)$.
        \item For any vertex $v$, if $v\in V_i$ then $v$ is $(i\varepsilon+\tau)$-dense.
        \item For any vertex $v\notin V_i$, $v$ is $(i\varepsilon-\frac{3\tau}{4})$-sparse.
        \end{itemize}
\end{lemma}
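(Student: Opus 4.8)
The plan is to condition once on a single high-probability event, derive all four bullets deterministically from it, and then bound the running time by amortizing the \textsc{Update} calls. \textbf{Reducing to a deterministic claim.} First I would fix the event $\mathcal{E}$ that every call to \textsc{Determine-Friend} made over the entire update sequence returns the outcome of Lemma~\ref{lemma: is-friend} (with accuracy parameter $\tau/2$, as invoked inside \textsc{Maintain-Friends}); by Lemma~\ref{lemma: densevert} this also makes every \textsc{Determine-Dense} call correct. Over a $\mathrm{poly}(n)$-length update sequence there are only $\mathrm{poly}(n)$ such calls --- $O(1)$ direct ones per edge update, plus the $O(\Delta)$ nested ones inside each of the $\mathrm{poly}(n)$ \textsc{Update} invocations --- so taking the constant $c$ in Lemma~\ref{lemma: is-friend} large and union bounding gives $\Pr[\mathcal{E}]\ge 1-1/\mathrm{poly}(n)$. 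All four bullets are argued under $\mathcal{E}$.

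\textbf{An ``insensitivity'' observation and the friend bullets.} Call the instants at which \textsc{Determine-Friend}$((u,v),i\eps,\tfrac{\tau}{2})$ is invoked the \emph{re-checks} of $(u,v)$; they occur at every insertion of $(u,v)$ and at every call to \textsc{Update}$(u)$ or \textsc{Update}$(v)$ (which, via \textsc{Determine-Dense}, re-tests all incident edges). Between two consecutive re-checks, while $(u,v)$ persists, neither $\mathrm{direct}(u)$ nor $\mathrm{direct}(v)$ reaches $\tfrac{\tau\Delta}{8}$ (that would fire \textsc{Update} and end the interval), so at most $\tfrac{\tau\Delta}{8}$ updates touch $u$ and at most $\tfrac{\tau\Delta}{8}$ touch $v$, and since each moves $|N(u)\cap N(v)|$ by at most one, this count moves by at most $\tfrac{\tau\Delta}{4}$ between re-checks. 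Under $\mathcal{E}$, a re-check leaving $u\in N_i(v)$ certifies $|N(u)\cap N(v)|\ge(1-i\eps)\Delta$ (part (b) of Lemma~\ref{lemma: is-friend}), and one leaving $u\notin N_i(v)$ certifies $|N(u)\cap N(v)|<(1-i\eps+\tfrac{\tau}{2})\Delta$ (part (a)). Adding the $\tfrac{\tau\Delta}{4}$ slack to the last re-check yields the first two bullets: $u\in N_i(v)$ forces $|N(u)\cap N(v)|\ge(1-i\eps-\tfrac{\tau}{4})\Delta$, i.e.\ $u$ is an $(i\eps+\tau)$-friend; and $u\notin N_i(v)$ with $(u,v)\in E$ forces $|N(u)\cap N(v)|<(1-i\eps+\tfrac{3\tau}{4})\Delta$, so $u$ is not an $(i\eps-\tau)$-friend, the contrapositive of the second bullet.

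\textbf{The density bullets.} Fix the current time $t$ and let $s$ be the most recent time \textsc{Update}$(v)$ ran --- the only place $v$'s membership in $V_i$ can flip (if no such $s$ exists then $\deg(v)<\tfrac{\tau\Delta}{8}$ and $v$ is trivially sparse). If $v\in V_i$, then at $s$ the test in \textsc{Determine-Dense} certified $|N_i(v)|\ge(1-i\eps)\Delta$ and, by the re-checks it performed, each $u\in N_i(v)$ then satisfied $|N(u)\cap N(v)|\ge(1-i\eps)\Delta$, so $v$ had $\ge(1-i\eps)\Delta$ $i\eps$-friends at $s$. Such a $u$ can stop being an $(i\eps+\tau)$-friend by $t$ only if $(u,v)$ was deleted in $(s,t]$ (at most $\tfrac{\tau\Delta}{8}$ of these, one per update incident to $v$) or $|N(u)\cap N(v)|$ dropped by more than $\tau\Delta$, which needs more than $\tfrac{7\tau\Delta}{8}$ updates incident to $u$, hence several Type~1 \textsc{Update}$(u)$ calls, each incrementing $\mathrm{indirect}(v)$; since $\mathrm{indirect}(v)$ never reaches $\tfrac{\tau\Delta}{8}$ on $(s,t]$ (that would fire a Type~2 \textsc{Update}$(v)$ and close the window), there are only $O(\tau\Delta)$ such $u$. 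Thus $v$ keeps $(1-i\eps-O(\tau))\Delta\ge(1-i\eps-\tau)\Delta$ $(i\eps+\tau)$-friends once the threshold constants are fixed as in the algorithm, so $v$ is $(i\eps+\tau)$-dense. The fourth bullet is the mirror image: if $v\notin V_i$, then at $s$ every neighbor $u\notin N_i(v)$ failed its re-check and so is not an $(i\eps-\tfrac{\tau}{2})$-friend, whence $v$ has fewer than $|N_i(v)|<(1-i\eps)\Delta$ $(i\eps-\tfrac{\tau}{2})$-friends; a neighbor can become an $(i\eps-\tfrac{3\tau}{4})$-friend by $t$ only by gaining more than $\tfrac{\tau\Delta}{4}$ common neighbors with $v$, again costing Type~1 \textsc{Update}$(u)$ calls charged to $\mathrm{indirect}(v)$, so the number of $(i\eps-\tfrac{3\tau}{4})$-friends at $t$ stays below $(1-i\eps+\tfrac{\tau}{4})\Delta<(1-i\eps+\tfrac{3\tau}{4})\Delta$, i.e.\ $v$ is $(i\eps-\tfrac{3\tau}{4})$-sparse.

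\textbf{Running time, and the main obstacle.} Over $m$ updates, lines~1--10 of \textsc{Maintain-Friends} cost $O(m\ln n/\tau^2)$ ($O(1)$ \textsc{Determine-Friend} calls per update, each drawing $O(\ln n/\tau^2)$ samples). Since $\sum_v(\#\text{direct updates to }v)=2m$, there are at most $O(m/(\tau\Delta))$ Type~1 \textsc{Update} calls, each costing $O(\Delta\ln n/\tau^2)$ plus $O(\Delta)$ to bump neighbors' $\mathrm{indirect}$ counters --- total $O(m\ln n/\tau^3)$ --- and those bumps number $O(m/\tau)$, producing at most $O(m/(\tau^2\Delta))$ Type~2 \textsc{Update} calls at $O(\Delta\ln n/\tau^2)$ each, i.e.\ $O(m\ln n/\tau^4)$; summing gives $O(\ln n/\tau^4)$ amortized (over a sufficiently long sequence so that partially filled counters are negligible). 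I expect the density argument to be the crux: one must charge each neighbor whose common neighborhood with $v$ swings enough to flip its friend status to a Type~1 \textsc{Update} of that neighbor, then observe that those Type~1 updates are exactly what drives $\mathrm{indirect}(v)$ toward the threshold that closes the window --- so the activity capable of eroding (or inflating) $v$'s friend count is self-limiting --- and finally make the resulting $O(\tau)$ losses fit inside the $\tfrac{\tau}{2}$ ``gray zone'' of \textsc{Determine-Friend} and the $\tfrac{\tau}{8}$ counter thresholds so that $(1-i\eps-O(\tau))\Delta$ is genuinely at least $(1-i\eps-\tau)\Delta$.
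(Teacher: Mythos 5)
Your proof is correct and follows the same overall strategy as the paper: condition on a single high-probability event that all \textsc{Determine-Friend} calls succeed, argue deterministically about how slowly the relevant quantities drift between re-checks, and amortize the $\Theta(\Delta\ln n/\tau^2)$ cost of each \textsc{Update} call against the direct and indirect counters. The friend bullets match the paper's argument essentially verbatim.

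Where your write-up is worth comparing to the paper's is in the density bullets. The paper's proof tries to bound how much the \emph{list size} $|N_i(v)|$ can change between \textsc{Update}$(v)$ calls, which is a slightly delicate claim: Type~2 \textsc{Update}$(u)$ calls for $u\in N(v)$ (driven by $u$'s \emph{indirect} counter, not $v$'s) also re-test $(u,v)$ and can flip $u$'s membership in $N_i(v)$ when $|N(u)\cap N(v)|$ sits in the gray zone of the estimator, even though they do not touch $\textrm{direct}(v)$ or $\textrm{indirect}(v)$. Your version instead fixes the last \textsc{Update}$(v)$ call at time $s$, uses the certified list at $s$ to lower-bound the number of \emph{true} $i\eps$-friends at $s$, and then directly counts how many of them can drop below the $(i\eps+\tau)$-friend threshold by time $t$: each such $u$ needs $\Omega(\tau\Delta)$ direct updates to $u$, hence several Type~1 \textsc{Update}$(u)$ calls, each charged to $\textrm{indirect}(v)$, which is bounded on $(s,t]$. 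This charging-to-$\textrm{indirect}(v)$ step is the insight the paper alludes to but does not spell out, and your formulation sidesteps the list-stability issue entirely. You also prove the second bullet exactly at the stated $(i\eps-\tau)$ threshold, whereas the paper's proof text works with the slightly different $(i\eps-5\tau/4)$ threshold, so your version is in fact better aligned with the lemma statement. The only thing to tighten is the final arithmetic in the third bullet: replace ``$(1-i\eps-O(\tau))\Delta$'' with the explicit accounting (at most $\tau\Delta/8$ deletions plus at most $\tau\Delta/56$-ish drop-outs, since each costs seven or so $\textrm{indirect}(v)$ increments out of a budget below $\tau\Delta/8$), which indeed stays well inside the $\tau$ slack.
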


\begin{proof}
Fix $c>0$ to be a large constant in Lemma \ref{lemma: is-friend} so that all calls to \textsc{Determine-Friend} correctly determine friend statuses. We prove the theorem statement conditioned on this event.

\noindent\textit{Analysis of running time.} We first prove the running time guarantee of $\textsc{Maintain-Friends}$. Updating data structures of endpoints of the updated edge takes $O(1)$ time, while a single call to $\textsc{Determine-Friend}$ takes $O(\frac{\ln n}{\tau^2})$ time by Lemma \ref{lemma: is-friend}. To bound the total update time, we use a charging argument-- each direct update $(u,v)$ is charged an amount $\Theta(\frac{\ln n}{\tau^4})$. We show this is sufficient to pay for Type 1 and Type 2 updates.

Our charging scheme works as follows. Both endpoints $u,v$ of an updated edge $(u,v)$ receive a credit of $\Theta(\frac{\ln n}{\tau^4})$. Consider a Type 1 update for $v$. After $\frac{\tau\Delta}{8}$ direct updates to $v$, the total credit is at least $\Theta(\frac{\Delta\ln n}{\tau^3})$. This credit is used to pay for the cost of \textsc{Update}$(v)$, which is $\Theta(\frac{\Delta\ln n}{\tau^2})$. Moreover, for all neighbors $w\in N(v)$, a credit of $\Theta(\frac{\ln n}{\tau^3})$ is assigned and indirect$(w)$ is incremented. As a result, a Type 1 update has an amortized update time of $O(\frac{\Delta \ln n}{\tau^4})$.

Next, we analyze the amortized cost incurred of a Type 2 update for any $v$. Note that whenever indirect$(v)$ is incremented, some neighbor $w$ is involved in a Type 1 update and gives a credit of $\Theta(\frac{\ln n}{\tau^3})$ to $v$. Thus, for a Type 2 update for $v$ there is a total credit of at least $\frac{\tau\Delta}{8}\cdot \Theta(\frac{\ln n}{\tau^3})=\Theta(\frac{\Delta\ln n}{\tau^2})$. This credit is used to pay for the cost of $\textsc{Update}(v)$ which is $\Theta(\frac{\Delta\ln n}{\tau^2})$ by Lemma \ref{lemma: densevert}. Note that for a Type 2 update to a vertex $v$, we do not leave a credit for any neighbors. Thus, the cost to handle Type 2 updates is completely paid for by charging every direct update $\Theta(\frac{\ln n}{\varepsilon^4})$. This yields an amortized update time of $\Theta(\frac{\ln n}{\tau^4})$.

\noindent\textit{Proof of Correctness.}
We begin by noting that for any vertex $v$, the sum direct$(v)+$indirect$(v)$ is at most $\frac{\tau\Delta}{4}$. Let $N_{i}(v)$ for any $i\in \{1,2,3\}$ be the maintained lists. For an edge insertion $(u,v)$, \textsc{Determine-Friend}$((u,v), i\varepsilon, \frac{\tau}{2})$ is called and $u$ is added to $N_{i}(v)$ iff $|N(u)\cap N(v)|\geq (1-i\varepsilon)\Delta$ by the second guarantee given by Lemma \ref{lemma: is-friend}. On the other hand, for an edge deletion $(u,v)$, $u$ is removed from $N_{i}(v)$ for all $i$. 

We argue that a vertex $u\in N_{i}(v)$, then $|N(u)\cap N(v)|\geq (1-(i\varepsilon+\tau))\Delta$-- hence, proving that $u$ is a $(i\varepsilon+\tau)$-friend of $v$. This is clearly the case when either $\textsc{Update}(v)$ or $\textsc{Update}(v)$ is invoked or \textsc{Determine-Friend} is called when $(u,v)$ is inserted at which point $|N(u)\cap N(v)|\geq (1-i\varepsilon)\Delta$ by Lemma \ref{lemma: is-friend}. For, $|N(u)\cap N(v)|$ to decrease by at least $\frac{\tau\Delta}{4}$, there must be at least $\frac{\tau\Delta}{4}$ direct updates incident to at least one of $u$ or $v$. But any time there are at least $\frac{\tau\Delta}{8}$ direct updates incident to either $u$ or $v$, the subroutine $\textsc{Determine-Friend}((u,v),i\varepsilon, \frac{\tau}{2})$ is called. Thus, between any two consecutive invocations of $\textsc{Determine-Friend}((u,v), i\varepsilon, \frac{\tau}{2})$ there are at most $\frac{\tau\Delta}{4}$ vertices in $|N(u)\cap N(v)|$ which could be deleted. Thus, for any vertex $u\in N_{i}(v)$, it holds that $|N(u)\cap N(v)|\geq (1-i\varepsilon)\Delta-\frac{\tau\Delta}{4}\geq (1-(i\varepsilon+\frac{\tau}{4}))\Delta>(1-(i\varepsilon+\tau))\Delta$.

We next argue that if $u$ is a $(i\varepsilon-5\tau/4)$-friend of $v$, then $u \in N_i(v)$.  If $u$ is a $(i\varepsilon-5\tau/4)$-friend of $v$, then $|N(u)\cap N(v)|\geq (1-(i\varepsilon-5\tau/4))\Delta$.  When $(u,v)$ was inserted, \textsc{Determine-Friend} is called.  If at that time $|N(u)\cap N(v)|\geq (1-(i\varepsilon-\tau))\Delta$, then by Lemma \ref{lemma: is-friend}, $u$ is added to $N_i(v)$ with high probability.  If not, then $|N(u)\cap N(v)|$ has increased by at least $\frac{\tau \Delta}{4}$ since then.  As we argued above, for this to happen, there must be at least $\frac{\tau\Delta}{8}$ direct updates incident to either $u$ or $v$, implying that the subroutine $\textsc{Determine-Friend}((u,v),i\varepsilon, \frac{\tau}{2})$ is called at some time between the period when $|N(u)\cap N(v)|$ increases from $(1-(i\varepsilon-\tau))\Delta$ to $(1-(i\varepsilon-5\tau/4))\Delta$.   During that call, by Lemma~\ref{lemma: is-friend}, $u$ is added to $N_i(v)$ with high probability and remains in that list.

Next, we prove that for any vertex $v$, if the variable is-dense$(v,i\varepsilon)=1$ then $v$ is $(i\varepsilon+\tau)$ dense. This is true at the time when is-dense($v,i\varepsilon)$ is set to 1 by subroutine \textsc{Determine-Dense}. Between any two consecutive invocations of \textsc{Update}($v$) due to Type 1 updates for $v$, the number of friends in any list $N_{i}(v)$ changes by at most $\frac{\tau\Delta}{8}$. On the other hand, the number of friends in any list $N_{i}(v)$ may only change by at most $\frac{\tau\Delta}{8}$ between any two Type 2 updates for $v$. Thus, between any two calls to $\textsc{Determine-Dense}(v,i\varepsilon,\frac{\tau}{2})$ the number of friends that changes is at most $\frac{\tau\Delta}{4}$. Thus, between any two calls, $v$ has at least $(1-(i\varepsilon+\frac{\tau}{4}))\Delta>(1-(i\varepsilon+\tau))\Delta$, $(i\varepsilon+\tau)$ friends, i.e. $v$ is  $(i\varepsilon+\tau)$-dense. 

Finally, we prove that if vertex $v\notin V_i$, then $v$ is $(i\varepsilon-\frac{3\tau}{4})$-sparse. This is true immediately after the call to $\textsc{Determine-Dense}(v, i\varepsilon, \frac{\tau}{2})$ by Lemma \ref{lemma: densevert} since $v$ must have less than $(1-(i\varepsilon-\frac{\tau}{2}))\Delta$ friends. Similar to the above argument, between any two calls to \textsc{Determine-Dense}$(v,i\varepsilon,\frac{\tau}{2})$, the number of friends can increase by at most $\frac{\tau\Delta}{4}$. Thus, between any two calls to \textsc{Determine-Dense}$(v,i\varepsilon, \frac{\tau}{2})$, $v$ has at most $(1-i\varepsilon+\frac{\tau}{2})\Delta+\frac{\tau\Delta}{4}= (1-(i\varepsilon-\frac{3\tau}{4}))\Delta$, $(i\varepsilon-\frac{3\tau}{4})$ friends implying that $v$ is $(i\varepsilon-\frac{3\tau}{4})$-sparse. 
\end{proof}

\subsection{Algorithm for Maintaining Sparse-Dense Decomposition}\label{sec: fullydynamicdecomposition} Given the subroutines in the previous sections, we present an algorithm \textsc{Update-Decomposition}$\allowbreak((u,v), \varepsilon$,$\tau)$ which maintains a sparse-dense decomposition satisfying the properties outlined in Theorem \ref{fd-decomposition}. As before, $\varepsilon$ and $\tau$ are input parameters.  Our algorithm utilizes the subroutine \textsc{Maintain-Friends} to handle any edge update, which maintains the sets $N_i(v)$ for all $v\in V$ containing the set of $(i\varepsilon+\tau)$ friends of $v$, and the sets $V_i$ of $(i\varepsilon+\tau)$-dense vertices respectively for all $i\in \{1,2,3\}$. For notational convenience $c_i=i\varepsilon+\tau$, such that $c_i-c_{i-1}=\varepsilon$ for $i\in \{2,3\}$. 
\\ \\
\noindent\textbf{High Level Overview. }
 Our algorithm  maintains a sparse-dense decomposition of the graph $G=(V,E)$ by partitioning $V$ into $V_S$ and $V_D$ such that $V_S\subseteq V\backslash V_1$. By Lemma \ref{lemma: maintain-friends}, it follows that any vertex not in $V_1$ is $(\varepsilon-\frac{3\tau}{4})$-sparse. That is, $V_S\subseteq \vs_{(\varepsilon-\frac{3\tau}{4})}$, is a subset of vertices in $G$ that are $(\varepsilon-\frac{3\tau}{4})$-sparse at any given point in time. Moreover, we maintain $V_D\subseteq V_3$, i.e. $V_D\subseteq \vd_{(3\varepsilon+\tau)}$ is a subset of vertices in $G$ that are $c_3=(3\varepsilon+\tau)$-dense at any given point in time. Furthermore, our algorithm maintains almost-cliques on $V_D$ induced by $c_3$-friend edges. 

Our algorithm for maintaining a sparse-dense decomposition achieves its objectives by ensuring that four invariants hold at the start of every step.  Two invariants concern vertices while the remaining two concern almost-cliques.
\newcommand{\Density}{{\mbox{{\textsc{Density}}}}}
\newcommand{\Friendship}{{\mbox{{\textsc{Friendship}}}}}
\newcommand{\Size}{{\mbox{{\textsc{Size}}}}}
\newcommand{\Connectedness}{{\mbox{{\textsc{Connectedness}}}}}
\begin{itemize}
\item \Density$(v)$:  If $v$ is in $V_D$, $v$ is in $V_3$; otherwise, $v$ is not in $V_1$.
\item \Friendship$(v)$:  If $v$ is in almost-clique $C$, then $v$ has at least $(1 - c_3)\Delta$ neighbors from $N_3(v)$ that are currently in $C$ or were members of $C$ at some time since $C's$ creation.
\item \Size$(C)$: Almost-clique $C$ has size at least $(1 - c_3)\Delta$ and at most $(1 + 3c_3)\Delta$.
\item \Connectedness$(C)$: The $c_3$-friend edges in almost-clique $C$ form a spanning connected subgraph of $C$.
\end{itemize}

 \newcommand{\DenseMove}[1]{{\mbox{{\textsc{Dense-Move}}}}$(#1)$}
 \newcommand{\DenseMoveNoArgs}{{\mbox{{\textsc{Dense-Move}}}}}
 \newcommand{\SparseMove}[1]{{\mbox{{\textsc{Sparse-Move}}}}$(#1)$}
  \newcommand{\SparseMoveNoArgs}{{\mbox{{\textsc{Sparse-Move}}}}}

 Our algorithm determines when to move vertices between $V_S$ and $V_D$ so as to maintain the invariants.  Starting with an empty graph, all vertices are in $V_S$, and $V_D\coloneqq \emptyset$.  After any edge update, if there is a vertex $v$ in $V_S$ that is inserted to $V_1$, we call \DenseMove{v}, which operates as follows: 
 \begin{itemize}
 \item If $v$ has a $c_1$-friend $u$ in $V_D$, move $v$ and all of the $c_1$-friends of $v$ that are in $V_S$ to $V_D$ into the same almost-clique as $u$, updating data structures as necessary. (Type 1 dense move)

 \item Otherwise, move $v$ and all its $c_1$-friends from $V_S$ to $V_D$, forming a new almost-clique, updating data structures as necessary. (Type 2 dense move)
 \end{itemize}
 Similarly, after any edge deletion, if there is a vertex $v$ in $V_D$ that is not in $V_3$ or has fewer than $(1 - c_3)\Delta$ neighbors from $C$ in $N_3(v)$ (i.e., violating \Density\ or \Friendship\ invariants), then we call \SparseMove{v}, which operates as follows:
 \begin{itemize}
     \item Move $v$ move from $V_D$ to $V_S$, updating data structures as necessary.
 \end{itemize}
Finally, for a suitably small constant $\nu > 0$, when at least $\nu \Delta$ vertices from an almost-clique $C$ have been moved from $V_D$ to $V_S$ (i.e., $\Omega(\Delta)$ calls have been made to \SparseMoveNoArgs\ for vertices in $C$), we move all vertices in $C$ to $V_S$, delete $C$, and reconstruct a new almost-clique (if necessary) by calling \DenseMoveNoArgs\ on the subset of these vertices that are in $V_1$.  This ensures that the \Size\ and \Connectedness\ invariants continue to hold.
 
 \eat{or has fewer than $(1 - c_3)\Delta$ $c_3$-friends in $V_D$, we call , which moves $v$ from $V_D$ to $V_S$ and recursively calls \SparseMove{v} on any of its neighbors in $V_D$ that now have fewer than $(1 - c_3)\Delta$ $c_3$-friends in $V_D$ as a result of $v$'s move.}

 \eat{
 We consider three cases when any vertex $v$ is moved from $V_S$ to $V_D$ (referred to as a \textit{dense} move): 
 \begin{enumerate}
     \item $v$ becomes $c_1$-dense, and all its neighbors are currently in $V_S$ (i.e., for all $u\in N(v)$, $u\notin V_1$).
    \item $v$ becomes $c_1$-dense and there is at least one neighbor $u\in N(v)$, such that $u\in V_D$ and $(u,v)$ is a $c_1$-friend edge. 
    \item $v$ has a $c_1$-friend edge $(u,v)$ such that $u$ becomes $c_1$-dense (i.e. $u$ is the first neighbor of $v$ to be added to $N_{1}(v)$ since the last time $N_{1}(v)=\emptyset$).
 \end{enumerate} 
 We call the three types of moves Type 1, Type 2 and Type 3 \textit{dense} moves respectively for a vertex $v$. For Type 2 and Type 3 moves, $v$ joins the almost-clique of its neighbor $u$ after $u$ is moved to $V_D$. For a Type 1 move, a new almost-clique $C$ is created, containing $v$ and all its $c_1$-friends s.t. $C=\{v\}\cup N_{1}(v)$.

A vertex $v\in V_D$ moves to $V_S$ (referred to as a \textit{sparse} move) in one of the following situations: 
\begin{enumerate}
    \item $v\in V_D$ becomes $c_3$-sparse.
    \item $v\in V_D$ has less than $(1-c_3)\Delta$ neighbors in its almost-clique. 
\end{enumerate} 

We call the two types of moves Type 1 and Type 2 \textit{sparse} moves for $v$ respectively. The following invariant is maintained for all vertices in $V_D$ at any time: each vertex $v\in V_D$ has at least $(1-c_3)\Delta$ neighbors in the almost-clique it is part of. By virtue of dense moves, any vertex moved to $V_D$ initially has at least $\Omega((1-c_2)\Delta)$ neighbors in the almost-clique that it joins. Moreover, once a vertex $v$ is moved to $V_D$, it stays in the same almost-clique until a future sparse move to $V_S$.
}

The analysis of our algorithm uses a charging argument for vertex moves. This yields $\ti(\frac{1}{\varepsilon^4})$ amortized update time. Crucially, we ensure that the number of non-neighbors of every vertex $v\in V_D$ in its almost-clique has \textit{low-adjustment complexity}. Finally, using the standard technique of periodically rebuilding data structures, we point out how to obtain $\ti(\frac{1}{\varepsilon^4})$ \textit{worst-case} update time, incurring only a constant factor increase in update time. 
\\
\\
\noindent\textbf{Data Structures.} We maintain additional data structures. Each vertex $v$ maintains lists $N_S(v)$ and $N_D(v)$ corresponding to its neighbors in $V_S$ and $V_D$ respectively, s.t. $N(v)=N_S(v)\cup N_D(v)$. The algorithm maintains a list of all almost-cliques and each vertex $v$ maintains a pointer to the almost-clique $C$ it is part of. For all $v\in V_D$, $v$ maintains a list $\b{E_C(v)}$ of \textit{non-edges} in its almost-clique $C$ and a list $N'(v) = N_3(v) \cap C$. For all $v\in V$ and all almost cliques $C$, we maintain the list $N_C(v)$ containing neighbors of $v$ in $C$. 
\\
\\
\noindent\textbf{The Full Algorithm.} We present algorithm \textsc{Update-Decomposition} and subroutines \DenseMoveNoArgs\ and \SparseMoveNoArgs\ in the following. Recall that on any update, the procedure \textsc{Maintain-Friends} returns a set $U$ of vertices for which \textsc{Update} is invoked. For a vertex $v\in U$, we call $v$ an \textit{affected} vertex if $v$ is added or removed from one of $V_1$, $V_2$ or $V_3$ (this can be done by labelling a vertex $v$ in $U$ if is-dense$(v,i\varepsilon)$ changes for any $i\in \{1,2,3\}$ when \textsc{Update}($v$) is called). Our algorithm processes the set of all affected vertices, since these vertices may need to be moved. We say $v$ is part of a sparse (resp. dense) move if it is moved from $V_D$ to $V_S$ (resp. $V_S$ to $V_D$). 

\newcommand{\sparseMoveCount}[1]{\sigma(#1)}

\begin{algorithm}[h]
\caption{\textsc{Update-Decomposition}$(\varepsilon, \tau, (u,v))$}
\begin{algorithmic}[1]
        \State $U\leftarrow \textsc{Maintain-Friends}((u,v), \varepsilon,\tau)$. 
        \If {$u,v\in V_D$ and are in the same almost-clique $C$}
            \State Update $N_C(u), N_C(v), N'_C(u), N'_C(v), \b{E_C(u)}, \b{E_C(v)}$.
        \EndIf 
        \If{$(u,v)$ is an edge insertion} 
            \For{$w\in U$}
                \If{$w\in V_1$ and $w\in V_S$} Call \DenseMove{w}.
                \eat{\If{$w\in V_1$ and $w\notin V_1$ prior to this update}}
                \EndIf 
            \EndFor
        \EndIf
        \If{$(u,v)$ is an edge deletion}
            \State ${\cal C} \leftarrow \emptyset$. $U' \leftarrow \emptyset$.
            \For{$w\in U$}
                \State Let $C$ be the almost-clique of $w$.
                \If{$w \in V_D$ and ($w\notin V_3$ or $|N'(w)| \le (1 - c_3)\Delta$)}
                    \State   Increment $\sparseMoveCount{C}$.\If{$\sparseMoveCount{C} \ge \alpha \tau \Delta$} ${\cal C} \leftarrow {\cal C} \cup \{C\}$.
                    \Else \, Call \SparseMove{w}.
                    \EndIf
                \EndIf 
            \EndFor
            \For{$C \in {\cal C}$}
                \For{$w \in C$}
                    \State $V_S = V_S \cup \{w\}$. $V_D = V_D \setminus \{w\}$.  $U' = U' \cup \{w\}$.
                    \State Update $N_S(w), N_D(w)$, delete $\b{E_C(w)}, N'(w)$, and \textbf{for} $z\in N(w)$: delete $N_C(z)$.
                \EndFor
                \State Remove $C$ from list of almost-cliques.
            \EndFor
            \For{$w \in U'$}
                \If{$w\in V_1$ and $w \in V_S$} Call \DenseMove{w}.
                \EndIf
            \EndFor
        \EndIf 
\end{algorithmic}
\end{algorithm}

Algorithm $\textsc{Update-Decomposition}$ on edge update $(u,v)$ calls the subroutine $\textsc{Maintain-Friends}$ which is responsible for maintaining sets $V_1, V_2, V_3$ of $c_1, c_2, c_3$ dense vertices and $N_1(v), N_2(v), N_3(v)$ of $c_1,c_2,c_3$ friends of $v$, for all $v\in V$. The only time a vertex changes its dense status is when \textsc{Update} is invoked. Thus, the set $U$ returned by \textsc{Maintain-Friends} contains the list of all vertices that could potentially be added or removed from one of $V_1, V_2, V_3$. 

If the edge update is an insertion, then 
we check for each vertex $w\in U$ whether a dense move is necessitated. If $w \in U$ was not in $V_1$ before the current update and it is now in $V_1$ yet still in $V_S$, then by Lemma \ref{lemma: maintain-friends}, $w$ is $(\varepsilon+\tau)$-dense. Subsequently, the subroutine \DenseMove{w} is called, and $w$ is moved to $V_D$.  We execute the preceding step for each $w \in U$.  As mentioned above,  \DenseMoveNoArgs\ considers Type 1 and Type 2 dense moves, adding relevant vertices to an existing almost-clique in the former case and creating a new almost-clique in the latter case.

If the edge update is a deletion, then we 
we check for each vertex $w\in U$ whether a sparse move is necessitated.  If $w$ is in $V_D$ as part of almost-clique $C$ but either is not in $V_3$ or has at most $(1 - c_3)\Delta$ neighbors in $N_3(w) \cap C$, then either $w$ is $(\varepsilon - 3\tau/4)$-sparse by Lemma~\ref{lemma: maintain-friends} or it does not have a sufficient number of $c_3$-friends in $C$.  In this scenario, we would like to move $w$ to $V_S$, for which we invoke a call to \SparseMoveNoArgs.  Such a move may, in fact, trigger other sparse moves from the $C$, eventually possibly leading to $C$ becoming empty.  To accommodate such a sequence of moves, we keep track of the number of sparse moves associated with each almost-clique since its inception by maintaining a quantity $\sigma(C)$.  As soon as this number exceeds $\nu \Delta$, for a suitably small constant $\nu>0$, we move all of its vertices to $V_S$ and then consider each vertex that is sufficiently dense for a dense move, possibly creating a new almost-clique.

\eat{On the other hand, if $w \in V_3$ prior to this update, and $w\notin V_3$ after $\textsc{Maintain-Friends}$ is called, by Lemma \ref{lemma: maintain-friends}, it follows that $v$ is $(3\varepsilon-\frac{3\tau}{4})$-sparse, and hence, $c_3$-sparse. Subsequently, the subroutine $\textsc{Move}(w)$ is called and $w$ is moved to $V_S$. }  

We describe how the subroutine \DenseMove{v} works for a vertex $v$. First, the sets $V_D, V_S$ are updated. Next, we check if there exists a $c_1=(\varepsilon+\tau)$-friend $u$ of $v$ in $V_D$. If it is, then, $v$ joins the almost-clique containing $u$ (a Type 1 dense move). Thereafter, all $c_1$-friends of $v$ in $N_1(v)\backslash V_D$, i.e. $c_1$-friends in $V_S$, are added to $C$. We update all data structures of vertices in $U\cup C$, where $U$ is the set of vertices moved to $V_D$ in this step.  For a Type 2 dense move, i.e. when all of $v's$ $c_1$-friends in $N_1(v)$ are in $V_S$, a new almost-clique $C$ containing $v$ and all $c_1$ friends in $N_1(v)$ is created. Data structures of all vertices in $C$ are updated. 

For a sparse move for $v$ (\SparseMove{v}), the sets $V_S, V_D$ are updated and $v$ is removed from almost-clique $C$. Data structures of all remaining vertices in $U$ and $v's$ neighbors are updated.  This completes the description of \textsc{Update-Decomposition} and the subroutines \DenseMoveNoArgs\ and \SparseMoveNoArgs.

\begin{algorithm}[ht]
    \caption{\DenseMove{v} \label{alg: Dense Move}}
    \begin{algorithmic}[1]
    \State $V_D\leftarrow V_D\cup \{v\}$, $V_S\leftarrow V_S\backslash\{v\}$.
    \If {there exists $u\in V_D$ s.t. $u\in N_{1}(v)$}\Comment{Type 1 dense move}
        \State Let $C\leftarrow$ almost-clique of $u$.
        \State $C\leftarrow C\cup \{v\}$, $U\leftarrow \{v\}$.
        \State \textbf{for} all $u\in N_1(v)\backslash V_D$: $C\leftarrow C\cup \{u\}$, $U\leftarrow U\cup \{v\}$. 
        \State \textbf{for} all $w\in U\cup C$: Update $N_S(w), N_D(w), \b{E_C(w)}, N_C(w), N'(w)$.
        \State \textbf{for} all $w\in U$: \textbf{for} all $z\in N(w)$: Update $N_C(z)$.
    \Else \Comment{Type 2 dense move}
        \State Create almost-clique $C$ and add to list of almost-cliques. 
        \State $C\leftarrow \{v\} \cup N_1(v)$, $U\leftarrow \{v\} \cup N_1(v)$. $\sparseMoveCount{C} \leftarrow 0$.
        \State \textbf{for} all $w\in U$: Update $N_S(w), N_D(w), \b{E_C(w)}, N'(w)$, and \textbf{for} all $z\in N(w)$: Update $N_C(z)$.
    \EndIf 
 \end{algorithmic}
\end{algorithm}

\begin{algorithm}[ht]
    \caption{\SparseMove{v}}
    \begin{algorithmic}[1]
        \State $U\leftarrow \emptyset$. \Comment{Set $U$ stores vertices which will be possibly moved.}
        \State $C\leftarrow $ almost-clique of $v$.
        \State $U\leftarrow N_C(v)$.
        \State $V_S\leftarrow V_S\cup \{v\}$, $V_D\leftarrow V_D\backslash\{v\}$.
        \State $C\leftarrow C\backslash \{v\}$.
        \State \textbf{for} all $u\in N(v)$: $N_S(u)\leftarrow N_S(u)\cup \{v\}$, $N_D(u)\leftarrow N_D(u)\backslash \{v\}$ and $N_C(u)\leftarrow N_C(u)\backslash \{v\}$.
        \State \textbf{for} all $(u,v)\in \b{E_C(v)}$: $\b{E_C(u)}\leftarrow \b{E_C(u)}\backslash \{(u,v)\}$. 
        \State $\b{E_C(v)}\leftarrow \emptyset$.
        \eat{\State \textbf{for} all $u\in U$: if $N_C(u)<(1-c_3)\Delta$ \textbf{then} call \SparseMove{u}.}
    \end{algorithmic}
\end{algorithm}

\eat{More formally, our algorithm maintains the following Invariant at any given point in time. We call this the $\textsc{Dense}$ Invariant.

\noindent{\underline{\textsc{Dense} Invariant:} For any vertex $v\in V_D$: i) $|N_C(v)|\geq (1-c_3)\Delta$, where $C$ denotes the almost-clique of $v$ and, ii) $v$ is $c_3$-dense.}

\begin{observation}\label{obs: lowerboundcsize}
    If \textsc{Dense-Invariant} holds, then $|C|\geq (1-c_3)\Delta$.
\end{observation}

The next section is devoted to the analysis of algorithm $\textsc{Update-Decomposition}$.
}

\subsection{Analysis}
We prove several lemmas towards a proof of Theorem \ref{fd-decomposition}.  We start with an observation about $\sparseMoveCount{C}$ for any almost-clique $C$.

\begin{observation}
\label{obs:sparse move count}
At any point in time, for any almost-clique $C$, $\sparseMoveCount{C}$ is at most $\nu \Delta$.  Furthermore, if an update is an edge insertion, then $\sparseMoveCount{C}$ does not change during the processing of the update.
\end{observation}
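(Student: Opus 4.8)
The plan is to follow the lifetime of an almost-clique $C$ and observe that $\sparseMoveCount{C}$ behaves monotonically between the moment it is initialized and the moment $C$ is destroyed. First I would note that $\sparseMoveCount{C}$ is assigned a value other than by an increment in exactly one place: it is set to $0$ when $C$ is created by a Type 2 dense move in \DenseMoveNoArgs. A Type 1 dense move only \emph{adds} vertices to an already existing almost-clique and leaves its counter untouched. Hence, from the moment $C$ is created, $\sparseMoveCount{C}$ is only ever incremented, and this happens exclusively inside the edge-deletion branch of \textsc{Update-Decomposition}, at most once for each affected vertex $w \in U$ that currently lies in $C$ and violates \Density\ or \Friendship.

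Next I would argue that $C$ cannot survive past the update in which $\sparseMoveCount{C}$ first reaches the collapse threshold $\nu\Delta$ (written $\alpha\tau\Delta$ in the pseudocode). Indeed, in \textsc{Update-Decomposition} each increment of $\sparseMoveCount{C}$ is immediately followed by the test $\sparseMoveCount{C}\ge\nu\Delta$, and once this test succeeds $C$ is placed in the set $\mathcal{C}$; the loop over $\mathcal{C}$ that runs right afterward, within the same invocation, moves every vertex of $C$ to $V_S$, discards the data structures associated with $C$, and removes $C$ from the list of almost-cliques. Therefore no almost-clique present at the start of a step can have $\sparseMoveCount{\cdot}\ge\nu\Delta$. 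Since $\sparseMoveCount{C}$ is never modified between the processing of two consecutive updates, it follows that $\sparseMoveCount{C}\le\nu\Delta$ at all times; the only instants at which it could momentarily exceed $\nu\Delta$ are during the processing of the triggering edge deletion, where it can overshoot by at most $|U| = O(1/\tau)$ before the collapse loop runs, which is immaterial at the granularity (start of each step) at which the observation is intended.

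For the second statement I would simply observe that the only place $\sparseMoveCount{\cdot}$ is incremented is the block of \textsc{Update-Decomposition} guarded by ``$(u,v)$ is an edge deletion''. When the update is an edge insertion, the algorithm calls only \textsc{Maintain-Friends} and, for vertices that enter $V_1$ while still in $V_S$, \DenseMoveNoArgs; none of these increments $\sparseMoveCount{\cdot}$, and \DenseMoveNoArgs\ at most initializes the counter of a brand-new clique to $0$. Hence the counter of every pre-existing almost-clique is unchanged during an edge insertion. There is no real obstacle here; the only point requiring care is to fix the intended reading of ``at any point in time'' as ``at the start of processing each update,'' consistent with the convention under which the four invariants of the decomposition are stated.
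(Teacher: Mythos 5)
Your proof is correct and takes essentially the same route as the paper: treat both claims as holding at the start (equivalently, end) of each update's processing, observe that $\sparseMoveCount{\cdot}$ is initialized to $0$ only in a Type~2 dense move and is otherwise only incremented during the edge-deletion branch, and argue by induction/lifetime-of-$C$ that any almost-clique whose counter reaches the threshold is collapsed within that same invocation. The paper's proof is phrased as an explicit induction on steps and derives the deletion case from the fact that a clique whose counter exceeds $\nu\Delta$ is deleted and any replacement starts at~$0$, which is the content of your ``cannot survive past the triggering update'' argument. One small inaccuracy: your aside bounding the momentary overshoot by $|U| = O(1/\tau)$ is wrong --- after a Type~1 update to $u$ or $v$, \textsc{Maintain-Friends} increments $\mathrm{indirect}(z)$ for all neighbors $z$, and those whose counter crosses the threshold are added to $U$; so $|U|$ can be as large as $\Theta(\Delta)$. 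This does not damage your argument (the overshoot is still bounded and the clique is still deleted before the step ends), but the stated bound on $|U|$ is incorrect.
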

\begin{proof}
    The proof of the second claim is immediate from the fact that only dense moves are executed during the processing of an edge insertion; for any existing almost-clique $C$, $\sparseMoveCount{C}$ does not change during a dense move, and for a new almost-clique, $\sparseMoveCount{C}$ is set to zero and remains so until the end of the step.  
    
    The proof of the first claim is by induction on the number of steps.  The base case is trivial.  Suppose the claim is true at the start of a step.  If the step is an edge insertion, then the induction step follows from the first claim.  Otherwise, the induction step holds since whenever $\sparseMoveCount{C}$ exceeds $\nu \Delta$, $C$ is deleted, and any new almost-clique $C'$ formed during the step has $\sparseMoveCount{C'} = 0$ at the end of the step.
\end{proof}

We next prove that if the \Density\ and \Friendship\ invariants hold, then for a Type 1 dense move for vertex $v$ which becomes $c_1$-dense, all its $c_1$-friends in $V_D$ must be in the same almost-clique for suitably small $\varepsilon$. 

\begin{lemma}
\label{lemma: unique-dense-almost-clique}
If the \Density\ and \Friendship\ invariants are satisfied for every vertex in $V_D$ immediately preceding a  call to \DenseMove{v} and $v$ has at least one $c_1$-friend in $V_D$ (i.e., the condition in line~2 of Algorithm~\ref{alg: Dense Move} is satisfied), then all of the $c_1$-friends of $v$ in $V_D$ reside in
at exactly one almost-clique, assuming $\varepsilon < \frac{1}{8}-\frac{\tau}{2} - \frac{\nu}{4}$.
\end{lemma}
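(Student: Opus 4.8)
The plan is to argue by contradiction: suppose $v$ has $c_1$-friends $u_1 \in C_a$ and $u_2 \in C_b$ with $C_a \neq C_b$, and derive that $|N(u_1) \cap N(u_2)|$ is forced to be large enough that $(u_1,u_2)$ would have to be a $c_3$-friend edge, which — combined with the \Connectedness\ invariant — would place $u_1$ and $u_2$ in the same almost-clique, a contradiction. First I would collect the relevant quantitative facts: since $(v,u_1)$ and $(v,u_2)$ are $c_1$-friend edges, each of $u_1,u_2$ shares at least $(1-c_1)\Delta$ common neighbors with $v$, so $N(u_1) \supseteq$ a set of $\geq (1-c_1)\Delta$ vertices in $N(v)$ and likewise for $u_2$; by inclusion–exclusion within $N(v)$ (which has size $\leq \Delta$), $|N(u_1) \cap N(u_2)| \geq (1-c_1)\Delta + (1-c_1)\Delta - \Delta = (1-2c_1)\Delta$. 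Recalling $c_1 = \varepsilon + \tau$, this is $(1 - 2\varepsilon - 2\tau)\Delta$.

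Next I would relate the $c_3$-friend threshold to this bound. The edge $(u_1,u_2)$ — if it is an edge — is a $c_3$-friend edge provided $|N(u_1)\cap N(u_2)| \geq (1-c_3)\Delta = (1 - 3\varepsilon - \tau)\Delta$, which holds since $(1-2\varepsilon-2\tau)\Delta \geq (1-3\varepsilon-\tau)\Delta$ whenever $\tau \leq \varepsilon$, and indeed it comfortably exceeds it. There is a subtlety: $(u_1,u_2)$ need not be an edge, and membership of $u_1$ in $N_3(u_2)$ is what \textsc{Maintain-Friends} actually tracks, via Lemma~\ref{lemma: maintain-friends}. The relevant structural consequence I want is that $u_1$ and $u_2$ cannot sit in two different almost-cliques, and for that I should use the \Friendship\ invariant rather than just \Connectedness: each vertex currently in $C_b$ has $\geq (1-c_3)\Delta$ of its $N_3$-neighbors among (current or former) members of $C_b$, and since $u_1$ and $u_2$ have $\geq (1-2c_1)\Delta$ common neighbors, a counting argument shows that a common neighbor $w$ of $u_1,u_2$ must be an $N_3$-neighbor of $u_2$ inside $C_b$ — forcing $u_1$ into $C_b$ as well. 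The cleanest route is probably: show that $u_1$ has more than half its neighborhood inside $N'(u_2) = N_3(u_2)\cap C_b$, so $u_1$'s $c_3$-friends in $C_b$ are numerous; but $u_1 \in C_a$ also has $\geq (1-c_3)\Delta$ of its $N_3$-neighbors in $C_a$; since $|N(u_1)| \leq \Delta$, these two sets overlap, producing a vertex in $C_a \cap C_b$, contradicting disjointness of almost-cliques.

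I expect the main obstacle to be the bookkeeping around the \Friendship\ invariant's phrase ``currently in $C$ or were members of $C$ at some time since $C$'s creation'': former members complicate a naive counting argument because a common neighbor of $u_1,u_2$ counted toward $C_b$ might have since left $C_b$ and even joined $C_a$. The fix is to use the \Size\ and $\sigma(C)$ bounds (Observation~\ref{obs:sparse move count}): at most $\nu\Delta$ vertices have ever left $C_b$, so "former members of $C_b$" number at most $\nu\Delta$, and we can absorb this slack into the $\varepsilon$-budget. This is exactly why the hypothesis reads $\varepsilon < \tfrac18 - \tfrac{\tau}{2} - \tfrac{\nu}{4}$: after accounting for the $(1-2c_1)\Delta$ common-neighbor bound, the $(1-c_3)\Delta$ friendship bound, and two $\nu\Delta$-sized correction terms for former members of $C_a$ and $C_b$, the surviving intersection has positive size precisely when $\varepsilon$ is below this threshold. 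So the key steps in order are: (1) lower-bound $|N(u_1)\cap N(u_2)|$ via inclusion–exclusion inside $N(v)$; (2) invoke \Friendship\ for a vertex in $C_b$ and \Friendship\ for $u_1$ in $C_a$ to get two large subsets of $N(u_1)$; (3) bound the number of "former members" of each clique by $\nu\Delta$ using Observation~\ref{obs:sparse move count}; (4) conclude the two large subsets of $N(u_1)$ must intersect, yielding a vertex in $C_a \cap C_b$ and contradicting that almost-cliques are vertex-disjoint; (5) verify the arithmetic closes under $\varepsilon < \tfrac18 - \tfrac\tau2 - \tfrac\nu4$.
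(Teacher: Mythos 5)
Your proposal is correct and takes essentially the same approach as the paper: both arguments use the \Friendship\ invariant together with Observation~\ref{obs:sparse move count} to discount at most $\nu\Delta$ former clique members, and an inclusion--exclusion count that closes under the same threshold $c_1+c_3+\nu<\tfrac12$. The only cosmetic difference is the vertex whose neighborhood you count and the form of the contradiction --- the paper shows $v$ would have more than $\Delta$ distinct neighbors split across $C_1$ and $C_2$, whereas you show that $u_1$'s neighbors currently in $C_a$ and currently in $C_b$ must intersect, contradicting disjointness of almost-cliques; both routes are sound and the arithmetic is identical.
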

\begin{proof}
Suppose for the sake of contradiction that $v$ has two $c_1$ friends $u, w$ in distinct almost-cliques $C_1$ and $C_2$.
   Consider $C_1$. By the \Friendship\ invariant for $u$ and Observation~\ref{obs:sparse move count}, $u$ has at least $(1-c_3)\Delta - \nu \Delta$ neighbors from $N_3(u)$ in $C_1$.  By Lemma~\ref{lemma: maintain-friends}, this implies that 
   $u$ has $(1-c_3 - \nu)\Delta$ $(c_3 + \tau)$-friends in $C_1$.
   Since, $u$ and $v$ share at least $(1-c_1)\Delta$ neighbors, at least $(1-(c_1+c_3+\nu))\Delta$ of neighbors of $u$ must also be neighbors of $v$ since the maximum degree is bounded by $\Delta$. Moreover, each such common neighbor $w\in N(v)\cap N(u)$ must have at least $(1-(c_1+c_3+\nu))\Delta$ common neighbors with $v$, since the maximum degree is bounded by $\Delta$ and $v$ (resp. $w$) has at least $(1-c_1)\Delta$ (resp. $(1-c_3)\Delta$) common neighbors with $u$. Thus, $(v,w)$ is a $(c_1+c_3+\nu)$-friend edge. 
   
We repeat the same argument for $C_2$, and observe that $v$ has at least $(1-(c_1+c_3+\nu))\Delta$ distinct neighbors in $C_1$, and $(1-(c_1+c_3+\nu))\Delta$ distinct neighbors in $C_2$, for a total of at least $2(1-(c_1+c_3+\nu))\Delta$ neighbors in total. This is a contradiction for $\varepsilon<\frac{1}{8}-\frac{\tau}{2}-\frac{\nu}{4}$. 
\end{proof}

\eat{
\begin{lemma}\label{lemma: unique-dense-almost-clique}
    Let \textsc{Dense} Invariant be maintained at any given point in time, and consider a vertex $v$ which becomes $c_1$-dense. If $v$ has at least one neighbor $u\in V_D$, then there exists a unique almost-clique $C$ containing all $c_1$-friends of $v$ whenever $\varepsilon<\frac{1}{8}-\frac{\tau}{2}$. 
\end{lemma}

\begin{proof}
   Suppose $v$ has exactly one $c_1$-friend in $V_D$. Then we are done. For contradiction, suppose $v$ has two $c_1$ friends $u, w$ in distinct almost-cliques $C_1$ and $C_2$, and assume that \textsc{Dense} Invariant holds.
   
   Consider $C_1$. By \textsc{Dense} Invariant which holds for $u$, $u$ has $(1-c_3)\Delta$, $c_3$ friends in $C_1$. Since, $u,v$ share at least $(1-c_1)\Delta$ neighbors, at least $(1-(c_1+c_3))\Delta$ of neighbors of $u$ must also be neighbors of $v$ since the maximum degree is bounded by $\Delta$. Moreover, each such common neighbor $w\in N(v)\cap N(u)$ must have at least $(1-(c_1+c_3))\Delta$ common neighbors with $v$, since the maximum degree is bounded by $\Delta$ and $v$ (resp. $w$) has at least $(1-c_1)\Delta$ (resp. $(1-c_3)\Delta$) common neighbors with $u$. Thus, $(v,w)$ is a $(c_1+c_3)$-friend edge.
   
We repeat the same argument for $C_2$, and observe that $v$ has at least $(1-(c_1+c_3))\Delta$ distinct neighbors in $C_1$, and $(1-(c_1+c_3))\Delta$ distinct neighbors in $C_2$, for a total of at least $2(1-(c_1+c_3))\Delta$ neighbors in total. This is a contradiction for $\varepsilon<\frac{1}{8}-\frac{\tau}{2}$.
\end{proof}
}

The next lemma shows that any vertex which joins $V_D$ has at least $(1-2c_1)\Delta$ $2c_1$-friends in its almost-clique. 

\begin{lemma}\label{lemma: dense-nbrsatmove}
    Let $\tau < \varepsilon<\frac{1}{8}-\frac{\tau}{2} - \frac{\nu}{4}$.  Then, if \Density\ and \Friendship\ invariants hold for every vertex in $V_D$ before a call to \DenseMoveNoArgs, then any vertex that moves as a result of the call has at least $(1 - 2c_1)\Delta$, $2c_1$-friends in its almost-clique. 
\end{lemma}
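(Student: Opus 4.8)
The plan is to consider a call to \DenseMoveNoArgs{} and track which vertices move and into which almost-clique, splitting into the Type~1 and Type~2 cases exactly as in Algorithm~\ref{alg: Dense Move}. First I would handle the vertex $v$ itself (the one on which \DenseMoveNoArgs{} is invoked): since this call is triggered because $v\in V_1$, Lemma~\ref{lemma: maintain-friends} tells us $v$ is $c_1$-dense, i.e.\ $|N_1(v)|\ge (1-c_1)\Delta$ and every $u\in N_1(v)$ is a $c_1$-friend of $v$. In a Type~2 dense move the new almost-clique is exactly $C=\{v\}\cup N_1(v)$, so $v$ has all of $N_1(v)\ge (1-c_1)\Delta$ of its $c_1$-friends inside $C$; since a $c_1$-friend is in particular a $2c_1$-friend and $(1-c_1)\Delta\ge (1-2c_1)\Delta$, the bound holds for $v$. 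For any other vertex $u\in N_1(v)$ that moves in a Type~2 move, I would argue that $u$ and $v$ each have $\ge(1-c_1)\Delta$ common neighbors, hence (degree bound $\Delta$) they share $\ge (1-2c_1)\Delta$ common neighbors; all of these common neighbors are also $2c_1$-friends of $u$ (again by the degree-$\Delta$ "overlap" argument: a common neighbor $w$ of $u$ and $v$ has $\ge(1-2c_1)\Delta$ neighbors in common with $u$). Among those $\ge(1-2c_1)\Delta$ common neighbors of $u$ and $v$, all but at most $c_1\Delta$ of them lie in $N_1(v)\subseteq C$ — wait, more carefully: $|N(u)\cap N(v)|\ge(1-2c_1)\Delta$ and $N_1(v)\subseteq C$ with $|N(v)\setminus N_1(v)|\le c_1\Delta$, so $|N(u)\cap N_1(v)|\ge (1-3c_1)\Delta$, giving $u$ at least $(1-3c_1)\Delta$ $2c_1$-friends in $C$. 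The cleanest fix is to prove the slightly weaker $(1-2c_1)\Delta$ bound stated in the lemma by being a bit looser, or to note the statement really wants "$2c_1$-friends in its almost-clique" and $(1-3c_1)\Delta\ge(1-2c_1)\Delta$ would be false — so I expect the intended argument gives exactly $(1-2c_1)\Delta$ by a tighter count, e.g.\ using that $v$'s $c_1$-friends that are \emph{non}-neighbors of $u$ number at most $c_1\Delta$ and combining with $|N(u)\cap N(v)|\ge(1-2c_1)\Delta$ only counting within $N_1(v)$.

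Next I would handle the Type~1 dense move, which is the more delicate case and the one where the hypotheses "\Density{} and \Friendship{} invariants hold" and the constraint on $\varepsilon,\tau,\nu$ are actually used. Here Lemma~\ref{lemma: unique-dense-almost-clique} guarantees that all $c_1$-friends of $v$ in $V_D$ lie in a single almost-clique $C$, and $C$ absorbs $v$ together with $N_1(v)\setminus V_D$. For $v$: all of $N_1(v)$ is placed in $C$ (the part in $V_S$ is added explicitly, the part in $V_D$ is already in $C$ by uniqueness), so $v$ has $\ge(1-c_1)\Delta$ $c_1$-friends in $C$, done. For a newly added $w\in N_1(v)\setminus V_D$: as above $w$ shares $\ge(1-2c_1)\Delta$ common neighbors with $v$; I then need that enough of these land in $C$. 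The common neighbors of $v$ and $w$ that are $c_1$-friends of $v$ are in $C$; those that are $c_1$-friends of $v$ but were in $V_D$ are in $C$ by the uniqueness lemma; the only ones possibly outside $C$ are neighbors of $v$ that are not $c_1$-friends of $v$, of which there are at most $c_1\Delta$. So $w$ has $\ge(1-2c_1-c_1)\Delta$... again this is $(1-3c_1)\Delta$; to land at $(1-2c_1)\Delta$ I would instead intersect $N(w)$ with $N_1(v)$ directly: $|N_1(v)|\ge(1-c_1)\Delta$, $|N(w)\cap N(v)|\ge(1-2c_1)\Delta$, but actually the right move is to bound $|N(w)\cap N_1(v)|\ge |N_1(v)| - |N(v)\setminus N(w)| \ge (1-c_1)\Delta - 2c_1\Delta$ — still $(1-3c_1)\Delta$. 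This strongly suggests the lemma's "$2c_1$" is really meant with $c_i=i\varepsilon+\tau$ bookkeeping where the slack is absorbed, or the proof only claims $2c_1$-friends counting common neighbors with $v$ (which are automatically $2c_1$-friends) without insisting they all sit in $C$ — but the lemma explicitly says "in its almost-clique." I would resolve this by following the paper's own constants: since all vertices added in this step are mutually $c_1$-friends of $v$ and land in $C$, and $|C|\ge(1-c_1)\Delta$ just from $\{v\}\cup(N_1(v))$, each moved vertex $w$ has $\ge|C| - (\text{non-neighbors of }w\text{ in }C)$ neighbors in $C$; bounding the non-neighbors of $w$ in $C$ by $|N(v)\setminus N(w)| + (\text{non-}c_1\text{-friends of }v) \le 2c_1\Delta$ wait that over-counts...

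The honest assessment: the main obstacle is pinning down the precise inclusion-exclusion count so that the common-neighbor overlaps between a moved vertex $w$, the pivot $v$, and the almost-clique $C$ — using the degree bound $\Delta$, the $c_1$-density of $v$, and (in the Type~1 case) the \Friendship{} invariant plus Observation~\ref{obs:sparse move count} to control how $C$ may have drifted since creation — net out to at least $(1-2c_1)\Delta$ many $2c_1$-friends of $w$ lying in $C$, rather than $(1-3c_1)\Delta$ or worse. I would organize it as: (i) a "triangle overlap" sub-claim — if $x$ and $y$ share $\ge(1-a)\Delta$ common neighbors and $y,z$ share $\ge(1-b)\Delta$ common neighbors then $x,z$ share $\ge(1-a-b)\Delta$ common neighbors and $(x,z)$ is an $(a+b)$-friend edge (this is the degree-$\Delta$ argument already used in Lemma~\ref{lemma: unique-dense-almost-clique}); (ii) apply it with $x=w$, $y=v$, using $a=c_1$ and the fact that $w\in N_1(v)$; (iii) argue the relevant common neighbors lie in $C$ using uniqueness (Type~1) or by construction (Type~2), being careful that in Type~1 the $c_3$-friends of a current $C$-member may only have $(1-c_3-\nu)\Delta$ of them in $C$ because of past sparse moves, which is why $\nu$ enters the constant constraint; (iv) collect constants and verify $2c_1 = 2\varepsilon+2\tau$ suffices under $\tau<\varepsilon<\tfrac18-\tfrac\tau2-\tfrac\nu4$. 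I expect the final constant-chasing to be routine once the overlap sub-claim and the "which common neighbors are in $C$" bookkeeping are set up correctly; the latter is where I would spend the most care.
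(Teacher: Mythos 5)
Your skeleton is right — split into the pivot vertex and the co-moved vertices, use Lemma~\ref{lemma: unique-dense-almost-clique} for the Type~1 case, and apply the degree-$\Delta$ ``triangle overlap'' to transfer friendship from the pivot — but the arithmetic is derailed by a persistent off-by-one error that you then cannot recover from. For a co-moved vertex $w \in N_1(p)$ (where $p$ is the pivot on which \DenseMoveNoArgs{} is invoked), the edge $(w,p)$ is a $c_1$-friend edge, so the \emph{direct} bound is $|N(w)\cap N(p)|\ge (1-c_1)\Delta$; you instead start from $(1-2c_1)\Delta$, which is what you'd get \emph{after} one application of the overlap lemma but not as the starting fact. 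That extra lost $c_1\Delta$ is exactly what pushes your downstream count to $(1-3c_1)\Delta$, and it is why your various attempted fixes all come back to the same wrong constant.

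The paper's counting also sidesteps the ``intersect $N(w)$ with $N_1(p)$ and then worry about $C$-membership'' bookkeeping that trips you up. It counts the $\ge(1-c_1)\Delta$ $c_1$-friends of the pivot $p$, \emph{all} of which land in $C$ (new clique in the Type~2 case; or by Lemma~\ref{lemma: unique-dense-almost-clique} in the Type~1 case). For any such friend $z$ and any co-moved $w\in N_1(p)$, the overlap through the hub $p$ gives $|N(z)\cap N(w)|\ge |N(z)\cap N(p)|+|N(w)\cap N(p)|-|N(p)|\ge (1-c_1)\Delta + (1-c_1)\Delta - \Delta = (1-2c_1)\Delta$, so every such $z$ that is adjacent to $w$ is a $2c_1$-friend of $w$; another degree-$\Delta$ intersection (the $c_1$-friends of $p$ form a $\ge(1-c_1)\Delta$-subset of $N(p)$, and $N(w)\cap N(p)$ is another such subset) shows at least $(1-2c_1)\Delta$ of them are adjacent to $w$. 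This lands exactly at the lemma's bound. Finally, your concern that the $\nu$-drift and Observation~\ref{obs:sparse move count} need to be tracked inside this proof is misplaced: those enter only inside the proof of Lemma~\ref{lemma: unique-dense-almost-clique}, which Lemma~\ref{lemma: dense-nbrsatmove} invokes as a black box; the present proof only ever reasons about the pivot's freshly-placed $c_1$-friends and never about historical members of $C$.
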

\begin{proof}
Let $v$ be a vertex that moves during a \DenseMoveNoArgs\ call, and let $C$ be the almost-clique $v$ joins after the move. We will prove that $v$ has at least $(1-2c_1)\Delta$, $2c_1$-friends in $C$. 
\eat{Since $2c_1=2\varepsilon+2\tau< 3\varepsilon+\tau$ whenever $\varepsilon> \tau$, this would imply that $v$ has at least $(1 - c_3)\Delta$ $c_3$-friends in $C$ after the move, completing the proof the lemma.}

We first consider the case where $v$ moves to $V_D$ as a result of a call to \DenseMove{v}. 
 There are two sub-cases.  First is where $v$ and all its neighbors in $N_1(v)$ form a new almost-clique $C$.  In this sub-case, by Lemma~\ref{lemma: maintain-friends}, $v$ is $c_1$-dense and every vertex in $N_1(v)$ is a $c_1$-friend of $v$; since $v$ is $c_1$-dense, it has at least $(1 - c_1)\Delta$, $c_1$-friends, all of which are in $C$.  The second sub-case is where $v$ and all of its neighbors in $N_1(v) \cap V_S$ move to the almost-clique $C$ containing a $c_1$-friend $u$ of $v$.  By Lemma \ref{lemma: unique-dense-almost-clique}, it holds that after the move all of $v$'s $c_1$-friends in $V_D$ are in the same almost-clique $C$.  Thus, in both sub-cases, $v$ has at least $(1-c_1)\Delta$, $c_1$-friends in $C$, yielding the desired claim of the lemma.

We next consider the case when $v$ is moved from $V_S$ to $V_D$ when \DenseMove{u} is called for a neighbor $u$ of $v$ in $V_1$; $v$ is moved to $V_D$ and joins $u's$ almost-clique $C$. By the argument in the first case, $u$ has at least $(1-c_1)\Delta$ $c_1$-friends in $C$. Let $w$ be any $c_1$-friend of $u$. Then $|N(w)\cap N(v)|\geq (1-(c_1+c_1))\Delta=(1-2c_1)\Delta$ since the maximum degree is bounded by $\Delta$. Since there are at least $(1-c_1)\Delta>(1-2c_1)\Delta$ such friends, $v$ has at least $(1-2c_1)\Delta$, $2c_1$-friends in $C$, completing the desired claim and the proof of the lemma.
\end{proof}

\begin{lemma}\label{lemma: denseinvariantaftermove}
    If $\tau < \varepsilon < \frac{1}{8} - \frac{\tau}{2} - \frac{\nu}{4}$, then the \Density\ and \Friendship\ invariants are maintained immediately after a call to \DenseMoveNoArgs{}. 
\end{lemma}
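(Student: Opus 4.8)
The plan is to argue under the standing inductive hypothesis---which is exactly the precondition of Lemma~\ref{lemma: dense-nbrsatmove}---that the \Density\ and \Friendship\ invariants hold for every vertex of $V_D$ immediately before the \DenseMoveNoArgs\ call, and then to check that they hold for every vertex immediately after it. The first step is a localization observation: a \DenseMoveNoArgs\ call never alters the friend structures $N_1,N_2,N_3$ or the membership sets $V_1,V_2,V_3$ (these were already finalized by \textsc{Maintain-Friends} earlier in the update), and it never moves a vertex out of $V_D$; it only moves some set $U$ of vertices from $V_S$ into $V_D$ and either enlarges a single existing almost-clique $C$ (Type~1) or creates a new one (Type~2). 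Hence \Density\ is untouched for every vertex outside $U$, and \Friendship\ is preserved for every vertex that already lay in $C$, since $C$ only gains members and the clause ``currently in $C$ or ever a member of $C$'' is monotone. So it remains only to verify both invariants for each $v\in U$, plus \Density\ for the vertices that stay in $V_S$ (which is immediate, as their $V_1$-membership is unchanged).

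For $v\in U$ I would invoke Lemma~\ref{lemma: dense-nbrsatmove}, which under the present hypotheses guarantees that after the move $v$ has at least $(1-2c_1)\Delta$ neighbors $w\in C$ with $(v,w)$ a $2c_1$-friend edge. Since $\tau$ is small relative to $\varepsilon$ (it suffices that $\tau\le\varepsilon/3$, the regime used by our final algorithm), we have $2c_1=2\varepsilon+2\tau\le 3\varepsilon-\tau$; thus each such $w$ is a $(3\varepsilon-\tau)$-friend of $v$ and therefore lies in $N_3(v)$ by the second bullet of Lemma~\ref{lemma: maintain-friends}, while $(1-2c_1)\Delta\ge(1-c_3)\Delta$ because also $2c_1\le c_3$ (using just $\tau<\varepsilon$). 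So $v$ has at least $(1-c_3)\Delta$ neighbors from $N_3(v)$ currently in $C$, which is exactly \Friendship$(v)$. The same bound forces $v\in V_3$: otherwise the last bullet of Lemma~\ref{lemma: maintain-friends} would make $v$ be $(3\varepsilon-\tfrac{3\tau}{4})$-sparse, i.e.\ have fewer than $(1-(3\varepsilon-\tfrac{3\tau}{4}))\Delta$ many $(3\varepsilon-\tfrac{3\tau}{4})$-friends; but for $\tau$ small relative to $\varepsilon$ one has $2c_1\le 3\varepsilon-\tfrac{3\tau}{4}$, so each of the $(1-2c_1)\Delta$ many $2c_1$-friends of $v$ is such a friend and their count already exceeds the threshold, a contradiction. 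Hence $v\in V_3$, which is \Density$(v)$, and all the required checks pass.

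I do not expect a genuine obstacle once Lemma~\ref{lemma: dense-nbrsatmove} is in hand: everything reduces to a few inequalities of the form $c\tau\le\varepsilon$ for small constants $c$, guaranteeing respectively that a $2c_1$-friend edge is strong enough to be recorded in $N_3$, that it beats the $\tfrac{3\tau}{4}$ sparsity slack of Lemma~\ref{lemma: maintain-friends}, and that the degree bound $(1-2c_1)\Delta$ clears the $(1-c_3)\Delta$ bar---all of which hold in our parameter regime. The one point I would flag explicitly in the write-up is that ``immediately after the call'' must be read relative to the vertices this call moves (and those already in its almost-clique): a vertex of $V_1\cap V_S$ that is not a $c_1$-friend of the moved vertex may still be pending, but it belongs to the affected set $U$ that the enclosing loop of \textsc{Update-Decomposition} processes, so the global invariant is restored once that loop completes. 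The only other subtlety is conceptual rather than technical---bridging ``$c_3$-dense as a property of $v$'' and ``$v\in V_3$ as a data-structure fact''---which is precisely what the $\tfrac{3\tau}{4}$ gap in Lemma~\ref{lemma: maintain-friends} is used for above.
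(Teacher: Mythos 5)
Your proof is correct and follows essentially the same route as the paper's: reduce to the moved vertices via a localization observation, invoke Lemma~\ref{lemma: dense-nbrsatmove} to get $(1-2c_1)\Delta$ many $2c_1$-friends in the almost-clique, and translate that into the \Friendship\ and \Density\ invariants via the guarantees of Lemma~\ref{lemma: maintain-friends}. Two small points in your favor over the paper's own write-up: the paper derives \Density\ for the $c_1$-friends of $v$ by asserting they are ``$c_2$-dense,'' which is slightly imprecise (what Lemma~\ref{lemma: dense-nbrsatmove} actually gives is $2c_1$-density with $2c_1 > c_2$, though the same contradiction with $(3\varepsilon - \tfrac{3\tau}{4})$-sparsity closes the gap); you treat all moved vertices uniformly and make the inequality chain explicit. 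You also correctly flag that the argument needs $\tau$ a constant factor smaller than $\varepsilon$ (roughly $\tau \le \varepsilon/3$), which is strictly stronger than the hypothesis $\tau < \varepsilon$ stated in the lemma---the paper's own proof silently needs the same thing, and the condition does hold in the regime $\tau = \varepsilon/3$ used downstream.
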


\begin{proof}
Consider any call \DenseMove{v}.  For the sake of the proof, assume that \Density\ and \Friendship\ invariants hold immediately preceding the call to \DenseMove{v}.  Since only $v$ and its $c_1$-friends in $V_S$ are moved from $V_S$ to $V_D$, both the invariants continue to hold for vertices that are not moved.  Since $v$ is in $V_1$, it satisfies the \Density\ invariant.  Furthermore, every $c_1$-friend of $v$ is $c_2$-dense and hence is in $V_3$ by Lemma~\ref{lemma: maintain-friends}, thus also satisfying the \Density\ invariant.  By Lemma~\ref{lemma: dense-nbrsatmove}, all the vertices that move also satisfy the \Friendship\ invariant.
\end{proof}

\begin{lemma}
    \label{lemma: connectedness invariant}
    If the \Density, \Friendship, and \Connectedness\ invariants hold preceding a call to \DenseMoveNoArgs, then the \Connectedness\ invariant is maintained after the call.
\end{lemma}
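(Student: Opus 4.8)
The plan is to prove that \DenseMoveNoArgs{} preserves the \Connectedness\ invariant by case analysis on the two types of dense moves, in each case exhibiting how the $c_3$-friend edges of the newly-added vertices connect them into the spanning connected subgraph of the (possibly new) almost-clique. Throughout, I will invoke Lemma~\ref{lemma: dense-nbrsatmove} to know that every moved vertex has $(1-2c_1)\Delta \ge (1-c_3)\Delta$-many $2c_1$-friends — hence $c_3$-friends, since $2c_1 = 2\varepsilon+2\tau < 3\varepsilon+\tau = c_3$ when $\tau < \varepsilon$ — inside the almost-clique it ends up in, and Lemma~\ref{lemma: maintain-friends} to translate ``$(i\varepsilon-\tau)$-friend'' into membership of $N_3(\cdot)$.

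\textbf{Type 2 dense move (new almost-clique).} Here $C = \{v\} \cup N_1(v)$ and $U = C$. The claim is that the $c_3$-friend edges within $C$ span and connect $C$. It suffices to show $v$ is a $c_3$-friend of every other vertex of $C$, i.e. every $u \in N_1(v)$: since $u$ is a $c_1$-friend of $v$ and $c_1 \le c_3$, the edge $(u,v)$ is a $c_3$-friend edge, so the star centered at $v$ is a spanning connected subgraph of $C$ consisting of $c_3$-friend edges, and these edges are recorded in the updated $N'(\cdot)$ lists (line~11 of Algorithm~\ref{alg: Dense Move}). Hence \Connectedness$(C)$ holds.

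\textbf{Type 1 dense move (joining an existing almost-clique $C$).} By hypothesis \Connectedness$(C)$ held before the call, so the old $C$ was already spanned and connected by its internal $c_3$-friend edges; I must show each newly added vertex — namely $v$ and each $u \in N_1(v)\setminus V_D$ — attaches to this connected subgraph via a $c_3$-friend edge inside the new $C$. By the guard in line~2 of Algorithm~\ref{alg: Dense Move}, $v$ has a $c_1$-friend $x \in V_D$, and by Lemma~\ref{lemma: unique-dense-almost-clique} that $x$ (and indeed all $c_1$-friends of $v$ in $V_D$) lies in $C$; the edge $(v,x)$ is a $c_3$-friend edge and $x$ is already in the connected core, so $v$ is attached. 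For each newly added $u \in N_1(v) \setminus V_D$: $u$ is a $c_1$-friend of $v$, so $(u,v)$ is a $c_3$-friend edge inside the new $C$, attaching $u$ to $v$ and thence to the core. Since all these new $c_3$-friend edges are recorded when $N'(\cdot)$ is updated (line~6), the $c_3$-friend subgraph of the new $C$ is spanning and connected.

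\textbf{The main obstacle} is not in either case above — those are short — but in discharging the dependency on Lemma~\ref{lemma: unique-dense-almost-clique}, which itself requires the \Density\ and \Friendship\ invariants to hold \emph{before} the call; so the statement of this lemma should (and presumably does, in the full version) carry the same hypothesis, and the whole chain of Lemmas~\ref{lemma: unique-dense-almost-clique}--\ref{lemma: denseinvariantaftermove} and this one must ultimately be bundled into a single induction over the update sequence showing all four invariants \Density, \Friendship, \Size, \Connectedness\ are simultaneously maintained. The delicate point within that induction is the interaction with clique collapse: when $\sigma(C)$ exceeds $\nu\Delta$ and $C$ is dissolved, the vertices of $C$ are returned to $V_S$ and then re-processed by \DenseMoveNoArgs{}; I would need to check that at the moment of re-processing the invariants hold for the \emph{remaining} dense vertices (those not in the collapsed clique), which is exactly what the ordering in Algorithm~\ref{alg: decomposition} (collapse all flagged cliques first, then call \DenseMoveNoArgs{} on the survivors) is designed to guarantee — but making that rigorous, together with the \Size\ bound of $(1+3c_3)\Delta$ surviving a Type~1 move that adds up to $|N_1(v)| \le \Delta$ vertices, is where the real care is needed and is deferred to the subsequent lemmas.
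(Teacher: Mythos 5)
Your proof is correct and follows essentially the same route as the paper's: in a Type~2 move, the $c_1$-friend star centered at $v$ spans the new almost-clique, and in a Type~1 move, the old $C$ is already spanned by $c_3$-friend edges (by hypothesis), $v$ attaches via the $c_1$-friend edge to $u \in C$ produced by the guard in line~2, and each newly moved $w \in N_1(v)\setminus V_D$ attaches to $v$ via its own $c_1$-friend edge, all of which are $c_3$-friend edges since $c_1 < c_3$. Two small remarks: the opening invocation of Lemma~\ref{lemma: dense-nbrsatmove} is superfluous for this argument (you never use the $(1-2c_1)\Delta$ count), and the ``main obstacle'' paragraph is somewhat misdirected — this lemma's statement already carries the \Density\ and \Friendship\ hypotheses you need, so Lemma~\ref{lemma: unique-dense-almost-clique} is available as stated; the induction that threads all four invariants through the update sequence, including the clique-collapse interaction, is indeed handled separately (in Lemma~\ref{lemma: invariants} together with Lemma~\ref{lemma: update-delete}), and is not a gap in this lemma's proof.
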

\begin{proof}
Consider any call \DenseMove{v}.  For the sake of the proof, assume that \Density\ and \Friendship\ invariants hold immediately preceding the call to \DenseMove{v}.  Since only $v$ and its $c_1$-friends in $V_S$ are moved from $V_S$ to $V_D$, the \Connectedness\ invariant continues to hold for all cliques other than the one containing $v$.  We consider two cases.  First is where $v$ and all its neighbors in $N_1(v)$ form a new almost-clique $C$.  Clearly, $v$ and its $c_1$-friend edges from a spanning tree for $C$, completing the proof for this case.  The second case is where $v$ and all of its neighbors in $N_1(v) \cap V_S$ move to the almost-clique $C$ containing a $c_1$-friend $u$ of $v$.  By our assumption, before the move, the $c_3$-friend edges in $C$ spanned the vertices in $C$.  Since $(u,v)$ is a $c_1$-friend edge and all of the other vertices that move have a $c_1$-friend edge to $v$, it follows that after the move, the $c_3$-friend edges in $C$ span all the vertices in $C$, thus completing the proof. 
\end{proof}

Next, we upper bound the size of any almost-clique $C$ by upper bounding the size of $\b{E_C(v)}$ for any $v\in C$. The proofs of the next two lemmas are similar to the proofs of Lemmas 3.4, 3.9 in \cite{hss}.

\begin{lemma} \label{lemma: densecommonnbrs}  Let $\varepsilon<\frac{1}{15}-\frac{\tau}{3}$. Then, if \Density, \Friendship, and \Connectedness\ invariants hold, then 
for any two vertices $u,v\in V_D$ that are in the same almost-clique $C$, $|N_C(u)\cap N_C(v)|\geq (1-2c_3)\Delta$. 
\end{lemma}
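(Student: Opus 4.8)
The plan is to follow the pattern of the corresponding estimate in \cite{hss} (the analogue of Lemma~3.4 there): for each of $u$ and $v$, extract a large set of $c_3$-friends that actually lie inside $C$; locate a common such friend $w\in C$; and then pass through $w$ by an inclusion--exclusion argument on neighbourhoods, being careful to keep everything inside $C$ at the end. Throughout, recall $c_3=3\varepsilon+\tau$, so the hypothesis $\varepsilon<\frac{1}{15}-\frac{\tau}{3}$ is exactly $c_3<\frac15$.

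First I would combine the \Friendship\ invariant with Observation~\ref{obs:sparse move count} and Lemma~\ref{lemma: maintain-friends}. \Friendship$(u)$ guarantees that $u$ has at least $(1-c_3)\Delta$ neighbours from $N_3(u)$ that are currently in $C$ or were members of $C$ at some time since its creation. The only way a vertex leaves an almost-clique without the clique being destroyed is through a call to \SparseMoveNoArgs, and $\sparseMoveCount{C}\le\nu\Delta$ by Observation~\ref{obs:sparse move count}; hence at most $\nu\Delta$ of those neighbours are former (rather than current) members. So $u$ has at least $(1-c_3-\nu)\Delta$ neighbours of $N_3(u)$ currently in $C$, and by Lemma~\ref{lemma: maintain-friends} each of these is a $c_3$-friend of $u$. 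Call this set $F_u\subseteq N_C(u)$, and define $F_v$ analogously. In particular $|N_C(u)|,|N_C(v)|\ge(1-c_3-\nu)\Delta$, so each of $u,v$ has at most $(c_3+\nu)\Delta$ neighbours outside $C$ — a fact I will need at the very end.

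Next I would argue that $u$ and $v$ have a common $c_3$-friend $w$ lying in $C$, i.e.\ $F_u\cap F_v\neq\emptyset$. Using the \Size\ upper bound $|C|\le(1+3c_3)\Delta$ (which either accompanies the other invariants or is established just beforehand, as in \cite{hss}) this is immediate from $|F_u|+|F_v|>|C|$, which holds once $c_3<\frac{1-2\nu}{5}$; alternatively one can walk a shortest $c_3$-friend path in $C$ from $u$ to $v$, guaranteed by \Connectedness, and show by a ``first failure'' argument that vertices two steps apart on it already share $(1-2c_3)\Delta$ neighbours, which forces the path to be short. Fixing such a $w\in C$, we have $|N(u)\cap N(w)|\ge(1-c_3)\Delta$, $|N(v)\cap N(w)|\ge(1-c_3)\Delta$ and $|N(w)|\le\Delta$, so $|N(u)\cap N(v)\cap N(w)|\ge(1-c_3)\Delta+(1-c_3)\Delta-\Delta=(1-2c_3)\Delta$. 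To finish, intersect with $C$: since $w$ also satisfies \Friendship, all but at most $(c_3+\nu)\Delta$ vertices of $N(w)$ lie in $C$; and any vertex of $N(u)\cap N(v)$ that lies in $C$ is automatically in $N_C(u)\cap N_C(v)$. Combining these gives the bound.

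The main obstacle is precisely this last step. Routing naively through $w$ loses an extra additive $\Theta((c_3+\nu)\Delta)$ when the common neighbourhood is pushed into $C$, producing only $(1-3c_3-\nu)\Delta$, whereas the statement asks for the clean $(1-2c_3)\Delta$. Closing this gap is where the slack $c_3<\frac15$ and the freedom to take $\nu$ arbitrarily small must be spent: either by charging the excluded vertices against the extra room between $(1-c_3)\Delta$ and $(1-2c_3)\Delta$ available along $c_3$-friend edges inside $C$, or by choosing $w$ (or averaging over the many common friends in $F_u\cap F_v$) so that the neighbourhood one actually uses is the one concentrated inside $C$. I expect the bookkeeping to run parallel to the corresponding estimate in the proof of Lemma~3.4 of \cite{hss}, adapted to the three nested thresholds $c_1,c_2,c_3$ and the $\nu\Delta$ slack introduced by sparse moves.
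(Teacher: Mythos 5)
Your proposal is structurally different from the paper's. The paper establishes the bound by induction along a $c_3$-friend path $u = u_0, u_1, \ldots, u_k = v$ inside $C$, supplied by the \Connectedness\ invariant: at each step it first derives $|N(u)\cap N(u_{i+1})| \geq (1-3c_3)\Delta$ by passing through $u_i$, then upgrades to $(1-2c_3)\Delta$ by exhibiting a common $c_3$-friend of $u$ and $u_{i+1}$ (there are at least $(1-5c_3)\Delta > 0$ of them since both vertices are $c_3$-dense) and routing the inclusion--exclusion through that friend. Your main route instead tries to find a common $c_3$-friend of $u$ and $v$ in $C$ in one shot, by pigeonhole: $F_u, F_v \subseteq C$ each of size at least $(1-c_3-\nu)\Delta$, together with $|C| \leq (1+3c_3)\Delta$, force $F_u \cap F_v \neq \emptyset$. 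The inclusion--exclusion endgame through the common friend $w$ is then the same in both arguments.

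The trouble with the pigeonhole step is that it relies on the \Size\ upper bound $|C|\le(1+3c_3)\Delta$, which is \emph{not} among the lemma's hypotheses (only \Density, \Friendship, and \Connectedness\ are assumed). In the paper, that upper bound is Corollary~\ref{corr: ubalmost-cliquesize}, and it is derived via Lemma~\ref{lemma: non-degree}, which in turn invokes the present lemma; so appealing to \Size\ here would be circular. Your hedged alternative --- walking a $c_3$-friend path --- is exactly how the paper sidesteps this, and if you commit to it you should flesh out the induction step rather than the pigeonhole. Finally, your worry about the $N_C$ vs.\ $N$ discrepancy at the end is pointing at a glitch in the paper's \emph{statement}, not a gap you need to close: the paper's own proof establishes only $|N(u)\cap N(v)|\ge(1-2c_3)\Delta$ (no intersection with $C$ at the end), and downstream in Lemma~\ref{lemma: non-degree} it is cited in exactly that form. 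So $N_C(u)\cap N_C(v)$ in the statement is almost certainly a typo for $N(u)\cap N(v)$, and the extra additive $(c_3+\nu)\Delta$ loss you were trying to eliminate does not arise in the intended claim.
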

\begin{proof}
    Suppose the \Density, \Friendship, and \Connectedness\ invariants hold.  Then by the \Connectedness\ invariant, for any two vertices $(u,v)$ in $C$, there exists a path $u_0\coloneqq u, u_1\coloneqq v, u_2,\ldots,u_k=v$, where $(u_i, u_{i-1})$ is a $c_3$-friend edge. We prove that $|N(u)\cap N(u_i)|\geq (1-2c_3)\Delta$ at any given point in time for $i\in [k]$. The base case follows since $(u,u_1)$ is a $c_3$-friend edge. By the induction hypothesis $|N(u)\cap N(u_{i})|$ has a $2c_3$-friend edge. Thus,
    \begin{align*}
        |N(u)\cap N(u_{i+1})|&\geq |N(u)\cap N(u_{i}) \cap N(u_{i+1})| \\
                            &=|N(u_i)\cap N(u_{i+1})| + |N(u_i)\cap N(u)|- |(N(u_i)\cap N(u_{i+1}))\cup (N(u_i)\cap N(u))| \\
                            &\geq |N(u_i)\cap N(u_{i+1})| + |N(u_i)\cap N(u)| -|N(u_i)| \\
                            &\geq (1-c_3)\Delta+(1-2c_3)\Delta-\Delta \\
                            &=(1-3c_3)\Delta.
    \end{align*}
    Since $u$ and $u_{i+1}$ are both $c_3$-dense, they have at most $c_3\Delta$ neighbors each which are not $c_3$-friends. Thus, the number of common friends of $u$ and $u_{i+1}$ are at least $|N(u)\cap N(u_{i+1})|-2c_3\Delta\geq (1-3c_3)\Delta-2c_3\Delta=(1-5c_3)\Delta$. In particular, for $c_3=3\varepsilon+\tau<\frac{1}{5}$, $|N(u)\cap N(u_{i+1})|> 0$, so they have at least one common neighbor $x\in C$ which is $c_3$-dense (by the \Density\ invariant) such that $|N(x)\cap N(u)|\geq (1-c_3)\Delta$, and $|N(x)\cap N(u_{i+1})|\geq (1-c_3)\Delta$. Since the maximum degree is $\Delta$, it follows that, 
    \begin{align*}
        |N(u)\cap N(u_{i+1})|\geq (1-c_3)\Delta+(1-c_3)\Delta -\Delta =(1-2c_3)\Delta
    \end{align*}
    completing the proof.
\end{proof}

\eat{
\begin{lemma} \label{lemma: densecommonnbrs}  Let $\varepsilon<\frac{1}{15}-\frac{\tau}{3}$. Then, if \Density, \Friendship, and \Connectedness\ invariants hold, then so does the \Size\ invariant. \eat{before a step, then for any two vertices $u,v\in V_D$ that are in the same almost-clique $C$, $|N(u)\cap N(v)|\geq (1-2c_3)\Delta$.}
\end{lemma}
\begin{proof}
    Every time a vertex $v$ is moved to $C$, it has a $c_1$-friend edge to another vertex $u$ in $C$. Moreover, by the \Friendship\ invariant, there exists at least one neighbor $u\in C$ such that $(u,v)$ is a $c_3$-friend edge. Thus, for any two vertices $(u,v)$ in $C$, there exists a path $u_0\coloneqq u, u_1\coloneqq v, u_2,\ldots,u_k=v$, where $(u_i, u_{i-1})$ is a $c_3$-friend edge. We prove that $|N(u)\cap N(u_i)|\geq (1-2c_3)\Delta$ at any given point in time for $i\in [k]$. The base case follows since $(u,u_1)$ is a $c_3$ friend edge. By the induction hypothesis $|N(u)\cap N(u_{i})|$ is a $2c_3$-friend edge. Thus,
    \begin{align*}
        |N(u)\cap N(u_{i+1})|&\geq |N(u)\cap N(u_{i}) \cap N(u_{i+1})| \\
                            &=|N(u_i)\cap N(u_{i+1})| + |N(u_i)\cap N(u)|- |(N(u_i)\cap N(u_{i+1}))\cup (N(u_i)\cap N(u))| \\
                            &\geq |N(u_i)\cap N(u_{i+1})| + |N(u_i)\cap N(u)| -|N(u_i)| \\
                            &\geq (1-c_3)\Delta+(1-2c_3)\Delta-\Delta \\
                            &=(1-3c_3)\Delta
    \end{align*}
    Since $u$ and $u_{i+1}$ are both $c_3$-dense, they have at most $c_3\Delta$ neighbors each which are not $c_3$-friends. Thus, the number of common friends of $u$ and $u_{i+1}$ are at least $|N(u)\cap N(u_{i+1})|-2c_3\Delta\geq (1-3c_3)\Delta-2c_3\Delta=(1-5c_3)\Delta$. In particular, for $c_3=3\varepsilon+\tau<\frac{1}{5}$, $|N(u)\cap N(u_{i+1})|> 0$, so they have at least one common neighbor $x\in C$ which is $c_3$-dense (by \textsc{Dense} Invariant) such that $|N(x)\cap N(u)|\geq (1-c_3)\Delta$, and $|N(x)\cap N(u_{i+1})|\geq (1-c_3)\Delta$. Since the maximum degree is $\Delta$, it follows that, 
    \begin{align*}
        |N(u)\cap N(u_{i+1})|\geq (1-c_3)\Delta+(1-c_3)\Delta -\Delta =(1-2c_3)\Delta
    \end{align*}
    completing the proof.
    \end{proof} 
}

\begin{lemma}\label{lemma: non-degree}
If \Density, \Friendship, and \Connectedness\ invariants hold, then for any $v\in V_D$ such that $v$ is in almost-clique $C$, $|\b{E_C(v)}|\leq 3c_3\Delta$, whenever $\varepsilon<\frac{1}{15}-\frac{\tau}{3}$.
\end{lemma}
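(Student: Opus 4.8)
The plan is to bound $|\overline{E_C(v)}|$ by a double count on the edges running between $N_C(v)$ and the set $B := \overline{E_C(v)}$ of non-neighbours of $v$ inside $C$. First I would record the trivial partition $C = \{v\} \sqcup N_C(v) \sqcup B$ (every vertex of $C$ other than $v$ is either a neighbour of $v$ in $C$ or a non-neighbour), so that $|B| = |C| - 1 - |N_C(v)|$, and then invoke Lemma~\ref{lemma: densecommonnbrs}, which applies since the \Density, \Friendship, and \Connectedness\ invariants are assumed in the hypothesis: for every $u \in C$ we have $|N_C(u) \cap N_C(v)| \geq (1-2c_3)\Delta$. (The degenerate cases $|C| \le 2$, where $B = \emptyset$, are handled trivially.)

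For the lower bound on the number $e(B, N_C(v))$ of edges between $B$ and $N_C(v)$: each $u \in B$ is incident to exactly $|N_C(u) \cap N_C(v)| \geq (1-2c_3)\Delta$ such edges (a neighbour of $u$ inside $C$ that is also a neighbour of $v$ must lie in $N_C(v)$, and distinct common neighbours give distinct edges), and since $N_C(v) \cap B = \emptyset$ these contributions are over pairwise disjoint edge sets. Summing over $u \in B$ gives $e(B, N_C(v)) \geq (1-2c_3)\Delta\,|B|$. For the matching upper bound, fix $w \in N_C(v)$; since $w$ is adjacent to $v$, the set $N_C(w)$ decomposes as the disjoint union $(N_C(w)\cap N_C(v)) \sqcup (N_C(w)\cap B) \sqcup \{v\}$, and Lemma~\ref{lemma: densecommonnbrs} again bounds $|N_C(w)\cap N_C(v)| \geq (1-2c_3)\Delta$; combined with the maximum-degree bound $|N_C(w)| \leq \Delta$ this forces $|N_C(w)\cap B| \leq 2c_3\Delta$. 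Summing over $w \in N_C(v)$ and using $|N_C(v)| \leq \Delta$ yields $e(B, N_C(v)) \leq 2c_3\Delta\,|N_C(v)| \leq 2c_3\Delta^2$.

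Putting the two inequalities together gives $(1-2c_3)\Delta\,|B| \leq 2c_3\Delta^2$, i.e. $|\overline{E_C(v)}| = |B| \leq \tfrac{2c_3}{1-2c_3}\,\Delta$, which is at most $3c_3\Delta$ for the stated parameter range, since $c_3 = 3\varepsilon+\tau$ is bounded away from $\tfrac12$ when $\varepsilon < \tfrac1{15}-\tfrac{\tau}{3}$. This completes the proof; notice that, unlike for the subsequent \Size\ invariant, no separate lower bound on $|N_C(v)|$ or upper bound on $|C|$ is needed here.

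The main point to be careful about is that the whole argument rests on Lemma~\ref{lemma: densecommonnbrs} in exactly its stated form, with the common neighbourhood restricted to $C$ (we use $|N_C(w) \cap N_C(v)| \geq (1-2c_3)\Delta$ for $w \in N_C(v)$, not merely the whole-graph version); so I would first double-check that its inductive proof genuinely exhibits, at each step, a common neighbour lying in $C$ — which it does, as the witness $x$ is chosen inside $C$. A secondary, purely cosmetic subtlety is squeezing the constant down to exactly $3c_3$: the clean double count delivers $\tfrac{2c_3}{1-2c_3}\Delta$, so one either tightens the hypothesis slightly (it suffices, e.g., that $1-2c_3 \ge \tfrac23$) or absorbs the gap into the low-order-constant slack the paper already reserves; in the eventual regime where $\varepsilon$ is sub-constant this is immaterial.
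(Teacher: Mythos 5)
Your proposal is the same double count at heart, but with one structural change that turns out to matter: you take the middleman $w$ from $N_C(v)$ rather than from $N(v)$, and you then invoke Lemma~\ref{lemma: densecommonnbrs} \emph{twice} in its $C$-restricted form, $|N_C(\cdot)\cap N_C(\cdot)|\ge(1-2c_3)\Delta$. The paper's own proof of Lemma~\ref{lemma: non-degree} deliberately does \emph{not} restrict $w$ to $C$: it counts length-two paths $v$--$w$--$u$ with $w\in N(v)$ arbitrary, uses the unrestricted bound $|N(v)\cap N(u)|\ge(1-2c_3)\Delta$ for the lower bound, and for the upper bound splits $N(v)$ into the $c_3$-friends of $v$ and the rest, giving $\sum_{w\in N(v)}|N(w)\setminus N(v)|\le(2-c_3)c_3\Delta^2$ rather than appealing to the common-neighbour lemma a second time.

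The gap in your version is precisely the reliance on the $C$-restricted form. Although the \emph{statement} of Lemma~\ref{lemma: densecommonnbrs} uses $N_C$, its proof establishes only the unrestricted bound: the inductive conclusion is $|N(u)\cap N(u_{i+1})|\ge(1-2c_3)\Delta$, obtained from $|N(x)\cap N(u)|\ge(1-c_3)\Delta$, $|N(x)\cap N(u_{i+1})|\ge(1-c_3)\Delta$, and $|N(x)|\le\Delta$. Even granting that the witness $x$ lies in $C$, the common neighbours of $u$ and $u_{i+1}$ that this argument produces are common neighbours of $x$, which need not lie in $C$; so the step does \emph{not} deliver $|N_C(u)\cap N_C(u_{i+1})|\ge(1-2c_3)\Delta$, and your ``double-check'' is not sufficient. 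Passing from the $N$-version to the $N_C$-version loses all common neighbours outside $C$, which the \Friendship\ invariant only bounds by roughly $(c_3+\nu)\Delta$ per vertex; plugging the honestly-justified $|N_C(w)\cap N_C(v)|\ge(1-3c_3-\nu)\Delta$ into your count gives $|\b{E_C(v)}|\le\frac{(3c_3+\nu)}{1-3c_3-\nu}\Delta$, which is far above $3c_3\Delta$ for $c_3$ near $1/5$. Separately (and less importantly, as you note) even taking the lemma at face value your constant $\frac{2c_3}{1-2c_3}$ is $\le 3c_3$ only for $c_3\le 1/6$, whereas the hypothesis $\varepsilon<\frac{1}{15}-\frac{\tau}{3}$ permits $c_3$ up to $1/5$; the paper's $\frac{c_3(2-c_3)}{1-2c_3}\le 3c_3$ is tight exactly at $c_3=1/5$. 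To repair the proof, do as the paper does: let $w$ range over all of $N(v)$ so that the unrestricted $|N(u)\cap N(v)|\ge(1-2c_3)\Delta$ applies for the lower bound, and for the upper bound bound $|N(w)\setminus N(v)|$ by $c_3\Delta$ when $w\in N_3(v)$ (since $v$ is $c_3$-dense) and by $\Delta$ otherwise.
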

\begin{proof}
Let $v$ be in almost-clique $C$ and consider $u\in N(v)$. By Lemma \ref{lemma: densecommonnbrs}, it follows that any two vertices $v,u\in C$ have at least one common neighbor $w$ where $u\notin N(v)$. We count the number of length two paths of the form $v,w,u$, denoted by $P$. For any $u\in C\backslash N(v)$, there at most $|N(v)\cap N(u)|$ possible choices for $w$. By Lemma \ref{lemma: densecommonnbrs}, $|N(v)\cap N(u)|\geq (1-2c_3)\Delta$, so that $P=\sum_{u\in C\backslash N(v)} |N(v)\cap N(u)|\geq |E_N(v)|(1-2c_3)\Delta$. On the other hand, summing over all intermediate vertices $w$ on the $v-u$ path, we have that, 
\begin{align*}
    P=\sum_{w\in N(v)}|N(w)\cap (C\backslash N(v))| &\leq \sum_{w\in N(v)}|N(w)\backslash N(v)| \\
            &=\sum_{w\in N_3(v)} |N(w)\backslash N(v)|+\sum_{w\in N(v)\backslash N_3(v)} |N(w)\backslash N(v)| \\
            &\leq \sum_{w\in N_3(v)} c_3\Delta + \sum_{w\in N(v)\backslash N_3(v)} \Delta \\ 
            &\leq (1-c_3)\Delta\cdot c_3\Delta+ c_3\Delta^2 \\
            &=(2-c_3)c_3\Delta^2.
\end{align*}
Combining the inequalities yields $|\b{E_C(v)}|\leq \Delta \frac{c_3(2-c_3)}{1-2c_3}$, which is at most $3c_3\Delta$ for $\varepsilon<\frac{1}{15}-\frac{\tau}{3}$.
\end{proof}

\begin{corollary}\label{corr: ubalmost-cliquesize}
    For $\varepsilon<\frac{1}{15}-\frac{\tau}{3}$, 
    if \Density, \Friendship, and \Connectedness\ invariants hold, then
    any almost-clique on vertices in $V_D$ has size at most $(1+3c_3)\Delta$.
\end{corollary}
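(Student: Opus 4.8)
The plan is to derive this as an immediate consequence of the per-vertex non-edge bound in Lemma~\ref{lemma: non-degree}. Fix any almost-clique $C$ on vertices of $V_D$ and pick an arbitrary $v \in C$. Every other vertex of $C$ is either a neighbor or a non-neighbor of $v$, so
\begin{equation*}
|C| = 1 + |N_C(v)| + |\b{E_C(v)}|.
\end{equation*}
Since the maximum degree in $G$ is $\Delta$, we have $|N_C(v)| \le d(v) \le \Delta$. The \Density, \Friendship, and \Connectedness\ invariants are assumed to hold and $\varepsilon < \frac{1}{15} - \frac{\tau}{3}$, so Lemma~\ref{lemma: non-degree} applies and gives $|\b{E_C(v)}| \le 3c_3\Delta$. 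Combining, $|C| \le 1 + \Delta + 3c_3\Delta$.

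The only subtlety is the stray additive $1$: the clean statement $|C| \le (1+3c_3)\Delta$ requires absorbing it. I would do this by invoking the sharper form of Lemma~\ref{lemma: non-degree}, whose proof in fact establishes $|\b{E_C(v)}| \le \Delta\,\frac{c_3(2-c_3)}{1-2c_3}$, which is strictly below $3c_3\Delta$ throughout the range $\varepsilon < \frac{1}{15}-\frac{\tau}{3}$ (equivalently $c_3 < \frac15$). The resulting slack $\bigl(3c_3 - \tfrac{c_3(2-c_3)}{1-2c_3}\bigr)\Delta = \Theta(c_3\Delta) = \Theta(\varepsilon\Delta)$ comfortably exceeds $1$ in the regime where this machinery is used (recall $\Delta > n^{8/9}$ and $\varepsilon$ is only mildly sub-constant). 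Hence $|C| \le (1+3c_3)\Delta$. Alternatively, since every downstream size bound is stated up to constant factors of $\varepsilon\Delta$, one could simply carry the harmless $+1$ and absorb it where the corollary is applied.

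I do not anticipate any genuine obstacle here — the corollary is a one-line consequence of Lemma~\ref{lemma: non-degree}, and essentially all the work lies in that lemma and in Lemma~\ref{lemma: densecommonnbrs} on which it rests (establishing $|N_C(u) \cap N_C(v)| \ge (1-2c_3)\Delta$ by chaining $c_3$-friend edges along a path guaranteed by \Connectedness, then double counting length-two paths $v,w,u$ through $N(v)$). The matching lower bound $|C| \ge (1-c_3)\Delta$ needed for the full \Size\ invariant is orthogonal and comes from the \Friendship\ invariant together with the collapse rule that destroys an almost-clique once $\sigma(C)$ reaches $\nu\Delta$, so it is not re-proved here.
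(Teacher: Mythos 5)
Your proof is correct and matches the paper's one-line derivation: decompose $C$ into $\{v\}$, the neighbors of $v$ in $C$, and the non-neighbors, bound $|N_C(v)| \le \Delta$, and apply Lemma~\ref{lemma: non-degree} to bound $|\b{E_C(v)}|$. You are right to flag the additive $+1$ from $v$ itself, which the paper's write-up silently drops, and your observation that the slack in Lemma~\ref{lemma: non-degree} (namely $\bigl(3c_3 - \tfrac{c_3(2-c_3)}{1-2c_3}\bigr)\Delta = \tfrac{c_3(1-5c_3)}{1-2c_3}\Delta$) absorbs it in the operative parameter regime is accurate.
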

\begin{proof}
    For any $v\in C$, note that $|C|=|C\backslash N(v)|+|C\cap N(v)|\leq |E_N(v)|+|N(v)|\leq (1+ 3c_3)\Delta$ where the final inequality follows from Lemma \ref{lemma: non-degree}.
\end{proof}

\begin{lemma}
\label{lem:connectedness}
    For $\varepsilon < \frac{1}{15} - \frac{\tau}{3}$, if the \Friendship\ invariant holds for every vertex and the \Size\ invariant holds for every almost-clique, then the \Connectedness\ invariant holds for every almost-clique.
\end{lemma}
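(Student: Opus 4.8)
The plan is to work with the spanning subgraph $H_C$ of $C$ whose edges are exactly the $c_3$-friend edges with both endpoints in $C$, and to show that $H_C$ has the property that \emph{every} pair of vertices $u,v\in C$ has a common neighbor in $H_C$. This immediately forces $H_C$ to be connected, and since it will also be clear that no vertex of $C$ is isolated in $H_C$, the subgraph is spanning and connected, i.e.\ the \Connectedness\ invariant holds.

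First I would translate the \Friendship\ invariant, which counts friends that are in $C$ \emph{or were ever in $C$}, into a statement about friends that are \emph{currently} in $C$. The point is that within the lifetime of a fixed almost-clique $C$ (from its creation to its deletion) the only operation that removes a vertex from $C$ is a call to \SparseMoveNoArgs, and by inspection of \textsc{Update-Decomposition} each such removal is accompanied by an increment of $\sigma(C)$ (a clique collapse instead deletes $C$ outright, after which the invariant for $C$ is vacuous). Hence the number of distinct vertices that were ever in $C$ but are not in $C$ now is at most $\sigma(C)$, which is at most $\nu\Delta$ by Observation~\ref{obs:sparse move count}. Combining this with the \Friendship\ invariant for $v$, the set $S_v := N_3(v)\cap C$ of $c_3$-friends of $v$ currently in $C$ satisfies $|S_v|\ge (1-c_3-\nu)\Delta$; in particular $S_v\neq\emptyset$, so every vertex of $C$ is incident to a $c_3$-friend edge inside $C$.

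Next I would invoke the \Size\ invariant, which gives $|C|\le (1+3c_3)\Delta$, and run the counting argument: for any $u,v\in C$, by inclusion--exclusion $|S_u\cap S_v|\ge |S_u|+|S_v|-|C| \ge (1-5c_3-2\nu)\Delta$, which is positive provided $5c_3+2\nu<1$. Since $\varepsilon<\tfrac{1}{15}-\tfrac{\tau}{3}$ gives $c_3=3\varepsilon+\tau<\tfrac15$, this holds for the (suitably small) constant $\nu$ fixed by the algorithm. Thus $u$ and $v$ have a common $w\in C$ with $(u,w)$ and $(w,v)$ both $c_3$-friend edges, so $u$ and $v$ are at distance at most $2$ in $H_C$; as this holds for every pair, $H_C$ is connected, and it is spanning since each $S_v\neq\emptyset$.

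I expect the only delicate step to be the first one: arguing that the historical "or were members" slack in the \Friendship\ invariant costs only $\nu\Delta$ and can be absorbed, which hinges on Observation~\ref{obs:sparse move count} and on the bookkeeping fact that (within its lifetime) a clique loses vertices only through counted sparse moves. Everything after that is elementary inclusion--exclusion together with the constant inequality $c_3<1/5$.
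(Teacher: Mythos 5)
Your proof is correct and follows essentially the same route as the paper: bound $|N_3(v)\cap C|$ from below using the \Friendship\ invariant together with Observation~\ref{obs:sparse move count} (charging the ``or were members'' slack to $\sigma(C)\le\nu\Delta$), then apply inclusion--exclusion against the \Size\ bound $|C|\le(1+3c_3)\Delta$ to produce a common $c_3$-friend for every pair. In fact you state the inclusion--exclusion step more carefully than the paper does: the paper writes the sufficient condition as ``$2(1-c_3-\nu)>1$'', which compares against $\Delta$ rather than against $|C|\le(1+3c_3)\Delta$, whereas the correct requirement is $2(1-c_3-\nu)>1+3c_3$, i.e.\ $1-5c_3-2\nu>0$, exactly as you derive; the parameter regime $c_3<\tfrac15-\nu$ does satisfy this stronger inequality, so the paper's conclusion stands, but your write-up makes the needed constant inequality explicit.
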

\begin{proof}
    Fix an almost-clique $C$.  By the \Friendship\ invariant, every vertex $v$ of $C$ has at least $(1 - c_3)\Delta$ neighbors in $N_3(v) \cap C$.  By Lemma~\ref{lemma: maintain-friends}, this implies that every $v \in C$ has at least $(1 - c_3 - \nu)\Delta$ $c_3$-friends in $C$.  Since $\varepsilon < \frac{1}{15} - \frac{\tau}{3}$, $c_3 = 3\varepsilon + \tau < \frac{1}{5} - \nu$ for a sufficiently small constant $\nu>0$.  Therefore, $2(1-c_3-\nu) > 1$, implying that any two vertices $u$ and $v$ in $C$ share a neighbor $w$ which is a $c_3$-friend for both.  This establishes the \Connectedness\ invariant for $C$.  
\end{proof}

In the following, we establish three of the four invariants after the decomposition is updated following an edge deletion.

\begin{lemma}
\label{lemma: update-delete}
Following the completion of \textsc{Update-Decomposition} after an edge deletion, the \Density, \Friendship, and \Size\ invariants continue to hold.
\end{lemma}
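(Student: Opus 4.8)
The plan is to argue by induction on the update sequence: assume all four invariants \Density, \Friendship, \Size, \Connectedness\ hold at the start of the edge-deletion step, and track how each operation \textsc{Update-Decomposition} performs on a deletion affects them --- an individual \SparseMoveNoArgs, a clique collapse, and a \DenseMoveNoArgs\ triggered on a vertex of a collapsed clique. Since Lemma~\ref{lem:connectedness} derives \Connectedness\ from \Friendship\ and \Size\ alone, it suffices to re-establish those three. Two structural facts will be used throughout. First, by Lemma~\ref{lemma: maintain-friends} the call to \textsc{Maintain-Friends} correctly refreshes the sets $N_i(\cdot)$ and $V_i$; moreover a $c_1$-friend of a vertex is in particular a $c_3$-friend (as $c_1<c_3$), so $N_1(v)\subseteq N_3(v)$ and hence $V_1\subseteq V_3$. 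Consequently a vertex's membership in $V_3$ changes only if it lies in the returned set $U$, and every such vertex is inspected by the main loop. Second, a vertex removed from an almost-clique $C$ by a sparse move still contributes to the \Friendship\ invariant of the vertices remaining in $C$ (it ``was a member of $C$ at some time since $C$'s creation''), and the list $N'(w)$ is never purged of departed members, so $|N'(w)|$ equals the \Friendship\ count of $w$ at all times.

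First I would handle \Density. A vertex finishing the step in $V_D$ without being moved keeps its $V_3$-status unless it is in $U$, in which case the loop, on detecting a vertex of $V_D$ outside $V_3$, triggers a sparse move or a collapse; so every surviving vertex of $V_D$ lies in $V_3$. For a vertex $v$ moved to $V_S$ by an individual \SparseMove{v}: if the trigger was $v\notin V_3$, then $v\notin V_1$ by $V_1\subseteq V_3$, so \Density\ holds for $v$; if the trigger was $|N'(v)|\le (1-c_3)\Delta$, I would show such a $v$ cannot be $c_1$-dense --- by Lemma~\ref{lemma: dense-nbrsatmove} it had at least $(1-2c_1)\Delta$ $2c_1$-friends (hence $c_3$-friends, since $2c_1<c_3$) among members of $C$ when it joined, so losing enough of these to push $|N'(v)|$ below $(1-c_3)\Delta$ forces a loss of $\Omega(\varepsilon\Delta)$ common neighbours, which, over the $O(\tau\Delta)$ updates charged to $v$ since the last \textsc{Update}$(v)$, is incompatible with $v$ remaining $c_1$-dense. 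Finally, a vertex moved to $V_S$ by a collapse is re-examined: if $v\notin V_1$ then \Density\ is immediate, and if $v\in V_1\cap V_S$ then \DenseMove{v} is invoked and Lemma~\ref{lemma: denseinvariantaftermove} gives \Density.

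Next, \Friendship. The vertices whose \Friendship\ status can change are those joining a clique via a collapse-triggered \DenseMoveNoArgs\ --- handled directly by Lemma~\ref{lemma: denseinvariantaftermove} --- and those remaining in a clique $C$ some of whose vertices were sparse-moved. For the latter, $|N'(w)|$ decreases only when $w$ loses a $c_3$-friend (so $N_3(w)$ changes and $w\in U$), or, for $w\in\{u,v\}$, when the deleted edge is struck from $N_3(w)$, a drop of at most one absorbed by the slack in the constants; any $w$ whose \Friendship\ count actually reaches $(1-c_3)\Delta$ is therefore in $U$ and is either sparse-moved or caught by the collapse of $C$ when $\sparseMoveCount{C}$ reaches the threshold. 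For \Size, the upper bound $|C|\le(1+3c_3)\Delta$ follows from Corollary~\ref{corr: ubalmost-cliquesize} once \Density, \Friendship, \Connectedness\ are known to hold after the step. For the lower bound, a clique shrinks only through sparse moves, each incrementing $\sparseMoveCount{C}$; by Observation~\ref{obs:sparse move count} $\sparseMoveCount{C}\le\nu\Delta$ always, a clique is born (Type~2 dense move) with at least $(1-c_1)\Delta$ vertices and only ever gains members afterwards (Type~1 dense moves), so $|C|\ge(1-c_1-\nu)\Delta\ge(1-c_3)\Delta$ for $\nu$ a sufficiently small constant; and any clique whose counter reaches the threshold is collapsed and removed, so every surviving clique meets the bound.

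The hard part will be the second paragraph: because \textsc{Update-Decomposition} handles the cascade caused by removing a vertex from a clique lazily --- through the counter $\sparseMoveCount{C}$ and a deferred collapse, rather than a recursive re-check of neighbours within the same step --- one must rule out, purely from the bound on the number of updates since the last \textsc{Update}$(\cdot)$ call, both a surviving dense vertex violating \Friendship\ and a sparse vertex that is still $c_1$-dense. The key leverage is that $N'(\cdot)$ retains departed clique members, so a large shortfall in $|N'(w)|$ is a genuine loss of $c_3$-friendships and not an artefact of clique membership, and such a loss entails a comparable drop in common neighbours, which cannot happen without an intervening \textsc{Update}$(w)$ that would place $w$ in $U$ and thus subject it to the loop's handling.
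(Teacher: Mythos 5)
Your proof follows the same overall structure as the paper's: partition the analysis into (i) vertices untouched by the update, (ii) vertices of $U$ that are individually sparse-moved, and (iii) cliques that hit the $\sigma(C)$ threshold and are collapsed and rebuilt via \DenseMoveNoArgs; and you invoke the same supporting results (Observation~\ref{obs:sparse move count}, Lemma~\ref{lemma: denseinvariantaftermove}, Corollary~\ref{corr: ubalmost-cliquesize}, Lemma~\ref{lem:connectedness}). So this is not a genuinely different route. However, two of the steps you add contain real problems.

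First, in your \Friendship\ paragraph you assert that ``$|N'(w)|$ decreases only when $w$ loses a $c_3$-friend (so $N_3(w)$ changes and $w\in U$).'' The parenthetical implication is false: membership in $U$ is governed by the \textsc{Update} calls triggered by the direct/indirect counters, whereas $N_3(w)$ can change whenever \textsc{Determine-Friend} is re-run on an edge incident to $w$. In particular, when \textsc{Update}$(z)$ is called for a neighbor $z\in U$ of $w$, the resulting \textsc{Determine-Friend}$((z,w),\cdot)$ call can delete $z$ from $N_3(w)$ without ever placing $w$ in $U$. So a vertex $w\notin U$ can lose $c_3$-friends from $C$ during a step, and your argument that every vertex whose \Friendship\ count drops below $(1-c_3)\Delta$ lands in $U$ does not hold. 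The paper instead argues this via the former-member allowance in the \Friendship\ definition plus the $\sigma(C)\le\nu\Delta$ bound: the count for a surviving $v\notin U$ is computed against $N_3(v)\cap(\text{current or former }C)$, so departures alone do not decrease it, and the slack $\nu\Delta$ absorbs the at-most-$\nu\Delta$ departures.

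Second, the \Density\ case you add — ruling out that a vertex sparse-moved because $|N'(w)|\le(1-c_3)\Delta$ is still in $V_1$ — is flagged by you as ``the hard part'' but never actually closed; it remains a sketch of a charging argument in which the quantity you bound (updates since the last \textsc{Update}$(w)$) is not directly comparable to the quantity you need (number of $c_3$-friendships $w$ has lost to $C$-members, which, as above, can change via \textsc{Update} calls on $w$'s neighbors). Since your proof explicitly leans on this un-established claim for the \Density\ invariant of individually sparse-moved vertices, that is a genuine gap.
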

\begin{proof}
After the deletion of an edge $(u,v)$, $U$ is a subset of vertices for which there is a change the density or the friendship of an incident edge.  For every other vertex $v$, the \Density\ invariant continues to hold since the membership in $V_3$ or $V_1$ is not impacted, and the \Friendship\ invariant continues to hold since the number of neighbors from $N_3(v)$ that lie in its almost-clique can only decrease by the number of sparse moves during this step.   

For any vertex $w$ in $U$, if $w$ is in $V_S$, \SparseMove{w} is not called.  On the other hand, if $w \in U \cap V_D$ and $w$ either violates the \Density\ or \Friendship\ invariant, we move $w$ to $V_S$, thus ensuring that it satisfies the \Density\ and \Friendship\ invariants after the sequence of calls to \SparseMoveNoArgs.  All of the vertices affected by these sparse moves belong to the almost-cliques in ${\cal C}$.  

Consider an almost-clique $C$ in ${\cal C}$.  If the number of sparse moves in $C$ since its creation is at most $\alpha \tau \Delta$, it follows that all of the vertices in $C$ but not in $U$ continue to satisfy the \Density\ and \Friendship\ invariants.  This is because for any such vertex $v$, its membership in $V_3$ has not changed (ensuring the \Density\ invariant) and the number of neighbors in $N_3(v) \cap C$ is at least $(1 - c_3 - \nu)\Delta$ .  Furthermore, the \Size\ invariant holds since (a) vertices have only been removed from $C$, implying the upper bound, and (b) the size is at least $(1 - c_1)\Delta - \nu \Delta$, which is at least the lower bound.  The \Density, \Friendship, and \Size\ invariants in turn imply the \Connectedness\ invariant by Lemma~\ref{lem:connectedness}.  

We now consider the case where $\sigma(C)$ exceeds $\nu \Delta$. Almost-clique $C$ is deleted by moving all its vertices to $V_S$, and then issuing a sequence of \DenseMoveNoArgs\ calls to vertices that are in $V_1$.  By Lemmas~\ref{lemma: denseinvariantaftermove} and~\ref{lemma: connectedness invariant}, it follows that the \Density, \Friendship, and \Connectedness\ invariants hold for each such $C$.  This in turn yields the \Size\ invariant by Corollary~\ref{corr: ubalmost-cliquesize}.
\end{proof}

We now have all the ingredients to establish the four invariants before every update.
\begin{lemma}
\label{lemma: invariants}
The \Density, \Friendship, \Connectedness, and \Size\ invariants hold before every edge update.
\end{lemma}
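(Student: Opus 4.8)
The plan is to prove the statement by induction on the number of edge updates processed. For the base case, before the first update the graph is empty, so $V_D=\emptyset$, $V_S=V$, and there are no almost-cliques; all four invariants hold vacuously. For the inductive step I assume \Density, \Friendship, \Connectedness\ and \Size\ hold before a given update, and show they hold once \textsc{Update-Decomposition} has fully processed that update (hence before the next one).

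Processing begins with a call to \textsc{Maintain-Friends}, which alters only the friend sets $N_i(\cdot)$ and the density sets $V_i$, leaving $V_S$, $V_D$ and the family of almost-cliques untouched. The first thing to establish is that, immediately after this call and before any vertex is moved, \Friendship, \Connectedness\ and \Size\ still hold, and \Density\ holds for every vertex except those flagged for a move --- namely, for an insertion, the vertices that are now in $V_S\cap V_1$, and for a deletion, the vertices of $V_D$ that just left $V_3$ or dropped below $(1-c_3)\Delta$ neighbors from $N_3(\cdot)$ in their clique. These exceptional vertices are exactly the ones placed in the set $U$ returned by \textsc{Maintain-Friends}, since a vertex's membership in $V_i$ changes only inside the \textsc{Update} calls that define $U$; and by the inductive hypothesis no vertex lay in $V_S\cap V_1$ before the update, so every such vertex has its $V_i$-status genuinely changed by this update and is therefore in $U$. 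For a deletion, \Connectedness\ is moreover untouched at this point because cliques only lose vertices.

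For an edge insertion, Observation~\ref{obs:sparse move count} tells us only dense moves are executed. I would then use an inner induction on the number of \DenseMoveNoArgs\ calls in the step: the four invariants hold before the first call (by the previous paragraph, the vertices of $U\cap V_S\cap V_1$ being precisely the arguments of these calls), and every \DenseMove{w} preserves all four --- \Density\ and \Friendship\ by Lemma~\ref{lemma: denseinvariantaftermove}, \Connectedness\ by Lemma~\ref{lemma: connectedness invariant}, and the \Size\ upper bound by Corollary~\ref{corr: ubalmost-cliquesize} (applicable precisely because the first three hold), while the \Size\ lower bound is immediate since insertions only enlarge existing cliques and any newly created clique $\{v\}\cup N_1(v)$ has at least $(1-c_1)\Delta+1\ge(1-c_3)\Delta$ vertices. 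Hence all four invariants hold at the end of an insertion step. For an edge deletion, the conclusion is exactly Lemma~\ref{lemma: update-delete}: its proof establishes \Density, \Friendship\ and \Size\ directly, and \Connectedness\ via Lemma~\ref{lem:connectedness} together with Corollary~\ref{corr: ubalmost-cliquesize}, so I would simply invoke it. All the cited results require $\varepsilon<1/15-\tau/3$ and $\tau,\nu$ small enough relative to $\varepsilon$, which is guaranteed by the hypothesis $0<\varepsilon<3/50$ of Theorem~\ref{fd-decomposition} after fixing $\tau$ and $\nu$ appropriately.

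The main obstacle is the bookkeeping claim opening the inductive step: that \textsc{Maintain-Friends} introduces no violation of \Friendship\ or \Density\ beyond those immediately repaired by the move loops. This is delicate because \textsc{Maintain-Friends} maintains friend lists and density flags lazily and with randomized estimates, so a single update can trigger a Type~1 or Type~2 update that re-evaluates many pairs at once. For insertions the resolution is that an edge insertion can only increase every pairwise common-neighbourhood size, so --- up to the $O(\tau)$ slack guaranteed with high probability by Lemma~\ref{lemma: maintain-friends} --- no vertex can be dropped from an $N_i(\cdot)$ list or from $V_i$ as a result of the update, leaving \Density\ for the vertices newly entering $V_1$ as the only thing that can break; for deletions the analogous accounting, which must in addition track the at most $\nu\Delta$ sparse moves per clique, is precisely what is carried out inside Lemma~\ref{lemma: update-delete}. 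With this claim in hand the remainder is the routine two-case induction sketched above.
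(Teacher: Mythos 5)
Your proof takes essentially the same approach as the paper's: an outer induction on the number of edge updates, with the inductive step split into the insertion and deletion cases, resolved respectively by Lemmas~\ref{lemma: denseinvariantaftermove}, \ref{lemma: connectedness invariant}, and Corollary~\ref{corr: ubalmost-cliquesize} on the one hand, and Lemmas~\ref{lemma: update-delete} and~\ref{lem:connectedness} on the other. Your argument is a bit more careful in two places than the paper's. First, for insertions the paper appeals to Corollary~\ref{corr: ubalmost-cliquesize} for ``the \Size\ invariant,'' but that corollary only yields the upper bound; you correctly observe that the lower bound must be argued separately (new cliques are $\{v\}\cup N_1(v)$ of size at least $(1-c_1)\Delta+1$ and existing cliques only grow under dense moves), which is a genuine small gap in the paper's wording that your proof fills. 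Second, you make explicit the inner induction on dense-move calls and the bookkeeping claim that the vertices needing repair after \textsc{Maintain-Friends} are exactly those in $U$; the paper leaves this implicit (inside the cited lemmas). I will add one caveat: your claim that for insertions ``no vertex can be dropped from an $N_i(\cdot)$ list or from $V_i$'' because common neighborhoods only increase is not quite airtight as stated, since a Type~1/Type~2 call to \textsc{Update} re-evaluates many edges at once and the estimates are randomized; the correct statement, and what the cited lemmas really deliver, is that w.h.p.\ the lists respect the $\tau$-slack of Lemma~\ref{lemma: maintain-friends}, so no vertex that previously satisfied \Density/\Friendship\ by the paper's margins can become exceptional during an insertion. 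With that gloss your proof matches the paper's reasoning and is correct.
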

\begin{proof}
The proof is by induction on the number of edge updates.  The induction base case holds for an empty graph since every vertex is in $V_S$.  Consider any edge update.  If it is an edge insertion, then the algorithm issues a (possibly empty) sequence of dense moves.  By Lemmas~\ref{lemma: denseinvariantaftermove} and~\ref{lemma: connectedness invariant}, \Density, \Friendship, and \Connectedness\ invariants hold after every dense move.  By Corollary~\ref{corr: ubalmost-cliquesize}, the \Size\ invariant also holds.  Thus, the four invariants hold after the edge update is processed, completing the induction step for edge insertion.

If the edge update is an edge deletion, by Lemma~\ref{lemma: update-delete}, the \Density, \Friendship, and \Size\ invariants continue to hold after the completion of \textsc{Update-Decomposition}.  By Lemma~\ref{lem:connectedness}, the \Connectedness\ invariant also continues to hold, thus completing the proof of the lemma.
\end{proof}
The following Lemma establishes that we can maintain a sparse-dense decomposition using Algorithm $\textsc{Update-Decomposition}$ in $O(\frac{\ln n}{\tau^4})$-amortized update time. 

\begin{lemma}\label{lemma: Update-Decomposition}
    Let $\varepsilon-\tau>\frac{\tau}{4}$. Then, Algorithm \textsc{Update-Decomposition}$(\varepsilon, \tau)$ takes $O(\frac{\ln n}{\tau^4})$ amortized update time. 
\end{lemma}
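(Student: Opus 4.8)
The plan is to show that, beyond the single call to \textsc{Maintain-Friends} inside \textsc{Update-Decomposition} — which already costs $O(\frac{\ln n}{\tau^4})$ amortized by Lemma~\ref{lemma: maintain-friends} — all of the remaining work also amortizes to $O(\frac{\ln n}{\tau^4})$. This remaining work consists of iterating over the set $U$ returned by \textsc{Maintain-Friends} and testing its vertices, the calls to \DenseMoveNoArgs{} and \SparseMoveNoArgs{}, and collapsing a clique $C$ once $\sigma(C)$ reaches its threshold (together with the dense moves subsequently issued to re-form cliques). First I would bound the total number of \textsc{Update} invocations over $m$ updates: a vertex $v$ triggers \textsc{Update}$(v)$ every $\frac{\tau\Delta}{8}$ direct updates and every $\frac{\tau\Delta}{8}$ indirect updates, and the total number of indirect updates is $\sum_u |N(u)|\cdot(\text{\# Type 1 updates of }u)\le\frac{8}{\tau}\sum_u(\text{\# direct updates of }u)=O(m/\tau)$, so the total number of \textsc{Update} calls — equivalently $\sum_{\text{steps}}|U|$ — is $O(m/(\tau^2\Delta))$. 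Since each element of $U$ is tested in $O(1)$ time (membership in $V_1,V_3$ and the quantity $|N'(w)|$ are maintained), the iteration over $U$ costs $O(1)$ amortized, not counting the move calls themselves.

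Next I would bound the cost of the moves. By the \Size\ invariant (Corollary~\ref{corr: ubalmost-cliquesize}, which holds at all times by Lemma~\ref{lemma: invariants}) and Lemma~\ref{lemma: non-degree} (so $|\b{E_C(v)}|=O(c_3\Delta)$ for every $v\in V_D$), a \SparseMoveNoArgs{} call costs $O(\Delta)$, a \DenseMoveNoArgs{} relocating $k$ vertices costs $O(k\Delta)$ (each relocated vertex rebuilds $N_S,N_D,N_C,\b{E_C},N'$ in $O(\Delta)$ time and informs its $\le\Delta$ neighbors, and each of the $O(\Delta)$ vertices already in the target clique is touched $O(k)$ times), and collapsing a clique $C$ costs $O(|C|\Delta)=O(\Delta^2)$ — in all cases $O(\Delta)$ per vertex that changes side. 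Each \SparseMoveNoArgs{} call, and each increment of some $\sigma(C)$, is triggered by a distinct $w\in U$, hence by an \textsc{Update}$(w)$ call; so the number of sparse moves and of $\sigma$-increments is $O(m/(\tau^2\Delta))$, whence sparse moves cost $O(m/\tau^2)$ in total, and since a collapse consumes $\Omega(\tau\Delta)$ $\sigma$-increments there are $O(m/(\tau^3\Delta^2))$ collapses costing $O(m/\tau^3)$ in total — both well within $O(\frac{m\ln n}{\tau^4})$.

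The substance is the amortization of the dense moves, since a single one can relocate $\Theta(\Delta)$ vertices at cost $\Theta(\Delta^2)$. I would layer a credit argument on top of the one from the proof of Lemma~\ref{lemma: maintain-friends}: each edge update deposits an extra $\Theta(\frac{\ln n}{\tau^4})$ on its endpoints, forwarded through the same indirect mechanism, so the total credit stays $O(\frac{m\ln n}{\tau^4})$ with no account overdrawn; whenever \textsc{Update}$(v)$ fires, $v$ has accrued $\Omega(\frac{\Delta\ln n}{\tau^3})$ credit (from $\frac{\tau\Delta}{8}$ direct updates) or $\Omega(\frac{\Delta\ln n}{\tau^2})$ (from $\frac{\tau\Delta}{8}$ indirect updates), which covers the $O(\Delta)$ attributed to $v$ itself in any move. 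For the other vertices relocated by a dense move I would use two structural facts. First, insensitivity of density: once in $V_D$, a vertex is $c_1$-dense and, by Lemma~\ref{lemma: dense-nbrsatmove}, has $\ge(1-2c_1)\Delta$ $2c_1$-friends in its clique, so to become eligible for a sparse move its relevant friend count must drop by $\Omega((c_3-2c_1)\Delta)=\Omega((\eps-\tau)\Delta)=\Omega(\tau\Delta)$ — and here the hypothesis $\eps-\tau>\tau/4$ is exactly what makes $c_3-2c_1=\eps-\tau$ a constant fraction of $\tau$ — forcing $\Omega(\tau\Delta)$ incident deletions (or $\Omega(\tau\Delta)$ sparse moves from its clique), hence $\Omega(\Delta)$ credit before that vertex's sparse move. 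Second, by Lemma~\ref{lemma: dense-nbrsatmove} again, a dense move enlarging a clique by $k$ vertices certifies $\Omega(k\Delta)$ edges inside that clique; I would charge the $O(k\Delta)$ relocation cost to these edges and argue that an edge $(x,y)$ is charged this way only $O(1/\tau)$ times per insertion of $(x,y)$, since between two dense moves placing $x$ into a clique, $x$ must undergo a sparse move, which by the first fact consumes $\Omega(\tau\Delta)$ updates incident to $x$ or its friends. This bounds the dense-move cost by $O(m/\tau^{O(1)})=O(\frac{\ln n}{\tau^4})$ amortized, completing the proof; the worst-case bound of Theorem~\ref{fd-decomposition} then follows by rebuilding every $\operatorname{poly}(n)$ updates.

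The main obstacle is precisely this dense-move accounting: it is the only operation whose worst-case cost is not within a $\polylog(n)/\tau^{O(1)}$ factor of the update budget, so it has to be paid for by events spread over the update history (the $\Omega(k\Delta)$ certified edges, each witnessed by enough recent incident updates), and one must check that these charges remain globally consistent with the $O(\frac{m\ln n}{\tau^4})$ credit budget while friendships and clique memberships keep changing. Everything else — the per-operation cost bounds, the count of \textsc{Update} calls, and the handling of sparse moves and collapses — is routine given the invariants of Lemma~\ref{lemma: invariants}.
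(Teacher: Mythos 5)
Your high-level plan follows the paper's — a credit argument layered on top of the one for \textsc{Maintain-Friends}, with dense moves singled out as the only operation needing cross-step amortization — and your identification of the crux (a dense move can relocate $\Theta(\Delta)$ vertices at cost $\Theta(\Delta^2)$, so it must be paid for by history) is exactly right. But the edge-charging scheme you propose for dense moves doesn't close. You want to charge the $O(k\Delta)$ relocation cost to the $\Omega(k\Delta)$ edges that Lemma~\ref{lemma: dense-nbrsatmove} certifies, and then claim each edge $(x,y)$ absorbs $O(1/\tau)$ charges per insertion because a sparse move of $x$ must intervene between two dense moves of $x$ and that sparse move ``consumes $\Omega(\tau\Delta)$ updates incident to $x$ or its friends.'' That last clause is where the argument leaks: those $\Omega(\tau\Delta)$ updates are not events incident to the edge $(x,y)$, they can be shared across many vertices of the clique, and when the triggering cause is ``$\Omega(\tau\Delta)$ sparse moves from $x$'s clique'' rather than deletions at $x$, the credit those moves generated has already been spent elsewhere. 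So there is no clean injection from charges against $(x,y)$ into updates that can pay for them.

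The collapse-then-reform scenario is exactly where this bites. After $\sigma(C)$ hits $\nu\Delta$, all $\Theta(\Delta)$ vertices of $C$ are moved to $V_S$ and, within the very same step, those in $V_1$ are dense-moved into fresh cliques at cost $\Theta(\Delta^2)$. These vertices had no incident updates between their sparse move and their dense move, so your per-vertex ``$\Omega(\Delta)$ credit before that vertex's sparse move'' fact does not supply them with anything to spend on the re-formation; and your separate collapse accounting ($O(m/\tau^3)$ total) covers only the $O(\Delta^2)$ cost of the tear-down, not the dense moves that immediately follow. The paper closes this with two devices you'd need to add: a \emph{move-credit invariant} (every vertex in $V_S\cap V_2$ carries $\Omega(\Delta\ln n)$ move credit, replenished at each sparse move out of the dense/friend/edge credits that Lemma~\ref{lemma: maintain-friends} guarantees are present), so a dense move of $k$ vertices is paid by the $k$ move credits on the relocated vertices; and a \emph{clique credit}, deposited $\Omega(\Delta\ln n)$ at a time on $C$ at each of the $\Omega(\tau\Delta)$ sparse moves preceding a collapse, which at collapse time both pays the tear-down and seeds fresh move credits on the expelled vertices so that the ensuing dense moves are funded. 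Without a mechanism that deposits credit \emph{on the vertices about to be dense-moved} — as opposed to on edges they happen to touch — the $O(\Delta^2)$ reform cost is unaccounted for.
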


\begin{proof}
We adopt a charging scheme similar to the one in the proof of Lemma \ref{lemma: maintain-friends}. 
Each edge update $(u,v)$ is charged an amount $C\frac{\ln n}{\tau^4}$ such that both $u$ and $v$ receive a credit of $\Theta(\frac{\ln n}{\tau^4})$, where $C$ is a sufficiently large constant.  We refer to this credit as \emph{edge credit}, which is the ultimate source for paying for all of the algorithm's costs and for any other credits created for future steps.  By Lemma~\ref{lemma: densevert}, the cost of \textsc{Determine-Dense} is $\Theta(\Delta \frac{\ln n}{\tau^2})$.  By Lemma~\ref{lemma: is-friend}, the cost of \textsc{Determine-Friend} is $\Theta(\frac{\ln n}{\tau^2})$.  Both of these costs are paid for by edge credits.  By setting the constant $C$ in the edge credit sufficiently large, we assume that (i) every call to \textsc{Determine-Dense} for vertex $v$ leaves a \emph{dense credit} of $\Theta(\Delta \frac{\ln n}{\tau^2})$ on $v$, and (ii) every call to \textsc{Determine-Friend} for vertex $v$ leaves a \emph{friend credit} of $\Theta(\Delta \frac{\ln n}{\tau^2})$ on $v$.

We will now prove that the edge, dense, and friend credits together are sufficient to pay for all of the costs of \textsc{Update-Decomposition}.  In our argument, we will work with two other kinds of credits, which we refer to as \emph{move credit} and \emph{clique credit}.  We maintain the invariant that any vertex in $V_S$ that is in $V_2$ has a move credit of $\Omega(\Delta \ln n)$; we refer to this as the move credit invariant.  The clique credit is assigned to an almost-clique and is used to pay for the cost of reconstructing it after a sparse move sequence.  Again, note that both the move and clique credits ultimately originate from edge credits. 

The cost of lines 2-3 of \textsc{Update-Decomposition} is $O(1)$.  We now analyze the calls to \DenseMoveNoArgs\ in lines~6 and~21.  Consider \DenseMove{v} for a vertex $v$.  By lines~6 and~21, it follows that at the time of the call $w$ is in $V_1$ and $V_S$.  The running time of \DenseMove{v} is $O(k \Delta \ln n)$, where $k$ is the number of vertices in $N_1(v) \setminus V_D$ since lines~1--5 of \DenseMoveNoArgs\ take time $O(\ln n)$ and lines~6--12 take time  
$O(k \Delta \ln n)$.  By the move credit invariant, $w$ and each vertex in $N_1(v) \setminus V_D$ has a move credit of $\Omega(\Delta \ln n)$.  We thus have a total move credit of $O(k \Delta \ln n)$, which we use to pay for the call to \DenseMove{v}.

We now analyze lines~7--19 of \textsc{Update-Decomposition}.  Lines~7--12 maintain relevant data structures and take time proportional to $|U|$.  Consider the call \SparseMove{w}.  We first note that the total running time of \SparseMoveNoArgs\ is $O(\Delta \ln n)$.  Now, \SparseMove{w} is called made only if the almost-clique $C$ containing $w$ has $\sigma(C) < \alpha\tau\Delta$.  In this case, there are two sub-cases.  The first is where $w$ is not in $V_3$.  Since $w$ was in $V_3$ prior to the edge update (by the \Density\ invariant of Lemma~\ref{lemma: invariants}), a call to \textsc{Determine-Dense} for $w$ has been made, yielding an $\Omega(\Delta \ln n)$ dense credit, which we use to pay for (i) the running time of \SparseMoveNoArgs, (ii) leaving a move credit on $w$ to maintain the move credit invariant, and (iii) leaving an $\Omega(\Delta \ln n)$ clique credit on $C$.  The second sub-case is where $w$ has lost $\Omega(\Delta)$ $c_3$-friends; by Lemma~\ref{lemma: dense-nbrsatmove}) $w$ had at least $(1 - 2c_1)\Delta$ $2c_1$-friends at the time $w$ entered $V_D$, but has fewer than $(1 - c_3)\Delta$ $c_3$-friends at this time.  Since $c_3 > 2c_1$ and since fewer than $(c_3 - 2c_1)\Delta/2$ vertices in $C$ have moved from $V_D$ to $V_S$ since the creation of the almost-clique, it follows that either at least $(c_3 - 2c_1)\Delta/4$ edges of $w$ have been deleted or at least $(c_3 - 2c_1)\Delta/4$ neighbors of $w$ are no longer in $N_3(w)$. 
 We use the $\Omega(\Delta \ln n)$ edge credit at $w$ in the former situation and the $\Omega(\Delta \ln n)$ friend credit at $w$ in the latter situation to pay for (i) the cost of the sparse move, (ii) leaving a move credit of $\Omega(\Delta \ln n)$ on $w$ to maintain the move credit invariant, and (iii) adding $\Omega(\Delta \ln n)$ clique credit to $C$. 

Finally, we consider the scenario where $C$ is a clique for which $\sigma(C)$ exceeds $\Omega(\Delta)$ during the course of this update.  In this scenario, we have $\Omega(\Delta^2 \ln n)$ clique credit on $C$.  We use this to pay for the cost incurred in lines 15--18 of \textsc{Update-Decomposition} and for leaving a move credit on all the vertices that are being moved from $V_D$ to $V_S$ before issuing any calls to \DenseMoveNoArgs, as necessary.  This completes the proof.
\eat{
Since \SparseMoveNoArgs\ is only called on a vertex for which $\sigma(C)$ is less than $\Omega(\Delta)$,

By the \Density\ invariant (Lemma~\ref{}), if $w$ is in $V_1$ and $V_S$, then a call to \textsc{Determine-Dense} has been made for $w$ since the last edge update.  Therefore, $w$ has $\Theta(\Delta \ln n)$ dense credits.

We describe the additional credit assigned to vertices during the course of \textsc{Maintain-Friends} that is used to pay for the cost of \textsc{Update-Decomposition}.

Over the course of $\frac{\tau\Delta}{8}$ direct updates to any vertex $v$, the total credit on $v$ is $\Theta(\Delta \frac{\ln n}{\tau^3})$. Consider an execution of $\textsc{Maintain-Friends}$. Whenever \textsc{Update}$(v)$ is called as a result of a Type 1 update (see pseudo-code of \textsc{Maintain-Friends}) to $v$, a credit of $\Theta(\Delta\frac{\ln n}{\tau^3})$ is assigned to $v$. For neighbors $z$ of $v$,  an extra credit of $\Theta(\frac{\ln n}{\tau^3})$ is assigned. Thus, it follows that for any vertex $w$ in the set $U$ returned by \textsc{Maintain-Friends}, there is an extra credit of $\Theta(\Delta\frac{\ln n}{\tau^3})$. Thus, after $\textsc{Update}(w)$ is called for any vertex, there is an extra credit of $\Theta(\Delta\frac{\ln n}{\tau^3})$ on any vertex $w$. This means whenever a vertex $w$ becomes $c_i$ to $c_j$ dense for any $i,j\in \{1,2,3\}$, it has a surplus credit of $\Theta(\Delta\frac{\ln n}{\tau^3})$. Recall that by the proof of Lemma \ref{lemma: maintain-friends} whenever the number of friends of any vertex $w$ changes by at least $\frac{\tau\Delta}{4}$, $\textsc{Update}(w)$ is called at least once.

This extra credit on any $w\in U$ helps pay for the cost to handle sparse or dense moves. By the proof of Corollary \ref{corr: denseinvariantaftermove}, it follows that any vertex $w$ which is moved to $V_D$, it is $2c_1=2(\varepsilon+\tau)=(c_2+\tau)$-dense. On the other hand, by the \textsc{Dense} Invariant, $w$ is either i) $c_3$-sparse when it is moved to $V_S$ or, ii) the number of neighbors of $w$ in its almost-clique reduces by at least $(c_3-2c_1)\Delta=(3\eps+\tau-2\eps-2\tau)=(\eps-\tau)\Delta>\frac{\tau\Delta}{4}$. Whenever a vertex is involved in a sparse or dense move, it leaves a credit of $\frac{\ln n}{\tau^3}$ on all its neighbors. Hence, if $c_3\Delta$ neighbors of any vertex are moved, $v$ has an additional credit of $\Theta(\Delta\frac{\ln n }{\tau^3})$.

We first prove that any vertex has sufficient credit to pay for a dense move. Consider a Type 1 or Type 2 dense move. For the case when $w$ is at least $c_3$-sparse since the last time it was in $V_S$ (case i) in the above paragraph), then for $w$ to join $V_D$, it must be at least $(c_2+\tau)$-dense, i.e. the number of friends of $w$ changes by at least $(3\varepsilon+\tau -(2\varepsilon+2\tau))\Delta=(\varepsilon-\tau)\Delta>\frac{\tau\Delta}{4}$. Thus, $\textsc{Update}(w)$ must be called at least once during this time, and $w$ has an extra credit of $\Theta(\Delta\frac{\ln n}{\tau^3})$. This is sufficient to update all data structures for $w$ when $\textsc{Move}(w)$ is called and assign a credit of $\frac{\ln n}{\tau^3}$ on all neighbors. Moreover, for a Type 3 dense move, our charging argument shows that $w$ has sufficient credit.

Next, we consider a sparse move. A vertex $w$ is moved to $V_S$ if it becomes either i) $c_3$-sparse or ii) the number of its neighbors in the almost-clique $C$ reduces by at least $\frac{\tau\Delta}{4}$. For case i) $w$ is at least $(c_2+\tau)$ dense when it joins $V_D$--thus, the number of friends changes by at least $(3\varepsilon+\tau -(2\varepsilon-2\tau))\Delta=(\varepsilon-\tau)\Delta$, so that $w$ has a credit of at least $\Theta(\Delta\frac{\ln n}{\tau^2})$. This credit is used to pay for its move to $V_S$, and update associated data structures which takes $O(\Delta)$ time. Moreover, a credit of $\Theta(\frac{\ln n}{\tau^3})$ each is assigned to $w's$ neighbors. For case ii), note that $w$ has at least $(1-2c_1)\Delta$ friends that are in $C$ when it joins $V_D$. Thus at at least $(1-(2\varepsilon-2\tau))\Delta-(1-(3\varepsilon-\tau))\Delta=(3\varepsilon-\tau -2\varepsilon+2\tau)\Delta=(\varepsilon+\tau)\Delta$ neighbors of $w$ move from $V_D$ to $V_S$, leaving a total credit of at least $\Theta(\Delta\frac{\ln n}{\tau^3})$ on $w$. This is sufficient to pay for the cost of \textsc{Move}($w$), leave a credit of $\Theta(\frac{\ln n}{\tau^3})$ on each neighbor of $w$ and retain an extra credit of $\Theta(\Delta\frac{\ln n}{\tau^3})$ to pay for a possible Type 3 dense move of $w$ in the future.

To summarize, whenever procedure $\textsc{Move}(w)$ is called on any vertex, there is a credit of $\Theta(\Delta\frac{\ln n}{\tau^3})$ to pay for the cost of updating $w's$ data structures and leaving a credit of $\Theta(\frac{\ln n}{\tau^3})$ on every neighbor of $w$. By charging every update $C\frac{\ln n}{\tau^3}$ for a large constant $C$, the charging scheme ensures that there is sufficient credit on any vertex $w$ to pay for future moves. This completes the proof.}
\end{proof}

The following corollary bounds the amortized \textit{adjustment complexity} of non-edges across all almost-cliques i.e. the average \textit{total} number of non-edges which change across all almost-cliques after an edge update. The proof follows from the proof of Lemma \ref{lemma: Update-Decomposition}, and uses essentially identical charging arguments, and we omit it.
\begin{corollary}\label{corr: adjcomplexitynonedges}
   The amortized adjustment complexity of non-edges obtained by Algorithm \textsc{Update-} \textsc{Decomposition} is $O(\frac{1}{\eps^4})$
\end{corollary}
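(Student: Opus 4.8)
\medskip
\noindent\textbf{Proof sketch of Corollary~\ref{corr: adjcomplexitynonedges}.}
The plan is to re-run the amortized charging argument behind Lemma~\ref{lemma: Update-Decomposition}, but with the credit magnitudes stripped of their $\ln n$ factors and with a budget earmarked solely for changes to the lists $\b{E_C(\cdot)}$. First I would enumerate the operations that can modify a non-edge list during the processing of one update. A direct edge update $(u,v)$ with $u,v$ in the same almost-clique $C$ alters only $\b{E_C(u)}$ and $\b{E_C(v)}$, for $O(1)$ changes. A \DenseMoveNoArgs\ that moves a set of $k=|N_1(v)\setminus V_D|+1$ vertices into an almost-clique $C$ builds, for each moved vertex, its non-edge list in $C$ (which has $O(\Delta)$ entries by Lemma~\ref{lemma: non-degree}) and adds to each old member of $C$ at most one entry per moved vertex; since $|C|=O(\Delta)$ by Corollary~\ref{corr: ubalmost-cliquesize}, this is $O(k\Delta)$ changes in total (even if the implementation recomputes lists from scratch, only $O(k\Delta)$ entries actually change value, and the adjustment complexity counts only changes). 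A \SparseMove{w} deletes $\b{E_C(w)}$ and removes one entry from each $\b{E_C(u)}$ with $(u,w)$ a non-edge, which is again $O(\Delta)$ changes by Lemma~\ref{lemma: non-degree}. Finally, a clique collapse (triggered when $\sigma(C)$ reaches $\nu\Delta$) deletes the $O(\Delta)$ lists of $C$, each of size $O(\Delta)$, and then rebuilds new almost-cliques via calls to \DenseMoveNoArgs; since each new clique has $O(\Delta)$ vertices and hence $O(\Delta^2)$ non-edges, the collapse and its rebuilds together cause $O(\Delta^2)$ changes.

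With this accounting, I would charge each edge update $\Theta(1/\tau^4)$ adjustment credit, split as $\Theta(1/\tau^4)$ edge credit on each endpoint, and (by enlarging the hidden constant) arrange that each call to \textsc{Determine-Friend} deposits $\Omega(1)$ friend credit on its vertex. Exactly as in Lemma~\ref{lemma: Update-Decomposition} I maintain the move-credit invariant that every vertex of $V_S\cap V_2$ holds $\Omega(\Delta)$ move credit, and I attach a clique credit to every almost-clique. The $O(1)$ cost of a direct edge update is paid from its own edge credit. A \DenseMoveNoArgs\ moving $k$ vertices is paid from the $\Omega(k\Delta)$ move credit carried by those $k$ vertices (available by the move-credit invariant, as in Lemma~\ref{lemma: Update-Decomposition}); since the moved vertices leave $V_S$, no move credit need be restored on them. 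For a \SparseMove{w}: because $2c_1<c_3$ (using $\tau<\eps$), Lemma~\ref{lemma: dense-nbrsatmove} gives that $w$ had at least $(1-2c_1)\Delta$ neighbors that are $c_3$-friends in its clique when it entered $V_D$, whereas it now has fewer than $(1-c_3)\Delta$ such friends in $C$; since at most $\sigma(C)<\nu\Delta$ vertices have left $C$, for a suitably small constant $\nu$ at least $\Omega(\Delta)$ of this drop is caused either by deletions of edges incident to $w$ or by $\Omega(\Delta)$ \textsc{Determine-Friend} calls at $w$. In the former case $w$ holds $\Omega(\Delta/\tau^4)$ edge credit, in the latter $\Omega(\Delta)$ friend credit; either pays the $O(\Delta)$ changes of the sparse move, restores $\Omega(\Delta)$ move credit on $w$, and deposits $\Omega(\Delta)$ clique credit on $C$.

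It remains to pay for a collapse. When $\sigma(C)$ reaches $\nu\Delta=\Omega(\Delta)$, the $\Omega(\Delta)$ sparse moves that incremented $\sigma(C)$ have each left $\Omega(\Delta)$ clique credit on $C$, so $C$ holds $\Omega(\Delta^2)$ clique credit; this covers the $O(\Delta^2)$ changes of the deletion, the $O(\Delta^2)$ changes of the subsequent \DenseMoveNoArgs\ rebuilds, and the move credits that must be re-established on the $O(\Delta)$ collapsed vertices before those rebuilds are issued. Since every non-edge change is then paid by credits ultimately derived from the $\Theta(1/\tau^4)$ charged per edge update, the amortized adjustment complexity is $O(1/\tau^4)=O(1/\eps^4)$ for $\tau=\Theta(\eps)$. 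The main obstacle is the bookkeeping around the collapse: one must check that the clique-credit pool is kept strictly separate from the credit used to pay per-move changes (so nothing is double-spent) and that each of the $\Omega(\Delta)$ sparse moves charged to $C$ genuinely contributes $\Omega(\Delta)$ to its clique credit rather than being absorbed without charge; both follow from the slack $c_3-2c_1=\eps-\tau$ together with the invariant $\sigma(C)<\nu\Delta$ used above, exactly as in the proof of Lemma~\ref{lemma: Update-Decomposition}.
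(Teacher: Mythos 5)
Your proposal is correct and takes the same route as the paper: the paper explicitly derives this corollary by replaying the charging argument of Lemma~\ref{lemma: Update-Decomposition}, measuring non-edge list changes instead of running time (so the $\ln n$ factors disappear), and you reconstruct exactly that argument — enumerating the sources of non-edge changes (direct updates, \DenseMoveNoArgs{}, \SparseMoveNoArgs{}, clique collapse) and paying for each with the corresponding edge/friend/move/clique credit. The treatment of the collapse via the $\Omega(\Delta^2)$ accumulated clique credit mirrors the paper's handling in Lemma~\ref{lemma: Update-Decomposition} as well.
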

\eat{
\begin{proof}
We use a charging argument to bound the \textit{adjustment complexity} of non-edges in $\b{E(C)}$, for an almost-clique $C$. In our charging argument, credits correspond to the \textit{number} of non-edges. On any edge update $(u,v)$, a credit of $\Theta(\frac{1}{\tau^4})$ is assigned to both $u$ and $v$. Similar to the proof of Lemma \ref{lemma: Update-Decomposition}, we observe that any vertex $v$ stays in $V_S$ or $V_D$ until at least $\Omega(\tau\Delta)$ updates incident to $v$ have taken place or the number of $v's$ friends changes by $\Omega(\tau\Delta)$. In both cases, $v$ receives a credit of $\Omega(\frac{\Delta}{\tau^3})$ to account for the increase (resp. decrease) of $|\b{E(C)}|$ for a dense move (resp. a sparse move), where $C$ is the almost-clique that $v$ joins (resp. leaves).

The charging argument is similar to the one in the proof of Lemma \ref{lemma: Update-Decomposition}. Over the course of $\frac{\tau\Delta}{8}$ direct updates to any vertex, the total credit is $\Theta(\frac{\Delta}{\tau^3})$, Whenever \textsc{Update}$(v)$ is called as a result of a Type 1 update (see pseudo-code of \textsc{Maintain-Friends} for reference) to $v$ in the procedure \textsc{Maintain-Friends}, a credit of $\Theta(\frac{\Delta}{\tau^3})$ is assigned to $v$. For any neighbor $z$ of $v$, a credit of $\Theta(\frac{1}{\tau^3})$ is assigned. It follows that for any vertex $w$ in the set $U$ returned by \textsc{Maintain-Friends}, there is a credit of $\Theta(\frac{\Delta}{\tau^3})$. In other words, after $\textsc{Update}(w)$ is called for a vertex $w$, there is a credit of $\Theta(\frac{\Delta}{\tau^3})$ assigned to $w$. Thus, whenever the number of friends for a vertex $w$ changes by $\frac{\tau\Delta}{4}$, there is a credit of $\Theta(\frac{\Delta}{\tau^3})$ assigned to $w$. 

Consider a sparse move for a vertex $v\in V_D$. When $v$ is moved to $V_D$ it has at least $(1-2c_1)\Delta$ neighbors in its almost-clique $C$. By virtue of our algorithm, $v$ moves to $V_S$ if it becomes $c_3$-sparse or the number of its neighbors in $C$ reduces by at least $(c_3-2c_1)\Delta=(\varepsilon-\tau)\Delta$. If $v$ becomes $c_3$ sparse, the number of friends change by at least $(\varepsilon-\tau)\Delta>\frac{\tau\Delta}{4}$--thus, $v$ has a credit of at least $\Theta(\frac{\Delta}{\tau^3})$ credit which is used to account for: i)  a decrease of $\Theta(\Delta)$ in $\b{E(C)}$ when $v$ leaves $C$ (since the maximum size of any almost-clique is $O(\Delta)$ by Corollary \ref{corr: ubalmost-cliquesize}) and ii) leave a credit of $\Theta(\frac{1}{\tau^3})$ on all its neighbors. If $v$ moves due to the number of neighbors reducing by at least $(\varepsilon-\tau)\Delta$, it has a credit of at least $\Theta(\frac{\Delta}{\tau^3})$ which is used to account for the decrease in $\b{E(C)}$ as before, assign a credit of $\Theta(\frac{1}{\tau^3})$ to each of its neighbors, and assign an extra credit of $\Omega(\Delta)$ to itself. The extra credit is used to pay for a future Type 3 dense move of $v$.

A similar argument completes the proof for the dense move of $v$, since the increase in $\b{E(C)}$ is accounted for by the $\Omega(\Delta)$ credit contributed by $v$. Thus, the amortized adjustment complexity of $\b{E(C)}$ is $O(\frac{1}{\tau^4})$. 
\end{proof}
}
Finally, we conclude this section by giving a proof of our main technical result summarized in Theorem~\ref{fd-decomposition}.

\begin{proof}[Proof of Theorem \ref{fd-decomposition}]
We first show that the theorem holds for an amortized update time guarantee of $\tilde{O}(\frac{1}{\varepsilon^4})$. Then, we show that the same properties of the decomposition can be recovered in $\tilde{O}(\frac{1}{\varepsilon^4})$ worst-case update time.

    Let $\tau=\frac{\varepsilon}{3}$. This satisfies the condition of Lemma \ref{lemma: Update-Decomposition}. Moreover, we choose $\varepsilon<\frac{3}{50}$ to satisfy the conditions in Corollary \ref{corr: ubalmost-cliquesize} and Lemmas \ref{lemma: unique-dense-almost-clique}, \ref{lemma: dense-nbrsatmove}, \ref{lemma: densecommonnbrs} and \ref{lemma: non-degree}.

    By Lemma~\ref{lemma: invariants}, all of the four invariants hold after the processing of every edge update.
    By the \Size\ invariant, we have that any almost-clique $C_i$ has size at most $(1+3(3\varepsilon+\tau))\Delta=(1+10\varepsilon)\Delta$, and at least $(1-c_3)\Delta=(1-(3\varepsilon+\frac{\varepsilon}{3}))=(1-\frac{10\varepsilon}{3})\Delta\geq (1-4\varepsilon)\Delta$.  By the \Density\ invariant, it follows that $V_S\subseteq \vs_{(\varepsilon+\tau)}\coloneq \vs_{\frac{4\eps}{3}}$. Similarly, $V_D\subseteq \vd_{(3\varepsilon-\frac{3\tau}{4})}\coloneq\vd_{\frac{11\eps}{4}}$.  The \Friendship\ invariant guarantees that every vertex $v$ in an almost clique $C$ has at least $(1-c_3)\Delta\geq (1-4\varepsilon)\Delta$ neighbors in $C$; it also follows that the number of neighbors of $v$ outside $C$ is at most $4\varepsilon\Delta$.

    By Corollary \ref{corr: adjcomplexitynonedges}, the adjustment complexity of non-edges is $O(\frac{1}{\varepsilon^4})$. 

Finally we note that, while it suffices to work with amortized running time and adjustment complexity guarantees for the purpose of obtaining a sublinear (in $n$) update time algorithm for $(\Delta+1)$ coloring, we can transform the aforementioned amortized guarantees to worst-case. This is done by employing the standard technique of \textit{periodically rebuilding} data structures (see \cite{chan2021more, baswana2016dynamic}), by incurring only a constant factor increase in the running time. Our algorithm can be run on a small parameter, say $\frac{\eps}{2}$ as opposed to the parameter $\eps$ it receives as input. Subroutines such as \textsc{Update} and \textsc{Determine-Friend} can be invoked whenever direct or indirect counters for any vertex $v$ change by a smaller value--say $\frac{\tau\Delta}{16}$. The update sequence can be partitioned into epochs of length $\frac{\varepsilon\Delta}{16}$, where the time taken to reflect edge updates in a previous epoch is amortized over the length of the current epoch. Since periodically rebuilding data structures to convert amortized bounds to worst-case bounds is a well-known technique, we omit further details.
\end{proof}
\newpage

\printbibliography[]

\appendix 
\section{Appendix}
\subsection{Proof of Lemma \ref{lemma: colorsparse}}\label{sec: appendixcolorsparse}
We recall Lemma \ref{lemma: colorsparse}.
\colorsparse*
 We build towards a proof of the aforementioned properties via the following lemmas.

\begin{claim}\label{claim: degreeinU}
    Let $G[U]$ denote the subgraph induced on $\eps$-sparse vertices in $U$ such that each vertex in $V_S$ is included in probability $\frac{1}{2}$. Then for any $\eps$-sparse vertex $v$ with at least $\frac{9}{10}\Delta$ neighbors in $V_S$, the number of neighbors of $v$ in $U$ is in $[\frac{2}{5}\Delta, \frac{11}{20}\Delta]$ with high probability.
\end{claim}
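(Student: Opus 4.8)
The claim is a routine concentration statement, so the plan is to apply a Chernoff bound to the number of neighbors of $v$ that land in $U$. Let $v$ be an $\eps$-sparse vertex with $|N(v) \cap V_S| \geq \frac{9}{10}\Delta$, and write $S_v := N(v) \cap V_S$, so $s := |S_v| \in [\frac{9}{10}\Delta, \Delta]$. Define $X = |N(v) \cap U| = \sum_{u \in S_v} \mathbbm{1}[u \in U]$. Since each vertex of $V_S$ is placed in $U$ independently with probability $\frac{1}{2}$, the indicators $\mathbbm{1}[u \in U]$ for $u \in S_v$ are i.i.d.\ Bernoulli$(\tfrac12)$, so $X$ is a sum of independent $\{0,1\}$ random variables with $\mathbb{E}[X] = s/2 \in [\frac{9}{20}\Delta, \frac{1}{2}\Delta]$.

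First I would record the target window in terms of $\mathbb{E}[X]$: we want $X \in [\frac{2}{5}\Delta, \frac{11}{20}\Delta]$. Since $\mathbb{E}[X] \geq \frac{9}{20}\Delta$, the lower tail event $X < \frac{2}{5}\Delta$ is contained in $X < (1-\delta_1)\mathbb{E}[X]$ for a constant $\delta_1 > 0$ (one can take $\delta_1 = \frac{1}{9}$, since $\frac{2}{5}\Delta = \frac{8}{9} \cdot \frac{9}{20}\Delta \leq \frac{8}{9}\mathbb{E}[X]$). Since $\mathbb{E}[X] \leq \frac{1}{2}\Delta$, the upper tail event $X > \frac{11}{20}\Delta$ is contained in $X > (1+\delta_2)\mathbb{E}[X]$ for a constant $\delta_2 > 0$ (one can take $\delta_2 = \frac{1}{10}$, since $\frac{11}{20}\Delta = \frac{11}{10} \cdot \frac{1}{2}\Delta \geq \frac{11}{10}\mathbb{E}[X]$). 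Then the standard multiplicative Chernoff bounds give $\Pr[X < \frac{2}{5}\Delta] \leq \exp(-\delta_1^2 \mathbb{E}[X]/2) = \exp(-\Omega(\Delta))$ and $\Pr[X > \frac{11}{20}\Delta] \leq \exp(-\delta_2^2 \mathbb{E}[X]/3) = \exp(-\Omega(\Delta))$, using $\mathbb{E}[X] = \Omega(\Delta)$. A union bound over the two tails yields $X \in [\frac{2}{5}\Delta, \frac{11}{20}\Delta]$ with probability $1 - \exp(-\Omega(\Delta))$, which is at least $1 - \frac{1}{\mathrm{poly}(n)}$ given the standing assumption $\Delta + 1 > \frac{\alpha \log n}{\eps^2}$ (so $\Delta = \Omega(\log n)$, and in fact $\Delta = \omega(\log n)$), hence w.h.p.

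There is essentially no obstacle here — the only thing to be careful about is that the constants in the claimed window $[\frac{2}{5}\Delta, \frac{11}{20}\Delta]$ are wide enough to absorb the variation of $\mathbb{E}[X]$ over the range $s \in [\frac{9}{10}\Delta, \Delta]$, which I checked above gives room for constant relative deviations $\delta_1, \delta_2$ on both sides, so the exponents remain $\Omega(\Delta)$. I would also note that the claim will later be applied only to $\eps$-sparse vertices whose number of sparse neighbors is large; the hypothesis ``$\geq \frac{9}{10}\Delta$ neighbors in $V_S$'' is exactly what is needed to keep $\mathbb{E}[X]$ bounded away from both endpoints of the target interval. (If one instead wanted the statement for \emph{all} sparse vertices one would need a two-sided bound allowing $\mathbb{E}[X]$ as small as $\approx 0$, which is why the hypothesis is stated this way.)
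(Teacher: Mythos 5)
Your proof is correct and takes essentially the same approach as the paper: both apply a Chernoff bound to $X = |N(v) \cap U|$, a sum of i.i.d.\ Bernoulli$(1/2)$ indicators, noting $\mathbb{E}[X] \in [\frac{9}{20}\Delta, \frac{1}{2}\Delta]$ and that a deviation of $0.05\Delta$ suffices to leave the window. The paper phrases it as a single additive Chernoff bound on $|X - \mathbb{E}[X]| \geq \frac{5}{100}\Delta$ while you use the two-sided multiplicative form; this is a cosmetic difference, and the resulting $\exp(-\Omega(\Delta))$ failure probability and reliance on $\Delta = \Omega(\log n)$ match.
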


\begin{proof}
    Fix a $\eps$-sparse vertex $v$, and let $d_U(v)$ denote the number of neighbors of $v$ in $U$. Let $E[d_U(v)]$ denote the expected number of neighbors of $v$ in $U$. If $v$ has at least $\frac{9}{10}\Delta$ neighbors in $V_S$, then $\frac{9}{20}\Delta \leq E[d_U(v)]\leq \frac{1}{2}\Delta$. By a standard Chernoff bound, it follows that $$\Pr\left[|d_U(v)-E[d_U(v)|] \geq \frac{5}{100}\Delta\right] \leq e^{-\frac{-9\Delta}{6000}} \leq \frac{1}{n^d}$$ for an arbitrarily large constant $d>0$ whenever $\Delta\geq 670d\ln n$.  
\end{proof}

We utilize the fact that the number of non-edges in the neighborhood of any vertex $v\in V_S$ is at least $\eps^2\binom{\Delta}{2}$. By introducing `ghost' vertices as neighbors of $v$ in $V_S$ such that the number of neighbors of $v$ in $V_S$ is exactly $\Delta$ (as in the proofs in \cite{ack}), it is immediate by Claim \ref{claim: degreeinU} that the number of non-edges of the form $(u,w)$ where $u,w\in U$ is at least $\frac{2}{9}\eps^2\binom{\Delta}{2}$. Note that the introduction of ghost vertices is not done by our algorithm, and is merely a tool for analysis (since we ignore edges and non-edges incident to dense neighbors of $v$). We assume that $\varepsilon\leq \frac{1}{5000}$, $\alpha\geq 5000^3$, and $(\Delta+1)>\frac{\alpha\log n}{\varepsilon^2}$.
\begin{proof}[Proof of Lemma \ref{lemma: colorsparse}]
The lower bound on $|A(v)|$ follows from Lemma \ref{lemma: one-shot} and is immediate by taking a union bound over all vertices (noting that $\Delta \gg \sqrt{n}$ as otherwise the problem can be trivially solved in $\Theta(\Delta) = O(\sqrt{n})$ update-time). 

    As a warm-up to the proof, we first claim an upper bound on the \textit{expected} running time of $\textsc{Greedy-Coloring}(V_S)$ on the set $V_S$ without running \textsc{One-Shot-Coloring} at all. The fact that for all $c\in [\Delta+1]$, $E[|L(c)|]=\frac{n}{\Delta+1}$ is not hard to observe. For the running time analysis, let $r(v)$ denote the rank of a vertex $v$ in the permutation $\pi$ such that the number of neighbors $w$ for which $\pi(w)<\pi(v)$ is exactly $r(v)-1$. Thus, for a vertex $v$ with rank $r(v)$, the probability that a random color chosen from $[\Delta+1]$ can be assigned to $v$, is at least $\frac{\Delta+1-r(v)+1}{\Delta+1}=\frac{\Delta-r(v)}{\Delta+1}$. Let $Z_i$ denote the geometric random variable which is $1$ with probability $\frac{\Delta-i}{\Delta+1}$. It follows that $E[Z_i]=\frac{\Delta+1}{\Delta-i}$, implying that if $r(v)=i$, the expected number of iterations of the while loop of \textsc{Greedy-Coloring} (lines 4-7) is at most $\frac{\Delta+1}{\Delta-i}$. The probability that $r(v)=i$ for any $i$ is $\frac{1}{d(v)}$ for any vertex $v$. Let $d(v)$ denote the degree of $v$. Thus, the expected number of while loop iterations for any vertex $v$ is $\frac{\Delta+1}{d(v)}\sum_{i=1}^{d(v)}\frac{1}{\Delta-i+2}=O(\Delta\ln \Delta \frac{1}{d(v)})$. For all vertices $v$, the total number of iterations is bounded by $O(\Delta \ln \Delta)\sum_{v\in V}\frac{1}{d(v)}=O(\Delta\ln \Delta)\cdot n(\frac{1}{n\Delta/n})=O(\Delta \ln \Delta)\cdot \frac{n}{\Delta}=\frac{n\ln \Delta}{\Delta}$.
    Each iteration of the while loop takes $O(\frac{n}{\Delta})$ time in expectation by the expected bound on the list sizes. Thus, in expectation, the running time of $\textsc{Color-Sparse}(V_S)$ is $O(\frac{n^2\ln \Delta}{\Delta})$. 

    Now, we analyze the running time of $\textsc{Color-Sparse}$. Note that $\textsc{One-Shot-Coloring}$ runs in $\ti(\frac{n^2}{\Delta})$ time with high probability. This follows since a total of $\widetilde{O}(n)$ colors are sampled, and so each color is sampled $\ti(\frac{n}{\Delta})$ times with high probability, thus $|L(c)|=\ti(\frac{n}{\Delta})$. To check whether a color can be assigned to a vertex takes time $O(|L(c)|)=\ti(\frac{n}{\Delta})$ for a total time of $\ti(\frac{n^2}{\Delta})$. 

    For the running time of $\textsc{Greedy-Coloring}$ we first observe that for any vertex $v$ with number of sparse neighbors at most $\frac{9}{10}\Delta$, the number of iterations of the while loop (lines 4-7) of \textsc{Greedy-Coloring} is $\ti(1)$ before a feasible color is assigned to $v$. This holds since, a random color is feasible with probability at least $\frac{1}{10}$, and after $\ti(1)$ random colors have been sampled, one must be feasible for $v$ w.h.p. Thus, for the remainder of the proof, we bound the number of colors sampled for vertices with degree greater than $\frac{9}{10}\Delta$. 
    
    Fix a vertex $v$, with at least $\frac{9}{10}\Delta$ sparse neighbors. Let $r(v)=i$ denote the rank of $v$ among its remaining $d_r(v)$ uncolored neighbors after \textsc{One-Shot-Coloring} finishes, where $i\in [1, d_r(v)+1]$. When $\textsc{Greedy-Coloring}$ considers $v$, the number of colors that are feasible for $v$ is at least $\Delta+1-(|N_S(v)|-d_r(v))-(i-1)=\Delta+2-|N_S(v)|+d_r(v)-i$. On the other hand, by Claim \ref{claim: degreeinU}, $d_r(v)\geq \frac{9}{10}\Delta-\frac{11}{20}\Delta \geq \frac{7}{20}\Delta$ and $d_r(v)\leq \Delta-\frac{9}{20}\Delta=\frac{11}{20}\Delta$ w.h.p. It follows that the total number of colors sampled before $v$ is assigned a feasible color by \textsc{Greedy-Coloring} is $$\ti\left(\frac{\Delta+1}{\Delta+2-|N_S(v)|+d_r(v)-i}\right)=\ti\left(\frac{\Delta}{d_r(v)-i+2}\right).$$
    
    To obtain a high probability bound, we analyze the following stochastic process. Consider each vertex in $S\coloneq V_S\setminus U'$ picking one of $b=\Theta(\frac{\Delta}{\log n})$ bins $B_1, B_2,...,B_b$ uniformly at random. Once this process is finished, we consider vertices in the following order: go over all bins $B_i$ in the order of increasing index $i$, and within each bin $B_i$, consider vertices in bin $B_i$ in random order. Let $r(v)$ be the rank of a vertex in this order, where $i\in [1,d_r(v)+1]=[1,\Theta(\Delta)]$. We note that by virtue of this random process, the resulting final ordering of vertices corresponds to a random permutation. It can be shown by a standard application of a Chernoff bound that the number of vertices in each bin is $\Theta(\frac{n\log n}{\Delta})$ with high probability. Here $n=|V_S|$. By a similar calculation, each vertex has at most $\Theta(\log n)$ neighbors in each bin with high probability.

    Let us consider a grouping of vertices in geometric order of their ranks. Let $s=\frac{11}{20}\Delta$. Let $G_i$ denote the group of vertices which have ranks in the range $[s-2^{i+1}+1, s-2^{i}]$, where $i\in \{0,1,2,...,\Theta(\log \Delta)\}$. First, we note that for a vertex $v$ in $G_i$, i.e. for a vertex of rank at least $s-2^{i+1}+1$, it must be in the last $O(2^{i+1})$ bins with high probability, by a Chernoff bound. This follows since there are at least $s-2^{i+1}$ neighbors of $v$ before $v$ in the ordering, each bin contains at $\Theta(\log n)$ neighbors of $v$ with high probability, and conditioned on this, $v$ must be in the last $\Theta(\frac{\Delta}{\log n}-\frac{\Delta-2^{i+1}}{\log n})=\Theta(\frac{2^{i+1}}{\log n})$ bins. Thus, w.h.p, $v$ is in the last $\Theta(2^{i+1})$ bins. Any bin is assigned at most $\frac{n\log n}{\Delta}$ vertices with high probability, thus it follows that the number of vertices with rank at least $s-2^{i+1}+1$ is bounded by $\ti(2^{i+1}\frac{n}{\Delta})$. 

    Consider a vertex in $G_i$. The number of iterations of the while loop, i.e. number of colors sampled before a feasible color is found with constant probability for $v$ is at most $\frac{\Delta+1}{\Delta+1-(s-2^i)}=O(\frac{\Delta+1}{2^i})$, and $\ti(\frac{\Delta}{2^i})$ with high probability. Thus, the total number of iterations for vertices in $G_i$ is bounded by $\ti(\frac{\Delta}{2^i}\cdot 2^{i+1}\frac{n}{\Delta})=\ti(n)$ with high probability. Taking a union bound over all $O(\log \Delta)$ groups, yields that the total number of colors drawn throughout Algorithm $\textsc{Color-Sparse}$ is $\ti(n)$ with high probability. 

    Since a total of $\ti(n)$ colors are drawn uniformly at random from $[\Delta+1]$, it follows that the number of times a particular color is drawn is no more than $\ti(\frac{n}{\Delta})$ w.h.p. Thus, the length of lists $L(c)$ for all $c\in [\Delta+1]$ is bounded by $\ti(\frac{n}{\Delta})$ with high probability.

    As a result, each iteration of the while loop takes $\ti(\frac{n}{\Delta})$ w.h.p. Thus, the total running time of \textsc{Color-Sparse} is bounded by $\ti(\frac{n}{\Delta})\cdot \ti(n)=\ti(\frac{n^2}{\Delta})$.
\end{proof}

\end{document}